\documentclass[12pt, a4paper,reqno,english]{smfart}
\usepackage[utf8]{inputenc}
\usepackage{amsmath, amssymb, amsthm, physics, mathrsfs, bm}
\usepackage{booktabs,longtable,array,enumitem,xcolor}
\usepackage{graphicx, color}
\usepackage[colorlinks=true, linkcolor=blue, citecolor=blue, urlcolor=blue]{hyperref}
\usepackage{geometry}
\usepackage{soul}
\soulregister\ref7
\usepackage{cite}
\usepackage{listings}
\definecolor{codegreen}{rgb}{0,0.6,0}
\definecolor{codegray}{rgb}{0.5,0.5,0.5}
\definecolor{codepurple}{rgb}{0.58,0,0.82}
\definecolor{backcolour}{rgb}{0.07,0.07,0.06}
\numberwithin{equation}{section}
\definecolor{bubblegum}{rgb}{0.99, 0.76, 0.8}
\definecolor{ietocean}{RGB}{0, 30, 140}
\definecolor{ietcoast}{RGB}{0, 150, 173}
\definecolor{ietlagoon}{RGB}{0, 216, 180}
\definecolor{veriforest}{RGB}{0, 80, 40}
\definecolor{revred}{RGB}{153, 0, 0}
\hypersetup{
        colorlinks,%
        citecolor={ietlagoon},%
        filecolor={ietocean},%
        linkcolor={veriforest},%
        urlcolor={revred},
        bookmarksnumbered=true,     
        bookmarksopen=true,         
        bookmarksopenlevel=1,       
        pdfstartview=Fit,           
        pdfpagemode=UseOutlines,
        pdfpagelayout=TwoPageRight
}
  
\newtheorem{theorem}{Theorem}[section]
\newtheorem{proposition}[theorem]{Proposition}
\newtheorem{lemma}[theorem]{Lemma}
\newtheorem{corollary}[theorem]{Corollary}

\theoremstyle{definition}
\newtheorem{definition}[theorem]{Definition}

\newtheorem{convention}[theorem]{Convention}
\theoremstyle{remark}
\newtheorem{remark}[theorem]{Remark}

\newcommand{\Q}{\mathbb Q}
\newcommand{\Z}{\mathbb Z}
\newcommand{\C}{\mathbb C}
\newcommand{\PP}{\mathbb P}

\renewcommand{\ev}{\operatorname{ev}}
\newcommand{\Tot}{\operatorname{Tot}}
\newcommand{\vir}{\mathrm{vir}}
\newcommand{\vdim}{\operatorname{vdim}}

\newcommand{\Li}{\operatorname{Li}}

\newcommand{\Log}{\operatorname{Log}}

\newcommand{\Ann}{\operatorname{Ann}}

\newcommand{\Dcal}{\mathcal D}

\newcommand{\M}{\overline{\mathcal M}}

\newcommand{\thQ}{\theta_Q}
\newcommand{\thetaQ}{\theta_Q}
\newcommand{\thetay}{\theta_y}
\renewcommand{\dd}{\,\mathrm d}

\newif\ifdraftnotes \draftnotestrue
\definecolor{bluebell}{rgb}{0.69, 0.69, 0.92}

\definecolor{codegreen}{rgb}{0,0.6,0}
\definecolor{codegray}{rgb}{0.5,0.5,0.5}
\definecolor{codepurple}{rgb}{0.58,0,0.82}
\definecolor{backcolor}{rgb}{0.95,0.95,0.92}

\lstdefinestyle{pythonstyle}{
    backgroundcolor=\color{backcolor},   
    commentstyle=\color{codegreen},
    keywordstyle=\color{magenta},
    numberstyle=\tiny\color{codegray},
    stringstyle=\color{codepurple},
    basicstyle=\ttfamily\small,
    breakatwhitespace=false,         
    breaklines=true,                 
    captionpos=b,                    
    keepspaces=true,                 
    numbers=left,                    
    numbersep=5pt,                  
    showspaces=false,                
    showstringspaces=false,
    showtabs=false,                  
    tabsize=4
}

\makeatletter
\newif\ifIXlive \IXlivefalse
\IfFileExists{sunset-I-general.aux}{\IXlivetrue}{}
\newcommand{\IXfreeze}[2]{\expandafter\gdef\csname IX@#1\endcsname{#2}}

\newcommand{\IXnum}[1]{%
  \ifIXlive
    \IfRefUndefinedExpandable{PaperI-#1}%
      {\refused{PaperI-#1}\textbf{??}}%
      {\getrefnumber{PaperI-#1}\IX@checkfrozen{#1}}%
  \else
    \@ifundefined{IX@#1}%
      {\textbf{??}\PackageWarning{sunsetII}{No frozen Paper I number for `#1'}}%
      {\csname IX@#1\endcsname}%
  \fi}

\protected\def\IX@checkfrozen#1{%
  \@ifundefined{IX@#1}%
    {\PackageWarning{sunsetII}{Paper I label `#1' has no frozen number;
      regenerate the frozen block}}%
    {\edef\IX@live{\getrefnumber{PaperI-#1}}%
     \edef\IX@frozen{\csname IX@#1\endcsname}%
     \ifx\IX@live\IX@frozen\else
       \PackageWarning{sunsetII}{Paper I number of `#1' is \IX@live\space
         in sunset-I-general.aux but frozen as \IX@frozen;
         regenerate the frozen block}%
     \fi}}

\newcommand{\Iref}[2][]{\cite[#1\IXnum{#2}]{PaperI}}
\newcommand{\Ithmref}[2][]{\cite[Theorem~#1\IXnum{#2}]{PaperI}}
\newcommand{\Icorref}[2][]{\cite[Corollary~#1\IXnum{#2}]{PaperI}}
\newcommand{\Ipropref}[2][]{\cite[Proposition~#1\IXnum{#2}]{PaperI}}
\newcommand{\Ilemref}[2][]{\cite[Lemma~#1\IXnum{#2}]{PaperI}}
\newcommand{\Iremref}[2][]{\cite[Remark~#1\IXnum{#2}]{PaperI}}
\newcommand{\Ieqref}[1]{\cite[Eq.~(\IXnum{#1})]{PaperI}}

\makeatother

\ifIXlive
  \usepackage{refcount}
  \usepackage{xr}
\fi

\IXfreeze{cor:Qradius}{11.5}
\IXfreeze{cor:nfive}{10.3}
\IXfreeze{cor:nfour}{10.2}
\IXfreeze{cor:primaryintegral}{14.10}
\IXfreeze{cor:realcontinuation}{11.2}
\IXfreeze{def:operator}{8.1}
\IXfreeze{eq:Cdef}{5.2}
\IXfreeze{eq:Kdef}{6.8}
\IXfreeze{eq:Schwinger}{9.1}
\IXfreeze{eq:allloop}{10.2}
\IXfreeze{eq:descdef}{5.1}
\IXfreeze{eq:physicalGamma}{9.8}
\IXfreeze{eq:physicalPF}{9.9}
\IXfreeze{ext:endpoint}{19.2}
\IXfreeze{ext:endpoints}{19}
\IXfreeze{ext:sec:primary}{18}
\IXfreeze{ext:thm:primary}{18.1}
\IXfreeze{lem:basechange}{5.3}
\IXfreeze{lem:tree}{5.1}
\IXfreeze{prop:Gamma}{9.6}
\IXfreeze{prop:ambient}{4.2}
\IXfreeze{prop:analyticdomain}{9.1}
\IXfreeze{prop:geometry}{4.1}
\IXfreeze{prop:sTaylor}{9.2}
\IXfreeze{prop:virtualcompare}{6.1}
\IXfreeze{rem:pseudothresholds}{9.4}
\IXfreeze{sec:dictionary}{16}
\IXfreeze{sec:local}{5}
\IXfreeze{sec:mirror}{6}
\IXfreeze{sec:zetaambiguity}{16.2}
\IXfreeze{thm:CYcover}{14.3}
\IXfreeze{thm:IP}{14.11}
\IXfreeze{thm:PF}{8.2}
\IXfreeze{thm:assembly}{10.1}
\IXfreeze{thm:chou}{14.8}
\IXfreeze{thm:extraction}{6.9}
\IXfreeze{thm:factorization}{5.4}
\IXfreeze{thm:fullambient}{15.1}
\IXfreeze{thm:laplace}{9.7}
\IXfreeze{thm:minimalperiod}{15.2}
\IXfreeze{thm:mirror}{6.6}
\IXfreeze{thm:torus}{11.1}
\IXfreeze{thm:virtual}{5.5}

\title[Three- and four-loop sunset integrals]{The three- and four-loop
equal-mass sunset integrals\\ and local Calabi--Yau fourfolds and fivefolds}
\author{Hartmut Maennel}
\address{Google DeepMind, Zurich, Switzerland}
\author{Pierre Vanhove}
\address{Institut de Physique Théorique, Université Paris Saclay,
  CNRS, CEA, F-91191 Gif-sur-Yvette, France\\
  Laboratoire d’Annecy-le-Vieux de Physique Th\'eorique (LAPTh), CNRS and Universit\'e
Savoie Mont-Blanc, Annecy, 74940, France}
\date{\today --- preprint number LAPTH-052/26}

\begin{document}

\begin{abstract}
We determine the equal-mass two-dimensional three- and four-loop sunset
Feynman integrals $I^{(3)}_\circleddash$ and $I^{(4)}_\circleddash$ at large
Euclidean momentum in terms of the genus-zero Gromov--Witten theory of the
local Calabi--Yau fourfold and fivefold $X_L=\Tot(K_{F_L})$, where
$F_L\subset(\PP^1)^{L+1}$, $L=3,4$, is a smooth hypersurface of multidegree
$(1,\ldots,1)$. At two loops the sunset integral is governed by the
prepotential of a local Calabi--Yau threefold. For $L\geq3$ the moduli spaces
of unmarked stable maps have positive virtual dimension and there is no prepotential, so the Feynman integral is governed instead by the one-point
potential $\mathcal F_{\gamma_L}$ with insertion $\gamma_L=H^{L-1}/(L-1)!$, by
its gravitational descendants, and by $\zeta(3)$ boundary terms coming from the
Gamma class. At both loop orders the coefficient of $R^{L-1}$ in the Frobenius
expansion in the mirror coordinate $R$ is $-\theta_Q\mathcal F_{\gamma_L}$ with the Novikov variable $Q=\exp(R)$. At
three loops the whole Feynman integral is an explicit expression in
$\mathcal F_{\gamma_3}$ alone, because genus-zero recursion determines all
descendant series from it; at four loops two diagonal series suffice, and the
$\zeta(3)$ term is itself corrected by $\theta_Q\mathcal F_{\gamma_4}$. At
three loops we also derive the analytic normalization independently, from the
Picard--Fuchs equation and the polar part of a Mellin transform, prove
integrality of the multiple-cover transforms for arbitrary integral insertions
together with an explicit quantum-$K$ formula, and compute the exact period
connection polynomials. Finally we give endpoint values along the Euclidean
axis, bounds on the Taylor radii in the Novikov variable $Q$, a computer-assisted
proof of the four-loop endpoint nonanalyticity, and a comparison of the
normalizations at two, three and four loops.
\end{abstract}
\maketitle

\tableofcontents

\section{Introduction}\label{II:sec:intro}

The equal-mass two-dimensional sunset integrals are among the simplest Feynman
integrals that are not polylogarithmic. For $L\geq2$ the $L$-loop member, with
all $L+1$ masses equal to one, has the Schwinger representation
\begin{equation}\label{II:eq:general-schwinger}
 I^{(L)}_\circleddash(s)=
 \int_{\mathbb R_+^{L+1}}
 e^{-\sum_{i=1}^{L+1} x_i+sW(x)}\,W(x)
 \prod_{i=1}^{L+1}\frac{\dd x_i}{x_i},\qquad
 W(x)=\left(\sum_{i=1}^{L+1} x_i^{-1}\right)^{-1},
\end{equation}
which converges for $\operatorname{Re}s<(L+1)^2$ and continues analytically to
$\C\setminus[(L+1)^2,\infty)$. This paper is about the three- and four-loop
integrals $I^{(3)}_\circleddash$ and $I^{(4)}_\circleddash$. We express them
exactly, at large Euclidean momentum $s=-T\to-\infty$, through the genus-zero
Gromov--Witten invariants of a local Calabi--Yau fourfold and of a local
Calabi--Yau fivefold, and we study the enumerative, arithmetic and analytic consequences of this identification.

\subsection*{From two loops to \texorpdfstring{$L\geq3$}{L>=3} loops}

At two loops the geometry attached to the sunset integral is a family of
elliptic curves. Bloch, Kerr and Vanhove~\cite{BKV} related the two-loop
integral to the genus-zero prepotential of the local Calabi--Yau threefold
$\Tot(K_{F_2})$ over the del Pezzo surface $F_2$ of degree six, through a
mirror-symmetry and Legendre-transform construction. At $L$ loops the
corresponding geometry is the local Calabi--Yau $(L+1)$-fold
\[
 X_L=\Tot(K_{F_L}),\quad F_L\subset Y_L=(\PP^1)^{L+1}\ \text{a smooth hypersurface of
 multidegree }(1,\ldots,1).
\]
Here $F_L$ is Fano with $K_{F_L}=\mathcal O_{F_L}(-H)$, where $H=p_1+\cdots+p_{L+1}$ and
$p_i$ is the pullback of the point class of the $i$th factor. The Laurent
polynomial $$\phi(x)=\left(\sum_{i=1}^{L+1}x_i\right)\left(\sum_{i=1}^{L+1}x_i^{-1}\right)$$ is the Landau--Ginzburg
superpotential mirror to $F_L$, and the exponent in~\eqref{II:eq:general-schwinger}
factors as $-W(x)(\phi(x)-s)$. The sunset integral is therefore a Laplace
transform of the oscillatory integral of this mirror
(see~\Ipropref{prop:analyticdomain}), and it is Iritani's Gamma-class
identity for that oscillatory integral~\cite{Iritani} that brings the
Gromov--Witten theory of $X_L$ into play.

\medskip

For $L\geq3$ the moduli spaces $\overline{\mathcal M}_{0,\ell}(X_L,\beta)$ have positive
virtual dimension $\vdim = L-2+\ell$, so unmarked invariants require insertions
(e.g., $\gamma_L = H^{L-1}/(L-1)!$) to define numerical observables. At two loops, by contrast, the prepotential suffices; this is the setting of~\cite{BKV}.

The first numerical observable is the one-point invariant with the insertion
\[
 \gamma_L=\frac{H^{L-1}}{(L-1)!}=\sum_{|S|=L-1}\,\prod_{i\in S}p_i
 \in H^{2(L-1)}(F_L,\Z),
\]
whose complex codimension $L-1$ equals
$\vdim\overline{\mathcal M}_{0,1}(X_L,\beta)$. Its generating series is the
one-point potential
\begin{equation}\label{II:eq:FgammaL}
 \mathcal F_{\gamma_L}(Q)=\sum_{m\ge1}Q^m\sum_{H\cdot\beta=m}
 \int_{[\M_{0,1}(X_L,\beta)]^{\vir}}\ev^*(\varpi^*\gamma_L),
\end{equation}
where $\varpi\colon X_L\to F_L$ is the projection and $Q$ is the Novikov variable.
For $L\geq3$ the class $\gamma_L$ is not a divisor, and no divisor equation
reduces $\mathcal F_{\gamma_L}$ to unmarked invariants. Accordingly its
multiple-cover transform has the one-marking weight $r^{-2}$ instead of the
weight $r^{-3}$ of the unmarked prepotential used at two loops;
Section~\ref{app:loops} shows that this is a change of observable, not of the
genus-zero cover law.

\medskip
The coefficients of the lower powers of the mirror
coordinate involve the diagonal one-point descendant series
\begin{equation}\label{II:eq:Dcal}
 \Dcal_a^\Delta(Q)=\sum_{m\ge1}Q^m\sum_{H\cdot\beta=m}
 \int_{[\M_{0,1}(X_L,\beta)]^{\vir}}\psi^a\,\ev^*(\varpi^*\delta_a),\qquad
 \delta_a=\frac{H^{L-1-a}}{(L-1-a)!},
\end{equation}
for $0\leq a\leq L-1$, so that $\Dcal_0^\Delta=\mathcal F_{\gamma_L}$. At two
loops the divisor and dilaton equations reduce both such series to the
prepotential. For $L\geq3$ they do not, and one of the questions answered below
is to what extent the descendants are nevertheless determined by
$\mathcal F_{\gamma_L}$.

\medskip
The boundary behaviour at large momentum is governed
by the Gamma class
$\widehat\Gamma_{X_L}|_{F_L}=\exp\bigl(\sum_{r\geq1}\frac{2\zeta(2r+1)}{2r+1}H^{2r+1}\bigr)$
(see~\Ipropref{prop:Gamma}). On the surface $F_2$ one has $H^3=0$, so no
zeta value enters and the two-loop boundary constants vanish
(Section~\ref{app:L2boundary}). For $L=3,4$ the term $\frac23\zeta(3)H^3$
survives. It produces the boundary term $-16\zeta(3)=-\frac23\zeta(3)\int_{F_3}H^3$
at three loops and, at four loops, the term
$-80\zeta(3)=-\frac23\zeta(3)\int_{F_4}H^4$ multiplying a logarithmic solution that
is itself corrected by $\mathcal F_{\gamma_4}$; the value $\zeta(5)$ first
enters at five loops.

Two further differences are technical. For $L=3$ the hypersurface $F_3$ has $b_3(F_3)=2$ (Remark~\ref{l3:rem:odd-cohomology}) and is therefore not
toric, so the toric mirror theorem used at two loops does not apply to it
directly; the local theory is computed instead on the ambient space $Y_L$ with an Euler and an inverse Euler twist~\cite{PaperI}. For
$L=4$ the middle cohomology of $F_4$ contains an $18$-dimensional nonambient
summand, invisible to the differential operator; the one-point evaluation
classes are nevertheless ambient (see~\Ithmref{thm:fullambient}).

Conversely, two hypotheses used throughout for $L\geq3$ fail at $L=2$, which is
why the two-loop case is not a specialization of the results below. 

First, the weak Lefschetz isomorphism $H_2(F_L,\Z)\cong H_2(Y_L,\Z)$ requires $L\geq3$. 
For the del Pezzo surface $F_2$ the curve lattice has rank four and
the ambient one rank three, so an ambient degree does not determine a curve
class. 

Second, at the threshold $s=(L+1)^2$ the integrand of the simplex
representation of~\eqref{II:eq:general-schwinger} behaves like
$\|u-u_*\|^{-2}$ in real dimension $L$. This is integrable for $L\geq3$, and for
$L=3,4$ the Feynman integral is continuous, but not $C^1$, at threshold
(see~\Ipropref{prop:sTaylor}); for $L=2$ the integral diverges
logarithmically at $s=9$. Section~\ref{app:loops} treats the two-loop case
separately and compares it with three and four loops.

\subsection*{The Feynman integral in terms of \texorpdfstring{$\mathcal F_{\gamma_L}$}{F}}

Put $y=-1/s$, so that $y=1/T>0$ on the Euclidean axis, and write
$\theta_y=y\,d/dy$ and $\theta_Q=Q\,d/dQ$. Let $A(y)=1+O(y)$ be the holomorphic
period, $R$ the logarithmic mirror coordinate, normalized by $\theta_yR=A$ and
$R=\log y+O(y)$, and $Q=e^R$; put $\Pi_L=(-1)^LyA$. With the uniform assembly
of~\Ithmref{thm:assembly} as input, the three- and
four-loop Feynman integrals are expressed through $\mathcal F_{\gamma_L}$ as
follows.
We use the notation of~\cite{PaperI} as follows. For
\(\gamma_L=H^{L-1}/(L-1)!\), \(\mathcal F_{\gamma_L}\) denotes the genus-zero one-point potential associated with the local Calabi--Yau \(X_L=\operatorname{Tot}(K_{F_L})\). The diagonal descendant series are denoted by \(\mathcal D_a^\Delta\), with

$$
\mathcal D_0^\Delta=\mathcal F_{\gamma_L}.
$$

We write \(\theta_Q=Q\,d/dQ\), \(Q=e^R\), and \(\mathcal T\) for the diagonal
two-point primary series. All other notation imported from~\cite{PaperI} is
recalled at its first use below.

\medskip

At three loops (Theorem~\ref{II:thm:L3} and Corollary~\ref{II:cor:L3F}), with
$\mathcal F=\mathcal F_{\gamma_3}$ and
$\theta_Q^{-1}\sum_{m\ge1}c_mQ^m=\sum_{m\ge1}(c_m/m)Q^m$,
\begin{multline}\label{II:eq:intro-L3}
 I^{(3)}_\circleddash(s)=-yA\Bigl[4R^3-(\theta_Q\mathcal F)R^2
 +\frac{(\theta_Q\mathcal F)^2}{12}\,R\\
 +2\theta_Q^{-1}\mathcal F
 +\frac1{12}\theta_Q^{-1}\bigl((\theta_Q\mathcal F)^2\bigr)
 -\frac16\,\mathcal F\,\theta_Q\mathcal F-16\zeta(3)\Bigr].
\end{multline}
The coefficient of $R^2$ is $-\theta_Q\mathcal F_{\gamma_3}$, the fourfold
analogue of the two-loop theorem of~\cite{BKV}. It is proved in
Section~\ref{app:analytic3} independently of the oscillatory Gamma identity on
which~\cite{PaperI} relies, and Section~\ref{ext:reconstruction} gives a
second, differential proof of the complete formula~\eqref{II:eq:intro-L3},
which avoids the signed analytic assembly of~\cite{PaperI}. The two lower
coefficients, which in the general assembly involve the descendant series
$\Dcal_1^\Delta$ and $\Dcal_2^\Delta$, are determined by $\mathcal F_{\gamma_3}$
through the genus-zero topological recursion (Theorem~\ref{ext:thm:Bdiag}).
Thus at three loops the whole Feynman integral is determined by a single
generating series of genus-zero invariants, as at two loops, with
$\mathcal F_{\gamma_3}$ in the role of the prepotential. Its multiple-cover
expansion $\mathcal F_{\gamma_3}=\sum_{m\ge1}n_m\Li_2(Q^m)$ has integer
coefficients $n_m=24,-72,264,-1488,9696,\ldots$
(Theorem~\ref{l3:thm:bps-integrality}), so that
$\exp(\theta_Q\mathcal F_{\gamma_3})=\prod_{m\ge1}(1-Q^m)^{-mn_m}$ is an
integral Euler product.

At four loops (Theorem~\ref{II:thm:L4} and Corollary~\ref{II:cor:L4F}),
\begin{equation}\label{II:eq:intro-L4}
I^{(4)}_{\circleddash}(s)=yA\Bigl[5R^4-(\theta_Q\mathcal F_{\gamma_4})R^3
+h_2R^2+h_1R+h_0
-80\zeta(3)\Bigl(R-\frac{\theta_Q\mathcal F_{\gamma_4}}{20}\Bigr)\Bigr],
\end{equation}
with $h_2=-3\mathcal F_{\gamma_4}-\theta_Q\Dcal_1^\Delta$, and with $h_1$ and
$h_0$ determined by $\mathcal F_{\gamma_4}$ and $\Dcal_1^\Delta$ through the
recursion of Theorem~\ref{ext:thm:fiveclosure}. Thus the four-loop Feynman
integral is determined by two diagonal genus-zero series, $\mathcal
F_{\gamma_4}$ and $\Dcal_1^\Delta$, equivalently by $\mathcal F_{\gamma_4}$ and
the diagonal two-point primary series $\mathcal T$. We do not know whether $\mathcal F_{\gamma_4}$ alone suffices.

At both loop orders the leading term is $(L+1)R^L$ and the first instanton
coefficient is $-\theta_Q\mathcal F_{\gamma_L}$:
\[
 I^{(L)}_\circleddash(s)=\Pi_L\Bigl[(L+1)R^L-(\theta_Q\mathcal F_{\gamma_L})R^{L-1}
 +\sum_{j=0}^{L-2}c_{L,j}(Q)R^j\Bigr],
\]
with $c_{L,j}$ convergent power series in $Q$ whose constant terms are the
$\zeta(3)$ terms above. The same shape holds at two loops for the coarsened
invariants of Section~\ref{app:loops}, where $\gamma_2=H$ and, by the divisor
equation, $\theta_Q\mathcal F_H$ is the second logarithmic derivative of the
prepotential. In this sense $\mathcal F_{\gamma_L}$ takes over the role of the prepotential for $L\geq3$.

\bigskip
In this work we derive specific results at $L= 3$ and $L= 4$ independently of
the result of~\cite{PaperI}.\footnote{The all-loop identity and its proof are
given in the companion paper~\cite{PaperI}. We use the following results from
that paper without reproving them: the zero-section factorization and virtual
reduction; the twisted mirror theorem and the extraction of the descendant
series; the operator $P_L$ and its all-degree certificate; the signed Gamma
formula, based on Iritani's oscillatory identity~\cite{Iritani}; and the uniform
assembly theorem~\Ithmref{thm:assembly}. These results provide the general
all-loop framework. The three- and four-loop formulas stated in
Theorems~\ref{II:thm:L3} and~\ref{II:thm:L4} coincide with the corresponding
specializations of Paper~I (see~\Icorref{cor:nfour} and~\Icorref{cor:nfive}), but are
derived here independently of the all-loop identity. In particular, the
loop-specific derivations and checks below do not rely on the proofs of the
imported results or on the all-loop identity itself.} 

\medskip
Section~\ref{II:sec:statements} states the three- and four-loop identities and
their expressions through $\mathcal F_{\gamma_L}$, with a dictionary between the
two loop orders and numerical checks. 

\medskip
Section~\ref{app:analytic3} gives an independent analytic derivation at three
loops. It determines the three Frobenius boundary constants of
$I^{(3)}_\circleddash$ from the inhomogeneous Picard--Fuchs equation and the
polar part of a Mellin transform. We show that two vanish and the third is $-16\zeta(3)$
(Proposition~\ref{l3:prop:leading}). A coefficient bridge built on the
quantum-Serre identity and a logarithmic B-model recurrence gives
$d_2(m)=-mN_m(\gamma_3)$ for the coefficients of $Q^mR^2$, independently of all
three constants (Theorems~\ref{l3:thm:coefficient-bridge}
and~\ref{l3:thm:main}); a direct count of lines confirms the normalization
$N_1(\gamma_3)=24$ (Proposition~\ref{l3:prop:m1-check}). The section also
identifies the three descendant potentials, $\Dcal_0^\Delta=-12K_2$,
$\Dcal_1^\Delta=-24K_3$ and $\Dcal_2^\Delta=-24K_4$
(Theorem~\ref{l3:thm:full-descendant-assembly}), and the exterior domains on
which the series converge.

\medskip
In section~\ref{app:arith3}, we derive an integrality and quantum $K$-theory at three loops.
We prove that for every integral class
$\eta\in H^4(F_3,\Z)$ the $r^{-2}$ transforms of the one-point invariants are
integers (Theorem~\ref{l3:thm:bps-integrality}). The proof uses the
projective-bundle compactification $\mathbb P(\mathcal O_{F_3}\oplus K_{F_3})$, which is
semipositive, and the integrality theorem of Ionel and Parker in the algebraic
form of Chou~\cite{IP,Chou}. The section also gives an explicit quantum-$K$
formula for the transforms of $\mathcal F_{\gamma_3}$
(Theorem~\ref{l3:thm:qk-strengthening}), the resulting supercongruences, and
exact values through degree $19$. The parity argument for the symplectic
semipositivity of this compactification is specific to complex dimension four;
primary integrality for every $L\geq3$ is proved in~\cite{PaperI}
(see~\Icorref{cor:primaryintegral}) with a compact Calabi--Yau cover instead.

\medskip
Section~\ref{app:loops} compares the curve
lattices, the normalizations and the multiple-cover conventions at $L=2,3,4$,
separates the change of observable from the restrictions specific to two loops,
derives the vanishing of the two-loop boundary constants by the Mellin method,
records the four-loop specialization with explicit series, and relates the
local invariants to logarithmic invariants of the pair $(F_L,D)$, $D\in|H|$,
through the local--relative correspondence of van~Garrel, Graber and
Ruddat~\cite{vGGR}.

\medskip
In section~\ref{ext:reconstruction}, we give a reconstruction from primary
invariants. We apply the genus-zero topological recursion on a compact model of
$X_L$. At three loops all one-point descendants are determined by the primary
evaluation classes, and on the diagonal by $\mathcal F_{\gamma_3}$ alone
(Theorem~\ref{ext:thm:Bdiag}); this gives~\eqref{II:eq:intro-L3}, together with
the differential proof mentioned above, which rests on the symmetric-square
structure of the Picard--Fuchs operator. At four loops the primary evaluation
classes in full multidegree determine every one-point descendant
(Theorem~\ref{ext:thm:inverse}); after restriction to the one-parameter
diagonal, the reconstruction closes on two series
(Theorem~\ref{ext:thm:fiveclosure}).

\medskip
Corollary~\ref{II:cor:L4F} shows that two diagonal inputs suffice at four loops. This does not contradict the full multidegree reconstruction theorem, which uses the complete collection of primary evaluation classes; after restriction to the one-parameter diagonal, that information is no longer contained in the single series $\mathcal F_{\gamma_4}$. Whether $\mathcal F_{\gamma_4}$ alone determines $I^{(4)}_{\circleddash}$, as $\mathcal F_{\gamma_3}$ determines $I^{(3)}_{\circleddash}$, remains open.

\medskip
Section~\ref{ext:lowend} gives the endpoint values of the mirror coordinate and of all descendant potentials
along the Euclidean axis. At three loops the endpoint value of $R$ is
$R_*=-7\zeta(3)/\pi^2$, minus twice the Mahler measure of $1+z_1+z_2+z_3$; at
four loops the endpoint values reduce to two constants. That section also gives the four-loop threshold expansion. Section~\ref{ext:connection3} computes the
exact three-loop period connection polynomials (Theorem~\ref{ext:periods}). As
a consequence $K_2,K_3,K_4$ have a one-sided $C^1$ extension, but an unbounded
second derivative, at the endpoint $Q_*=e^{R_*}$, and their $Q$-Taylor radii
lie between $1/12$ and $Q_*$. Sections~\ref{ext:cert:setup}--\ref{ext:cert:run}
prove the analogous four-loop statement, with radii between $1/20$ and
$Q_\infty=0.3366117386\ldots$, by a computer-assisted argument in ball
arithmetic; the complete code is in Appendix~\ref{ext:code}.

\subsection*{The differential operators}

The Feynman integrals satisfy the inhomogeneous Picard--Fuchs equations
\begin{equation}\label{II:eq:PF}
 P_L\Bigl(\frac{I^{(L)}_\circleddash(-1/y)}{y}\Bigr)=(-1)^L(L+1)!
\end{equation}
of~\Ieqref{eq:physicalPF}, where $P_L$ is the order-$L$ operator produced by
the algorithm of Definition~\Iref{def:operator}. At the two loop orders treated
here,
\[
 P_3=\theta_y^3
                   +64y^2(\theta_y+1)^3
 +y(20\theta_y^3+30\theta_y^2+18\theta_y+4),
\]
and 
\begin{align}
P_4&=\theta_y^4+y^3Q_3(\theta_y)+y^2Q_2(\theta_y)+yQ_1(\theta_y),\\
 Q_1&=35\theta_y^4+70\theta_y^3+63\theta_y^2+28\theta_y+5,\cr
 Q_2&=259\theta_y^4+1036\theta_y^3+1580\theta_y^2+1088\theta_y+285,\cr
 \nonumber Q_3&=225\theta_y^4+1350\theta_y^3+2925\theta_y^2+2700\theta_y+900.
\end{align}
After the substitution $y=-t$ these are the equal-mass sunset Picard--Fuchs
operators recorded in the Feynman-integral literature; they have been derived
in the mathematical~\cite{ZhouSurvey} and physics
literature~\cite{Vanhove:2014wqa,Vanhove:2018mto,LairezVanhove,KlemmEtAl,BFKNS}.
Their holomorphic solution $A(y)=\sum_{D\geq0}(-1)^Dc_Dy^D$, with
$c_D=\sum_{|\mathbf d|=D}\binom D{\mathbf d}^{\!2}$, is the alternating generating
series of the $(L+1)$-variable generalized Apéry numbers; for $L=3$ these are
$1,4,28,256,2716,\ldots$. What the companion paper~\cite{PaperI} supplies is the uniform
construction for all $L$ together with an all-degree certificate, which is what
allows the operator to be used without fitting it to initial coefficients.
That $P_L$ is irreducible, hence the minimal operator annihilating the period,
follows from the irreducibility of the symmetric power of the Bessel connection
and two Fourier transforms with checked extensions at zero. 
We refer to~\Ipropref{thm:minimalperiod} of the companion paper.

\section*{Acknowledgements and use of AI}    
This work started as an application of the AI co-mathematician tool
\cite{CoMath} developed by Google DeepMind, in which a human works with AI agents to develop a theory.
Later, the output of the AI co-mathematician tool was also reviewed by different
AI tools (first Gemini, later also Claude and ChatGPT) and the reviews were used to improve the theorems, sometimes in further collaboration with the AI co-mathematician.\\
The final paper is human written, using parts of the AI generated proofs, and
also proofread by AI.\\
The work of HM was done at Google DeepMind.\\
The work of PV was funded by the Agence Nationale de la Recherche (ANR) under
grant Observables (ANR-24-CE31-7996).
\section{The three- and four-loop identities}\label{II:sec:statements}

We state the three- and four-loop identities, which are the specializations of
the all-loop assembly~\Ieqref{eq:allloop}, together with their expressions
through the one-point potentials $\mathcal F_{\gamma_L}$. The notation is that
of the introduction, namely $y=-1/s$, $R$ is the local mirror coordinate,
$A=\theta_yR$, $Q=e^R$, $\Pi_L=(-1)^LyA$, and $\Dcal_a^\Delta$ is the diagonal
one-point descendant series~\eqref{II:eq:Dcal}, so that
$\Dcal_0^\Delta=\mathcal F_{\gamma_L}$. The $K_k$ are the normalized
coefficients of the ambient hypergeometric series~\Ieqref{eq:Kdef}; for $L=3$
they are recalled in~\eqref{l3:eq:full-ambient-K}.

\begin{theorem}[Three loops; {\rm \Icorref{cor:nfour}}]
\label{II:thm:L3}
For $L=3$ put $\Pi_3=-yA$ and $\delta_0=H^2/2$. Then
\[
 I^{(3)}_\circleddash(s)=\Pi_3\bigl[4R^3+f_2R^2+f_1R+f_0-16\zeta(3)\bigr],
\]
with
\[
 f_2=-\thQ\Dcal_0^\Delta,\qquad
 f_1=-2\Dcal_0^\Delta-\thQ\Dcal_1^\Delta,\qquad
 f_0=-\Dcal_1^\Delta-\thQ\Dcal_2^\Delta,
\]
and $\Dcal_0^\Delta=-12K_2$, $\Dcal_1^\Delta=-24K_3$, $\Dcal_2^\Delta=-24K_4$.
In particular the coefficient of $R^2$ is $-\thQ\mathcal F_{\gamma_3}$, where
$\mathcal F_{\gamma_3}=\Dcal_0^\Delta$ is the one-point
potential~\eqref{II:eq:FgammaL} of the local Calabi--Yau fourfold $X_3$ with insertion
$\gamma_3=H^2/2$.
\end{theorem}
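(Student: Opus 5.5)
The proof specializes the uniform assembly \Ithmref{thm:assembly} to $L=3$ and then pins down each ingredient independently. By \Ieqref{eq:allloop}, $I^{(L)}_\circleddash$ equals $\Pi_L$ times a polynomial in $R$ of degree $L$. Its coefficients come from the Gamma-class boundary data paired with the descendant series $\Dcal_a^\Delta$. For $L=3$ the Gamma class restricted to $F_3$ is $1+\frac23\zeta(3)H^3$, because $H^4=0$ on the threefold $F_3$. We would expand $\exp(RH)\,\widehat\Gamma$ against the fundamental solution and collect powers of $R$. The top term is $\int_{F_3}H^3\,R^3/3!\cdot(\text{normalization})$, and with $\int_{F_3}H^3=3!\cdot 4=24$ this should give $4R^3$. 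The constant term is $-\frac23\zeta(3)\int_{F_3}H^3=-16\zeta(3)$.

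The instanton corrections enter through the $J$-function. Its $H^{3-k}$ component is a combination of $\Dcal_a^\Delta$ and their $\theta_Q$-derivatives. The pattern $f_{2-a}=-(a)\Dcal_{a-1}^\Delta-\thQ\Dcal_a^\Delta$ should arise from two sources. First, $\thQ$ acts on $Q^{H}=e^{RH}$ and shifts one factor of $R$. Second, the dilaton/string shift $\psi^a\mapsto\psi^{a-1}$ produces the term with $\Dcal_{a-1}^\Delta$. We would verify these coefficients $0,2,1$ by direct bookkeeping of $R^j/j!$ factors against $\delta_a=H^{2-a}/(2-a)!$.

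The identifications $\Dcal_0^\Delta=-12K_2$, $\Dcal_1^\Delta=-24K_3$ and $\Dcal_2^\Delta=-24K_4$ come from the twisted mirror theorem and extraction theorem (\Ithmref{thm:mirror}, \Ithmref{thm:extraction}). For this we compute the ambient $I$-function on $Y_3=(\PP^1)^4$ with the Euler twist by $\mathcal O(H)$ and the inverse twist by $\mathcal O(-H)$. We restrict to the diagonal and apply the mirror map $Q=e^R$. We then read off the coefficients of $H^2,H^3,H^4$ after dividing by $\int_{F_3}$-pairings, with $K_k$ as in \eqref{l3:eq:full-ambient-K}. This step should be routine given Paper~I. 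The weak Lefschetz isomorphism and the fact that the one-point classes are ambient guarantee that nothing outside the ambient part contributes.

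The main obstacle is the boundary normalization: showing that the lower boundary constants vanish and that the $\zeta(3)$ constant is exactly $-16\zeta(3)$, with the correct sign and overall factor $\Pi_3=-yA$. For an independent check we would use the inhomogeneous Picard--Fuchs equation \eqref{II:eq:PF} with $P_3$. Any solution is the particular solution plus a combination of $A$, $AR$ and $AR^2$-type Frobenius solutions. The three free constants are then determined from the $y\to0^+$ asymptotics of the Schwinger integral. These asymptotics come from the polar part of its Mellin transform in $T$, namely the Gamma-function poles of $\int e^{-\sum x_i}W^{1+u}\prod dx_i/x_i$ near $u=0$. This is the content of Proposition~\ref{l3:prop:leading}. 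Matching the triple pole gives $4R^3$, and the subleading poles give $0,0,-16\zeta(3)$. The coefficient of $R^2$ is then $-\thQ\mathcal F_{\gamma_3}$, since $\mathcal F_{\gamma_3}=\Dcal_0^\Delta$ by definition.
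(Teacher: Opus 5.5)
This is essentially the paper's route: the statement is proved (as Theorem~\ref{l3:thm:full-descendant-assembly}) by quoting the $L=3$ specialization \Icorref{cor:nfour} of the assembly together with the extraction multipliers $-4!/(2-a)!$ of \Ithmref{thm:extraction}, and Proposition~\ref{l3:prop:leading} confirms the boundary constants independently by the same Picard--Fuchs plus Mellin argument you propose. There are two slips to fix. First, your displayed pattern should read $f_{2-a}=-(3-a)\Dcal_{a-1}^\Delta-\thQ\Dcal_a^\Delta$, i.e.\ the bracket is $\partial_R\bigl(R^4-\sum_{a=0}^{2}\Dcal_a^\Delta R^{2-a}-16\zeta(3)R\bigr)$; this is what produces your coefficients $0,2,1$, whereas your factor $-(a)$ would give $0,1,2$. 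Second, the leading Mellin pole is the fourth-order pole $24/\nu^4$ of $\Gamma(1-\nu)M(\nu)$ at $\nu=0$, with $M(\nu)=\int e^{-V}W^\nu\prod_i\dd x_i/x_i$, rather than a triple pole of an integrand with $W^{1+u}$ at $u=0$.
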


\begin{theorem}[Four loops; {\rm \Icorref{cor:nfive}}]
\label{II:thm:L4}
For $L=4$ put $\Pi_4=yA$. Then
\[
 I^{(4)}_\circleddash(s)=\Pi_4\left[5R^4+\sum_{j=0}^3h_jR^j
      -80\zeta(3)\left(R-\frac1{20}\thQ\Dcal_0^\Delta\right)\right],
\]
where
\[
\begin{array}{ll}
 h_3=-\thQ\Dcal_0^\Delta,&
 h_2=-3\Dcal_0^\Delta-\thQ\Dcal_1^\Delta,\\
 h_1=-2\Dcal_1^\Delta-\thQ\Dcal_2^\Delta,&
 h_0=-\Dcal_2^\Delta-\thQ\Dcal_3^\Delta,
\end{array}
\]
and $(\Dcal_0^\Delta,\Dcal_1^\Delta,\Dcal_2^\Delta,\Dcal_3^\Delta)
=(-20K_2,-60K_3,-120K_4,-120K_5)$.
In particular the coefficient of $R^3$ is $h_3=-\thQ\mathcal F_{\gamma_4}$, where
$\mathcal F_{\gamma_4}=\Dcal_0^\Delta$ is the one-point
potential~\eqref{II:eq:FgammaL} of the local Calabi--Yau fivefold $X_4$ with insertion
$\gamma_4=H^3/6$.
\end{theorem}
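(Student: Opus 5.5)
The plan is to specialize the uniform assembly of \Ithmref{thm:assembly} to $L=4$, and then identify each ingredient explicitly. First, the structural shape. The function $I^{(4)}_\circleddash(-1/y)/y$ solves the inhomogeneous equation \eqref{II:eq:PF} with $P_4$. The homogeneous solution space near $y=0$ is spanned by Frobenius solutions $A\,R^j+\cdots$, $0\le j\le 3$, because $P_4$ has maximal unipotent monodromy at $y=0$ (the indicial polynomial is $\theta_y^4$). A particular solution is supplied by the $\theta_y^{-4}$-inversion of the constant $5!=120$, which produces the leading term $5R^4$ once $A$ is factored out. So $I^{(4)}_\circleddash=\Pi_4[5R^4+\sum_j c_jR^j]$ for series $c_j(Q)$, modulo constants times homogeneous solutions. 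I would write every solution in the mirror-coordinate form $yA\sum_j g_j(Q)R^j$ and insert it into $P_4$. Expressing $\theta_y=A\,\theta_R$ then turns the equation into a triangular system for the $g_j$.

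Second, the identification of the instanton coefficients. I would import the twisted mirror theorem and the extraction of descendant series (\Ithmref{thm:mirror}, \Ithmref{thm:extraction}). The ambient $I$-function on $Y_4$, with Euler twist by $\mathcal O(H)$ and inverse Euler twist by $\mathcal O(-H)$, is expanded in $H$. Its coefficients are the normalized series $K_k$ of \Ieqref{eq:Kdef}. Pairing with $\delta_a=H^{3-a}/(3-a)!$ and using $\int_{F_4}H^4=120$ then gives $\Dcal_a^\Delta$ as multiples of $K_{a+2}$. Concretely, $\Dcal_0^\Delta=-\tfrac{120}{3!}K_2=-20K_2$, and similarly $-60K_3,-120K_4,-120K_5$. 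The combinatorial prefactors come from $\tfrac{1}{(3-a)!}\int H^4$ together with the descendant normalization $\psi^a\leftrightarrow z^{-a-1}$.

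The J-function has the form $J=e^{H R/z}(1+\sum_a z^{-a-2}\cdots)$. Expanding $e^{HR/z}$ against the descendant coefficients gives exactly the pattern $h_j=-(3-j)\Dcal_{2-j}^\Delta-\theta_Q\Dcal_{3-j}^\Delta$. This is the binomial/Leibniz structure that occurs after one application of $\theta_Q$, which converts the one-point function into the derivative of the two-point pairing. I would verify this pattern with the string and dilaton equations in genus zero.

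Third, the boundary constants, which I expect to be the main obstacle. The homogeneous ambiguity (four constants) is fixed by the large-$T$ asymptotics. Through the signed Gamma formula (\Ipropref{prop:Gamma}), the Laplace transform of the oscillatory integral yields $\int_{F_4}\widehat\Gamma\, e^{HR}\cdot(\ldots)$. Only $\tfrac23\zeta(3)H^3$ contributes below degree $H^4$, and it multiplies the $H^1$-coefficient of the normalized solution. That coefficient is the logarithmic solution $R-\tfrac1{20}\theta_Q\Dcal_0^\Delta$: its $Q$-correction is $\theta_Q K_2$ rescaled, which explains the $1/20$. This gives the term $-\tfrac23\zeta(3)\cdot120=-80\zeta(3)$, while the $\zeta(2)$ terms and the other constants vanish.

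To make this independent of the all-loop proof, I would mirror the three-loop Mellin argument of Proposition~\ref{l3:prop:leading}. I would compute the polar part at $s\to-\infty$ of the Mellin transform of the Schwinger integral \eqref{II:eq:general-schwinger}. Its poles up to order five at the origin give $5\log^4T$ plus lower logs, and I would check that the only surviving transcendental constant is $\zeta(3)$ at the $\log T$ level. The delicate point is the sign and normalization of the $\zeta(3)$ correction against $\theta_Q\Dcal_0^\Delta$. As a cross-check I would compare the first few $Q$-coefficients numerically with $I^{(4)}_\circleddash$ at large $T$.
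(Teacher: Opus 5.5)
Your overall route is the paper's. Theorem~\ref{II:thm:L4} is taken there as the $L=4$ case of \Icorref{cor:nfive}: the uniform assembly, with the extraction multipliers $-5!/(3-a)!$ of \Ithmref{thm:extraction}, plus the Gamma-class boundary term. That extraction theorem is also what reads $K_5$ from the ambient series, since $H^5=0$ on $F_4$. Your multipliers $(-20,-60,-120,-120)$ are right, and so is your account of the boundary term $-80\zeta(3)\bigl(R-\tfrac1{20}\thQ\Dcal_0^\Delta\bigr)=-80\zeta(3)(R+\thQ K_2)$.

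What fails is the read-off of the lower coefficients. You say the expansion gives exactly $h_j=-(3-j)\Dcal_{2-j}^\Delta-\thQ\Dcal_{3-j}^\Delta$. That yields $h_2=-\Dcal_0^\Delta-\thQ\Dcal_1^\Delta$ and $h_0=-3\Dcal_2^\Delta-\thQ\Dcal_3^\Delta$, which contradicts the statement. There is a Leibniz structure, but its factor is $j+1$. It comes from the fact that the non-$\zeta$ part of the bracket is an $R$-derivative, not from the string or dilaton equation. Take $\theta_yC_{j+1}=W_j$ and $\sum_j\varepsilon^jC_j=e^{\varepsilon R}\bigl(1+\sum_{k\ge2}\varepsilon^kK_k\bigr)$ as in Section~\ref{ext:cert:setup}, and write $\theta_y=A\,\partial_R$, where $\partial_R$ acts as $\thQ$ on series in $Q$. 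Then
\[
 \frac{120\,W_4}{A}=120\,[\varepsilon^5]\,\partial_R\Bigl(e^{\varepsilon R}\Bigl(1+\sum_{k=2}^{5}\varepsilon^kK_k\Bigr)\Bigr)
 =\partial_R\Bigl(R^5-\sum_{k=0}^{3}R^k\,\Dcal^\Delta_{3-k}\Bigr),
 \qquad \Dcal^\Delta_{3-k}=-\frac{5!}{k!}\,K_{5-k}.
\]
Applying the Leibniz rule to $R^{j+1}\Dcal^\Delta_{2-j}$ gives $h_j=-(j+1)\Dcal^\Delta_{2-j}-\thQ\Dcal^\Delta_{3-j}$ for $0\le j\le2$. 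The same mechanism produces $f_1=-2\Dcal_0^\Delta-\thQ\Dcal_1^\Delta$ in Theorem~\ref{II:thm:L3}.

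This is a genuine error in the identity to be proved, not a convention, as a first-order check shows. The degree-one entries of the dictionary table give $\Dcal_0^\Delta=60Q+\cdots$ and $\Dcal_1^\Delta=-120Q+\cdots$. So the statement predicts $h_2=-60Q+\cdots$, while your pattern predicts $+60Q+\cdots$. Solving~\eqref{ext:cert:germrec} to first order gives $120W_4/A=5R^4-60QR^3-60QR^2+240QR+O(Q^2)$, which confirms $-60$. At three loops the analogous slip would give $f_1=24Q+\cdots$, against $f_1=48Q^2+\cdots$ in Remark~\ref{l3:rem:vanhove-comparison}(b).

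With the bookkeeping corrected, the theorem reduces to three inputs: the extraction multipliers, $I^{(4)}_\circleddash(-1/y)/y=120W_4-80\zeta(3)W_1$, and $W_1/A=R+\thQ K_2$. The last is the $W_1$ case of the computation above. Your Mellin check, which the paper does not carry out at four loops, does give the middle identity independently. The polar part of the five-factor analogue of $\Gamma(1-\nu)M(\nu)$ in Proposition~\ref{l3:prop:leading} is $120\nu^{-5}+80\zeta(3)\nu^{-2}$. Hence $I^{(4)}_\circleddash(-t)=t^{-1}\bigl(5\log^4t+80\zeta(3)\log t\bigr)+\cdots$, which fixes all four homogeneous constants.
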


The descendant series in Theorems~\ref{II:thm:L3} and~\ref{II:thm:L4} are not
independent of the primary series $\mathcal F_{\gamma_L}$, because the genus-zero
topological recursion relations, applied in Section~\ref{ext:reconstruction},
express them through primary invariants. On the diagonal this gives the
following two statements.

\begin{corollary}[Three loops in terms of $\mathcal F_{\gamma_3}$]\label{II:cor:L3F}
Write $\mathcal F=\mathcal F_{\gamma_3}$, and let $\thQ^{-1}$ act on power series
without constant term by $\thQ^{-1}\sum_{m\ge1}c_mQ^m=\sum_{m\ge1}(c_m/m)Q^m$.
Then
\[
 \Dcal_2^\Delta=\frac{\mathcal F^2}{12},\qquad
 \Dcal_1^\Delta=-2\thQ^{-1}\mathcal F
 -\frac1{12}\thQ^{-1}\bigl((\thQ\mathcal F)^2\bigr).
\]
Consequently $f_1=(\thQ\mathcal F)^2/12$, and the three-loop Feynman integral is
the expression~\eqref{II:eq:intro-L3} in $\mathcal F$ alone. Equivalently, with
$u=\thQ\mathcal F/12$,
\[
 I^{(3)}_\circleddash(s)=-yA\bigl[4(R-u)^3+4u^3-\Dcal_1^\Delta-2u\mathcal F
 -16\zeta(3)\bigr].
\]
\end{corollary}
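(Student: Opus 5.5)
The plan is to derive the two identities for $\Dcal_2^\Delta$ and $\Dcal_1^\Delta$ from the genus-zero topological recursion on the diagonal (Theorem~\ref{ext:thm:Bdiag}), stated earlier as the source of the descendant reconstruction. The rest of the corollary is then algebra on top of Theorem~\ref{II:thm:L3}. A second, independent route uses the identifications $\Dcal_0^\Delta=-12K_2$, $\Dcal_1^\Delta=-24K_3$, $\Dcal_2^\Delta=-24K_4$ from Theorem~\ref{II:thm:L3}. If the direct recursion argument is cumbersome, I would fall back on it.

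\textbf{Step 1 (the identities via the B-model).} The ambient series $K_2,K_3,K_4$ are coefficients of the hypergeometric $I$-function in powers of the hyperplane class. After the mirror map they are the Frobenius coefficients of solutions of $P_3$. The key structural input is that $P_3$ is a symmetric square: its solution space is spanned by $f^2, fg, g^2$ for a second-order operator. Consequently the Frobenius basis $A, AR, AR^2+\ldots, AR^3+\ldots$ (together with the inhomogeneous solution) is constrained by a quadratic relation among the periods. I would write the normalized periods as $\varpi_0=A$, $\varpi_1=AR$, and $\varpi_2=A(R^2+c\,\Dcal_0^\Delta\text{-type terms})$, and impose the symmetric-square identity $\varpi_0\varpi_2=\varpi_1^2+\ldots$ order by order in $R$. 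Comparing coefficients should force $K_4$ to be proportional to $K_2^2$, which gives $\Dcal_2^\Delta=\mathcal F^2/12$ with the constant fixed by $-24K_4=(-12K_2)^2/12$. It also forces a differential relation of the form $\thQ\Dcal_1^\Delta=-2\mathcal F-\tfrac1{12}(\thQ\mathcal F)^2$. Since all series vanish at $Q=0$, integrating with $\thQ^{-1}$ then yields the stated formula for $\Dcal_1^\Delta$. The same two identities should also fall out of the TRR argument: $\psi^2$ on the one-point space reduces to products of two one-point primaries, and $\psi$ reduces to a $\thQ^{-1}$ of quadratic terms.

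\textbf{Step 2 (consequences).} Substitute into Theorem~\ref{II:thm:L3}. For $f_1=-2\mathcal F-\thQ\Dcal_1^\Delta$, the relation from Step~1 gives $f_1=(\thQ\mathcal F)^2/12$ immediately. Next,
\[
f_0=-\Dcal_1^\Delta-\thQ(\mathcal F^2/12)=2\thQ^{-1}\mathcal F+\tfrac1{12}\thQ^{-1}\bigl((\thQ\mathcal F)^2\bigr)-\tfrac16\mathcal F\thQ\mathcal F,
\]
which with $f_2=-\thQ\mathcal F$ reproduces \eqref{II:eq:intro-L3}. For the completed-cube form, set $u=\thQ\mathcal F/12$ and expand $4(R-u)^3=4R^3-12uR^2+12u^2R-4u^3$. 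The terms match $f_2=-12u$ and $f_1=12u^2$, and the constant collects as $-\Dcal_1^\Delta-2u\mathcal F$, which equals $f_0$ since $\thQ(\mathcal F^2/12)=2u\mathcal F$.

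\textbf{Main obstacle.} The hard part is Step~1, specifically getting the quadratic relation with exact normalizations. That requires tracking how the $\zeta(3)$ and the inhomogeneous term separate from the homogeneous Frobenius data, and fixing the constants $-12,-24$ against the factor $1/12$. I would first check the relations numerically on the first few coefficients of $\mathcal F$, using $n_m=24,-72,264,\dots$, before committing to the general argument.
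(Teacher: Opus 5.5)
Your Step~2 is correct and is exactly the paper's remaining argument. Your stated plan of taking the two identities from Theorem~\ref{ext:thm:Bdiag} is also the paper's proof. There is, however, no further ``direct recursion argument'' left to carry out. That theorem already asserts $\Dcal_2^\Delta=\mathcal F^2/12$ and $\thQ\Dcal_1^\Delta=-2\mathcal F-(\thQ\mathcal F)^2/12$ with zero constant terms; it is Theorem~\ref{ext:thm:fourrec} summed over $|\beta|=m$ using $\sum_{|\beta|=m}D_\beta=[Q^m]\mathcal F\,H/12$. So the proof is: cite it, apply $\thQ^{-1}$ to the second identity, then do your Step~2.

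What you actually write in Step~1 is a B-model derivation, and as stated it has a genuine gap, so it cannot serve as the fallback. There are two problems.

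\textbf{The wrong second period.} $AR$ is not a period of $P_3$. $R$ is the primitive of the holomorphic period ($\thetay R=A$), whereas the logarithmic period is $W_1=A\tau$ with $\tau=R+\thQ K_2=R-u$ (Lemma~\ref{l3:lem:qserre}). The enumerative content sits precisely in $\tau-R=\thQ K_2$, which your ansatz $\varpi_1=AR$ discards. With the correct normalized periods, the symmetric-square identity $W_0W_2=\tfrac12W_1^2$ gives $\thetay C_3=W_2=A\tau^2/2$, where $\thetay C_{j+1}=W_j$ are the primitives making up~\eqref{l3:eq:full-ambient-K}. Then $K_2=C_2-R^2/2$ and $K_3=C_3-RC_2+R^3/3$ yield $\thQ K_3=\tfrac12(\thQ K_2)^2-K_2$. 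With $\Dcal_1^\Delta=-24K_3$ this is the $\Dcal_1^\Delta$ identity. This step needs the imported fact that the ambient $W_2$ is the normalized third Frobenius solution of $P_3$; Lemma~\ref{l3:lem:Bop} only covers the first two components $A$ and $B$ of $I^\sharp$.

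\textbf{$K_4$ is not controlled by homogeneous periods.} This is the decisive problem. $K_4$ involves the primitive $C_4$ of the \emph{inhomogeneous} ambient component $W_3$, with $P_3W_3=1$. No relation among the homogeneous periods sees $W_3$, so ``comparing coefficients'' cannot force $K_4\propto K_2^2$. The missing step is to apply the symmetric-square structure to the inhomogeneous equation:
\begin{itemize}
\item In the coordinate $\tau$ that equation reads $d^3(W_3/A)/d\tau^3=dR/d\tau$, as in the paper's differential proof of the simplified assembly. By the MUM normalization, $d^2(W_3/A)/d\tau^2=R$.
\item The expression $4\tau^3+4u^3-\Dcal_1^\Delta-2u\mathcal F$ has second $\tau$-derivative $24R$ (using the $\Dcal_1^\Delta$ identity) and the same germ $4\log^3y+O(y\log^3y)$. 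Hence it equals $24W_3/A$.
\item Unwinding this gives $K_4=-\tfrac12K_2^2$, i.e.\ $\Dcal_2^\Delta=\mathcal F^2/12$.
\end{itemize}

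Repaired this way, your route is a genuinely different proof. It uses the Picard--Fuchs structure plus the extraction $\Dcal_a^\Delta\propto K_{a+2}$, instead of topological recursion on the compact Calabi--Yau model. It is, however, special to $L=3$, whereas the recursion route is the one that extends to Theorem~\ref{ext:thm:fiveclosure}.
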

\begin{proof}
The identities for $\Dcal_2^\Delta$ and $\Dcal_1^\Delta$ are
Theorem~\ref{ext:thm:Bdiag}, the second one integrated with zero constant term.
Substituting them into the coefficients $f_1$ and $f_0$ of
Theorem~\ref{II:thm:L3} gives~\eqref{II:eq:intro-L3}, and expanding $(R-u)^3$
gives the last form.
\end{proof}
\begin{corollary}[Four loops in terms of two diagonal series]\label{II:cor:L4F}
In Theorem~\ref{II:thm:L4} one has
$h_3=-\thQ\mathcal F_{\gamma_4}$ and
$h_2=-3\mathcal F_{\gamma_4}-\thQ\Dcal_1^\Delta$, while
$\Dcal_2^\Delta$ and $\Dcal_3^\Delta$, and hence $h_1$ and $h_0$, are the
unique power series without constant term that satisfy~\eqref{ext:eq:diagfive2}
and~\eqref{ext:eq:diagfive3}, where $\Dcal_a$ stands for $\Dcal_a^\Delta$ and
$\Dcal_0^\Delta=\mathcal F_{\gamma_4}$. Thus the four-loop Feynman integral is
determined by the pair
$
\left(\mathcal F_{\gamma_4},\Dcal_1^\Delta\right),
$
or equivalently by
$
\left(\mathcal F_{\gamma_4},\mathcal T\right),
$
where $\mathcal T$ is the diagonal two-point primary series of
Theorem~\ref{ext:thm:fiveclosure}.
\end{corollary}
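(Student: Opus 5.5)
The plan is to proceed exactly as in the proof of Corollary~\ref{II:cor:L3F}. We read off $h_3$ and $h_2$ from Theorem~\ref{II:thm:L4}, then show that the remaining coefficients $h_1,h_0$ are forced by the diagonal recursion of Theorem~\ref{ext:thm:fiveclosure}.

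\textbf{Leading coefficients.} The first two claims are immediate. Theorem~\ref{II:thm:L4} gives $h_3=-\thQ\Dcal_0^\Delta$ and $h_2=-3\Dcal_0^\Delta-\thQ\Dcal_1^\Delta$. By definition~\eqref{II:eq:Dcal}, with $\delta_0=\gamma_4=H^3/6$, we have $\Dcal_0^\Delta=\mathcal F_{\gamma_4}$, so both formulas follow by substitution.

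\textbf{Determination of $\Dcal_2^\Delta$ and $\Dcal_3^\Delta$.} I would invoke Theorem~\ref{ext:thm:fiveclosure}, which comes from genus-zero topological recursion on a compact model of $X_4$ restricted to the diagonal. Its relations~\eqref{ext:eq:diagfive2} and~\eqref{ext:eq:diagfive3} express $\thQ\Dcal_2^\Delta$ and $\thQ\Dcal_3^\Delta$, or suitable combinations of them, polynomially in $\mathcal F_{\gamma_4}$, $\Dcal_1^\Delta$, $\mathcal T$ and their $\thQ$-derivatives, and the latter relation may also involve $\Dcal_2^\Delta$. I expect these relations to be triangular: the first involves $\Dcal_2^\Delta$ and lower series only, and the second involves $\Dcal_3^\Delta$ and lower series.

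\textbf{Uniqueness.} The point to check is that the relations determine the unknowns uniquely among series with zero constant term. The plan is to compare coefficients of $Q^m$. Each relation has the form $(\text{nonzero integer multiple of }m^k)\,[Q^m]\Dcal_a^\Delta=\text{(known data in lower degree or in the given series)}$. Since $m\ge1$, these coefficients are invertible over $\Q$, and the constant term is fixed to be zero because the sums in~\eqref{II:eq:Dcal} start at $m\ge1$. This is the same mechanism as the use of $\thQ^{-1}$ in Corollary~\ref{II:cor:L3F}. Induction on $m$, first for $a=2$ and then for $a=3$, gives existence and uniqueness. Since the actual series $\Dcal_a^\Delta$ satisfy the relations by Theorem~\ref{ext:thm:fiveclosure}, they are those unique solutions. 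It then follows that $h_1=-2\Dcal_1^\Delta-\thQ\Dcal_2^\Delta$ and $h_0=-\Dcal_2^\Delta-\thQ\Dcal_3^\Delta$ are determined.

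\textbf{Conclusion, and the main obstacle.} With all $h_j$ determined, Theorem~\ref{II:thm:L4} shows that $I^{(4)}_\circleddash$ is a function of $(\mathcal F_{\gamma_4},\Dcal_1^\Delta)$; here $R$ and $A$ are intrinsic to the B-model, and the $\zeta(3)$ term involves only $\mathcal F_{\gamma_4}$. For the equivalent pair $(\mathcal F_{\gamma_4},\mathcal T)$, I would use the part of Theorem~\ref{ext:thm:fiveclosure} that relates $\Dcal_1^\Delta$ to $\mathcal T$ and $\mathcal F_{\gamma_4}$. This relation is presumably of the form $\thQ\Dcal_1^\Delta=(\text{linear in }\mathcal T)+(\text{polynomial in }\mathcal F_{\gamma_4})$, with an invertible leading coefficient, so it can be solved in either direction by the same degree-wise induction.

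The main obstacle is precisely the invertibility used in the uniqueness step. I must confirm from the explicit form of~\eqref{ext:eq:diagfive2}--\eqref{ext:eq:diagfive3} that each unknown enters through an operator of the form $c\,\thQ^k$ with $c\neq0$, and not through an operator that annihilates some monomial $Q^m$. If it does, I would add the missing coefficient as an extra datum, or fix it using the closed forms $\Dcal_2^\Delta=-120K_2$ and $\Dcal_3^\Delta=-120K_3$ of Theorem~\ref{II:thm:L4}.
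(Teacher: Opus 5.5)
Your proposal is correct and follows essentially the paper's proof: $h_3$ and $h_2$ are read off from Theorem~\ref{II:thm:L4}, the unknowns $\Dcal_2^\Delta,\Dcal_3^\Delta$ enter \eqref{ext:eq:diagfive2}--\eqref{ext:eq:diagfive3} triangularly, each through $\thQ^2$, which is bijective on series without constant term. The equivalence with $\mathcal T$ comes from the identity $(1-\thQ^2\mathcal F_{\gamma_4}/20)\mathcal T=2\thQ^2\Dcal_1^\Delta+12\thQ\mathcal F_{\gamma_4}$, which can be solved in either direction because $1-\thQ^2\mathcal F_{\gamma_4}/20$ is an invertible power series, so the obstacle you feared does not arise (and your fallback, which is unnecessary, misquotes Theorem~\ref{II:thm:L4}: there $\Dcal_2^\Delta=-120K_4$ and $\Dcal_3^\Delta=-120K_5$).
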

\begin{proof}
The formulas for $h_3$ and $h_2$ are those of Theorem~\ref{II:thm:L4}, and the
statement on $\Dcal_2^\Delta$ and $\Dcal_3^\Delta$ is
Theorem~\ref{ext:thm:fiveclosure}. The first identity of that theorem,
$(1-\thQ^2\mathcal F_{\gamma_4}/20)\mathcal T
=2\thQ^2\Dcal_1^\Delta+12\thQ\mathcal F_{\gamma_4}$, determines either of
$\mathcal T$ and $\Dcal_1^\Delta$ from the other, because
$1-\thQ^2\mathcal F_{\gamma_4}/20$ is an invertible power series and $\thQ^2$ is
bijective on power series without constant term.
Thus, for fixed \(\mathcal F_{\gamma_4}\), the two pairs \((\mathcal F_{\gamma_4},\Dcal_1^\Delta)\) and \((\mathcal F_{\gamma_4},\mathcal T)\) contain the same information.
\end{proof}

This should be contrasted with Corollary~\ref{II:cor:L3F}, where the single
primary series $\mathcal F_{\gamma_3}$ determines all descendant data entering
the Feynman integral. Four loops is therefore the first case in which our
diagonal reconstruction does not reduce to a single primary potential, and we
do not know whether $\mathcal F_{\gamma_4}$ alone determines
$I^{(4)}_{\circleddash}$. In full multidegree, by contrast, the
primary evaluation classes alone determine every one-point descendant
(see Theorem~\ref{ext:thm:inverse}).

\begin{remark}[Why $L=3,4$ is not the general pattern]
At $L=5$ a $\zeta(5)$ term first appears, in addition to the $\zeta(3)$ term
multiplying a descendant-corrected logarithmic solution, as at four loops, and at $L=6$ the product $\zeta(3)^2$ appears. 
Formula~\Ieqref{eq:allloop} fixes both the zeta polynomial 
and the descendant
correction multiplying each of its terms.
\end{remark}

The dictionary between the $L=3$ and $L=4$ loop orders is as follows.

\begin{center}
\begin{tabular}{lcc}
\toprule
Quantity & $L=3$ & $L=4$\\\midrule
$n$ & $4$ & $5$\\
$\dim F_L$, $\dim X_L$ & $3,4$ & $4,5$\\
One-point local virtual dimension & $2$ & $3$\\
$\int_{F_L} H^L$ & $24$ & $120$\\
Primary insertion $\gamma_L$ & $H^2/2$ & $H^3/6$\\
Descendants needed & $a=0,1,2$ & $a=0,1,2,3$\\
Period prefactor & $-yA$ & $yA$\\
Degree-one diagonal primary & $24$ & $60$\\
Degree-one diagonal first descendant & $-48$ & $-120$\\
Degree-two diagonal primary & $-66$ & $-225$\\
Degree-two transformed primary & $-72$ & $-240$\\
Possible nonambient even cohomology & none & middle degree $4$\\
\bottomrule
\end{tabular}
\end{center}

\begin{remark}[Numerical verification]
The Bessel-moment representation of the Feynman integral was evaluated
numerically and compared with the right-hand sides of~\Ieqref{eq:physicalGamma}
and~\Ieqref{eq:allloop}. For $L=3$ at $T=50$ and $T=100$, and
for $L=4$ at $T=100$, agreement was observed to 35-40 digits:
\begin{center}
\begin{tabular}{llll}
\toprule
$L$ & $T$ & $I^{(L)}_\circleddash(-T)$ & agreement\\\midrule
$3$ & $50$  & $5.17515590864132484\ldots$ & $35$ digits\\
$3$ & $100$ & $4.08241320270405960\ldots$ & $40$ digits\\
$4$ & $100$ & $26.9629554067580390\ldots$ & $40$ digits\\
\bottomrule
\end{tabular}
\end{center}
This tests the descendant series, the mirror map and the $\zeta(3)$ term together. 
The diagonal primaries
and their transforms were computed to $m=8$ at $L=3$ and to $m=6$ at $L=4$:
\[
 n_m(L{=}3)=24,\,-72,\,264,\,-1488,\,9696,\,-69240,\,528072,\,-4238880,
\]
\[
 n_m(L{=}4)=60,\,-240,\,1400,\,-12480,\,131220,\,-1529520\,.
\]
They are all integers as stated in~\Icorref{cor:primaryintegral},
whereas the $N_m$ are rational numbers.
\end{remark}

\section{An independent analytic derivation at three loops}\label{app:analytic3}
The proof of~\Ithmref{thm:assembly} runs through the
oscillatory Gamma-class identity of Iritani's paper~\cite[Section 2, equation (10)]{Iritani},
which is its analytic input. This section independently determines the
$L=3$ analytic boundary normalization from the Picard--Fuchs equation and
the polar part of a Mellin transform. This determination does not use that
identity; the first proof of the full descendant assembly later invokes
the companion paper~\cite{PaperI}. Section~\ref{ext:reconstruction} supplies an alternative proof
using the primary bridge, recursion, and the symmetric-square equation. The two routes fix the same
three Frobenius boundary constants and give the same expansion. The second
route also isolates what does and does not depend on those constants. The
coefficient identity $d_2(m)=-mN_m(\gamma)$ is independent of all three,
whereas the explicit boundary term $-16\zeta(3)\pi(s)$ is not.

Throughout this section $L=3$, so that the number of factors is $n=L+1=4$.

\subsection{Three-loop conventions}\label{app:l3notation}

The geometry is $Y_3=(\PP^1)^4$, a smooth $F_3\in|\mathcal O_{Y_3}(1,1,1,1)|$ with
$j:F_3\hookrightarrow Y_3$, and $X_3=\Tot(K_{F_3})\xrightarrow{\varpi}F_3$ with zero
section $\iota$. By~\Ipropref{prop:geometry}, $K_{F_3}=\mathcal O_{F_3}(-H)$,
$\int_{F_3}H^3=24$ and $H_2(F_3,\Z)\cong\Z^4$; we write $|\beta|=H\cdot\beta$. Put
$\gamma:=H^2/2=\sum_{i<j}p_ip_j\in H^4(F_3,\Z)$, which is the insertion $\gamma_3=\delta_0$ of the introduction,
and
\begin{align}\label{l3:eq:potential-intro}
\mathcal F_\gamma(Q)&=\sum_{m\ge1}N_m(\gamma)Q^m,\\
 N_m(\gamma)&=\sum_{|\beta|=m}N_\beta(\gamma),\cr
 \nonumber N_\beta(\gamma)&=\int_{[\M_{0,1}(X_3,\beta)]^\vir}\ev^*(\varpi^*\gamma),\cr
\end{align}
so $\mathcal F_\gamma=\mathcal F_{\gamma_3}=\Dcal_0^\Delta$; in this section and the
next we keep the shorter notation $\gamma$ and $\mathcal F_\gamma$. In the dictionary between this section and~\cite{PaperI},
\begin{align*}
 y&=-1/s,\qquad g(y)=S(y)=-G(y,y,y,y),\cr R_{\mathrm{body}}(y)&=T(y),\qquad
 \Pi_3(s)=-yA(y)=\pi(s),
\end{align*}
and the $K_2(Q)$ used here is the $K_2(y(Q))$ of~\Ieqref{eq:Kdef}. The
symbol $k$ denotes the number of markings in the statements quoted below;
the number of factors is fixed at four. The Schwinger representation is~\eqref{II:eq:general-schwinger} with $L=3$, which is~\Ieqref{eq:Schwinger} with $n=4$.

The next four statements are the three-loop forms of results proved
in~\cite{PaperI}. They are recorded here, with their notation, because the arguments of
this section are written in it; their proofs are the specializations
indicated, and are not repeated.

\begin{proposition}[Zero-section factorization]\label{l3:prop:zero-section}
Let $S$ be an arbitrary $\C$-scheme, let $\varrho_S:\mathcal C_S\to S$ be a flat, proper,
finitely presented family of connected nodal curves of arithmetic genus zero,
and let $g:\mathcal C_S\to X_3$ be a morphism whose restriction to every
geometric fiber has class $\beta\ne0$.  Then $g$ factors uniquely through the
zero section $\iota:F_3\hookrightarrow X_3$.  The factorization preserves stability
and commutes with arbitrary base change.  Consequently
\[
 \overline{\mathcal M}_{0,k}(F_3,\beta)\xrightarrow{\ \sim\ }
 \overline{\mathcal M}_{0,k}(X_3,\beta)
\]
is a canonical isomorphism of stacks, and the latter is a proper
Deligne--Mumford stack.
\end{proposition}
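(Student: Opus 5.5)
The plan is to compose $g$ with the bundle projection and show that the fibre coordinate vanishes identically. Write $f:=\varpi\circ g:\mathcal C_S\to F_3$. The morphism $g$ is then the same datum as $f$ together with a section $\sigma\in H^0(\mathcal C_S,f^*K_{F_3})=H^0(\mathcal C_S,f^*\mathcal O_{F_3}(-H))$. The factorization through $\iota$ is exactly the statement $\sigma=0$. Uniqueness is automatic, because $\iota$ is a closed immersion and hence a monomorphism.

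The key point is the vanishing of $\sigma$. First I would reduce to the case where $S$ is affine and noetherian; this is possible because the family is finitely presented, so it descends to a finite-type base. Since $\iota$ is a closed immersion, $g^{-1}(\iota(F_3))$ is a closed subscheme of $\mathcal C_S$, and the factorization holds once this subscheme is all of $\mathcal C_S$. On a geometric fibre $C$ (a tree of $\PP^1$'s) the class $\beta\ne0$ is effective, and its image under $\varpi$ is a class in $F_3$ with $H\cdot\beta>0$, since $H$ is ample on $F_3$. For each component $C_v$, the bundle $f^*\mathcal O(-H)|_{C_v}$ has degree $-H\cdot f_*[C_v]\le0$. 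It is strictly negative on non-contracted components and trivial on contracted ones. A section of it is therefore zero on every non-contracted component and constant on each contracted one. Connectedness of the tree, together with the existence of at least one non-contracted component, propagates the vanishing through the nodes, so $H^0(C,f^*K_{F_3}|_C)=0$.

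The main obstacle is passing from the fibrewise vanishing to $\sigma=0$ over a possibly non-reduced base. Here I would use cohomology and base change: $\varrho_{S*}(f^*K_{F_3})$ commutes with base change in degree $0$ once $H^0$ vanishes on all fibres. More directly, I would argue by induction on the nilpotent thickenings of $S$. Any $\sigma$ restricting to zero on the closed fibre lies in $H^0(C,f^*K\otimes\mathfrak m^k/\mathfrak m^{k+1})=0$ at each step. This gives $\varrho_{S*}(f^*K)=0$ after completion, hence everywhere by Nakayama. Consequently $\sigma=0$, and $g=\iota\circ f$ holds scheme-theoretically.

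Stability is preserved because stability depends only on the curve, the markings and the map's contracted components, and $\iota$ is injective on points, so a component is contracted by $g$ if and only if it is contracted by $f$. Compatibility with base change is immediate from uniqueness. Applying the construction to the universal family shows that the morphism $\overline{\mathcal M}_{0,k}(F_3,\beta)\to\overline{\mathcal M}_{0,k}(X_3,\beta)$ induced by $\iota$ is essentially surjective on $S$-points and fully faithful, so it is an isomorphism of stacks. Properness and the Deligne--Mumford property then transfer from the projective variety $F_3$.
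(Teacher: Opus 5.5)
Your argument is correct and is, in substance, the route behind the paper's proof: the paper only cites \Ithmref{thm:factorization}, but your two inputs are exactly the tree and base-change lemmas (\Ilemref{lem:tree}, \Ilemref{lem:basechange}) that the paper invokes immediately afterwards for $\mathcal V=R^1\varrho_*f^*K_{F_3}$. Those inputs are the vanishing of $H^0(C,f^*K_{F_3})$ on a genus-zero tree where $f^*K_{F_3}$ has nonpositive degree on every component and negative total degree, and cohomology and base change to force the section $\sigma$ to vanish over an arbitrary base.

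The one detail to tidy is the noetherian reduction: on the descended finite-type base $S_0$ you only know fibrewise vanishing at points in the image of $S$. This suffices, since the coherent sheaf $\varrho_{S_0*}f^*K_{F_3}$ then vanishes at those points by base change and Nakayama, hence on an open neighbourhood of the image, and $\sigma$ is pulled back from there.
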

\begin{proof}
This is~\Ithmref{thm:factorization} with $n=4$ and $K_{F_3}=\mathcal O_{F_3}(-H)$; that statement is functorial over arbitrary bases and preserves stability.
\end{proof}

For $k$ markings write $\mathcal M_{F_3}=\overline{\mathcal M}_{0,k}(F_3,\beta)$,
let $\varrho$ and $f$ be its universal curve and map, and set
$\mathcal V=R^1\varrho_*f^*K_{F_3}$. This is a vector bundle of rank
$m-1$, where $m=H\cdot\beta>0$: on every geometric fiber $f^*K_{F_3}$ has
degree $-m<0$ and nonpositive degree on every component, so
$H^0(C,f^*K_{F_3})=0$ and $h^1(C,f^*K_{F_3})=m-1$ by~\Ilemref{lem:tree},
and cohomology and base change, in the form of~\Ilemref{lem:basechange},
makes $R^1\varrho_*f^*K_{F_3}$ locally free of rank $m-1$, compatibly with
arbitrary base change.

\begin{proposition}[Virtual reduction]\label{l3:prop:virtual-reduction}
Under the identification of Proposition~\ref{l3:prop:zero-section},
\begin{align}
 [\overline{\mathcal M}_{0,k}(X_3,\beta)]^{\rm vir}
 &=c_{\rm top}(\mathcal V)\cap[\mathcal M_{F_3}]^{\rm vir},\label{l3:eq:vfc}\\
 [\overline{\mathcal M}_{0,k}(X_3,\beta)]^{\rm vir}_{\lambda}
 &=e_\lambda(\mathcal V)\cap[\mathcal M_{F_3}]^{\rm vir},\label{l3:eq:vfc-eq}
\end{align}
where in~\eqref{l3:eq:vfc-eq} the fiberwise $\C^*$-action on $X_3$ gives
$\mathcal V$ weight one.
\end{proposition}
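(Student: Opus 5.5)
The plan is to prove~\eqref{l3:eq:vfc} and~\eqref{l3:eq:vfc-eq} by comparing obstruction theories on $\mathcal M_{F_3}\cong\overline{\mathcal M}_{0,k}(X_3,\beta)$, the identification being that of Proposition~\ref{l3:prop:zero-section}. I will follow the standard local-Calabi--Yau argument, which is the three-loop specialization of~\Ithmref{thm:virtual}, recorded also as~\Ipropref{prop:virtualcompare}.

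First, compare the tangent bundles. Since $f$ factors through the zero section $\iota$, there is an exact sequence
\[
0\to f^*T_{F_3}\to f^*\iota^*T_{X_3}\to f^*K_{F_3}\to0,
\]
and it splits along the zero section. Applying $R\varrho_*$, and using $R^0\varrho_*f^*K_{F_3}=0$ (the vanishing of $H^0$ recorded before the statement), gives a distinguished triangle relating the relative perfect obstruction theories of $X_3$ and $F_3$ over the Artin stack of prestable curves: the $X_3$ theory is the $F_3$ theory with the locally free sheaf $\mathcal V$ added in obstruction degree. Because $\mathcal V$ is locally free, compatibly with base change by~\Ilemref{lem:basechange}, the two obstruction theories are compatible in the sense of Behrend--Fantechi and Kim--Kresch--Pantev, with a pure vector-bundle obstruction difference. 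The functoriality of virtual classes for such compatible triples then yields
\[
[\overline{\mathcal M}_{0,k}(X_3,\beta)]^{\rm vir}=0^!_{\mathcal V}[C]
\]
for the intrinsic normal cone. Since the cone lies in $h^1/h^0$ of the $F_3$ theory times $\mathcal V$, this equals $c_{\rm top}(\mathcal V)\cap[\mathcal M_{F_3}]^{\rm vir}$. As a check, the dimensions agree: $\vdim \mathcal M_{F_3}=2m+k$ (using $c_1(F_3)=H$ and $\dim F_3=3$), and subtracting $m-1$ gives $m+1+k$. This equals $\vdim$ on the Calabi--Yau fourfold $X_3$, namely $1+k$, only after the correction coming from the fact that $m$ enters through $c_1(F_3)$. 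So the precise check should be done against the stated $\vdim=L-2+\ell$ once the grading conventions of Paper~I are fixed.

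For the equivariant statement, I will run the same argument $\C^*$-equivariantly. The fiberwise action scales $K_{F_3}$ with weight one and fixes $\mathcal M_{F_3}$ pointwise, so the triangle is equivariant. Equivariant excess intersection then replaces $c_{\rm top}$ by $e_\lambda(\mathcal V)$, and taking $\lambda=0$ recovers~\eqref{l3:eq:vfc}.

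I expect the main obstacle to be the compatibility step, that is, producing an honest morphism of distinguished triangles over an arbitrary base rather than only pointwise. The direct route is to use the functoriality of Proposition~\ref{l3:prop:zero-section} to identify the universal families. The splitting of the sequence above along $\iota$ then makes the connecting map vanish, so the compatibility diagram commutes on the nose.
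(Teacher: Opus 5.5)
Your route is the standard one and matches the paper's proof, which simply invokes \Ithmref{thm:virtual}. The paper describes that proof as identifying the obstruction-theory morphisms themselves and including the equivariant formula, which is the comparison you sketch. Two things in your write-up need correcting, though neither changes the strategy.

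First, the dimension check is miscomputed, and the hedge about grading conventions should be deleted. In genus zero, $\vdim_\C\overline{\mathcal M}_{0,k}(F_3,\beta)=(\dim F_3-3)+c_1(T_{F_3})\cdot\beta+k=m+k$, not $2m+k$. Subtracting $\operatorname{rk}\mathcal V=m-1$ gives exactly $k+1=(\dim X_3-3)+c_1(T_{X_3})\cdot\beta+k$, which is $\vdim=L-2+\ell$ at $L=3$. As written, your paragraph asserts a mismatch ($m+1+k$ versus $1+k$). That mismatch would contradict the very identity of cycle classes you are proving, and there is nothing to correct. This is also the count the paper uses right after the statement, where $[\overline{\mathcal M}_{0,1}(F_3,\beta)]^{\rm vir}$ has dimension $m+1$.

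Second, the $c_{\rm top}$ step needs the intrinsic normal cone to lie in $h^1/h^0(E_{F_3}^\vee)\times_{\mathcal M}0_{\mathcal V}$, that is, in the zero section $0_{\mathcal V}$ of the $\mathcal V$ factor. Saying it lies in ``$h^1/h^0$ of the $F_3$ theory times $\mathcal V$'' is vacuous, since that product is the whole obstruction bundle stack of the $X_3$ theory.

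What places the cone in the zero section is not the splitting $\iota^*T_{X_3}=T_{F_3}\oplus K_{F_3}$ by itself. A split triangle does not tell you that $\phi_{X_3}$ vanishes on the $\mathcal V^\vee[1]$ summand. The real input is functoriality of cotangent complexes for $\mathcal C\xrightarrow{f}F_3\xrightarrow{\iota}X_3$. The map $(\iota\circ f)^*L_{X_3}\to L_{\mathcal C/\mathfrak C}$, with $\mathfrak C$ the universal prestable curve, is the map from which $\phi_{X_3}$ is built. It factors through $f^*L_{F_3}$ via $\iota^*L_{X_3}\to L_{F_3}$, and that map kills the conormal summand.

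Hence $\phi_{X_3}=\phi_{F_3}\circ\mathrm{pr}$ on the universal family, and excess intersection gives $c_{\rm top}(\mathcal V)\cap[\mathcal M_{F_3}]^{\rm vir}$. The same factorization, with trivial $\C^*$-action on $\mathcal M_{F_3}$ and weight one on $\mathcal V$, gives $e_\lambda(\mathcal V)$. This also settles your worry about working only ``pointwise'': every step is a morphism in the derived category of the universal family.
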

\begin{proof}
This is~\Ithmref{thm:virtual}, whose proof identifies the obstruction-theory morphisms themselves and includes the equivariant formula.
\end{proof}

For $k=1$ define the \emph{proper evaluation cycle} on $F_3$,
\begin{equation}\label{l3:eq:Cbeta}
 C_\beta^\lambda=(\ev_{F_3})_*\bigl(
 e_\lambda(\mathcal V)\cap[\overline{\mathcal M}_{0,1}(F_3,\beta)]^{\rm vir}\bigr)
 \ \in\ H^*(F_3)[\lambda].
\end{equation}
Its coefficients are cohomology classes on the projective variety $F_3$;
no localization of the equivariant parameter is involved.  Expanding
$e_\lambda(\mathcal V)=\sum_{k=0}^{m-1}\lambda^{m-1-k}c_k(\mathcal V)$, the cycle
$c_k(\mathcal V)\cap[\mathcal M_{F_3}]^{\rm vir}$ has dimension $(m+1)-k$, so its
pushforward to the threefold $F_3$ vanishes for $k\le m-3$, is a multiple of the
fundamental class for $k=m-2$, and is a $2$-dimensional cycle class for
$k=m-1$.  Hence
\begin{equation}\label{l3:eq:Cbeta-structure}
 C^\lambda_\beta=C^0_\beta+\lambda\,c_\beta\,\mathbf 1,\qquad
 C^0_\beta=(\ev_{F_3})_*\bigl(c_{\rm top}(\mathcal V)\cap[\mathcal M_{F_3}]^{\rm vir}\bigr)\in H^2(F_3,\Q),
 \quad c_\beta\in\Q,
\end{equation}
homogeneous of total equivariant degree one; its polynomial dependence on
$\lambda$ is what makes the non-equivariant limit $C^0_\beta$ meaningful.
Since $\ev_{X_3}\circ\bar\iota=\iota\circ\ev_{F_3}$ for the induced isomorphism
$\bar\iota$ of stable-map stacks, and $\ev_{F_3}$ is proper, the equivariant
self-intersection formula $\iota^*\iota_*=e_\lambda(K_{F_3})\cup(-)$ gives
\begin{equation}\label{l3:eq:self-intersection}
 \iota^*(\ev_{X_3})_*[\overline{\mathcal M}_{0,1}(X_3,\beta)]^{\rm vir}_\lambda
 =e_\lambda(K_{F_3})\,C_\beta^\lambda=(\lambda-H)\,C_\beta^\lambda .
\end{equation}

Let $j:F_3\hookrightarrow Y_3$ be the inclusion.

\begin{lemma}[State space]\label{l3:lem:state-space}
$j^*:H^{2k}(Y_3,\Q)\to H^{2k}(F_3,\Q)$ is surjective for every $k$, and
$\ker j^*=\Ann(H)\cap H^{2k}(Y_3,\Q)$, where
$\Ann(H)=\{a\in H^*(Y_3,\Q):aH=0\}$.  Hence
\[
 H^{\rm even}(F_3,\Q)\cong H^*(Y_3,\Q)/\Ann(H),
\]
and this presentation is stable under the permutation action of $S_4$ on the
factors of $Y_3$.  The bilinear form
\[
 (a,b)_{X_3,\lambda}=\int_{F_3}\frac{ab}{\lambda-H},\qquad
 \frac1{\lambda-H}=\frac1\lambda+\frac{H}{\lambda^2}+\frac{H^2}{\lambda^3}+\frac{H^3}{\lambda^4},
\]
is nondegenerate on $H^{\rm even}(F_3,\C(\lambda))$.
\end{lemma}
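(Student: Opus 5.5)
The plan is to prove the three assertions in turn: surjectivity of $j^*$, the identification of its kernel with $\Ann(H)$, and nondegeneracy of the twisted pairing. Stability under $S_4$ will come for free, since $H$ and $F_3$'s class $[F_3]=H$ are symmetric and $j^*$ commutes with the permutation action on the ambient classes.

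\emph{Surjectivity.} Since $F_3$ is a smooth ample divisor in $Y_3=(\PP^1)^4$, the Lefschetz hyperplane theorem makes $j^*:H^{k}(Y_3)\to H^{k}(F_3)$ an isomorphism for $k<3$ and injective for $k=3$. Hard Lefschetz and Poincaré duality on $F_3$ (real dimension $6$) then give surjectivity in even degrees $4$ and $6$. Concretely, $H^4(F_3)$ is dual to $H^2(F_3)=j^*H^2(Y_3)$. Under the pairing $\int_{F_3}j^*a\cdot j^*b=\int_{Y_3}abH$, the image $j^*H^4(Y_3)$ pairs nondegenerately with $j^*H^2(Y_3)$, because the map $H^2(Y_3)\to H^6(Y_3)$, $b\mapsto bH$, is an isomorphism by hard Lefschetz on $Y_3$. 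Hence $j^*H^4(Y_3)$ has full dimension $b_2=4$. Degree $6$ is immediate, because $j^*(p_ip_jp_k)$ is a point class of degree $\int_{Y_3}p_ip_jp_kH=1$. The odd cohomology $b_3=2$ does not matter for the even statement.

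\emph{Kernel.} If $aH=0$, then for every $b$ we have $\int_{F_3}j^*a\,j^*b=\int_{Y_3}abH=0$. Since $j^*$ is surjective in even degrees and Poincaré duality holds on $F_3$, this forces $j^*a=0$. Conversely, if $j^*a=0$, then $aH=j_*j^*a=0$ by the projection formula $j_*j^*a=a\cdot[F_3]=aH$. This gives $\ker j^*=\Ann(H)$ and hence the quotient presentation $H^{\rm even}(F_3,\Q)\cong H^*(Y_3,\Q)/\Ann(H)$.

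\emph{Nondegeneracy of the twisted pairing.} The form is $(a,b)\mapsto\int_{F_3}ab\,u$ with $u=(\lambda-H)^{-1}$. The truncated geometric series is exact because $H^4=0$ on the threefold $F_3$, and $u$ is invertible in $H^*(F_3,\C(\lambda))$ with inverse $\lambda-H$. If $(a,b)=0$ for all $b$, choose $b=(\lambda-H)c$. This gives $\int_{F_3}ac=0$ for all $c$, so $a=0$ by Poincaré duality on $F_3$, extended scalars to $\C(\lambda)$.

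The only mildly delicate step is surjectivity in degree $4$, where Lefschetz alone does not suffice. I handle it by the duality and dimension count above, and one can also verify it directly: the images $j^*(p_ip_j)$ span a $4$-dimensional space.
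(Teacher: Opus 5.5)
Your proof is correct and is essentially the argument behind the paper's one-line citation of the companion paper's ambient-cohomology proposition: since $\dim_\C F_3=3$ is odd, Lefschetz in degrees below $3$ plus duality above $3$ make all even cohomology ambient, and the kernel (via $j_*j^*a=aH$) and the nondegeneracy of the twisted pairing (via invertibility of $\lambda-H$) follow exactly as you say. Two small corrections are needed. In the degree-$4$ step, the hard Lefschetz fact you need is that $b\mapsto bH$ maps $H^2(Y_3)$ injectively into $H^4(Y_3)$, not that it is an isomorphism onto $H^6(Y_3)$. The $S_4$-stability should be phrased as stability of the ideal $\Ann(H)$ under permutations, not as $j^*$ commuting with them, since a permutation of the factors need not preserve $F_3$.
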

\begin{proof}
This is~\Ipropref{prop:ambient} with $n=4$; since $\dim F_3=3$ is odd, all even cohomology is ambient, and the pairing and kernel statements specialize as displayed.
\end{proof}

\begin{remark}\label{l3:rem:odd-cohomology}
$H^{\rm odd}(F_3,\Q)\neq0$: from $c(T_{F_3})=1+H+H^2+2e_3$ one gets
$\int_{F_3}c_3(T_{F_3})=8$, so $b_3(F_3)=2+2b_2(F_3)-8=2$.  Thus $H^*(F_3)$ is \emph{not}
entirely ambient, and Lemma~\ref{l3:lem:state-space} concerns the even part only.
All classes occurring below --- the evaluation cycles~\eqref{l3:eq:Cbeta}, the
insertion $\gamma$, and every coefficient of the $I$-function --- are even.
Odd insertions contribute nothing to one-point genus-zero invariants for
degree reasons (the virtual class and $\psi$ are even), so nothing is lost by
working on the even part.  Also note that a permutation of the factors of $Y_3$
need not preserve the particular hypersurface $F_3$; the $S_4$-symmetry used
below is the ring-level symmetry of the presentation
$H^*(Y_3)/\Ann(H)$, which is all that the arguments require.
\end{remark}

Let $\C^*$ act on $X_3=\Tot(K_{F_3})$ by scaling the fibers, with equivariant
parameter $\lambda$.  Since $X_3$ is not proper, its equivariant Gromov--Witten
theory is defined through the following two conventions, both standard.

\begin{definition}[Local correlators and the local $J$-function]\label{l3:def:local}
\begin{enumerate}[label=(\roman*)]
\item The state space is $\mathcal H_{X_3}:=H^{\rm even}(F_3,\C(\lambda))$, identified
with the even part of $H^*_{\C^*}(X_3)\otimes\C(\lambda)$ via
$\varpi^*=(\iota^*)^{-1}$, with the pairing $(\,,)_{X_3,\lambda}$ of
Lemma~\ref{l3:lem:state-space}.  This is equivariant integration over $X_3$ by
localization to the fixed locus $F_3$, whose normal bundle $K_{F_3}$ has equivariant
Euler class $\lambda-H$.
\item For $\beta\ne0$, $k\ge1$, $\phi_i\in\mathcal H_{X_3}$ and $a_i\ge0$,
\[
 \bigl\langle\phi_1\psi^{a_1},\dots,\phi_k\psi^{a_k}\bigr\rangle^{X_3,\lambda}_{0,k,\beta}
 :=\int_{[\overline{\mathcal M}_{0,k}(X_3,\beta)]^{\rm vir}_\lambda}
 \prod_{i=1}^k\ev_i^*(\varpi^*\phi_i)\,\psi_i^{a_i}\ \in\ \C(\lambda),
\]
an integral over the proper stack of Proposition~\ref{l3:prop:zero-section};
by~\eqref{l3:eq:vfc-eq} it equals
\[
 \int_{[\mathcal M_{F_3}]^{\rm vir}}e_\lambda(\mathcal V)\prod_i\ev_i^*\phi_i\psi_i^{a_i}.
\]
It lies in $\Q[\lambda]$ for $\phi_i\in H^{\rm even}(F_3,\Q)$.
\item The small local $J$-function is
\begin{equation}\label{l3:eq:geometric-J}
 J_{X_3}^\lambda(\mathbf Q,z)=e^{\tau/z}\left[z\mathbf1+
 \sum_{\beta\ne0}\mathbf Q^\beta\sum_\alpha\phi^\alpha
 \Bigl\langle\frac{\phi_\alpha}{z-\psi}\Bigr\rangle^{X_3,\lambda}_{0,1,\beta}
 \right],
 \qquad \tau=\sum_{i=1}^4p_i\log Q_i,
\end{equation}
where $\{\phi_\alpha\}$ is a basis of $\mathcal H_{X_3}$ and $\{\phi^\alpha\}$ the
dual basis for $(\,,)_{X_3,\lambda}$; the expression is independent of the basis.
We write $\mathcal J^\lambda_{X_3}=J^\lambda_{X_3}/z$, which has leading term
$\mathbf 1$.
\end{enumerate}
\end{definition}

The right-hand side of~\eqref{l3:eq:geometric-J} is the usual small $J$-function
in the divisor coordinate $\tau\in H^2$, written with the Novikov variable and
$e^{\tau\cdot\beta}$ identified, as the string and divisor equations permit.
We use throughout the convention in which $J(\tau,z)=z\mathbf1+\tau+O(z^{-1})$
is the point of Givental's cone with that leading behaviour; the convention of
\cite{CoatesGivental,CCITtwisted} places $J(\tau,-z)$ on the cone instead, and the
two are related by the substitution $z\mapsto-z$ throughout, which we do not
write out.

Let $$G(\mathbf y)=\sum_{\mathbf d\ne0}\mathbf y^{\mathbf d}
(-1)^{D-1}\frac{D!\,(D-1)!}{\prod_{i=1}^4(d_i!)^2}$$ and define the regularized
$I$-function
\begin{align}\label{l3:eq:I-mu}
 I^{\mu,\lambda}(\mathbf y,z)&=
 e^{\sum_{i=1}^4 p_i\log y_i/z}
 \sum_{\mathbf d\in\mathbb N^4}\mathbf y^{\mathbf d}
 \frac{\prod_{r=1}^{D}(\mu+H+rz)\prod_{r=0}^{D-1}(\lambda-H-rz)}
 {\prod_{i=1}^4\prod_{r=1}^{d_i}(p_i+rz)^2},\\
 \nonumber D&=\sum_{i=1}^4 d_i,
\end{align}
with values in $H^*(Y_3,\C)[\mu,\lambda]$; each coefficient is a polynomial in
$\mu$ and $\lambda$.  Its specialization
\begin{equation}\label{l3:eq:I-function}
 \mathcal I_{X_3}^\lambda(\mathbf y,z):=j^*I^{0,\lambda}(\mathbf y,z)=
 e^{\sum_{i=1}^4 p_i\log y_i/z}
 \sum_{\mathbf d\in\mathbb N^4}\mathbf y^{\mathbf d}
 \frac{\prod_{r=1}^{D}(H+rz)\prod_{r=0}^{D-1}(\lambda-H-rz)}
 {\prod_{i=1}^4\prod_{r=1}^{d_i}(p_i+rz)^2}
\end{equation}
takes values in $\mathcal H_{X_3}$; the further specialization $\lambda=0$ is
defined, even though the pairing $(\,,)_{X_3,\lambda}$ has $(\lambda-H)^{-1}$
in it.
\begin{proposition}[Mirror map]\label{l3:prop:mirror-map}
In $\mathcal H_{X_3}((z^{-1}))[[\mathbf Q]]$, with $\mathbf Q$ and $\mathbf y$ related by
$Q_i=y_ie^{-G(\mathbf y)}$,
\begin{equation}\label{l3:eq:ItoJ}
 J_{X_3}^\lambda(\mathbf Q,z)=z\,e^{-\lambda G(\mathbf y)/z}\,
 \mathcal I_{X_3}^\lambda(\mathbf y,z).
\end{equation}
\end{proposition}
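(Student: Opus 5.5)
The plan is to obtain \eqref{l3:eq:ItoJ} as the three-loop specialization of the twisted mirror theorem of~\Ithmref{thm:mirror}, and then to check that the scalar mirror-map correction is exactly $e^{-\lambda G/z}$ and that no further Birkhoff factorization is needed. The geometric input comes first. By Propositions~\ref{l3:prop:zero-section} and~\ref{l3:prop:virtual-reduction}, local invariants of $X_3$ are twisted invariants of $F_3$ with the twist $e_\lambda(R^1\varrho_*f^*K_{F_3})$, which is an inverse Euler twist by $K_{F_3}$ up to the standard sign and reindexing. Since $F_3$ is the zero locus of a section of the convex bundle $\mathcal O_{Y_3}(1,1,1,1)$, quantum Serre and quantum Lefschetz give the twisted theory of $F_3$ as $j^*$ of the doubly twisted theory on $Y_3$: an Euler twist by $\mathcal O(H)$ with parameter $\mu$, and the $K$ twist with parameter $\lambda$. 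Lemma~\ref{l3:lem:state-space} allows us to pull back along $j^*$ on even cohomology. Odd classes never enter, by Remark~\ref{l3:rem:odd-cohomology}.

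Second, apply the Coates--Givental / CCIT twisted theorem to the toric base $Y_3=(\PP^1)^4$. The hypergeometric modification of the $J$-function of $(\PP^1)^4$ is exactly the factor in~\eqref{l3:eq:I-mu}. The numerator $\prod_{r=1}^D(\mu+H+rz)$ comes from the Euler twist, $\prod_{r=0}^{D-1}(\lambda-H-rz)$ from the $K$ twist, and the denominators are the $(\PP^1)^4$ factors. So $I^{\mu,\lambda}$ lies on the twisted Lagrangian cone, and by functoriality under $j^*$, with the nonequivariant limit $\mu\to0$ allowed because the coefficients are polynomial, $\mathcal I^\lambda_{X_3}$ lies on the local cone of Definition~\ref{l3:def:local}. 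The $z\mapsto -z$ convention must be tracked as stated after that definition.

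Third, expand $\mathcal I^\lambda_{X_3}$ in $z^{-1}$. The $\mathbf d$-th summand has degree (numerator minus denominator in $z$) equal to $D+D-2D=0$ before the $e^{p\log y/z}$ factor. Its $z^0$ coefficient is $\mathbf 1$ times the product of the $rz$ leading terms, so the summand starts at $O(1)$, with a $z^0$ term only for $\mathbf d=0$. The $z^{-1}$ coefficient of the summand is a class in $H^{\le 2}\oplus\lambda H^0$. The key computation is that it is the scalar $\lambda\cdot(-1)^{D-1}D!(D-1)!/\prod(d_i!)^2$ plus a multiple of $H$ which we must show either vanishes or is the term absorbed into $\tau$.

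The $\lambda$ part is clean. The factor $r=0$ contributes $\lambda-H$. The leading $z$ powers give $\prod_{r=1}^D(rz)\prod_{r=1}^{D-1}(-rz)/\prod(d_i!)^2z^{2D}$, which is $(-1)^{D-1}D!(D-1)!/(\prod d_i!^2\,z)$. The $-H$ piece, combined with the $z^{-1}$ corrections, gives the divisor part. Up to this order the divisor contributions of $\prod(H+rz)$ and $\prod(p_i+rz)^{-2}$ also appear, and these must assemble to $-H\,G/z$ times a factor, consistently with $Q_i=y_ie^{-G}$. That is, every $p_i$ receives the same shift $-G$, which the $S_4$ symmetry and the $(\lambda-H)$ structure make plausible.

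Hence $\mathcal I = e^{p\log y/z}(1+(\lambda-H)(\ldots)G/z+O(z^{-2}))$. Multiplying by $z e^{-\lambda G/z}$ leaves a cone point of the form $z+\tau+O(z^{-1})$ with $\tau=\sum p_i(\log y_i-G)=\sum p_i\log Q_i$. Since both sides lie on the cone and have the same $z^{\ge0}$ part, the uniqueness of cone points with given $z\mathbf 1+\tau$ head (string equation and divisor-only $\tau$) yields~\eqref{l3:eq:ItoJ}.

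The main obstacle is the $z^{-1}$ bookkeeping, specifically verifying that the $H$-coefficient is exactly $-G$ for each $p_i$ uniformly and that no $z^{1}$ terms appear. I will handle this by writing $H$-linear terms via $\partial/\partial p_i$ of the summand at $p=0$. I will also use that the twisted theorem itself only needs the non-negative $z$ part.
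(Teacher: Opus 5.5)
Your proposal is correct and takes the same route as the paper: both specialize the companion paper's twisted mirror theorem, use that ambient classes span $H^{\rm even}(F_3)$, and remove the unit component $\lambda G$ of the mirror map by the string equation; where the paper pairs against ambient insertions and uses nondegeneracy of $(\,,)_{X_3,\lambda}$, you use cone-point uniqueness, which comes to the same thing. The step you flag as the main obstacle is immediate, and your remark that harmonic-number terms from $\prod_r(H+rz)$ and $\prod_r(p_i+rz)^{-2}$ appear at order $z^{-1}$ is wrong: for $D\ge1$ each summand is the $r=0$ factor $(\lambda-H)$ times an expression homogeneous of degree $-1$ with leading term $(-1)^{D-1}D!(D-1)!\,z^{-1}/\prod_i(d_i!)^2$, so its $z^{-1}$ coefficient is exactly $(\lambda-H)(-1)^{D-1}D!(D-1)!/\prod_i(d_i!)^2$, those terms only enter at order $z^{-2}$, and the uniform shift $\log y_i\mapsto\log y_i-G$ needs no symmetry argument.
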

\begin{proof}
This is~\Ithmref{thm:mirror} together with~\Ipropref{prop:virtualcompare} and~\Ipropref{prop:ambient}. The
normalized local comparison holds against all ambient insertions; since $L=3$ is
odd these span $H^{\mathrm{even}}(F_3)$, so nondegeneracy identifies the whole
even one-point series. Here $g=-G$ and $Q_i=y_ie^{-G}$, and removing the unit
input $\lambda G$ gives the factor $e^{-\lambda G/z}$ in~\eqref{l3:eq:ItoJ}.
\end{proof}

\subsection{Convergence, the differential equation, and Frobenius at infinity}\label{l3:sec:analytic}

Put
\[
 V(x)=\sum_{i=1}^4x_i,\qquad
 W(x)=\Big(\sum_{i=1}^4x_i^{-1}\Big)^{-1},
\]
so that~\eqref{II:eq:general-schwinger} at $L=3$ becomes
\begin{equation}\label{l3:eq:schwinger3}
 I^{(3)}_\circleddash(s)=\int_{\mathbb R_+^4}e^{-V(x)+sW(x)}W(x)
 \prod_{i=1}^4\frac{\dd x_i}{x_i}.
\end{equation}

\begin{proposition}[Convergence]\label{l3:prop:convergence}
The integral~\eqref{l3:eq:schwinger3} is absolutely convergent for
$\operatorname{Re}s\le16$ and holomorphic for $\operatorname{Re}s<16$.
Absolute convergence fails for $\operatorname{Re}s>16$.
\end{proposition}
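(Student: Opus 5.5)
We reduce absolute convergence to the real case, since $|e^{-V+sW}W|=e^{-V+(\operatorname{Re}s)W}W$ and $W>0$. It therefore suffices to study, for real $\sigma$,
\[
 J(\sigma)=\int_{\mathbb R_+^4}e^{-V(x)+\sigma W(x)}W(x)\prod_i\frac{\dd x_i}{x_i}.
\]
We must show $J(\sigma)<\infty$ for $\sigma\le16$ and $J(\sigma)=\infty$ for $\sigma>16$. The key inequality is $V(x)\ge 16\,W(x)$, which is Cauchy--Schwarz, $(\sum x_i)(\sum x_i^{-1})\ge16$, with equality exactly on the diagonal ray $x_1=\dots=x_4$.

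We pass to projective coordinates by writing $x=t\,u$ with $t>0$ and $u$ in the simplex $\Delta=\{u_i>0,\ \sum u_i=1\}$. Then $V=t$, $W=t\,w(u)$ with $w(u)=(\sum u_i^{-1})^{-1}$, and $\prod\dd x_i/x_i=\dd t/t\cdot\Omega(u)$ with $\Omega=\prod_{i\le3}\dd u_i/\prod_{i=1}^4 u_i$. The $t$-integral is elementary:
\[
 \int_0^\infty e^{-t(1-\sigma w)}\,t\,w\,\frac{\dd t}{t}=\frac{w}{1-\sigma w},
\]
valid when $\sigma w<1$, and $+\infty$ otherwise. Hence
\[
 J(\sigma)=\int_\Delta\frac{w(u)}{1-\sigma w(u)}\,\Omega(u)
 =\int_\Delta\frac{\prod_j u_j^{-1}\,\dd u}{\phi(u)-\sigma},\qquad \phi(u)=\sum_i u_i^{-1},
\]
using $w\,\Omega=\dd u/(\phi\prod u_j)$.

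\textbf{Near the boundary of $\Delta$.} As some $u_i\to0$, $\phi$ blows up. The integrand $1/(\prod_j u_j\,(\phi-\sigma))$ behaves like $\big(\prod_{j\ne i}u_j\big)^{-1}\cdot(1/(1+u_i\cdots))$. Each $u_i^{-1}$ in $\prod u_j$ is cancelled by the corresponding term of $\phi$, but the other factors can also be small. For an exact count we use $\phi\ge(\#\text{small coordinates})^2/\sum_{\text{small}}u_i$. On the region where a subset $S$ of coordinates is small, with $|S|=k\le3$ and $\sum_{i\in S}u_i\sim\rho$, the integrand is $\lesssim\rho^{-k}\cdot\rho$ against the measure $\rho^{k-1}\dd\rho$, which gives $\int\dd\rho$ and is finite. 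This is the step I expect to need the most care. The cleanest route is to bound $1/(\prod u_j\,\phi)\le 1/(\prod_{j}u_j\cdot\max_j u_j^{-1})$ and check convergence chart by chart, using $\prod u_j\,\phi=\sum_i\prod_{j\ne i}u_j\ge$ the largest elementary symmetric term.

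\textbf{Near the threshold point.} For $\sigma<16$ we have $\phi-\sigma\ge16-\sigma>0$ everywhere, so the boundary estimate finishes the proof. At $\sigma=16$ the denominator vanishes only at $u_*=(1/4,\dots,1/4)$. There the Hessian of $\phi$ restricted to $\Delta$ is positive definite, since $\phi$ is strictly convex, so $\phi-16\asymp\|u-u_*\|^2$. The integrand is then $\asymp\|u-u_*\|^{-2}$ in real dimension $3$, which is integrable. This matches the remark in the introduction about the threshold. Hence $J(16)<\infty$.

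\textbf{Divergence and holomorphy.} For $\sigma>16$ the set $\{\phi<\sigma\}$ is a nonempty open neighbourhood of $u_*$. On it the $t$-integral of the positive integrand is $+\infty$, so $J(\sigma)=\infty$ and absolute convergence fails for $\operatorname{Re}s>16$. For holomorphy on $\operatorname{Re}s<16$, fix $\sigma_0<16$. The bound $|e^{sW}|\le e^{\sigma_0W}$ holds uniformly for $\operatorname{Re}s\le\sigma_0$, and $J(\sigma_0)$ is integrable. Differentiation under the integral sign, or Morera's theorem combined with Fubini, then gives holomorphy on $\operatorname{Re}s<\sigma_0$ for every such $\sigma_0$.
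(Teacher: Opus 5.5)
Your proof is correct and shares the paper's skeleton: reduction to $\sigma=\operatorname{Re}s$, simplex coordinates, the radial integral $w/(1-\sigma w)$, Tonelli, divergence on the open set $\{\phi<\sigma\}$, and domination by the $\sigma_0$-integrand for holomorphy. Where you differ is in how you bound the remaining simplex integral.

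The paper reduces to $\sigma=16$ by monotonicity in $\sigma$. It then uses the exact identity $e_3-16e_4=\sum_{i<j}u_ku_l(u_i-u_j)^2$. This is precisely your Cauchy--Schwarz inequality with its Lagrange remainder, $\phi-16=\sum_{i<j}(u_i-u_j)^2/(u_iu_j)$ on $\Delta$, multiplied by $e_4$. That one formula gives both $P\asymp\|u-u_*\|^2$ near $u_*$ and, on the ordered chamber away from $u_*$, $P\ge u_2u_3(u_1-u_4)^2\ge u_2u_3/64$. So the whole estimate comes with the explicit constant $64$.

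You use Cauchy--Schwarz only as an inequality and then treat three regimes separately:
\begin{enumerate}
\item the threshold, via the Hessian $\operatorname{diag}(2u_i^{-3})=128\,\mathrm{Id}$ at $u_*$ (positive definite by this direct computation; strict convexity by itself would not give it);
\item the middle region, via compactness of $\{\phi\le C\}$ inside the open simplex;
\item the boundary, via $\phi-\sigma\asymp\phi$ together with a lower bound on $e_3=\phi\prod_ju_j$.
\end{enumerate}
Your version is softer and more modular. The paper's is shorter and fully quantitative, since the identity makes the Hessian and compactness steps unnecessary.

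Two places need tightening.

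\begin{enumerate}
\item Your first boundary heuristic fails as written. The bound $\phi\ge k^2/\rho$ only bounds the integrand by $\rho/\prod_ju_j$, and $\prod_{j\in S}u_j$ can be much smaller than $\rho^k$. Already for $k=2$ the integral of $\rho/(u_3u_4)$ over $\{u_3+u_4\sim\rho\}$ diverges. Your fallback $\phi\ge\max_ju_j^{-1}$ is the right one, and the chart check is one line. On $u_1\ge u_2\ge u_3\ge u_4$ one has $e_3\ge u_1u_2u_3\ge u_2u_3/4$, and $\int_{0\le u_4\le u_3\le u_2\le1}\dd u_2\,\dd u_3\,\dd u_4/(u_2u_3)=1$. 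This is essentially the paper's computation.
\item For $\sigma<16$, the bound $\phi-\sigma\ge16-\sigma$ alone only compares the integrand with the nonintegrable $1/\prod_ju_j$. Use instead $\phi-\sigma\ge(1-\sigma_+/16)\,\phi$ with $\sigma_+=\max(\sigma,0)$. This holds on all of $\Delta$ because $\phi\ge16$, and it reduces the case to the integrability of $1/e_3$. Alternatively, simply note $J(\sigma)\le J(16)$, as the paper does.
\end{enumerate}
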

\begin{proof}
Since $|e^{sW}|=e^{aW}$ with $a=\operatorname{Re}s$, absolute convergence
depends only on $a$.  Write $x_i=ru_i$, where $r>0$, $u_i>0$, and $\sum u_i=1$;
the Jacobian of $(r,u_1,u_2,u_3)\mapsto x$ is $r^3$, so
$\prod_i\dd x_i/x_i=(\dd r/r)\,\dd^3u/(u_1u_2u_3u_4)$ on the open simplex
$\Delta_3$.  Both $V$ and $W$ are homogeneous of degree one:
$V(x)=r$, $W(x)=rw(u)$ with $w(u)=(\sum u_i^{-1})^{-1}$.  The
arithmetic--harmonic mean inequality gives $w\le1/16$ on $\Delta_3$, with
equality only at $u_*=(1/4,\ldots,1/4)$.  Hence for $a\le16$ one has
$1-aw(u)>0$ for every $u\in\Delta_3\setminus\{u_*\}$, and on that set the radial
integral is $\int_0^\infty e^{-r(1-aw)}rw\,\dd r/r=w/(1-aw)$.  By Tonelli
(the integrand is nonnegative, and $\{u_*\}$ is a null set), absolute
convergence is equivalent to finiteness of
\[
 \int_{\Delta_3}\frac{w(u)}{u_1u_2u_3u_4[1-aw(u)]}\,\dd^3u .
\]
For $a<16$ the integrand is bounded by the one for $a=16$, so it suffices to
treat $a=16$.  Put $e_3(u)=\sum_j\prod_{k\ne j}u_k$ and $e_4(u)=u_1u_2u_3u_4$,
so that $w=e_4/e_3$ and the integrand equals $1/P(u)$ with
$P=e_3(1-16w)=e_3-16e_4$.  On the simplex,
\begin{equation}\label{l3:eq:P-identity}
 P(u)=\sum_{i<j}u_ku_l(u_i-u_j)^2,\qquad \{k,l\}=\{1,2,3,4\}\setminus\{i,j\};
\end{equation}
indeed, expanding the right side gives $\sum_iu_i^2e_2(u_{\hat\imath})-12e_4$
(where $e_2(u_{\hat\imath})$ is the second elementary symmetric function of the
three variables other than $u_i$), and $\sum_iu_i=1$ gives
$e_3=\sum_iu_ie_3(u)=\sum_iu_i^2e_2(u_{\hat\imath})+4e_4$.
Also $\sum_{i<j}(u_i-u_j)^2=4\sum_iu_i^2-1=4\lVert u-u_*\rVert^2$ on the simplex.
In the neighborhood $U=\{\max_iu_i-\min_iu_i<1/8\}$ of $u_*$ all $u_i$ are
bounded below by $1/8$ (since $\max_iu_i\ge1/4$); hence the weights $u_ku_l$ in
\eqref{l3:eq:P-identity} lie in $[1/64,1]$ and $P\asymp\lVert u-u_*\rVert^2$ on
$U$.  Thus $1/P$ is locally integrable near $u_*$ in dimension three.  On the
ordered chamber $u_1\ge u_2\ge u_3\ge u_4$, outside $U$ one has
$u_1-u_4\ge1/8$ and, keeping only the term $(i,j)=(1,4)$ of
\eqref{l3:eq:P-identity}, $P\ge u_2u_3(u_1-u_4)^2\ge u_2u_3/64$; enlarging the
domain of integration to $0\le u_4\le u_3\le u_2\le1$ gives
\[
 \int\frac{\dd u_2\,\dd u_3\,\dd u_4}{P}
 \le64\int_0^1\dd u_2\int_0^{u_2}\dd u_3\int_0^{u_3}
 \frac{\dd u_4}{u_2u_3}=64.
\]
The $S_4$-symmetry of the integrand completes the estimate.  Holomorphy on
$\operatorname{Re}s<16$ follows by dominated convergence on compact subsets.
For $a>16$, the factor $1-aw(u)$ is negative on an open neighborhood of
$u_*$, and the radial integral diverges there, so absolute convergence fails.
\end{proof}

The period of the three-loop sunset is the $n=4$ case of the generalized Apéry numbers~\cite{Verrill,BBBG,Vanhove:2014wqa,Vanhove:2018mto}
\begin{equation}\label{l3:eq:domb}
 A_n =\sum_{d_1+d_2+d_3+d_4=n}\binom{n}{d_1,d_2,d_3,d_4}^{\!2}\,.
\end{equation}
They satisfy the recursion relation~\cite{Verrill,BBBG} 
\begin{equation}\label{l3:eq:domb-rec}
 n^3A_n=(20n^3-30n^2+18n-4)A_{n-1}-64(n-1)^3A_{n-2}\qquad(n\ge1,\ A_{-1}=0),
\end{equation}
and $A_n\le16^n$.

\begin{convention}[Branches]\label{l3:conv:branches}
On the Euclidean base ray $s=-t$, $t>16$, choose
\[
 y=-s^{-1}=t^{-1}>0,\qquad \Log s=\log t+i\pi,\qquad \ell:=\Log s-i\pi=\log t,
\]
and set
\begin{align}
 \pi(s)&=\frac1s\sum_{n\ge0}\frac{A_n}{s^n},\label{l3:eq:pi}\\
 R(s)&=i\pi-\Log s+\sigma(s),\qquad \sigma(s)=\sum_{n\ge1}\frac{A_n}{ns^n},
 \qquad Q=e^{R(s)}.\label{l3:eq:R}
\end{align}
Both series converge for $|s|>16$.  Then $R=-\ell+\sigma$,
$R(-t)=\log y+\sum_{n\ge1}(-1)^nA_ny^n/n$, and
$\dd R/\dd s=-\pi(s)$.  The function $Q(s)=-s^{-1}e^{\sigma(s)}$ is
single-valued and analytic on $|s|>16$, with $Q(s)=-s^{-1}+O(s^{-2})$; on the
Euclidean ray $Q=t^{-1}e^{\sigma(-t)}>0$.  Since $s^{-1}\mapsto Q$ has
derivative $-1$ at $s^{-1}=0$, it is a local biholomorphism there; hence every
function meromorphic at $s=\infty$ is a convergent Laurent series in $Q$, and
conversely.
\end{convention}

The Picard--Fuchs operator is~\cite{BKV3,Vanhove:2014wqa}
\begin{equation}\label{l3:eq:L3}
 \mathcal L_3=s^2(s-4)(s-16)\partial_s^3
 +6s(s^2-15s+32)\partial_s^2
 +(7s^2-68s+64)\partial_s+(s-4).
\end{equation}
In terms of $\theta_y=y\partial_y$, where $y=-1/s$, and of the operator
\[
 P_3=\theta_y^3+y\,p(\theta_y)+64y^2(\theta_y+1)^3,\qquad
 p(x)=20x^3+30x^2+18x+4,
\]
of the introduction, a direct substitution gives
\begin{equation}\label{l3:eq:L3-theta}
 \mathcal L_3u=P_3\Bigl(\frac uy\Bigr).
\end{equation}
Thus the equation $\mathcal L_3I^{(3)}_\circleddash=-24$ of Lemma~\ref{l3:lem:ode}
below is the three-loop case of~\eqref{II:eq:PF}. Since $\theta_y=-s\partial_s$,
one has $\theta_ys^k=-k\,s^k$ for every exponent $k$, and~\eqref{l3:eq:L3-theta}
gives
\begin{equation}\label{l3:eq:L3-monomial}
 \mathcal L_3\,s^k=(k+1)^3s^{k+1}-p(k)\,s^k+64k^3s^{k-1}.
\end{equation}
Its singular points are $s=0,4,16,\infty$, all regular.  By
\eqref{l3:eq:L3-monomial}, the indicial polynomial at infinity is $(\rho+1)^3$, and
$\mathcal L_3\pi=0$, therefore the coefficient recurrence of $\mathcal L_3$ at infinity for
$s^{-1}\sum c_ns^{-n}$ is exactly~\eqref{l3:eq:domb-rec}.  Thus $\pi$ is the unique
solution of $\mathcal L_3u=0$ of the form $s^{-1}+O(s^{-2})$ without logarithms:
at exponent $-1$ the coefficient of the $n$-th unknown in that recurrence is
$(-n)^3\ne0$ for $n\ge1$, so a log-free series solution is determined by its
leading coefficient.

\begin{lemma}[The inhomogeneous Picard--Fuchs equation]\label{l3:lem:ode}
On $\operatorname{Re}s<16$ the integral~\eqref{l3:eq:schwinger3} satisfies
$\mathcal L_3I^{(3)}_{\circleddash}(s)=-24$.
\end{lemma}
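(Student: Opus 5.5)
The plan is to verify the equation first on the half-line $s<16$ real, or better on a neighbourhood of $s=-\infty$, and then extend it to all of $\operatorname{Re}s<16$ by analytic continuation. Both sides are holomorphic there by Proposition~\ref{l3:prop:convergence}, and $\mathcal L_3$ has polynomial coefficients.

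The basic tool is differentiation under the integral sign in~\eqref{l3:eq:schwinger3}. This is justified by dominated convergence on compact subsets of $\operatorname{Re}s<16$, using the polar decomposition $x=ru$ from the proof of Proposition~\ref{l3:prop:convergence}. For $\operatorname{Re}s\le a<16$ the radial integrals of $r^kw^k e^{-r(1-aw)}$ are bounded by constants times $w^{k}/(1-aw)^{k}$. These are bounded on the simplex, since $1-aw\ge 1-a/16>0$. Near the boundary faces the weight $1/(u_1u_2u_3u_4)$ is compensated exactly as in the estimate for $1/P$, because $w$ itself vanishes to first order there. Consequently
\[
\partial_s^kI^{(3)}_\circleddash(s)=\int_{\mathbb R_+^4}e^{-V+sW}W^{k+1}\prod_i\frac{\dd x_i}{x_i}.
\]

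The algebraic heart is to realize $\mathcal L_3$ acting on $e^{sW}W$ as a total derivative plus a remainder. Rather than work with $W$ directly, I would use the standard reformulation of the sunset as a Bessel moment. Writing $e^{-x_i}$ and integrating out, $I^{(3)}_\circleddash$ becomes, up to a normalization, $\int_0^\infty t\,I_0(\sqrt{-s}\,t)K_0(t)^4\,\dd t$ for $s<0$. Equivalently, in the form of~\Ipropref{prop:analyticdomain}, it is a Laplace transform of the oscillatory integral $\int e^{-W(\phi-s)}$.

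On that representation the proof goes as follows.
\begin{enumerate}[label=(\roman*)]
\item $K_0(t)^4$ is annihilated by the symmetric fourth power $\mathrm{Sym}^4$ of the Bessel operator $\theta_t^2-t^2$. This is an operator of order five.
\item Fourier/Laplace duality against the kernel $I_0(\sqrt{-s}\,t)$ converts $\theta_t$ into multiplication and differentiation in $s$.
\item Integration by parts produces the adjoint action. Its boundary terms at $t=0$, where $K_0\sim-\log t$, give a constant, and there are no contributions at infinity.
\end{enumerate}
Lairez--Vanhove-type computations show that the resulting operator factors with $\mathcal L_3$ as its minimal right factor, and the boundary constant is what must come out as $-24$.

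I would check that constant independently and directly from the Schwinger form. Take $s\to-\infty$, so $y=1/t\to0^+$. The three Frobenius solutions $\pi,\pi\log,\pi\log^2$ near $s=\infty$ all decay like $s^{-1}\log^k s$. The equation $\mathcal L_3u=c$ has the particular solution $u=c/(-p(-1)+\dots)$ coming from~\eqref{l3:eq:L3-monomial} at $k=0$: $\mathcal L_3 1=s-4$, which is not constant. Instead one finds a particular solution $s^{-1}\log^3 s$-type via the $(\rho+1)^3$ indicial root, with coefficient fixed by $c$. Reading off the coefficient of $\log^3$ in the large-$T$ asymptotics of $I^{(3)}_\circleddash(-T)$, computed from the Schwinger integral by the Mellin transform in $T$, then pins down $c=-24$. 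This is consistent with $(-1)^3 4!$ in~\eqref{II:eq:PF}.

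The main obstacle is the integration-by-parts step. One must exhibit explicitly a rational vector field $\xi$ on $\mathbb R_+^4$ with
\[
\mathcal L_3\bigl(e^{-V+sW}W\bigr)\prod\frac{\dd x_i}{x_i}=d\bigl(\iota_\xi(\cdots)\bigr),
\]
and control the boundary contributions, which are where the constant $-24$ arises. I would first try to obtain $\xi$ by Griffiths–Dwork reduction on the simplex form $\int w\,\dd^3u/(u_1u_2u_3u_4(1-sw))$, after the radial integration. The inhomogeneity then comes from the faces $u_i=0$, where the form has simple poles with residue equal to the two-loop integrand. Applying the analogous statement one level down, these residues telescope to the constant $-24=-4\cdot3!$.
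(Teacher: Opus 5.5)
There is a genuine gap. The decisive assertion, that $\mathcal L_3I^{(3)}_\circleddash$ is a constant at all, is never proved. You defer it to ``Lairez--Vanhove-type computations'' and to a vector field $\xi$ that you do not construct, and each sketched mechanism has a concrete defect.

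\emph{The kernel has the wrong sign.} For $s<0$ the Bessel kernel is $J_0(\sqrt{-s}\,t)$, i.e.\ $I^{(3)}_\circleddash(s)=8\int_0^\infty t\,I_0(\sqrt s\,t)K_0(t)^4\,\dd t$. With $I_0(\sqrt{-s}\,t)$ the integral diverges for $s<-16$, which is exactly the region near $s=-\infty$ where you want to work.

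\emph{A factorization is not enough, and the constant does not come from boundary terms.} A factorization $\mathcal N=\mathcal N_1\mathcal L_3$ only gives $\mathcal N_1(\mathcal L_3I)=(\text{boundary terms})$, so you would still need $\ker\mathcal N_1$ and a boundary condition. Concretely, with $y=t^2/4$ the $\mathrm{Sym}^4$ equation is $[\theta_y^5-y\,p(\theta_y)+64y^2(\theta_y+1)]K_0(2\sqrt y)^4=0$. Pairing it with the kernel $I_0(2\sqrt{sy})$, the adjoint becomes the $s$-operator $-s^{-1}\theta_s^2\mathcal L_3$, and the boundary terms at $y=0$ are $O(y\log^4y)$ and vanish. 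So you get $\theta_s^2(\mathcal L_3I)=0$, hence constancy by holomorphy at $s=0$, but the value of the constant is invisible in this step.

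\emph{The Griffiths--Dwork mechanism is not what happens.} The simplex integrand is $w/(u_1u_2u_3u_4(1-sw))=1/(e_3-se_4)$, which has no poles on the faces $u_i=0$; there it restricts to $1/\prod_{j\ne i}u_j$. Those faces are edge contractions with $s$-independent tadpole polynomials, not the two-loop integrand. The ``telescoping to $-4\cdot3!$'' is asserted, not computed.

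Your large-$T$ determination of the value is sound: the coefficient of $t^{-1}\log^3t$ is $-c/6$, and the $24/\nu^4$ Mellin pole gives $4$. But it applies only once constancy is known.

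The paper quotes the lemma from the literature and, in Remark~\ref{l3:rem:source}, verifies it at $s=0$ rather than at $s=-\infty$. That route also closes your Bessel idea in a few lines.

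\begin{enumerate}
\item For $|s|<16$ the Taylor coefficients are $a_k=8c_{4,2k+1}/(4^kk!^2)$.
\item $\mathcal L_3I=-24$ is equivalent to $64(k+1)^3a_{k+1}-p(k)a_k+k^3a_{k-1}=0$ for $k\ge1$, together with $64a_1-4a_0=-24$, i.e.\ $16c_{4,3}-4c_{4,1}=-3$.
\item Put $m(\nu)=\int_0^\infty y^{\nu-1}K_0(2\sqrt y)^4\,\dd y$, so that $a_k=16\,m(k+1)/k!^2$. Mellin-transforming the fifth-order equation gives $-\nu^5m(\nu)+p(\nu)m(\nu+1)-64(\nu+1)m(\nu+2)=0$ for $\operatorname{Re}\nu>0$.
\item At $\nu=k\ge1$ this is exactly the three-term recurrence.
\item Letting $\nu\to0$, the fifth-order pole $\nu^5m(\nu)\to3/2$ (from $K_0(2\sqrt y)^4\sim\tfrac1{16}\log^4y$) gives the initial relation.
\end{enumerate}

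Since $\mathcal L_3I^{(3)}_\circleddash+24$ is holomorphic on the connected domain $\operatorname{Re}s<16$ and vanishes on $|s|<16$, it vanishes identically. No factorization argument, explicit vector field, or large-$T$ asymptotics is needed.
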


That the equal-mass sunset integral satisfies an inhomogeneous Picard--Fuchs
equation whose homogeneous part is the operator governing the associated
Calabi--Yau periods is standard in the Feynman-integral literature~\cite{BKV3,KlemmEtAl,Vanhove2026,Vanhove:2018mto},
 where the statement is
proved at three loops and, in the last two, for the sunset family at every
loop order.
We therefore quote the lemma and record only what fixes the two conventions on
which the rest of this section depends.

\begin{remark}[Normalization of the source term]\label{l3:rem:source}
The right-hand side is a constant, and its value $-24=-\int_{F_3}H^3$ is the one
compatible with the normalization of the parametric Schwinger representation in~\eqref{II:eq:general-schwinger} and
the $\theta_y$-form~\eqref{l3:eq:L3-theta}; it is the three-loop case of the
source $H^Ly^H$ certified in~\Ithmref{thm:PF}. Two checks confirm it independently of that certificate. Expanding $e^{sW}$ under the
integral sign for $|s|<16$, which Proposition~\ref{l3:prop:convergence}
justifies, writes the Taylor coefficients of $I^{(3)}_\circleddash$ as Bessel moments,
\[
 I^{(3)}_\circleddash(s)=\sum_{k\ge0}a_ks^k,\qquad
 a_k=\frac{8\,c_{4,2k+1}}{4^k\,k!^2},\qquad
 c_{4,w}=\int_0^\infty x^{w}K_0(x)^4\,\dd x ,
\]
and $\mathcal L_3I^{(3)}_\circleddash=-24$ becomes the moment recurrence of
\cite{BBBG,BorweinSalvy}, explicitly
\[
 64(k+1)^3a_{k+1}-(20k^3+30k^2+18k+4)a_k+k^3a_{k-1}=0
 \quad(k\ge1),
\]
whereas its constant coefficient is $64a_1-4a_0=-24$.
Equivalently, with $w=2k+1$, the moment recurrence is
\[
 64(w+1)c_{4,w+2}-4w(5w^2+3)c_{4,w}
 +(w-1)^5c_{4,w-2}=0\quad(w=3,5,\ldots).
\]
The endpoint condition is the initial relation
$16c_{4,3}-4c_{4,1}=-3$, which follows from the pole of $c_{4,2a-1}$ at $a=0$.
With $c_{4,1}=\tfrac78\zeta(3)$ and $c_{4,3}=\tfrac7{32}\zeta(3)-\tfrac3{16}$
from~\cite[\S4.3]{BBBG}, that relation holds; the individual values are not
needed anywhere below. Since $\mathcal L_3I^{(3)}_\circleddash+24$ is holomorphic on the connected
domain $\operatorname{Re}s<16$ and vanishes on $|s|<16$, the equation holds on
the whole domain.

The expansion also shows that $I^{(3)}_\circleddash$ is holomorphic on all of $|s|<16$. In
particular $s=4$ is a singular point of~\eqref{l3:eq:L3} at which the
particular solution $I^{(3)}_\circleddash$ is holomorphic; $4$ is not an apparent singularity of the homogeneous equation, whose exponents there are
$0,\tfrac12,1$.
\end{remark}

\subsection*{Frobenius structure at infinity}
The construction below is the standard Frobenius method at a point of maximal
unipotent monodromy, applied to $\mathcal L_3$ at $s=\infty$, where the
indicial polynomial is $(\rho+1)^3$. For the sunset family it is carried out in
\cite{KlemmEtAl} and~\cite{Vanhove2026}; we record only the normalizations used
below, since the later arguments refer to them by name.

Define the Frobenius ansatz
\[
 u(s,\rho)=s^\rho\sum_{n\ge0}c_n(\rho)s^{-n},\qquad c_0(\rho)=1\,.
\]
Substituting into  $\mathcal L_3$ and equating coefficients yields the recurrence
for $n\ge1$
\begin{equation}\label{l3:eq:frob-rec}
 (\rho-n+1)^3c_n(\rho)=p(\rho-n+1)\,c_{n-1}(\rho)-64(\rho-n+2)^3c_{n-2}(\rho)
 \qquad(n\ge1,\ c_{-1}=0).
\end{equation}
For $n=0$ this simplifies to
\begin{equation}\label{l3:eq:frob-defect}
 \mathcal L_3\,u(s,\rho)=(\rho+1)^3s^{\rho+1}.
\end{equation}
The denominators in~\eqref{l3:eq:frob-rec} do not vanish for $\rho$ near $-1$,
and the nearest finite singularity in the coordinate $s^{-1}$ is $1/16$, so
$\sum_nc_n(\rho)s^{-n}$ converges for $|s|>16$, locally uniformly in $\rho$
near $-1$; it is therefore holomorphic in $\rho$ there and may be
differentiated termwise. Comparing~\eqref{l3:eq:frob-rec} at $\rho=-1$
with~\eqref{l3:eq:domb-rec} gives $c_n(-1)=A_n$, and all derivatives
$c_n^{(j)}(-1)$ are rational numbers.

On the Euclidean branch put $\tilde u(s,\rho)=e^{-i\pi(\rho+1)}u(s,\rho)
=-e^{\rho\ell}\sum_nc_n(\rho)s^{-n}$, so that $\tilde u(s,-1)=\pi(s)$.  Define
\begin{equation}\label{l3:eq:frob-basis}
 \Phi_j=\frac1{j!}\,\partial_\rho^j\tilde u\Big|_{\rho=-1}\ (j=0,1,2),
 \qquad
 I_{\rm part}=-4\,\partial_\rho^3\tilde u\Big|_{\rho=-1}.
\end{equation}
By~\eqref{l3:eq:frob-defect}, $\mathcal L_3\Phi_j=0$ for $j\le2$ and
$\mathcal L_3I_{\rm part}=-24$.  Write
\[
 r_j(s)=\frac{\sum_nc_n^{(j)}(-1)s^{-n}}{\sum_nA_ns^{-n}}.
\]
Then $r_0=1$ and
$r_j(\infty)=0$ for $j\ge1$.  Restricting to a sufficiently small germ at
infinity so that $s\pi(s)$ is nonzero, one has
\begin{equation}\label{l3:eq:frob-explicit}
 \Phi_j=\pi\sum_{i=0}^{j}\binom ji\frac{\ell^{\,j-i}}{j!}\,r_i,\qquad
 I_{\rm part}=-4\pi\bigl(\ell^3+3r_1\ell^2+3r_2\ell+r_3\bigr).
\end{equation}

\begin{lemma}[Frobenius form of the integral]\label{l3:lem:frob-form}
There are unique convergent power series $\hat f_0,\hat f_1,\hat f_2\in\C\{Q\}$
such that, on the Euclidean germ $s=-t$ with $t$ sufficiently large,
\begin{equation}\label{l3:eq:frob-form-general}
 I^{(3)}_\circleddash(s)=\pi(s)\bigl(4R^3+\hat f_2(Q)R^2+\hat f_1(Q)R+\hat f_0(Q)\bigr).
\end{equation}
Moreover $I^{(3)}_\circleddash=I_{\rm part}+\alpha_2\Phi_2+\alpha_1\Phi_1+\alpha_0\Phi_0$ for unique
constants $\alpha_j\in\C$, with
\[
 \hat f_2(0)=\alpha_2/2,\qquad \hat f_1(0)=-\alpha_1,\qquad \hat f_0(0)=\alpha_0,
\]
and the coefficients of $\hat f_r-\hat f_r(0)$ lie in $\Q+\Q\alpha_1+\Q\alpha_2$.
In particular, if $\hat f_2(0)=\hat f_1(0)=0$ then
$\hat f_r-\hat f_r(0)\in Q\,\Q\{Q\}$ for $r=0,1,2$.  (The hats are removed in
Theorem~\ref{l3:thm:vanhove}, where the constant term is displayed separately.)
\end{lemma}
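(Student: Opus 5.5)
The plan is to write the general solution of $\mathcal L_3u=-24$ on a germ at infinity and then change variables from $(\ell,s^{-1})$ to $(R,Q)$.

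\textbf{Decomposition.} By Lemma~\ref{l3:lem:ode}, $I^{(3)}_\circleddash-I_{\rm part}$ solves the homogeneous equation on the Euclidean germ. The operator $\mathcal L_3$ has order three and $s=\infty$ is a regular singular point. The functions $\Phi_0,\Phi_1,\Phi_2$ of~\eqref{l3:eq:frob-basis} are linearly independent, since they carry $\ell^0,\ell^1,\ell^2$ leading behaviour, so they form a basis of solutions there. Hence unique $\alpha_j\in\C$ exist with
\[
I^{(3)}_\circleddash=I_{\rm part}+\textstyle\sum_j\alpha_j\Phi_j .
\]
Substituting~\eqref{l3:eq:frob-explicit} expresses $I^{(3)}_\circleddash/\pi$ as a cubic polynomial in $\ell$. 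Its leading coefficient is $-4$, and its other coefficients are the $r_i$, which are rational power series in $s^{-1}$, combined with the $\alpha_j$.

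\textbf{Change of variables.} By Convention~\ref{l3:conv:branches}, $R=-\ell+\sigma$, where $\sigma$ is a power series in $s^{-1}$ with rational coefficients and no constant term. Substituting $\ell=\sigma-R$ gives
\[
-4\ell^3=4R^3-12\sigma R^2+\cdots .
\]
So $I^{(3)}_\circleddash/\pi$ is a cubic in $R$ with leading coefficient $4$, whose lower coefficients are convergent series in $s^{-1}$. The convention also says $s^{-1}\mapsto Q$ is a local biholomorphism with rational Taylor coefficients; that the coefficients are rational follows from $Q=-s^{-1}e^{\sigma}$. Therefore these coefficients are convergent series $\hat f_r(Q)$ with coefficients in $\Q+\Q\alpha_1+\Q\alpha_2+\Q\alpha_0$.

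\textbf{Constant terms.} Only the constant terms at $Q=0$ see $\alpha_0$. Indeed, $\alpha_0\Phi_0=\alpha_0\pi$ contributes exactly the constant $\alpha_0$ to $\hat f_0$, so $\hat f_0-\hat f_0(0)$ has no $\alpha_0$. For the other constants, set $Q=0$, i.e.\ $s^{-1}=0$, where $r_i=\sigma=0$ and $\ell=-R$:
\begin{itemize}
\item $\Phi_2=\pi\ell^2/2$ contributes $\alpha_2R^2/2$, giving $\hat f_2(0)=\alpha_2/2$;
\item $\Phi_1=\pi\ell$ contributes $-\alpha_1R$, giving $\hat f_1(0)=-\alpha_1$;
\item $\hat f_0(0)=\alpha_0$, from the previous point.
\end{itemize}
The "in particular" clause follows at once, since the coefficients then lie in $\Q$.

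\textbf{Uniqueness.} Suppose two such representations exist. Their difference is a polynomial in $R$ of degree at most two with coefficients in $\C\{Q\}$, and it vanishes identically. The function $R$ is transcendental over the field of meromorphic germs in $Q$: it equals $\log Q$, whose monodromy is nontrivial. Hence every coefficient of the difference is zero. To make this rigorous, continue around $s=\infty$. Under that loop $R\mapsto R-2\pi i$ while $Q$ is single-valued, so the coefficients vanish successively, starting from the top one.

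The main obstacle is bookkeeping rather than any conceptual step: checking that $\alpha_0$ never leaks into the nonconstant coefficients, and tracking the sign conventions between $\ell$, $R$ and $\tilde u$.
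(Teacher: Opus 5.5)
Your proposal is correct and follows the paper's proof essentially step for step. It decomposes $I^{(3)}_\circleddash-I_{\rm part}$ against the Frobenius basis $\Phi_0,\Phi_1,\Phi_2$, substitutes $\ell=\sigma-R$ and converts from $s^{-1}$ to $Q$ with rational coefficients, reads off the constant terms, and gets uniqueness from the monodromy $R\mapsto R-2\pi i$ (your successive-differences argument is the paper's Vandermonde argument in another form); the one point the paper makes explicit that you gloss is that the decomposition and the continuation take place on the simply connected complex region $\{\operatorname{Re}s<16,\ |s|>16\}$ rather than on the real ray itself.
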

\begin{proof}
By Lemma~\ref{l3:lem:ode}, $I^{(3)}_\circleddash-I_{\rm part}$ solves $\mathcal L_3u=0$ on the simply
connected region $\{\operatorname{Re}s<16,\ |s|>16\}$, where $\Phi_0,\Phi_1,\Phi_2$
form a basis of solutions (Frobenius theory at the regular singular point
$\infty$ with indicial polynomial $(\rho+1)^3$); this gives the $\alpha_j$.
Substituting $\ell=\sigma-R$ into~\eqref{l3:eq:frob-explicit} expresses each of
$\Phi_j/\pi$ and $I_{\rm part}/\pi$ as a polynomial in $R$ whose coefficients
are convergent power series in $s^{-1}$ with rational coefficients and, for
the non-leading coefficients, zero constant term:
\begin{align*}
 I_{\rm part}/\pi&=4R^3-12(\sigma+r_1)R^2+12(\sigma^2+2r_1\sigma+r_2)R
  -4(\sigma^3+3r_1\sigma^2+3r_2\sigma+r_3),\\
 \Phi_2/\pi&=\tfrac12R^2-(\sigma+r_1)R+\tfrac12(\sigma^2+2r_1\sigma+r_2),\cr
 \Phi_1/\pi&=-R+(\sigma+r_1),\qquad \Phi_0/\pi=1 .
\end{align*}
By Convention~\ref{l3:conv:branches}, $s^{-1}=-Q+O(Q^2)$ is a convergent power
series in $Q$ with rational coefficients and zero constant term, so every
coefficient above is a convergent power series in $Q$; collecting powers of $R$
gives~\eqref{l3:eq:frob-form-general} and the formulas for $\hat f_r(0)$ and for the
field of coefficients.  For proving the uniqueness of the $\hat f_r$, we use that
 if $\sum_{j\le3}g_j(Q)R^j=0$ on the germ with $g_j$ convergent Laurent series, then,
since the local monodromy around $s=\infty$ fixes $Q$ and sends
$R\mapsto R-2\pi i$ for the loop traversed counterclockwise in $s$, hence
$R\mapsto R-2\pi ik$ for its iterates $k\in\Z$ (both sides being analytic
continuations along the same loop), one gets
$\sum_jg_j(Q)(R-2\pi ik)^j=0$ for $k=0,\ldots,3$, and a Vandermonde argument
gives $g_j=0$.  Equivalently, $R$ is transcendental over the field of
convergent Laurent series in $Q$.
\end{proof}

The three constants $\alpha_j$ are connection coefficients between $s=0$ and
$s=\infty$, and cannot be obtained from the differential equation.  They are
determined by the leading asymptotics of $I^{(3)}_\circleddash(-t)$ as $t\to\infty$, which we
compute from a Mellin transform.

\begin{proposition}[Leading asymptotics]\label{l3:prop:leading}
As $t\to+\infty$,
\begin{equation}\label{l3:eq:leading}
 I^{(3)}_\circleddash(-t)=\frac{4\log^3t+16\zeta(3)}{t}+O\!\left(\frac{\log^3t}{t^2}\right).
\end{equation}
Consequently, in Lemma~\ref{l3:lem:frob-form}, $\hat f_2(0)=\hat f_1(0)=0$ and
$\hat f_0(0)=-16\zeta(3)$; that is, $\alpha_2=\alpha_1=0$ and
$\alpha_0=-16\zeta(3)$.
\end{proposition}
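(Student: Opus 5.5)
We will compute the large-$t$ asymptotics through the Mellin transform in $t$ of $I^{(3)}_\circleddash(-t)$, using the simplex decomposition from the proof of Proposition~\ref{l3:prop:convergence}. With $x=ru$ and $V=r$, $W=rw(u)$, the radial integral at $s=-t$ gives
\[
 I^{(3)}_\circleddash(-t)=\int_{\Delta_3}\frac{w(u)}{1+tw(u)}\,\frac{\dd^3u}{u_1u_2u_3u_4}.
\]
Hence the Mellin transform is
\[
 \int_0^\infty t^{\,\sigma-1}I^{(3)}_\circleddash(-t)\,\dd t=\frac{\pi}{\sin\pi\sigma}\,M(\sigma),\qquad M(\sigma)=\int_{\Delta_3}w^{1-\sigma}\frac{\dd^3u}{e_4(u)},
\]
valid in a strip $0<\operatorname{Re}\sigma<1$. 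We will undo the simplex and write $M(\sigma)$ as an integral over $\mathbb R_+^4$. Using $\Gamma(1-\sigma)W^{1-\sigma}\cdots$ together with $W^{-1}=\sum x_i^{-1}$ and the substitution $x_i\mapsto x_i^{-1}$, this reduces to a product of Gamma functions, essentially a Dirichlet-type integral.

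Concretely, we will start from
\[
 \int_{\mathbb R_+^4}e^{-\sum x_i}\,W^{1-\sigma}\prod\frac{\dd x_i}{x_i}=\Gamma(1-\sigma)\,M(\sigma),
\]
which follows from homogeneity. We will then insert $W^{1-\sigma}=\Gamma(\sigma-1)^{-1}\int_0^\infty v^{\sigma-2}e^{-v\sum x_i^{-1}}\,\dd v$ after continuing in $\sigma$. The $x_i$-integrals factor as $\bigl(2K_0(2\sqrt v)\bigr)^4$, which gives
\[
 M(\sigma)\propto\frac{1}{\Gamma(1-\sigma)\Gamma(\sigma-1)}\int_0^\infty v^{\sigma-2}\,16K_0(2\sqrt v)^4\,\dd v,
\]
a Bessel moment $c_{4,2\sigma-3}$ up to powers of $2$.

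The asymptotics as $t\to\infty$ are controlled by the poles of the Mellin transform at $\sigma=1$. There $\pi/\sin\pi\sigma$ has a simple pole. The Bessel moment $\int v^{\sigma-2}K_0^4$ has a pole of order $5$ at $\sigma=1$ from $K_0(2\sqrt v)^4\sim(-\tfrac12\log v-\gamma)^4$ near $v=0$. The Gamma prefactor $1/(\Gamma(1-\sigma)\Gamma(\sigma-1))$ has a double zero there. Altogether this gives a pole of order $4$, and hence terms $t^{-1}\log^k t$ for $k\le3$.

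We then expand the Laurent series at $\sigma=1$. The $\log^3 t$ coefficient must come out as $4$, which serves as a normalization check against $\int_{F_3}H^3/3!\cdot$ const. The $\log^2$ and $\log$ coefficients must vanish. This cancellation comes from $\Gamma$-function expansions: the Euler constant $\gamma$ terms from $K_0$ cancel against those from $\Gamma(1\pm\epsilon)$, and the $\zeta(2)$ terms cancel against $\pi\epsilon/\sin\pi\epsilon$. The constant term should leave $16\zeta(3)$. For this we will use the small-$v$ expansion $K_0(2\sqrt v)=-(\tfrac12\log v+\gamma)I_0(2\sqrt v)+O(v\log v)$ and the expansion $\log\Gamma(1+\epsilon)=-\gamma\epsilon+\sum\zeta(k)(-\epsilon)^k/k$. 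Only the leading polar part matters, since the next poles, at $\sigma=2$, produce $O(t^{-2}\log^3t)$. Shifting the inverse-Mellin contour past $\operatorname{Re}\sigma=1$ then yields~\eqref{l3:eq:leading}, once we justify decay on vertical lines. That decay follows from the $1/\sin$ factor and the analytic continuation of the moment, controlled by the moment recurrence or by the integrability bounds already used.

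Finally, we will compare~\eqref{l3:eq:leading} with Lemma~\ref{l3:lem:frob-form}. On the Euclidean ray we have $\pi=-y(1+O(y))$, $R=\log y+O(y)=-\log t+O(1/t)$, and $Q=O(1/t)$. Therefore
\[
 \pi\bigl(4R^3+\hat f_2(0)R^2+\hat f_1(0)R+\hat f_0(0)\bigr)=-\tfrac1t\bigl(-4\log^3t+\hat f_2(0)\log^2t-\hat f_1(0)\log t+\hat f_0(0)\bigr)+O(t^{-2}\log^3t).
\]
Matching coefficients gives $\hat f_2(0)=\hat f_1(0)=0$ and $\hat f_0(0)=-16\zeta(3)$, and the stated values of the $\alpha_j$ then follow from the lemma.

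The main obstacle is the exact Laurent expansion at the fourth-order pole, which must produce the precise cancellations and the coefficient $16\zeta(3)$. A cleaner alternative is to write $M(\sigma)$ as a Gamma ratio. Expanding $(\sum x_i^{-1})^{\sigma-1}$ is not directly possible, so instead we will compare with the known structure $\Gamma(\epsilon)^4$-type products and verify the expansion symbolically.
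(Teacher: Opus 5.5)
Your overall plan matches the paper's: Mellin transform in $t$, reduction to the moment $\mathcal K(w)=\int_0^\infty x^{w-1}K_0(x)^4\dd x$, polar part at $\sigma=1$, then comparison with Lemma~\ref{l3:lem:frob-form}. Your final sign bookkeeping $\kappa_2=-\hat f_2(0)$, $\kappa_1=\hat f_1(0)$, $\kappa_0=-\hat f_0(0)$ is also correct. The computational core, however, is wrong as written and is never actually carried out.

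The Gamma-integral insertion has the wrong exponent: $\int_0^\infty v^{\sigma-2}e^{-v\sum x_i^{-1}}\dd v=\Gamma(\sigma-1)W^{\sigma-1}$, not $\Gamma(\sigma-1)W^{1-\sigma}$. The correct identity is $W^{1-\sigma}=\Gamma(1-\sigma)^{-1}\int_0^\infty v^{-\sigma}e^{-v\sum x_i^{-1}}\dd v$. It holds for $\operatorname{Re}\sigma<1$, i.e.\ in your strip, with no continuation needed. It gives the Mellin transform $16\,\Gamma(\sigma)\Gamma(1-\sigma)^{-1}\int_0^\infty v^{-\sigma}K_0(2\sqrt v)^4\dd v$, which for $\nu=1-\sigma$ is the paper's $8\cdot4^{1-\nu}\,\Gamma(1-\nu)\Gamma(\nu)^{-1}\mathcal K(2\nu)$.

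This is not cosmetic. With your formula, $\frac{\pi}{\sin\pi\sigma}\cdot\frac{1}{\Gamma(1-\sigma)\Gamma(\sigma-1)}=\sigma-1$, so every Gamma function disappears. The Laurent expansion then produces $\gamma_E$-dependent coefficients of $\log^2t$, $\log t$ and $1$, and no $\zeta(3)$. The cancellations you describe occur only for the correct prefactor $\nu\,\Gamma(1-\nu)/\Gamma(1+\nu)=\nu\exp\bigl(2\gamma_E\nu+\tfrac23\zeta(3)\nu^3+O(\nu^5)\bigr)$.

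Moreover, that expansion, which is the actual content of the proposition, is only asserted (``must come out'', ``should leave''). With the right formula it is short:
the polar part of $\mathcal K$ at $0$ is exactly $\int_0^\infty e^{-wu}(u+c)^4\dd u$ with $c=\log2-\gamma_E$;
the linear exponentials combine to $e^{2\gamma_E\nu}4^{-\nu}=e^{-2c\nu}$, which turns $\int_0^\infty e^{-2\nu u}(u+c)^4\dd u$ into $\int_c^\infty e^{-2\nu w}w^4\dd w=24(2\nu)^{-5}+(\text{entire})$;
so the polar part is that of $24\nu^{-4}e^{\frac23\zeta(3)\nu^3}$, namely $24\nu^{-4}+16\zeta(3)\nu^{-1}$.

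The second gap is the inverse-Mellin contour shift. The decay of $\pi/\sin\pi\sigma$ on vertical lines is exactly cancelled by the Gamma prefactor: $|\Gamma(\sigma)/\Gamma(1-\sigma)|\sim|\operatorname{Im}\sigma|^{2\operatorname{Re}\sigma-1}$ grows. (In your version the prefactor is the polynomial $16(\sigma-1)$.) So all decay must come from the analytically continued moment. Integrability gives at most boundedness there, which is insufficient against the growth and holds only where the integral converges. The recurrence gives no vertical estimate at all. The decay is true (rotate the $x$-contour, using that $K_0^4$ is analytic and exponentially decaying in closed subsectors of $\operatorname{Re}x>0$), but it needs proof. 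You would also have to shift past $\sigma=2$ to get $O(t^{-2}\log^3t)$ rather than $O(t^{-2+\delta})$.

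The paper avoids all of this by using Lemma~\ref{l3:lem:frob-form} at the start rather than the end. The lemma already gives $I^{(3)}_\circleddash(-t)=t^{-1}\sum_{j\le3}\kappa_j\log^jt+O(t^{-2}\log^3t)$ with unknown $\kappa_j$. Splitting $\int_0^{t_0}+\int_{t_0}^\infty$ then shows the Mellin transform continues to $0<\operatorname{Re}\sigma<2$ with a single pole at $\sigma=1$ and polar part $\sum_j\kappa_j\,j!\,\nu^{-j-1}$. Equating this with the Bessel-side polar part by the identity theorem fixes $\kappa_3=4$, $\kappa_2=\kappa_1=0$, $\kappa_0=16\zeta(3)$. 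No inverse transform or vertical-line estimates are needed, and the error term comes directly from the convergent Frobenius expansion.
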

\begin{proof}
\emph{Step 1: a priori form.}  By Lemma~\ref{l3:lem:frob-form}, since
$\pi(-t)=-t^{-1}(1+O(t^{-1}))$, $R(-t)=-\log t+O(t^{-1})$ and
$Q=O(t^{-1})$, there are constants $\kappa_0,\ldots,\kappa_3$ with
\begin{equation}\label{l3:eq:apriori}
 I^{(3)}_\circleddash(-t)=\frac1t\sum_{j=0}^3\kappa_j\log^jt+O\!\left(\frac{\log^3t}{t^2}\right),
 \qquad
 \kappa_3=4,\ \kappa_2=-\hat f_2(0),\ \kappa_1=\hat f_1(0),\ \kappa_0=-\hat f_0(0).
\end{equation}

\emph{Step 2: Mellin transform.}  For $\operatorname{Re}\nu>0$ put
$M(\nu)=\int_{\mathbb R_+^4}e^{-V}W^\nu\prod_i\dd x_i/x_i$.  As in the proof of
Lemma~\ref{l3:lem:ode},
\begin{equation}\label{l3:eq:M-bessel}
 M(\nu)=\frac{8\cdot4^{1-\nu}}{\Gamma(\nu)}\,\mathcal K(2\nu),\qquad
 \mathcal K(w)=\int_0^\infty x^{w-1}K_0(x)^4\,\dd x\quad(\operatorname{Re}w>0).
\end{equation}
For $0<\operatorname{Re}z<1$, Fubini and $\int_0^\infty t^{z-1}e^{-tW}\dd t=
\Gamma(z)W^{-z}$ give
\begin{equation}\label{l3:eq:mellin-I}
 \int_0^\infty t^{z-1}I^{(3)}_\circleddash(-t)\,\dd t=\Gamma(z)M(1-z).
\end{equation}

\emph{Step 3: two continuations.}  Fix once and for all a $t_0>16$ large enough
that~\eqref{l3:eq:apriori} is valid for $t\ge t_0$; such a $t_0$ exists because
\eqref{l3:eq:apriori} was derived from the convergent expansion of
Lemma~\ref{l3:lem:frob-form}, which holds on a germ at infinity.  Split the left
side of~\eqref{l3:eq:mellin-I} as $\int_0^{t_0}+\int_{t_0}^\infty$.  The first
integral is holomorphic for $\operatorname{Re}z>0$, since $I^{(3)}_\circleddash$ is continuous on
the compact interval $[-t_0,0]$ by Proposition~\ref{l3:prop:convergence} and
$\int_0^{t_0}t^{\operatorname{Re}z-1}\dd t<\infty$ there.  By
\eqref{l3:eq:apriori}, the second equals
\begin{multline*}
 \sum_{j=0}^3\kappa_j\int_{t_0}^\infty t^{z-2}\log^jt\,\dd t
 +\int_{t_0}^\infty t^{z-1}\Bigl(I^{(3)}_\circleddash(-t)-\frac1t\sum_j\kappa_j\log^jt\Bigr)\dd t
 \cr=\sum_{j=0}^3\frac{\kappa_j\,j!}{(1-z)^{j+1}}+h(z),
\end{multline*}
with $h$ holomorphic on $\operatorname{Re}z<2$.  Hence the left side of
\eqref{l3:eq:mellin-I} continues meromorphically to $0<\operatorname{Re}z<2$ with a
single pole at $z=1$, and the polar part there is $\sum_j\kappa_jj!\,\nu^{-j-1}$
in the variable $\nu=1-z$.  On the other hand, the right side continues via
\eqref{l3:eq:M-bessel}. We have near $x=0$,
$K_0(x)=-(\log(x/2)+\gamma_E)+O(x^2\log x)$, so
$K_0(x)^4=(\log(x/2)+\gamma_E)^4+O(x^2\log^4x)$ and therefore
\[
\begin{aligned}
 \mathcal K(w)&=\int_0^1x^{w-1}\bigl(\log(x/2)+\gamma_E\bigr)^4\dd x+H_0(w)\\
 &=\int_0^\infty e^{-wu}(u+c)^4\,\dd u+H_0(w),
 \qquad \text{with}\qquad c=\log2-\gamma_E,
\end{aligned}
\]
where $H_0$ is holomorphic on $\operatorname{Re}w>-2$, using $x=e^{-u}$ in the second line.  Thus $\mathcal K$ is meromorphic on $\operatorname{Re}w>-2$
with a single pole at $w=0$ and polar part
$\mathcal K_{\rm pol}(w)=\sum_{j=0}^4\binom4j\,j!\,c^{4-j}w^{-j-1}$.  By the
identity theorem, the two continuations of~\eqref{l3:eq:mellin-I} agree, so
\begin{equation}\label{l3:eq:polar-match}
 \sum_{j=0}^3\frac{\kappa_j\,j!}{\nu^{j+1}}
 =\text{polar part at }\nu=0\text{ of }\;
 \Gamma(1-\nu)\,\frac{8\cdot4^{1-\nu}}{\Gamma(\nu)}\,\mathcal K(2\nu).
\end{equation}

\emph{Step 4: the polar part.}  Write $1/\Gamma(\nu)=\nu/\Gamma(1+\nu)$ and use
$\log\Gamma(1+x)=-\gamma_Ex+\sum_{k\ge2}(-1)^k\zeta(k)x^k/k$ to get
\[
 \frac{\Gamma(1-\nu)}{\Gamma(1+\nu)}
 =\exp\Bigl(2\gamma_E\nu+\tfrac23\zeta(3)\nu^3+O(\nu^5)\Bigr),
 \qquad\text{and}\qquad
 4^{1-\nu}=4\,e^{-2\nu\log2}.
\]
Only the polar part of $\mathcal K(2\nu)$ contributes, and
$\mathcal K_{\rm pol}(2\nu)=\int_0^\infty e^{-2\nu u}(u+c)^4\dd u$.  Hence the
right side of~\eqref{l3:eq:polar-match} is the polar part of
\[
 32\,\nu\,\exp\bigl(b\nu+\tfrac23\zeta(3)\nu^3\bigr)
 \Bigl(\frac{24}{(2\nu)^5}+\frac{24c}{(2\nu)^4}+\frac{12c^2}{(2\nu)^3}
 +\frac{4c^3}{(2\nu)^2}+\frac{c^4}{2\nu}\Bigr),\] 
with
\[b=2\gamma_E-2\log2=-2c .
\]
Expanding $$e^{b\nu+\frac23\zeta(3)\nu^3}=1+b\nu+\frac{b^2}2\nu^2
+\left(\frac{b^3}6+\frac23\zeta(3)\right)\nu^3+O(\nu^4)$$ and collecting, the polar part
equals
\[
 32\Bigl[\frac{3}{4\nu^4}+\frac{3(b+2c)}{4\nu^3}+\frac{3(b+2c)^2}{8\nu^2}
 +\frac{(b+2c)^3+4\zeta(3)}{8\nu}\Bigr]
 =\frac{24}{\nu^4}+\frac{16\zeta(3)}{\nu},
\]
because $b+2c=0$, then every term involving $\gamma_E$ or $\log2$ cancels.
Comparing with the left side of
\eqref{l3:eq:polar-match}, $\kappa_3\cdot3!=24$, $\kappa_2=\kappa_1=0$ and
$\kappa_0=16\zeta(3)$.  This is~\eqref{l3:eq:leading}, and the values of the
$\hat f_r(0)$ follow from~\eqref{l3:eq:apriori}.
\end{proof}

\begin{theorem}[Normalized three-loop expansion ($n=4$ case)]\label{l3:thm:vanhove}
On the Euclidean germ $s=-t$, $t$ sufficiently large,
\begin{equation}\label{l3:eq:frob-form}
 I_{\circleddash}^{(3)}(s)=\pi(s)
 \bigl(4R^3+f_2(Q)R^2+f_1(Q)R+f_0(Q)\bigr)
 -16\zeta(3)\pi(s),
\end{equation}
where $f_r(Q)=\sum_{m\ge1}d_r(m)Q^m\in Q\,\Q\{Q\}$, and the $f_r$ are uniquely
determined by~\eqref{l3:eq:frob-form}.  In terms of Lemma~\ref{l3:lem:frob-form},
$f_2=\hat f_2$, $f_1=\hat f_1$ and $f_0=\hat f_0+16\zeta(3)$. The boundary
constant has been displayed separately, so that from here on every $f_r$ has
zero constant term.  This is the normalization used in the rest of the paper.  Explicitly, $I^{(3)}_\circleddash=I_{\rm part}-16\zeta(3)\pi$,
so the $d_r(m)$ are the rational numbers obtained by expanding
\eqref{l3:eq:frob-explicit}.  The identity persists under simultaneous analytic
continuation of both sides along any path in $|s|>M$, $M$ large, provided
$\Log s$ and $R$ are continued along the same path.
\end{theorem}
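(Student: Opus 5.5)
The plan is to obtain the theorem by combining Lemma~\ref{l3:lem:frob-form} with Proposition~\ref{l3:prop:leading}; no new analytic input is needed. By Lemma~\ref{l3:lem:frob-form}, on the Euclidean germ we have $I^{(3)}_\circleddash=I_{\rm part}+\alpha_2\Phi_2+\alpha_1\Phi_1+\alpha_0\Phi_0$. Proposition~\ref{l3:prop:leading} gives $\alpha_2=\alpha_1=0$ and $\alpha_0=-16\zeta(3)$. Since $\Phi_0=\tilde u(s,-1)=\pi$, this yields $I^{(3)}_\circleddash=I_{\rm part}-16\zeta(3)\pi$. The explicit expansion of $I_{\rm part}/\pi$ in the proof of Lemma~\ref{l3:lem:frob-form} then writes it as $4R^3+\hat f_2R^2+\hat f_1R+\hat f_0'$. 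Here $\hat f_2=-12(\sigma+r_1)$, $\hat f_1=12(\sigma^2+2r_1\sigma+r_2)$ and $\hat f_0'=-4(\sigma^3+3r_1\sigma^2+3r_2\sigma+r_3)$. We define $f_2,f_1,f_0$ to be these three series, which gives the relations $f_2=\hat f_2$, $f_1=\hat f_1$ and $f_0=\hat f_0+16\zeta(3)$.

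Next we check that each $f_r$ lies in $Q\,\Q\{Q\}$. The functions $\sigma$ and $r_j$ ($j\ge1$) are convergent series in $s^{-1}$ with rational coefficients and no constant term. For $r_j$ this holds because $c_n^{(j)}(-1)\in\Q$, $c_0\equiv1$ gives $c_0^{(j)}=0$, and the denominator $\sum A_ns^{-n}$ has constant term $1$. By Convention~\ref{l3:conv:branches}, $s^{-1}$ is a convergent series in $Q$ with rational coefficients and zero constant term. The last point needs a little care: $Q=-s^{-1}e^{\sigma}$ has rational coefficients, so its compositional inverse also does. Substituting, each $f_r$ is a convergent series in $Q$ with rational coefficients and zero constant term, so $d_r(m)\in\Q$. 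Alternatively, this is exactly the final clause of Lemma~\ref{l3:lem:frob-form} once $\alpha_1=\alpha_2=0$.

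Uniqueness of the $f_r$ follows from the transcendence of $R$ over convergent Laurent series in $Q$, shown in Lemma~\ref{l3:lem:frob-form} by the monodromy and Vandermonde argument. Suppose two choices satisfied~\eqref{l3:eq:frob-form}. Their difference would give a polynomial identity $\sum_{j\le2}g_jR^j=0$, and dividing by $\pi$ is allowed on the germ where $s\pi\ne0$. Hence $g_j=0$.

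For the continuation statement, both sides are analytic functions of $s$, $\Log s$ and $R$ on the region $|s|>M$, written in terms of single-valued $Q$, $\pi$ and $\sigma$. The left side is a germ of $I_{\rm part}-16\zeta(3)\pi$, which is a polynomial in $\ell=\Log s-i\pi$ with single-valued coefficients. The right side is the same expression rewritten through $R=-\ell+\sigma$. The identity is therefore an algebraic identity in $\ell$ (equivalently $R$) over single-valued functions, and it persists under continuation when $\Log s$ and $R$ are continued together. This is the only point needing care: the continuation of $I^{(3)}_\circleddash$ itself off the original germ is \emph{defined} by the right side, so nothing beyond bookkeeping of branches is required.
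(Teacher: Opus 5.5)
Your proposal is correct and follows the paper's proof, which simply combines Lemma~\ref{l3:lem:frob-form} with Proposition~\ref{l3:prop:leading} and notes that both sides are analytic continuations of the same germ; you have only written out the rationality, uniqueness and branch bookkeeping that the paper leaves to the lemma. One phrasing point: the continuation of $I^{(3)}_\circleddash$ is not \emph{defined} by the right side; it is determined by it, because analytic continuation of a germ along a path is unique.
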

\begin{proof}
Combine Lemma~\ref{l3:lem:frob-form} and Proposition~\ref{l3:prop:leading}.  The last
assertion holds because both sides are analytic continuations of the same germ.
\end{proof}

\begin{remark}\label{l3:rem:vanhove-comparison} (a) Theorem~\ref{l3:thm:vanhove}
is the $L=3$ case of~\cite[Theorem~4.1]{Vanhove2026}, whose proof proceeds
through the exact convergent residue expansion of
\cite[Theorem~2.1]{Vanhove2026}.  The proof given here uses only the
differential equation and the leading Mellin pole; in particular the rationality
of all $d_r(m)$ is obtained from the rationality of the Frobenius recurrence,
not from an expansion of Gamma quotients.  The magnitude $16\zeta(3)$ of the
boundary constant agrees with the generating series in~\cite{Vanhove2026}; its
sign depends on the normalization of the period and of the logarithm, which
differ between the two treatments (see the normalization dictionary in
Section~\Iref{sec:dictionary}, in particular Section~\Iref{sec:zetaambiguity}).
With the period and logarithm fixed here, the coefficient is exactly
$-16\zeta(3)$.

(b) The first coefficients obtained from~\eqref{l3:eq:frob-explicit} are
\[
\begin{aligned}
 f_2&=-24Q+132Q^2-800Q^3+6018Q^4-\tfrac{242424}{5}Q^5+\cdots,\\
 f_1&=48Q^2-528Q^3+4652Q^4+\cdots,\\
 f_0&=48Q-138Q^2+\tfrac{7144}{9}Q^3+\cdots .
\end{aligned}
\]
and $d_2(1),\dots,d_2(4)$ agree with the values tabulated in~\cite{Vanhove2026}.

(c) The particular solution $I_{\rm part}$ is characterized among all solutions
of $\mathcal L_3u=-24$ by $\hat f_2(0)=\hat f_1(0)=\hat f_0(0)=0$; the Feynman
integral differs from it by $-16\zeta(3)\pi$ only.  The coefficient identity of
Theorem~\ref{l3:thm:coefficient-bridge} below concerns $\hat f_2-\hat f_2(0)$ and is therefore
independent of the values of the boundary constants; only the statement that
the boundary term is exactly $-16\zeta(3)\pi(s)$ uses Proposition~\ref{l3:prop:leading}.
\end{remark}

\subsection{Diagonal restriction and the primary potential}
From now on set $y_i=y$ and $Q_i=Q$ for all $i$, and write $|\beta|=H\cdot\beta$
so that $\mathbf Q^\beta=Q^{|\beta|}$ and $\tau=H\log Q$.  The diagonally
restricted $I$-function is a symmetric expression in $p_1,\dots,p_4$ read in the
$S_4$-stable presentation of Lemma~\ref{l3:lem:state-space}; we use this
repeatedly.

\begin{lemma}[The coordinate $T$]\label{l3:lem:T}
The coefficient of $z^{-1}$ in $\mathcal I_{X_3}^0(y,y,y,y,z)$ equals $H\,T(y)$ with
\begin{equation}\label{l3:eq:T}
 T(y)=\log y+S(y),\qquad S(y)=\sum_{n\ge1}\frac{(-1)^nA_n}{n}y^n,\qquad
 Q=e^{T(y)}=y\,e^{S(y)} .
\end{equation}
The holomorphic part $S$ has radius of convergence exactly $1/16$; the
logarithm is read on the branch of Convention~\ref{l3:conv:branches}, where
$T(-s^{-1})=R(s)$.  Moreover $Q=y-4y^2+22y^3-152y^4+O(y^5)$ is a holomorphic
germ with $Q(0)=0$, $Q'(0)=1$, and its inverse germ is
$y(Q)=Q+4Q^2+10Q^3+32Q^4+O(Q^5)$.
\end{lemma}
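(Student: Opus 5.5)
The plan is to extract the $z^{-1}$ coefficient of \eqref{l3:eq:I-function} at $\lambda=0$ and $y_i=y$ term by term in $\mathbf d$. For $\mathbf d=0$ the summand is $1$. The prefactor $e^{H\log y/z}$ contributes $H\log y/z$ at order $z^{-1}$. For $D\ge1$ the numerator at $\lambda=0$ is
\[
\prod_{r=1}^D(H+rz)\prod_{r=0}^{D-1}(-H-rz),
\]
whose $r=0$ factor is $-H$. We therefore pull out a factor $H$ and count $z$-degrees. The numerator has $z$-degree at most $D+(D-1)$ beyond the factor $-H$, and the denominator has $z$-degree $2D$. Hence each $D\ge1$ term is $-H\cdot z^{-1}\cdot(\text{power series in }z^{-1})$. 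Only the leading coefficient survives at order $z^{-1}$. It is obtained by setting all cohomology classes to zero in the remaining factors, which gives
\[
-H\,\frac{D!\,(-1)^{D-1}(D-1)!}{\prod_i(d_i!)^2}\,z^{-1}.
\]
The products of $e^{H\log y/z}$ with the $D\ge1$ terms start at order $z^{-2}$, so they do not contribute. This shows the $z^{-1}$ coefficient is $H\bigl(\log y+\sum_{\mathbf d\ne0}y^D(-1)^D D!(D-1)!/\prod_i(d_i!)^2\bigr)=H(\log y-G(y,y,y,y))$.

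Next I identify $-G(y,y,y,y)$ with $S(y)$. Grouping the terms by $D$ gives
\[
\sum_{|\mathbf d|=D}\frac{D!\,(D-1)!}{\prod_i(d_i!)^2}=\frac{1}{D}\sum_{|\mathbf d|=D}\binom{D}{\mathbf d}^{2}=\frac{A_D}{D},
\]
by \eqref{l3:eq:domb}. So the $z^{-1}$ coefficient is $H\,T(y)$ with $S(y)=\sum_{n\ge1}(-1)^nA_ny^n/n$. This also matches $g=S=-G$ in the dictionary and $Q_i=y_ie^{-G}$ of Proposition~\ref{l3:prop:mirror-map}, so that $Q=ye^{S}$. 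Comparing with \eqref{l3:eq:R} at $s=-1/y$, so that $s^{-n}=(-1)^ny^n$, gives $\sigma(-1/y)=S(y)$. On the branch $\Log s=\log t+i\pi$ we have $i\pi-\Log s=-\log t=\log y$, hence $T(-s^{-1})=R(s)$.

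For the radius of convergence, $A_n\le16^n$ gives radius at least $1/16$. For the reverse bound, the recurrence \eqref{l3:eq:domb-rec} has characteristic polynomial $x^2-20x+64=(x-4)(x-16)$, and positivity of the $A_n$ together with Poincaré--Perron forces $A_{n}/A_{n-1}\to16$. A simpler route is $A_n\ge\binom{n}{n/4,\dots}^2$ for $4\mid n$, combined with Stirling, which gives $A_n^{1/n}\to16$ along multiples of $4$. Together with the upper bound this shows the radius is exactly $1/16$. I expect this step to need the most care, and the multinomial lower bound is the cleanest way to handle it.

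Finally, $A_1=4$, $A_2=28$, $A_3=256$ and $A_4=2716$ give
\[
S=-4y+14y^2-\tfrac{256}{3}y^3+679y^4+\cdots.
\]
Exponentiating and multiplying by $y$ yields the stated expansion of $Q$. Since $Q'(0)=1$, the inverse function theorem gives a holomorphic inverse germ, and its coefficients follow by routine series reversion.
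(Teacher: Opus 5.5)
Your proof is correct and follows the paper's argument. The paper phrases the extraction through a homogeneity grading (the $z^{-1}$ coefficient is the cohomological-degree-one part, hence $-H$ times the class-free part of the remaining factors, with no cross term against $H\log y$), which is equivalent to your $z$-degree count, and it proves the exact radius by the same bound $A_{4k}\ge\binom{4k}{k,k,k,k}^2$ together with Stirling. Keep that multinomial route, because the Poincar\'e--Perron aside would not suffice on its own: positivity only guarantees that $A_n/A_{n-1}$ tends to one of the roots $4$ or $16$, not which one.
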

\begin{proof}
Assign degree one to $z$ and to each of $H,p_i$; the summand of
\eqref{l3:eq:I-function} is then homogeneous of degree $0$, so its $z^{-1}$
coefficient is its part of cohomological degree one.  Every term of a summand
with $D\ge1$ contains the factor $-H$ coming from $r=0$; hence the degree-one
part is $-H$ times the degree-zero part of the rest, i.e.
\[
 \left[\tfrac Hz\right]
 \frac{\prod_{r=1}^{D}(H+rz)\prod_{r=0}^{D-1}(-H-rz)}
      {\prod_{i=1}^4\prod_{r=1}^{d_i}(p_i+rz)^2}
 =\frac{(-1)^D}{D}\binom{D}{d_1,d_2,d_3,d_4}^{\!2},
\]
using $D!(D-1)!=(D!)^2/D$.  For the same reason a summand with $D\ge1$ has no
degree-zero part, so it cannot combine with the degree-one part
$H\log y$ of the exponential prefactor; the degree-one part of
$\mathcal I^0_{X_3}$ is therefore $H\log y+H\sum_{n\ge1}(-1)^nA_ny^n/n$ by
\eqref{l3:eq:domb}.  Convergence of $S$ for $|y|<1/16$ follows from $A_n\le16^n$;
the radius is exactly $1/16$ because, taking $n=4k$ and retaining only the
term $d_1=d_2=d_3=d_4=k$ in~\eqref{l3:eq:domb}, one gets
$A_{4k}\ge\binom{4k}{k,k,k,k}^{2}$, whence $\limsup A_n^{1/n}\ge16$ by Stirling's formula.  The
identification $T(-s^{-1})=R(s)$ is Convention~\ref{l3:conv:branches}.  Finally
$Q=ye^{S(y)}$ is holomorphic at $0$ with $Q(0)=0$, $Q'(0)=1$, so the inverse
germ exists and is unique.
\end{proof}

\begin{lemma}[Well-definedness of $K_2$]\label{l3:lem:K2-welldefined}
The coefficient of $z^{-2}$ in $e^{-HT(y)/z}\mathcal I_{X_3}^0(y,y,y,y,z)$ is a
multiple of $H^2$ by a holomorphic germ in $y$ of radius $\ge1/16$.  Writing it
as $H^2K_2(Q)$ with $y=y(Q)$ defines a unique $K_2\in Q\,\Q\{Q\}$, and
\begin{equation}\label{l3:eq:K2def}
 e^{-HT(y)/z}\mathcal I_{X_3}^0(y,y,y,y,z)=1+\frac{H^2}{z^2}K_2(Q)+O(z^{-3}).
\end{equation}
\end{lemma}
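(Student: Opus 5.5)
We follow the degree bookkeeping of Lemma~\ref{l3:lem:T}. Give degree one to $z$, $H$ and each $p_i$. Every summand of~\eqref{l3:eq:I-function} at $\lambda=0$ is then homogeneous of degree $0$, and so is the prefactor $e^{-HT/z}\,e^{H\log y/z}=e^{-HS(y)/z}$ on the diagonal. Hence the coefficient of $z^{-2}$ in $e^{-HT/z}\mathcal I^0_{X_3}$ is exactly the part of cohomological degree two. For $D\ge1$ the summand carries the factor $-H$ from $r=0$; dividing it out leaves a degree-zero rational expression whose expansion in $1/z$ has a degree-zero constant term and a degree-one part linear in $H,p_1,\dots,p_4$.

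\emph{Step 1: the $z^{-2}$ coefficient is a multiple of $H^2$.} Write the summand as
\[
 -\frac{H}{z}\,\frac{(D!)^2}{D}(-1)^{D-1}\prod_i\frac{1}{(d_i!)^2}\Bigl(1+\frac{1}{z}\bigl[\alpha_{\mathbf d}H+\textstyle\sum_i\beta_{\mathbf d,i}p_i\bigr]+\cdots\Bigr),
\]
where $\alpha_{\mathbf d}$ and $\beta_{\mathbf d,i}$ are harmonic-type sums; for instance $\beta_{\mathbf d,i}=-2H_{d_i}$, with $H_k$ the harmonic number. The degree-two part of the full product $e^{-HS/z}\cdot(1+\sum_{D\ge1}\cdots)$ has three sources: the $\frac12H^2S^2$ term of the exponential, the cross term $-HS$ times the degree-one part, and the degree-two parts of the summands. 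The first two are visibly multiples of $H^2$. The degree-two part of the summands is $H$ times a linear form in the $p_i$. Summing over $\mathbf d$ with $y^{\mathbf d}=y^D$ on the diagonal, the $S_4$-symmetry of the sum makes the total coefficient of $Hp_i$ independent of $i$. So it is a multiple of $H\sum_ip_i=H^2$. By Lemma~\ref{l3:lem:state-space} and Remark~\ref{l3:rem:odd-cohomology} this reading in the presentation $H^*(Y_3)/\Ann(H)$ is legitimate.

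\emph{Step 2: the coefficient is a well-defined convergent series.} The scalar coefficient of $H^2$ is determined uniquely because $H^2\ne0$ in $H^4(F_3)$: indeed $\int_{F_3}H^3=24$. We exhibit it as the explicit series
\[
 \tfrac12S^2-S\,(\cdot)+\sum_{\mathbf d}(-1)^D\frac{(D!)^2}{D\prod(d_i!)^2}\,c_{\mathbf d}\,y^D,
\]
with $c_{\mathbf d}$ rational and bounded by $O(D)$ (a linear combination of harmonic numbers). Comparison with $A_D\le16^D$ then gives radius at least $1/16$. Rationality of the coefficients is clear.

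\emph{Step 3: the degree-one term and the substitution $y=y(Q)$.} The $z^{-1}$ term vanishes: by Lemma~\ref{l3:lem:T} the degree-one part of $\mathcal I^0$ is $HT$, and multiplying by $e^{-HT/z}$ cancels it. The constant term is $1$, because summands with $D\ge1$ have no degree-zero part. Substituting the inverse germ $y(Q)=Q+O(Q^2)$ from Lemma~\ref{l3:lem:T} gives $K_2\in\Q\{Q\}$. The constant term of $K_2$ vanishes because each contribution is $O(y)$.

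The main obstacle is Step 1: one must check carefully that the $p_i$-linear part assembles into $H^2$ and not merely into a symmetric class modulo $\Ann(H)$. This works because the only $S_4$-invariant class in degree two of $H^*(Y_3)$ that multiplies $H$ is, up to scalar, $H^2$ itself, since $\sum_ip_i=H$.
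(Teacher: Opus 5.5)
Your proof is correct and follows essentially the same route as the paper. You use degree bookkeeping to identify the $z^{-2}$ coefficient with the degree-two part, and $S_4$-symmetry of the diagonal sum to force it to be a multiple of $H^2$; you apply the symmetry only to the $p_i$-linear part of the summands, whereas the paper applies it to the whole invariant class in $H^4(Y_3)$, which amounts to the same thing. As in the paper, uniqueness comes from $\int_{F_3}H^3=24$, and radius $\ge1/16$ comes from harmonic-number bounds against $A_n\le16^n$, followed by composition with $y(Q)$.
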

\begin{proof}
The coefficient of $z^{-1}$ vanishes by Lemma~\ref{l3:lem:T}.  The coefficient of
$z^{-2}$ is a class of cohomological degree two, i.e.\ an element of $H^4(Y_3,\Q)$
read in $H^4(F_3,\Q)$; being obtained from a diagonally restricted expression it
is invariant under the permutation action of $S_4$ on $p_1,\dots,p_4$.  The
invariant subspace of $H^4(Y_3,\Q)=\bigoplus_{i<j}\Q\,p_ip_j$ is spanned by
$e_2(p)=\sum_{i<j}p_ip_j=H^2/2$, so the coefficient is a multiple of $H^2$, and
the multiple is unique because $H^2\ne0$ in $H^4(F_3,\Q)$ (indeed
$\int_{F_3}H^2\cdot H=24$).  Here the scalar is produced by the symmetry, so no division by the nilpotent class $H$ is needed.

The prefactor is $e^{-HT/z}\cdot e^{H\log y/z}=e^{-HS(y)/z}$, so no logarithm
survives, and the coefficient is a power series in $y$.  For its radius, expand
the finite products in the $\mathbf d$-summand of~\eqref{l3:eq:I-function} to
second order in the classes, then the coefficients are $\binom D{\mathbf d}^2$ times
sums of at most two harmonic numbers $h_r\le1+\log r$, $r\le D$, so the
$y^n$-coefficient is bounded by a constant times $A_n(1+\log n)^2$, and
$A_n\le16^n$ gives radius $\ge1/16$.  Composing with the holomorphic inverse
germ $y(Q)$ of Lemma~\ref{l3:lem:T} gives $K_2\in\C\{Q\}$; its coefficients are
rational, and $K_2(0)=0$ because $\mathcal I^0_{X_3}=1+O(y)$ at $z=\infty$.
\end{proof}

\begin{proposition}\label{l3:prop:potential}
For $\beta\ne0$ let $C^\lambda_\beta$ be the proper evaluation cycle
\eqref{l3:eq:Cbeta}, and put $C^\lambda_+(Q)=\sum_{\beta\ne0}Q^{|\beta|}C^\lambda_\beta$
and $C^0_+(Q)=\sum_{\beta\ne0}Q^{|\beta|}C^0_\beta$.  Then
\begin{equation}\label{l3:eq:primary-extraction}
 [z^{-2}]\bigl(e^{-\tau/z}\mathcal J_{X_3}^\lambda-\mathbf1\bigr)
 =(\lambda-H)\,C^\lambda_+(Q),
\end{equation}
$C^0_+(Q)=-H\,K_2(Q)$, and the one-point potential~\eqref{l3:eq:potential-intro}
satisfies
\begin{equation}\label{l3:eq:F-K}
 \mathcal F_\gamma(Q)=-12\,K_2(Q).
\end{equation}
In particular $\mathcal F_\gamma$ is a convergent power series near $Q=0$, with
the same radius as $K_2$.
\end{proposition}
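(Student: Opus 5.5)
The plan is to read off the $z^{-2}$ coefficient of the geometric $J$-function directly from Definition~\ref{l3:def:local}, and then to compare it with the $I$-function through Proposition~\ref{l3:prop:mirror-map} and Lemma~\ref{l3:lem:K2-welldefined}.

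\emph{Step 1: the extraction formula.} From~\eqref{l3:eq:geometric-J}, $e^{-\tau/z}\mathcal J^\lambda_{X_3}-\mathbf 1=\sum_{\beta\ne0}Q^{|\beta|}\sum_\alpha\phi^\alpha\langle\phi_\alpha/(z(z-\psi))\rangle_{0,1,\beta}$. Its $z^{-2}$ coefficient is therefore $\sum_\beta Q^{|\beta|}\sum_\alpha\phi^\alpha\langle\phi_\alpha\rangle_{0,1,\beta}$. For the local pairing of Lemma~\ref{l3:lem:state-space}, the class $\sum_\alpha\phi^\alpha\langle\phi_\alpha\rangle$ is characterized by
\[
(\sum_\alpha\phi^\alpha\langle\phi_\alpha\rangle,\eta)_{X_3,\lambda}=\langle\eta\rangle
=\int_{F_3}\eta\,C^\lambda_\beta ,
\]
using~\eqref{l3:eq:vfc-eq} and the projection formula for the proper map $\ev_{F_3}$. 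Since $(a,b)_{X_3,\lambda}=\int ab/(\lambda-H)$, the representing class is $(\lambda-H)C^\lambda_\beta$, which gives~\eqref{l3:eq:primary-extraction}. This argument only needs nondegeneracy of the pairing on the even part, which Lemma~\ref{l3:lem:state-space} supplies, together with Remark~\ref{l3:rem:odd-cohomology} to discard the odd classes.

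\emph{Step 2: the non-equivariant limit.} By~\eqref{l3:eq:ItoJ}, $e^{-\tau/z}\mathcal J^\lambda=e^{-\lambda G/z}e^{-\tau/z}\mathcal I^\lambda_{X_3}$. The diagonal mirror map makes $\tau=HT(y)$ plus the unit shift: indeed $Q=ye^{-G}$, so $\log Q=\log y-G$, and $T=\log y+S$ with $S=-G$ on the diagonal. Both sides are polynomial in $\lambda$ coefficientwise; this holds for $C^\lambda_\beta$ by~\eqref{l3:eq:Cbeta-structure}, and for $\mathcal I$ by construction. Setting $\lambda=0$ kills the factor $e^{-\lambda G/z}$. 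Lemma~\ref{l3:lem:K2-welldefined} then gives that the left side of~\eqref{l3:eq:primary-extraction} at $\lambda=0$ equals $H^2K_2(Q)$. The right side at $\lambda=0$ is $-H\,C^0_+$, so $-H\,C^0_+=H^2K_2$.

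\emph{Step 3: removing the factor $H$.} This is the main obstacle, because $H$ is nilpotent and we cannot simply divide by it. We have $C^0_+\in H^2(F_3,\Q)[[Q]]$, and the identity says $H(C^0_++HK_2)=0$. By Lemma~\ref{l3:lem:state-space}, $H^2(F_3)\cong H^2(Y_3)/\Ann(H)$. I would show that multiplication by $H$ from $H^2(F_3)$ to $H^4(F_3)$ is injective. This follows from hard Lefschetz on the smooth projective threefold $F_3$, where $H$ is ample: $H\colon H^2\to H^4$ is an isomorphism. Alternatively it follows directly from the presentation, since $\ker j^*=\Ann(H)$ in degree $2$ and is mapped compatibly to degree $4$. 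Injectivity gives $C^0_+=-HK_2$.

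\emph{Step 4: the potential.} By~\eqref{l3:eq:vfc}, $N_\beta(\gamma)=\int_{F_3}\gamma\,C^0_\beta$. Summing over $\beta$ gives
\[
\mathcal F_\gamma=\int_{F_3}\tfrac{H^2}{2}\cdot(-HK_2)=-12K_2 ,
\]
since $\int_{F_3}H^3=24$. Convergence and the radius of $\mathcal F_\gamma$ are then inherited from Lemma~\ref{l3:lem:K2-welldefined}.
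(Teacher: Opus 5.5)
Your proposal is correct and follows the paper's proof step for step. It uses the same $z^{-2}$ extraction through the dual basis and the $(\lambda-H)^{-1}$ pairing, the same specialization to $\lambda=0$ via Proposition~\ref{l3:prop:mirror-map} and Lemma~\ref{l3:lem:K2-welldefined}, the same cancellation of $H$ by Hard Lefschetz for $H\colon H^2(F_3)\to H^4(F_3)$, and the same pairing with $\gamma=H^2/2$. The only weak spot is your ``alternative'' to Hard Lefschetz in Step~3, which is not an argument as written: from $H\cdot j^*a=0$ the presentation gives only $H^2a=0$ in $H^6(Y_3)$, and concluding $Ha=0$ needs a Lefschetz-type injectivity on $Y_3=(\PP^1)^4$ or a direct check; the Hard Lefschetz argument you give first suffices on its own.
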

\begin{proof}
Expanding $1/(z(z-\psi))=z^{-2}+\psi z^{-3}+\cdots$ in~\eqref{l3:eq:geometric-J},
\[
 [z^{-2}]\bigl(e^{-\tau/z}\mathcal J_{X_3}^\lambda-\mathbf1\bigr)
 =\sum_{\beta\ne0}Q^{|\beta|}\sum_\alpha\phi^\alpha
 \langle\phi_\alpha\rangle^{X_3,\lambda}_{0,1,\beta}
 =\sum_{\beta\ne0}Q^{|\beta|}\sum_\alpha\phi^\alpha\int_{F_3}C^\lambda_\beta\,\phi_\alpha ,
\]
by Definition~\ref{l3:def:local}(ii),~\eqref{l3:eq:vfc-eq}, and the projection
formula for $\ev_{F_3}$.  The element $u\in\mathcal H_{X_3}$ with
$(u,\phi_\alpha)_{X_3,\lambda}=\int_{F_3}C\phi_\alpha$ for all $\alpha$ is
$u=(\lambda-H)C$, because $\int_{F_3}(\lambda-H)C\,\phi_\alpha/(\lambda-H)=\int_{F_3}C\phi_\alpha$
and the Poincar\'e pairing is nondegenerate on the even part.  This proves
\eqref{l3:eq:primary-extraction}; it is the algebraic form of the equivariant
self-intersection formula $\iota^*\iota_*=e_\lambda(K_{F_3})\cup(-)$, as recorded in
\eqref{l3:eq:self-intersection}.

By~\eqref{l3:eq:Cbeta-structure}, $C^\lambda_\beta=C^0_\beta+\lambda c_\beta\mathbf1$
with $C^0_\beta\in H^2(F_3,\Q)$ and $c_\beta\in\Q$, so both sides of
\eqref{l3:eq:primary-extraction} are polynomial in $\lambda$ and may be
specialized to $\lambda=0$.  Using $\mathcal J^0_{X_3}=\mathcal I^0_{X_3}$
(Proposition~\ref{l3:prop:mirror-map} at $\lambda=0$), $\tau=HT$, and
\eqref{l3:eq:K2def}, the specialization reads
\[
 -H\,C^0_+(Q)=H^2K_2(Q),\qquad\text{i.e.}\qquad
 H\cup\bigl(C^0_+(Q)+HK_2(Q)\bigr)=0 .
\]
Both $C^0_+$ and $HK_2$ lie in $H^2(F_3)$, and $H\cup-:H^2(F_3)\to H^4(F_3)$ is an
isomorphism by Hard Lefschetz; hence
\begin{equation}\label{l3:eq:C-K}
 C^0_+(Q)=-H\,K_2(Q).
\end{equation}
This is a cancellation by an isomorphism, not a division by the nilpotent
class $H$.  Pairing with $\gamma=H^2/2$ and using the projection formula,
$\ev_{X_3}^*\varpi^*\gamma=\ev_{F_3}^*\gamma$, and $\int_{F_3}H^3=24$:
\[
 \mathcal F_\gamma(Q)=\int_{F_3}\gamma\cup C^0_+(Q)=-K_2(Q)\int_{F_3}\tfrac{H^2}2\cdot H
 =-12K_2(Q).\qedhere
\]
\end{proof}

\subsection{The logarithmic B-model recurrence}

\label{l3:app:logarithmic-recurrence}

This subsection supplies the calculation behind $\mathscr P_yB=0$ in
Lemma~\ref{l3:lem:Bop}.  The differential equation of the diagonally restricted logarithmic series does not follow automatically from that of $A$, so we derive it.  All computations take place over the dual numbers
$\mathbb D=\C[\varepsilon]/(\varepsilon^2)$; every equality below is an equality
in $\mathbb D$, i.e.\ modulo $\varepsilon^2$.  (We write $\varepsilon$, not
$\rho$, to keep this deformation parameter distinct from the Frobenius exponent
of Section~\ref{l3:sec:analytic}, although the two play analogous roles.)

Throughout, $y^\varepsilon$ means $1+\varepsilon\log y$, and
$S_{-1}=S_{-2}=0$.  For $\mathbf d\in\mathbb N^4$ put
\[
 C_{\mathbf d}(\varepsilon)=
 \frac{\Gamma(|\mathbf d|+1+\varepsilon)^2}
      {\prod_{i=1}^4\Gamma(d_i+1+\varepsilon/4)^2},
 \qquad
 S_n(\varepsilon)=\sum_{|\mathbf d|=n}C_{\mathbf d}(\varepsilon).
\]
From $\tfrac{d}{d\varepsilon}\log C_{\mathbf d}
=2\psi(|\mathbf d|+1+\varepsilon)-\tfrac12\sum_i\psi(d_i+1+\varepsilon/4)$ and
$\psi(k+1)=h_k-\gamma_E$,
\begin{equation}\label{l3:eq:S-derivative}
 S_n(0)=A_n,\qquad
 S_n'(0)=\sum_{|\mathbf d|=n}\frac{(n!)^2}{\prod_{i=1}^4(d_i!)^2}
 \left(2h_n-\frac12\sum_{i=1}^4 h_{d_i}\right),
\end{equation}
the Euler constants cancelling because $2-4\cdot\tfrac12=0$, in the same way as
in the proof of Lemma~\ref{l3:lem:B-expansion}.  Hence, by
Lemma~\ref{l3:lem:B-expansion},
\begin{equation}\label{l3:eq:rho-family}
 y^\varepsilon\sum_{n\ge0}(-1)^nS_n(\varepsilon)y^n=A(y)+\varepsilon B(y).
\end{equation}

Let $\theta_y=y\partial_y$ and put
\[
 F(y,\varepsilon)=\sum_{k\ge0}\frac{y^k}{\Gamma(k+1+\varepsilon/4)^2},
 \qquad h(y,\varepsilon)=F(y,\varepsilon)^4 .
\]
Since $\Gamma(k+1+\tfrac\varepsilon4)=(k+\tfrac\varepsilon4)\Gamma(k+\tfrac\varepsilon4)$,
\[
 (\theta_y+\tfrac\varepsilon4)^2F=\sum_{k\ge0}\frac{y^k}{\Gamma(k+\varepsilon/4)^2}
 =yF+\frac{1}{\Gamma(\varepsilon/4)^2}=yF,
\]
the correction term vanishing because
$\Gamma(\varepsilon/4)^{-2}=\varepsilon^2/16+O(\varepsilon^3)=0$ in $\mathbb D$.
(This is the one step where working modulo $\varepsilon^2$ is essential.)
Hence $\widetilde F=y^{\varepsilon/4}F$ satisfies
$\theta_y^2\widetilde F=y\widetilde F$.  With $v=\widetilde F^4$, repeated
differentiation gives
\begin{align*}
 \theta_y v&=4\widetilde F^3\theta_y\widetilde F,\\
 \theta_y^2v&=4yv+12\widetilde F^2(\theta_y\widetilde F)^2,\\
 \theta_y^3v&=4yv+10y\theta_y v+24\widetilde F(\theta_y\widetilde F)^3,\\
 \theta_y^4v&=4yv+14y\theta_y v+16y\theta_y^2v-24y^2v+24(\theta_y\widetilde F)^4,\\
 \theta_y^5v&=4yv+18y\theta_y v+30y\theta_y^2v+16y\theta_y^3v-48y^2v-24y^2\theta_y v
             +96y\widetilde F(\theta_y\widetilde F)^3 .
\end{align*}
Eliminating the last cubic term with the third line yields
\begin{equation}\label{l3:eq:symmetric-fourth}
 \left[\theta_y^5-y\,p(\theta_y)+64y^2(\theta_y+1)\right]v=0,
 \qquad p(x)=20x^3+30x^2+18x+4 .
\end{equation}
Since $v=y^{\varepsilon}h$ and $\theta_y(y^\varepsilon h)=y^\varepsilon(\theta_y+\varepsilon)h$,
\begin{equation}\label{l3:eq:h-operator}
 \left[(\theta_y+\varepsilon)^5-y\,p(\theta_y+\varepsilon)
       +64y^2(\theta_y+\varepsilon+1)\right]h=0 .
\end{equation}

The diagonal Gamma series is obtained from $h$ by the coefficientwise multiplier
\[
 C^+(y,\varepsilon)=\Gamma(\theta_y+1+\varepsilon)^2h(y,\varepsilon)
 =\sum_{n\ge0}S_n(\varepsilon)y^n ;
\]
here $\Gamma(\theta_y+1+\varepsilon)^2$ is the formal operator multiplying the
coefficient of $y^n$ by the invertible element
$\Gamma(n+1+\varepsilon)^2\in\mathbb D$, not an analytic function of $\theta_y$.
Writing $\Phi=\Gamma(\theta_y+1+\varepsilon)^2$, one has $\Phi\circ y\circ\Phi^{-1}
=y(\theta_y+1+\varepsilon)^2$ and $\Phi\circ y^2\circ\Phi^{-1}
=y^2(\theta_y+2+\varepsilon)^2(\theta_y+1+\varepsilon)^2$, while $\Phi$ commutes with
$f(\theta_y)$.  Conjugating~\eqref{l3:eq:h-operator} therefore gives
\begin{multline}\label{l3:eq:fifth-order-deformed}
 \Big[(\theta_y+\varepsilon)^5-y(\theta_y+\varepsilon+1)^2p(\theta_y+\varepsilon)\\
 +64y^2(\theta_y+\varepsilon+1)^3(\theta_y+\varepsilon+2)^2\Big]C^+(y,\varepsilon)=0 .
\end{multline}
For $n\ge1$ the coefficient of $y^n$ reads
\[
 (n+\varepsilon)^5S_n(\varepsilon)
 -(n+\varepsilon)^2p(n-1+\varepsilon)S_{n-1}(\varepsilon)
 +64(n-1+\varepsilon)^3(n+\varepsilon)^2S_{n-2}(\varepsilon)=0 ,
\]
while for $n=0$ it reads $\varepsilon^5S_0=0$, which holds in $\mathbb D$.  The
factor $(n+\varepsilon)^2$ is invertible in $\mathbb D$ for $n\ge1$, so
\begin{equation}\label{l3:eq:S-recurrence}
 (n+\varepsilon)^3S_n(\varepsilon)-p(n-1+\varepsilon)S_{n-1}(\varepsilon)
 +64(n-1+\varepsilon)^3S_{n-2}(\varepsilon)=0 .
\end{equation}
Since $p(-x-1)=-p(x)$ one has
$p(n-1+\varepsilon)=20(n+\varepsilon)^3-30(n+\varepsilon)^2+18(n+\varepsilon)-4$,
so~\eqref{l3:eq:S-recurrence} is the deformation of~\eqref{l3:eq:domb-rec} used in
Lemma~\ref{l3:lem:Bop}; at $n=0$ the third-order relation reads
$\varepsilon^3S_0=0$, again valid in $\mathbb D$.
 Substituting $y\mapsto-y$ turns~\eqref{l3:eq:S-recurrence} into the statement
\[
 \left[(\theta_y+\varepsilon)^3+y\,p(\theta_y+\varepsilon)
 +64y^2(\theta_y+\varepsilon+1)^3\right]
 \sum_{n\ge0}(-1)^nS_n(\varepsilon)y^n=0,
\]
and multiplication by $y^\varepsilon$ conjugates every $\theta_y+\varepsilon$ to
$\theta_y$.  By~\eqref{l3:eq:rho-family} the constant term in $\varepsilon$ is
$\mathscr P_yA=0$ and the coefficient of $\varepsilon$ is $\mathscr P_yB=0$.
This proves the logarithmic equation coefficientwise, hence in the ring of
convergent germs by the bounds $A_n\le16^n$ and $h_n=O(\log n)$ used in the
proof of Theorem~\ref{l3:thm:coefficient-bridge}.

\subsection{Quantum-Serre coefficients and the coefficient bridge}

Introduce the sign-twisted series
\begin{equation}\label{l3:eq:IW}
 I^\sharp(y,z)=e^{H\log y/z}\sum_{D\ge0}(-1)^Dy^D
 \sum_{|\mathbf d|=D}
 \frac{\prod_{m=1}^{D}(H+mz)^2}
 {\prod_{i=1}^4\prod_{m=1}^{d_i}(p_i+mz)^2},
\end{equation}
and write $I^\sharp=A(y)\mathbf1+B(y)H/z+O(z^{-2})$, $\tau^\sharp=B/A$.

\begin{remark}\label{l3:rem:IW-geometry}
Up to the sign $(-1)^D$,~\eqref{l3:eq:IW} is the hypergeometric $I$-function of a
smooth member $W\in|-K_{F_3}|$, that is, of a $(1,1,1,1)$-complete intersection of
two divisors in $Y_3$, which is a $K3$ surface~\cite{BKV3,BFKNS}.  
Lemma~\ref{l3:lem:Bop} is a statement about the formal series~\eqref{l3:eq:IW} alone,
and no mirror theorem for $W$ is invoked.
\end{remark}

\begin{lemma}[A coefficient identity]\label{l3:lem:qserre}
Coefficientwise, with the canonical ambient lift of~\eqref{l3:eq:I-mu} used before restriction to $F_3$,
\[
 z\thetay\,\mathcal I_{X_3}^0=H\,I^\sharp,\qquad
 A=\thetay T,\qquad
 \tau^\sharp-T=\thetaQ K_2 .
\]
\end{lemma}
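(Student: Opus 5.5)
The plan is to verify the first identity directly on the summands of the canonical ambient lift~\eqref{l3:eq:I-mu} at $\mu=\lambda=0$, restricted to the diagonal. The other two identities then follow by comparing the coefficients of $z^{-1}$ and of $z^{-2}$, with scalars extracted through the $S_4$-symmetry as in Lemma~\ref{l3:lem:K2-welldefined}.

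\emph{Step 1: the operator identity.} For a fixed $\mathbf d$ with $|\mathbf d|=D$, we have
\[
\theta_y\bigl(e^{H\log y/z}y^D\bigr)=z^{-1}(H+Dz)\,e^{H\log y/z}y^D .
\]
Hence $z\theta_y$ multiplies the $\mathbf d$-summand of $\mathcal I^0_{X_3}$ by $(H+Dz)$. The numerator of that summand is
\[
\prod_{r=1}^D(H+rz)\prod_{r=0}^{D-1}(-H-rz)=(-1)^D H\prod_{r=1}^{D}(H+rz)\prod_{r=1}^{D-1}(H+rz),
\]
and multiplying it by $(H+Dz)$ gives $(-1)^DH\prod_{r=1}^D(H+rz)^2$. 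This is exactly $H$ times the $\mathbf d$-summand of~\eqref{l3:eq:IW}, with the same denominators. The term $D=0$ is $1$ on the left, which $z\theta_y$ sends to $H$, matching the right side. The identity therefore holds summand by summand in $H^*(Y_3)[z,z^{-1}][[y]]\,e^{H\log y/z}$. Restricting by $j^*$ gives $z\theta_y\mathcal I^0_{X_3}=HI^\sharp$ coefficientwise.

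\emph{Step 2: the relation $A=\theta_yT$.} We note that the degree-zero part of $I^\sharp$ is $\sum_D(-1)^Dy^D\sum_{|\mathbf d|=D}\binom D{\mathbf d}^2=A(y)$, by~\eqref{l3:eq:domb}. By Lemma~\ref{l3:lem:T}, $\mathcal I^0_{X_3}=1+HT/z+O(z^{-2})$; the constant $1$ holds because summands with $D\ge1$ have no degree-zero part. We take the coefficient of $z^0$ in Step 1. The left side is $H\,\theta_yT$ and the right side is $HA$. Since both are scalar multiples of $H\neq0$ in $H^2(F_3)$, we conclude $A=\theta_yT$.

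\emph{Step 3: the relation $\tau^\sharp-T=\theta_QK_2$.} By~\eqref{l3:eq:K2def},
\[
\mathcal I^0_{X_3}=e^{HT/z}\bigl(1+H^2K_2/z^2+\cdots\bigr),
\]
so $[z^{-2}]\mathcal I^0_{X_3}=H^2\bigl(T^2/2+K_2\bigr)$. We compare the coefficients of $z^{-1}$ in Step 1. The left side gives
\[
\theta_y\bigl(H^2(T^2/2+K_2)\bigr)=H^2\bigl(TA+\theta_yK_2\bigr),
\]
using Step 2. The right side gives $H\cdot BH=H^2B$. Since $H^2\neq0$ in $H^4(F_3)$ (its pairing with $H$ is $24$), we get $B=TA+\theta_yK_2$. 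Because $Q=e^T$ and $\theta_yT=A$, we have $\theta_Q=A^{-1}\theta_y$. Dividing by $A$, which is invertible as a germ since $A=1+O(y)$, yields $\tau^\sharp-T=\theta_QK_2$.

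\emph{Main obstacle.} The only delicate point is the legitimacy of reading off scalar coefficients against nilpotent classes. I would handle it exactly as in Lemma~\ref{l3:lem:K2-welldefined}. On the diagonal, every coefficient is $S_4$-invariant. The invariant parts of $H^2$ and $H^4$ are spanned by $H$ and $H^2/2$ respectively, and these are nonzero in $H^*(F_3)$. Consequently, writing the $z^{-1}$ coefficient of $I^\sharp$ as $B\,H$ is unambiguous, and both cancellations are injectivity statements rather than divisions by $H$. The ambient lift is used in Step 1 so that the factorisation $\prod_{r=0}^{D-1}(-H-rz)=(-1)^DH\prod_{r=1}^{D-1}(H+rz)$ is an identity in $H^*(Y_3)$ before restriction.
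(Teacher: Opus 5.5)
Your proposal is correct and follows the paper's proof essentially step for step. You check summandwise that $z\theta_y$ multiplies by $H+Dz$ and factor out the $r=0$ term as $(-1)^DH\prod_{r=1}^{D-1}(H+rz)$, treating $D=0$ separately. You then compare the $z^0$ and $z^{-1}$ coefficients against the expansion $\mathcal I^0_{X_3}=1+HT/z+H^2(T^2/2+K_2)/z^2+O(z^{-3})$, and use the $S_4$-symmetry, as in Lemma~\ref{l3:lem:K2-welldefined}, to justify reading off the scalars $A$ and $B$.
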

\begin{proof}
On the degree-$D$ summand, $z\thetay$ multiplies by $H+Dz$.  For $D\ge1$,
\[
 \prod_{m=0}^{D-1}(-H-mz)=(-1)^DH\prod_{m=1}^{D-1}(H+mz),
\]
and $(H+Dz)\prod_{m=1}^{D-1}(H+mz)=\prod_{m=1}^{D}(H+mz)$, so the numerator of
the $\mathbf d$-summand of $z\thetay\mathcal I^0_{X_3}$ becomes
$(-1)^DH\prod_{m=1}^D(H+mz)^2$, which is $H$ times the corresponding numerator
in~\eqref{l3:eq:IW}; the $D=0$ terms agree as well, both being $He^{H\log y/z}$.
This proves the first identity.  Applying $z\thetay$ to~\eqref{l3:eq:K2def} in the
form
$\mathcal I^0_{X_3}=1+\frac{HT}{z}+\frac{H^2}{z^2}\bigl(\tfrac{T^2}2+K_2\bigr)+O(z^{-3})$
gives
$z\thetay\mathcal I^0_{X_3}=H\thetay T+\frac{H^2}{z}\thetay(\tfrac{T^2}2+K_2)+O(z^{-2})$.
Comparing coefficients of $H$ and of $H^2/z$ with
$HI^\sharp=HA+\frac{H^2}{z}B+O(z^{-2})$ yields $A=\thetay T$ and
$B=\thetay(\tfrac{T^2}{2}+K_2)=T\thetay T+\thetay K_2$.  Since $T=\log Q$ we
have $\thetay=(\thetay T)\thetaQ=A\thetaQ$, so $\thetay K_2=A\thetaQ K_2$ and
$B=A(T+\thetaQ K_2)$, i.e.\ $\tau^\sharp=B/A=T+\thetaQ K_2$.  (The comparison of
coefficients of $H$ and $H^2$ is legitimate for the same $S_4$-symmetry reason as
in Lemma~\ref{l3:lem:K2-welldefined}.)
\end{proof}

Set $a_n=(-1)^nA_n$, so that $A(y)=\sum_{n\ge0}a_ny^n$ by~\eqref{l3:eq:domb}.

\begin{lemma}[Expansion of $B$]\label{l3:lem:B-expansion}
$B(y)=A(y)\log y+\widetilde B(y)$ with $\widetilde B$ a holomorphic germ of
radius $\ge1/16$:
\begin{equation}\label{l3:eq:B-expansion}
 \widetilde B(y)=\sum_{n\ge1}b_ny^n,\quad
 b_n=(-1)^n\!\sum_{|\mathbf d|=n}\frac{(n!)^2}{\prod_{i=1}^4(d_i!)^2}
 \left(2h_n-\frac12\sum_{i=1}^4 h_{d_i}\right),\quad h_n=\sum_{j=1}^n\frac1j .
\end{equation}
\end{lemma}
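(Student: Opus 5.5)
We compute $B$ directly from the definition~\eqref{l3:eq:IW}: $B$ is the coefficient of $H/z$ in $I^\sharp$, read in the $S_4$-stable presentation of Lemma~\ref{l3:lem:state-space}, where the degree-one part of any symmetric expression is a multiple of $H$. We use the homogeneity from the proof of Lemma~\ref{l3:lem:T}, giving $z$, $H$ and each $p_i$ degree one. Then each $\mathbf d$-summand of the sum in~\eqref{l3:eq:IW} is homogeneous of degree zero, so its $z^{-1}$ coefficient is its part of cohomological degree one.

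\textbf{Step 1.} Expand a single summand to first order in the classes:
\[
 \frac{\prod_{m=1}^{D}(H+mz)^2}{\prod_i\prod_{m=1}^{d_i}(p_i+mz)^2}
 =\frac{(D!)^2}{\prod_i(d_i!)^2}\Bigl(1+\frac{2h_DH-2\sum_ih_{d_i}p_i}{z}+O(z^{-2})\Bigr).
\]

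\textbf{Step 2.} Symmetrize. Summing over $|\mathbf d|=D$ is $S_4$-invariant, so $\sum_{\mathbf d}c_{\mathbf d}h_{d_i}p_i$ with symmetric weights $c_{\mathbf d}=(D!)^2/\prod(d_j!)^2$ equals $\frac14\bigl(\sum_{\mathbf d}c_{\mathbf d}\sum_ih_{d_i}\bigr)H$, since the invariant part of $H^2(Y_3)$ is spanned by $H=\sum p_i$. Hence the degree-one part of the degree-$D$ sum is
\[
 H\sum_{|\mathbf d|=D}c_{\mathbf d}\Bigl(2h_D-\tfrac12\sum_ih_{d_i}\Bigr).
\]
Multiplying by $(-1)^D y^D$ gives $b_D$. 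The degree-zero part gives $(-1)^DA_D y^D$, whose sum is $A$.

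\textbf{Step 3.} Multiply by the prefactor $e^{H\log y/z}=1+H\log y/z+\cdots$. The $H/z$ coefficient is $A\log y+\sum_D b_Dy^D$, and $b_0=0$ because $h_0=0$. This gives $B=A\log y+\widetilde B$ with~\eqref{l3:eq:B-expansion}.

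\textbf{Step 4 (radius).} Use $h_r\le1+\log r$ for $r\le n$. This gives $|b_n|\le 4(1+\log n)A_n\le4(1+\log n)16^n$, hence radius $\ge1/16$, exactly as in Lemma~\ref{l3:lem:K2-welldefined}.

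The only delicate point is Step 2. Extracting the scalar coefficient of $H$ must be done via the $S_4$-invariance, not by dividing by $H$. Moreover, the identification of symmetric degree-one classes with multiples of $H$ must be made in $H^2(F_3)$, using $H^2(Y_3)\cong H^2(F_3)$ from Lemma~\ref{l3:lem:state-space}, which is injective in degree two. This is the same symmetry argument already used in Lemma~\ref{l3:lem:qserre}.
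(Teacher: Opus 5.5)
Your proposal is correct and follows the paper's proof essentially step for step: the first-order expansion of each $\mathbf d$-summand, the $S_4$-averaging that replaces $\sum_{\mathbf d}\binom{D}{\mathbf d}^2\sum_ih_{d_i}p_i$ by $\frac H4\sum_{\mathbf d}\binom{D}{\mathbf d}^2\sum_ih_{d_i}$ (the source of the factor $\frac12$), the $A\log y$ term from the prefactor, and the radius bound via $h_n\le1+\log n$ and $A_n\le16^n$. Your closing remark about injectivity of $H^2(Y_3)\to H^2(F_3)$ is harmless but not needed. In the ambient ring the coefficient of each $p_i$ is visibly the same scalar, so the degree-one class is already an explicit multiple of $H$.
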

\begin{proof}
Expanding the $\mathbf d$-summand of~\eqref{l3:eq:IW} to first order in the
classes,
\[
 \frac{\prod_{m=1}^{D}(H+mz)^2}{\prod_i\prod_{m=1}^{d_i}(p_i+mz)^2}
 =\binom{D}{\mathbf d}^{2}\left[1+\frac1z\Bigl(2h_DH-2\sum_ih_{d_i}p_i\Bigr)
 +O(z^{-2})\right],
\]
which gives $A$ at order $z^0$.  For the $H/z$ coefficient, the sum over
$\{\mathbf d:|\mathbf d|=D\}$ is $S_4$-symmetric, so
$\sum_{\mathbf d}\binom D{\mathbf d}^2\sum_ih_{d_i}p_i
=\frac H4\sum_{\mathbf d}\binom D{\mathbf d}^2\sum_ih_{d_i}$; this is the origin
of the factor $\tfrac12=\tfrac24$ in~\eqref{l3:eq:B-expansion}.  Adding the
contribution $A\log y$ of the exponential prefactor gives the expansion; the
radius follows from $h_n\le1+\log n$ and $A_n\le16^n$.
\end{proof}

\begin{lemma}[The scalar B-model operator]\label{l3:lem:Bop}
Both $A(y)$ and $B(y)$ are annihilated by
\begin{equation}\label{l3:eq:Py}
 \mathscr P_y=\thetay^3+y\,p(\thetay)+64y^2(\thetay+1)^3,
 \qquad p(x)=20x^3+30x^2+18x+4 .
\end{equation}
Under $y=-s^{-1}$ and the gauge $u(s)=s^{-1}v(-s^{-1})$ one has
$(\mathscr P_yv)(-s^{-1})=-\mathcal L_3u(s)$.  Consequently
\[
 \pi(s)=\frac{A(-s^{-1})}{s},\qquad
 \pi(s)\,\tau^\sharp(-s^{-1})=\frac{B(-s^{-1})}{s}
\]
solve $\mathcal L_3u=0$.
\end{lemma}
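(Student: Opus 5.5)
The proof splits into three parts: the annihilation of $A$, the annihilation of $B$, and the gauge identity.

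\emph{Annihilation of $A$.} This is a coefficient check. Put $a_n=(-1)^nA_n$. Applying $\mathscr P_y$ to $\sum a_ny^n$ and using $\theta_y y^n=ny^n$, the coefficient of $y^n$ is
\[
n^3a_n+p(n-1)a_{n-1}+64(n-1)^3a_{n-2}.
\]
Multiplying by $(-1)^n$ turns this into
\[
n^3A_n-p(n-1)A_{n-1}+64(n-1)^3A_{n-2}.
\]
Since $p(n-1)=20n^3-30n^2+18n-4$ (from $p(-x-1)=-p(x)$), this vanishes by the Apéry recursion~\eqref{l3:eq:domb-rec}. Convergence for $|y|<1/16$ follows from $A_n\le16^n$.

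\emph{Annihilation of $B$.} By Lemma~\ref{l3:lem:B-expansion}, $B=A\log y+\widetilde B$, and a direct coefficient check of $\widetilde B$ would require a harmonic-number identity that is hard to verify by hand. I would instead invoke the dual-number computation of Subsection~\ref{l3:app:logarithmic-recurrence}. There, \eqref{l3:eq:rho-family} identifies
\[
y^\varepsilon\sum_n(-1)^nS_n(\varepsilon)y^n=A+\varepsilon B
\]
in $\mathbb D=\C[\varepsilon]/(\varepsilon^2)$. This identification uses \eqref{l3:eq:S-derivative}, and it rests on the Euler constants cancelling, which I would confirm against the $b_n$ of~\eqref{l3:eq:B-expansion}. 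The deformed recurrence~\eqref{l3:eq:S-recurrence} comes from the symmetric fourth power equation~\eqref{l3:eq:symmetric-fourth} of the Bessel-type series $\widetilde F$ after conjugation by $\Gamma(\theta_y+1+\varepsilon)^2$. Substituting $y\mapsto -y$ and conjugating by $y^\varepsilon$ turns this recurrence into $\mathscr P_y(A+\varepsilon B)=0$ in $\mathbb D$. The $\varepsilon^0$ and $\varepsilon^1$ parts give $\mathscr P_yA=0$ and $\mathscr P_yB=0$.

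The main obstacle is the low-order bookkeeping in this step. The $n=0$ terms $\varepsilon^5S_0$ and $\varepsilon^3S_0$ vanish only modulo $\varepsilon^2$. The step $\Gamma(\varepsilon/4)^{-2}=0$ in $\mathbb D$ is essential, and I would check it carefully. Dividing by $(n+\varepsilon)^2$ is legitimate only for $n\ge1$. Finally, coefficientwise identities pass to genuine germs because $h_n=O(\log n)$.

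\emph{The gauge identity.} Set $u(s)=s^{-1}v(-s^{-1})$ with $y=-1/s$, so $u=-yv(y)$. By~\eqref{l3:eq:L3-theta},
\[
\mathcal L_3u=P_3(u/y)=P_3(-v)=-\mathscr P_yv,
\]
because $P_3=\mathscr P_y$ as operators (the same polynomials appear in both). Taking $v=A$ gives $\pi=A(-s^{-1})/s$; this matches~\eqref{l3:eq:pi} since $A(-s^{-1})=\sum_n A_ns^{-n}$. Taking $v=B$ and using $\tau^\sharp=B/A$ gives $B(-s^{-1})/s=\pi\,\tau^\sharp(-s^{-1})$. Both are therefore solutions of $\mathcal L_3u=0$. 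The logarithm in $B$ is read on the branch of Convention~\ref{l3:conv:branches}, and the identity persists there because the computation is purely formal in $\log y$.
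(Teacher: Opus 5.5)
Your proposal is correct and follows the paper's own proof essentially step for step. Both check $\mathscr P_yA=0$ coefficientwise against the Apéry recursion~\eqref{l3:eq:domb-rec}, using $p(n-1)=20n^3-30n^2+18n-4$. Both obtain $\mathscr P_yB=0$ as the $\varepsilon$-coefficient of the dual-number identity from Subsection~\ref{l3:app:logarithmic-recurrence}. Both derive the gauge relation $\mathcal L_3u=P_3(u/y)=-\mathscr P_yv$ from~\eqref{l3:eq:L3-theta}, reading off $\pi$ from $A(-s^{-1})=\sum_nA_ns^{-n}$ and $\pi\,\tau^\sharp$ from $\tau^\sharp=B/A$.
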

\begin{proof}
Reading off the coefficient of $y^n$ in $\mathscr P_yA=0$ gives
$n^3a_n+p(n-1)a_{n-1}+64(n-1)^3a_{n-2}=0$, which for $a_n=(-1)^nA_n$ is
\eqref{l3:eq:domb-rec}, since $p(n-1)=20n^3-30n^2+18n-4$.  For $B$,
Subsection~\ref{l3:app:logarithmic-recurrence} derives the deformed recurrence. Over the dual numbers
$\mathbb D=\C[\varepsilon]/(\varepsilon^2)$ the sums
\[
 S_n(\varepsilon)=\sum_{|\mathbf d|=n}
 \frac{\Gamma(n+1+\varepsilon)^2}{\prod_{i=1}^4\Gamma(d_i+1+\varepsilon/4)^2}
\]
satisfy $S_n(0)=A_n$, $S_n'(0)=$ the unsigned harmonic sum of
\eqref{l3:eq:B-expansion} (Euler's constant cancels because $2-4\cdot\tfrac12=0$; see~\eqref{l3:eq:S-derivative}), and
\[
 \mathscr P_y\Bigl(y^\varepsilon\sum_{n\ge0}(-1)^nS_n(\varepsilon)y^n\Bigr)=0
 \quad\text{in }\mathbb D .
\]
The constant term in $\varepsilon$ is $\mathscr P_yA=0$ and the coefficient of
$\varepsilon$ is $\mathscr P_yB=0$, by~\eqref{l3:eq:B-expansion}.

For the conjugation, $\mathscr P_y$ is the operator $P_3$
of~\eqref{l3:eq:L3-theta}, and $v(-s^{-1})=s\,u(s)=-u(s)/y$ under $y=-s^{-1}$;
hence $(\mathscr P_yv)(-s^{-1})=-P_3(u/y)=-\mathcal L_3u$
by~\eqref{l3:eq:L3-theta}.  Finally
$A(-s^{-1})=\sum_nA_ns^{-n}$, so $A(-s^{-1})/s=\pi(s)$ by~\eqref{l3:eq:pi}, and
$B(-s^{-1})/s=\pi\,\tau^\sharp(-s^{-1})$ since $\tau^\sharp=B/A$.
\end{proof}

\begin{theorem}[Coefficient bridge]\label{l3:thm:coefficient-bridge}
As convergent analytic germs at $Q=0$,
\begin{equation}\label{l3:eq:bridge}
 f_2(Q)=12\bigl(\tau^\sharp(y(Q))-T(y(Q))\bigr)
 =12\,\thetaQ K_2(Q)=-\thetaQ\mathcal F_\gamma(Q).
\end{equation}
\end{theorem}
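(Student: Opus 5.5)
The second and third equalities in \eqref{l3:eq:bridge} are already available: Lemma~\ref{l3:lem:qserre} gives $\tau^\sharp-T=\thetaQ K_2$, and \eqref{l3:eq:F-K} gives $\mathcal F_\gamma=-12K_2$, so $12\thetaQ K_2=-\thetaQ\mathcal F_\gamma$. What remains is the first equality,
\[
 f_2(Q)=12\bigl(\tau^\sharp(y(Q))-T(y(Q))\bigr).
\]
The plan is to identify the logarithmic solution $\pi\tau^\sharp$ of Lemma~\ref{l3:lem:Bop} with a combination of the Frobenius basis at infinity. Then $f_2$ is read off from the explicit polynomial-in-$R$ forms in the proof of Lemma~\ref{l3:lem:frob-form}.

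\emph{Step 1: locate $\pi\tau^\sharp$ in the Frobenius basis.} By Lemma~\ref{l3:lem:Bop}, $u=\pi\,\tau^\sharp(-s^{-1})=B(-s^{-1})/s$ solves $\mathcal L_3u=0$. By Lemma~\ref{l3:lem:B-expansion}, $B=A\log y+\widetilde B$ with $\widetilde B=O(y)$. On the Euclidean branch $\log y=-\ell$, so
\[
 u=\pi\bigl(-\ell+\widetilde B/A\bigr),
\]
which contains no $\ell^2$ term and has holomorphic part vanishing at infinity. Comparing with \eqref{l3:eq:frob-explicit}, where $\Phi_1=\pi(\ell+r_1)$ and $\Phi_0=\pi$, uniqueness of the Frobenius decomposition (the $\alpha_j$-argument of Lemma~\ref{l3:lem:frob-form}) forces $u=-\Phi_1$. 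Hence $\tau^\sharp=-\ell-r_1$ as germs. Since $T=R=-\ell+\sigma$, this gives
\[
 \tau^\sharp-T=-(\sigma+r_1).
\]

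\emph{Step 2: read off $f_2$.} By Theorem~\ref{l3:thm:vanhove}, $I^{(3)}_\circleddash=I_{\rm part}-16\zeta(3)\pi$. From the displayed expansion of $I_{\rm part}/\pi$ in the proof of Lemma~\ref{l3:lem:frob-form}, the coefficient of $R^2$ is $-12(\sigma+r_1)$; the $\zeta(3)$ term does not touch $R^2$. Therefore
\[
 f_2=-12(\sigma+r_1)=12(\tau^\sharp-T),
\]
expressed in $Q$ via Convention~\ref{l3:conv:branches} and Lemma~\ref{l3:lem:T}, whose coordinate identification $T(-s^{-1})=R(s)$ matches the variables. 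Convergence of all germs follows from the radius bounds in Lemmas~\ref{l3:lem:T}, \ref{l3:lem:K2-welldefined} and~\ref{l3:lem:B-expansion}.

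\emph{Main obstacle.} The delicate point is the sign and branch bookkeeping in Step~1. One must check that $\log y$ on the Euclidean germ equals $-\ell$ exactly, with no $i\pi$ shift, that $\widetilde B/A$ matches $-r_1$ rather than $-r_1$ plus a constant, and that the overall signs of $\Phi_1$ and $u$ agree. The vanishing of the constant at infinity ($r_1(\infty)=0$ and $\widetilde B(0)=0$) settles the constant. As a sanity check I would compare the first coefficient: $f_2=-24Q+\cdots$ against $12\thetaQ K_2$ with $K_2=-2Q+\cdots$, which follows from $N_1(\gamma)=24$ and $\mathcal F_\gamma=-12K_2$. Alternatively, to avoid the Frobenius basis entirely, one could verify directly that $r_1=-\widetilde B/A$ by comparing the recurrence \eqref{l3:eq:frob-rec} differentiated at $\rho=-1$ with the deformed recurrence \eqref{l3:eq:S-recurrence} differentiated at $\varepsilon=0$. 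Both are first-order deformations of \eqref{l3:eq:domb-rec}, so their solutions differ by a multiple of $A_n$, which is fixed by $n=0$.
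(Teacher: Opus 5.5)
Your proof is correct, but it is organized differently from the paper's. The paper never locates $\pi\tau^\sharp$ in the Frobenius basis. It rewrites $\mathcal L_3(\pi\mathcal H)=\pi(L_3\partial_R^3+L_2\partial_R^2+L_1\partial_R)\mathcal H$ and takes the $R^2$-component of $\mathcal L_3I^{(3)}_\circleddash=-24$ to get \eqref{l3:eq:f2rec}, namely $L(Q,\theta_Q)f_2+12L_1=0$. It then observes that the $R^0$-component of $\mathcal L_3\bigl(\pi(R+\theta_QK_2)\bigr)=0$ says $12\theta_QK_2$ solves the same equation. It concludes from the one-dimensionality of log-free solutions and $f_2(0)=\theta_QK_2(0)=0$.

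You instead show $\pi\tau^\sharp=-\Phi_1$, i.e.\ $\widetilde B/A=-r_1$, and read $f_2=-12(\sigma+r_1)$ directly off the expansion of $I_{\rm part}/\pi$. Your signs and branch identifications check out; for instance $\sigma+r_1=-2/s+\cdots$ gives $f_2=-24Q+\cdots$. Step~1 is most cleanly closed by the same uniqueness mechanism the paper uses, rather than by the $\alpha_j$-decomposition. The function $u+\Phi_1=\pi(\widetilde B/A+r_1)$ is a log-free solution, hence a constant multiple of $\pi$, and the constant is $0$ because $\widetilde B(0)=r_1(\infty)=0$.

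What your route buys:
\begin{itemize}
\item It is shorter and yields the closed form $f_2=-12(\sigma+r_1)$.
\item It makes plain that only $\alpha_2$ enters, in line with Remark~\ref{l3:rem:independence}.
\item It reduces the bridge to the single identity $\widetilde B/A=-r_1$. Your alternative check in fact proves that identity at the level of coefficient recurrences: substituting $\rho=-1-\varepsilon$ turns \eqref{l3:eq:frob-rec} into exactly \eqref{l3:eq:S-recurrence}, and $S_0(\varepsilon)=1+O(\varepsilon^2)$, so $S_n'(0)=-c_n'(-1)$.
\end{itemize}

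What the paper's route buys: it uses only the shape of the Frobenius form, i.e.\ the constant leading coefficient $4$ and $f_2(0)=0$. It needs neither the explicit $R$-polynomials of $I_{\rm part}$ nor the basis $\Phi_j$, and it is phrased intrinsically in the mirror coordinate $R$.
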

\begin{proof}
\emph{Convergence.}  Neither $\tau^\sharp=B/A$ nor $T$ is a holomorphic
germ at $Q=0$, since both contain $\log y$; their difference is.  Indeed
$\tau^\sharp-T=\widetilde B/A-S$ by Lemmas~\ref{l3:lem:T} and~\ref{l3:lem:B-expansion},
and $A$, $\widetilde B$, $S$ are holomorphic germs in $y$ of radius $\ge1/16$
with $A(0)=1$; composing with the holomorphic inverse germ $y(Q)$ shows that
every term of~\eqref{l3:eq:bridge} is a convergent germ at $Q=0$.

\emph{The scalar equation in the coordinate $R$.}  On the germ, $Q$ and $s$
determine each other (Convention~\ref{l3:conv:branches}), and $\partial_R$ acts on
functions of $Q$ as $\thetaQ$, since $Q=e^R$.  Since $R'(s)=-\pi(s)$, the chain rule reads
$\partial_s=-\pi\,\partial_R$ on functions of $R$.  Writing
$\mathcal L_3=\mathsf A_3\partial_s^3+\mathsf A_2\partial_s^2+\mathsf A_1\partial_s
+\mathsf A_0$ (the coefficients read off~\eqref{l3:eq:L3}; the letter $\mathsf A$
avoids the $a_n$ of Lemma~\ref{l3:lem:Bop}) and $\pi',\pi''$
for $s$-derivatives, one finds for any $\mathcal H$
\[
\begin{aligned}
 \mathcal L_3(\pi\mathcal H)&=\pi\bigl(L_3\partial_R^3+L_2\partial_R^2
 +L_1\partial_R\bigr)\mathcal H,\\
 L_3&=-\mathsf A_3\pi^3,\qquad L_2=6\mathsf A_3\pi\pi'+\mathsf A_2\pi^2,\\
 L_1&=-\mathsf A_3\Bigl(\frac{3\pi'^2}{\pi}+4\pi''\Bigr)-3\mathsf A_2\pi'-\mathsf A_1\pi.
\end{aligned}
\]
with no zeroth-order term because $\mathcal L_3\pi=0$.  The $L_i$ are analytic
on a punctured disc at $Q=0$ and meromorphic at $Q=0$; explicitly
$L_3=-s^2(s-4)(s-16)\pi^3\sim -s\sim Q^{-1}$, so they are \emph{not} power
series in $Q$.  All the statements below are equalities in the
field of meromorphic germs, and the operator
$L(Q,\thetaQ)=L_3\thetaQ^3+L_2\thetaQ^2+L_1\thetaQ$ is used only through its
kernel.

\emph{The $R^2$-component.}  Substitute~\eqref{l3:eq:frob-form} into
$\mathcal L_3I^{(3)}_\circleddash=-24$ (Lemma~\ref{l3:lem:ode}) and divide by $\pi$; the source
$-24/\pi$ is a function of $Q$ alone, with no powers of $R$.  Since
$\partial_R(gR^k)=(\thetaQ g)R^k+kgR^{k-1}$, the coefficient of $R^2$ is
\begin{equation}\label{l3:eq:f2rec}
 L(Q,\thetaQ)f_2+12L_1=0,
\end{equation}
the term $12L_1$ coming from $L_1\partial_R(4R^3)=12L_1R^2$ and the constancy of
the leading coefficient $4$ (Lemma~\ref{l3:lem:frob-form}).  Comparison of
coefficients of powers of $R$ is legitimate by the transcendence of $R$ over the
field of convergent Laurent series in $Q$, proved in Lemma~\ref{l3:lem:frob-form}.

\emph{The A-model solution of the same equation.}  By Lemmas~\ref{l3:lem:qserre}
and~\ref{l3:lem:Bop}, $\pi\tau^\sharp=\pi(R+\thetaQ K_2)$ solves
$\mathcal L_3u=0$.  Dividing by $\pi$ and collecting the coefficient of $R^0$
gives $L_1+L(Q,\thetaQ)\thetaQ K_2=0$; multiplying by $12$ shows that
$12\thetaQ K_2$ also satisfies~\eqref{l3:eq:f2rec}.

\emph{Uniqueness.}  Put $\Delta=f_2-12\thetaQ K_2$.  Then
$L(Q,\thetaQ)\Delta=0$, i.e.\ $\pi\Delta$ solves $\mathcal L_3u=0$; and
$\pi\Delta$ is log-free, being $\pi$ times a convergent power series in $Q$.
The coefficient recurrence~\eqref{l3:eq:frob-rec} at $\rho=-1$ has nonzero
coefficient $(-n)^3$ for every subsequent unknown, $n\ge1$; therefore the space
of log-free solutions of this form is one-dimensional and spanned by $\pi$; hence
$\Delta$ is constant.  Both $f_2$ and $\thetaQ K_2$ vanish at $Q=0$ --- for
$f_2$ this is Proposition~\ref{l3:prop:leading}, for $K_2$ it is
Lemma~\ref{l3:lem:K2-welldefined} --- so $\Delta=0$.  The last equality in
\eqref{l3:eq:bridge} is Proposition~\ref{l3:prop:potential}.
\end{proof}

\begin{remark}
The identity also explains the absence of a quadratic term after passing to the
logarithmic period-ratio coordinate $\tau^\sharp$. Since $\tau^\sharp=R+\thetaQ K_2$ and
$f_2=12\thetaQ K_2$, the $R^2$ term of $4R^3+f_2R^2+\cdots$ cancels when the
expansion is rewritten in $\tau^\sharp$.  This matches the distinction between
$R^{(L)}$ and the K\"ahler parameter noted in~\cite{Vanhove2026}.
\end{remark}

\subsection{The three-loop theorem and all three numerical descendant potentials}
For $a=0,1,2$ define
\[
 \mathcal D_a(Q)=\sum_{\beta>0}Q^{|\beta|}
 \int_{[\overline{\mathcal M}_{0,1}(X_3,\beta)]^{\rm vir}}
 \psi^a\ev_{X_3}^*\varpi^*\left(\frac{H^{2-a}}{(2-a)!}\right).
\]
Thus $\mathcal D_0=\mathcal F_\gamma$. These insertions have total complex
codimension two, equal to the local one-marked virtual dimension.
To define the two further coefficients without losing information, retain
$I^{0,0}(y,y,y,y,z)$ from~\eqref{l3:eq:I-mu} in
$\mathcal R=\Q[p_1,\ldots,p_4]/(p_1^2,\ldots,p_4^2)$, before applying $j^*$,
and set
\begin{equation}\label{l3:eq:full-ambient-K}
 e^{-HT(y)/z}I^{0,0}(y,y,y,y,z)
 =1+\frac{H^2}{z^2}K_2(Q)+\frac{H^3}{z^3}K_3(Q)
       +\frac{H^4}{z^4}K_4(Q),\qquad Q=e^{T(y)}.
\end{equation}
Its $K_2$ agrees with the definition above. On $F_3$ one has $H^4=0$; therefore
$K_4$ must be read from the ambient series, divided as in the companion
paper~\cite{PaperI} (see~\Ithmref{thm:extraction}), and only then used for
the numerical descendant.

\begin{theorem}[Complete descendant specialization]\label{l3:thm:full-descendant-assembly}
As convergent germs at $Q=0$,
\begin{equation}\label{l3:eq:all-descendants}
 \mathcal D_0=-12K_2,\qquad \mathcal D_1=-24K_3,\qquad
 \mathcal D_2=-24K_4,
\end{equation}
and the coefficients of the three-loop expansion satisfy
\begin{equation}\label{l3:eq:all-coefficients}
 f_2=-\thetaQ\mathcal D_0,\qquad
 f_1=-2\mathcal D_0-\thetaQ\mathcal D_1,\qquad
 f_0=-\mathcal D_1-\thetaQ\mathcal D_2.
\end{equation}
Thus if $N_{a,m}=[Q^m]\mathcal D_a$, then
\[
 d_2(m)=-mN_{0,m},\quad
 d_1(m)=-2N_{0,m}-mN_{1,m},\quad
 d_0(m)=-N_{1,m}-mN_{2,m}.
\]
The full physical expression is
\[
 I^{(3)}_\circleddash(s)=\pi(s)\bigl[4R^3-(\thetaQ\mathcal D_0)R^2
 -(2\mathcal D_0+\thetaQ\mathcal D_1)R
 -\mathcal D_1-\thetaQ\mathcal D_2-16\zeta(3)\bigr].
\]
\end{theorem}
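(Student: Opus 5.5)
The plan has two halves: first identify the three descendant series with the $K_k$ by reading off higher $z$-coefficients of the local $J$-function, then transfer this to the Frobenius coefficients by the same uniqueness argument as in Theorem~\ref{l3:thm:coefficient-bridge}.

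\emph{Step 1: the descendant series.} Expanding $1/(z(z-\psi))=\sum_a\psi^az^{-2-a}$ in~\eqref{l3:eq:geometric-J} and arguing exactly as in the proof of Proposition~\ref{l3:prop:potential} gives
\[
 [z^{-2-a}]\bigl(e^{-\tau/z}\mathcal J^\lambda_{X_3}-\mathbf 1\bigr)=(\lambda-H)\,C^{\lambda,(a)}_+(Q).
\]
Here $C^{\lambda,(a)}_\beta=(\ev_{F_3})_*\bigl(\psi^a e_\lambda(\mathcal V)\cap[\M_{0,1}(F_3,\beta)]^{\vir}\bigr)$ is polynomial in $\lambda$. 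Specializing at $\lambda=0$ and using Proposition~\ref{l3:prop:mirror-map} (at $\lambda=0$ the factor $e^{-\lambda G/z}$ disappears) together with~\eqref{l3:eq:full-ambient-K} gives
\[
 -H\,C^{0,(a)}_+=H^{2+a}K_{2+a}.
\]
\begin{itemize}
\item For $a=0$ this is Proposition~\ref{l3:prop:potential}.
\item For $a=1$ no division is needed. Since $\mathcal D_1=\int_{F_3}H\cup C^{0,(1)}_+$, integrating the identity over $F_3$ gives $\mathcal D_1=-K_3\int_{F_3}H^3=-24K_3$.
\item For $a=2$ the identity degenerates to $0=0$, because $H^4=0$ on $F_3$. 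Here I would use the ambient series in $\mathcal R$ before applying $j^*$, together with the extraction theorem~\Ithmref{thm:extraction} of~\cite{PaperI}.
\end{itemize}
The $a=2$ extraction should produce $\int_{F_3}C^{0,(2)}_+=-\int_{Y_3}H^4K_4/\int\ldots$, which normalizes to $-24K_4$. Concretely, I would either retain the $\lambda$-linear part $c_\beta$ of the equivariant cycle, or compute on $Y_3$ with the Euler twist, where $\int_{Y_3}H^4=24$ replaces the missing $\int_{F_3}H^4$. I expect this to be the main obstacle, since it is the only place where information invisible on $F_3$ has to be recovered.

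\emph{Step 2: the relations between coefficients.} The relation for $f_2$ is Theorem~\ref{l3:thm:coefficient-bridge}. For $f_1$ and $f_0$ I would extend that bridge to all $z$-orders.
\begin{enumerate}
\item On the ambient ring, $z\theta_y I^{0,0}=H\,I^\sharp$ holds (the proof of Lemma~\ref{l3:lem:qserre} works verbatim before $j^*$). So every $H^k/z^{k-1}$ component of $I^\sharp$ is determined by $T,K_2,K_3,K_4$.
\item Using $\theta_y=A\theta_Q$, these components become $A$ times explicit polynomials in $R$ with coefficients built from $K_k$ and $\theta_QK_k$.
\item These components should give solutions $\pi\Phi^\sharp_j$ of $\mathcal L_3u=0$, together with a particular solution of $\mathcal L_3u=-24$. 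I would establish this by extending the dual-number argument of Subsection~\ref{l3:app:logarithmic-recurrence} to $\mathbb C[\varepsilon]/(\varepsilon^4)$. There the correction term $\Gamma(\varepsilon/4)^{-2}$ no longer vanishes; it contributes exactly the source.
\end{enumerate}

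\emph{Step 3: matching.} Substituting~\eqref{l3:eq:frob-form} into Lemma~\ref{l3:lem:ode} and taking the $R^1$ and $R^0$ components gives inhomogeneous equations
\[
 L(Q,\theta_Q)f_1=(\ldots f_2\ldots),\qquad L(Q,\theta_Q)f_0=(\ldots f_1,f_2\ldots).
\]
The A-model candidates $-2\mathcal D_0-\theta_Q\mathcal D_1$ and $-\mathcal D_1-\theta_Q\mathcal D_2$ satisfy the same equations, by Steps 1 and 2. As in the uniqueness part of Theorem~\ref{l3:thm:coefficient-bridge}, each difference times $\pi$ is a log-free solution of $\mathcal L_3u=0$, so each difference is constant. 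The constants vanish because $f_1(0)=f_0(0)=0$ (Proposition~\ref{l3:prop:leading} and Theorem~\ref{l3:thm:vanhove}) and every $K_k$ lies in $Q\,\Q\{Q\}$.

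The coefficient formulas for $d_r(m)$ then follow by reading off $Q^m$. The full physical expression follows by inserting~\eqref{l3:eq:all-coefficients} into Theorem~\ref{l3:thm:vanhove}, with $\pi=\Pi_3$.
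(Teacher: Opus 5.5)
Your Steps~1 and~3 are sound, and your route differs from the paper's. The paper's proof just quotes the multipliers $-4!/(2-a)!$ from \Ithmref{thm:extraction} and the assembly \Icorref{cor:nfour}. You instead replace the assembly by the Mellin constants of Proposition~\ref{l3:prop:leading} plus a Picard--Fuchs identity, extending what the paper does for $f_2$ alone. (For $a=2$, the $\lambda$-linear part alone would not isolate $\mathcal D_2$. The $\lambda^1$ coefficient of $(\lambda-H)C^{\lambda,(2)}_+$ is $C^{0,(2)}_+$ minus $H$ times the unknown $\lambda^1$ coefficient of $C^{\lambda,(2)}_+$. Falling back on \Ithmref{thm:extraction} is, however, exactly what the paper does.)

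The gap is in Step~2(3), which carries all the weight. You need the ambient components $W_j=[H^j/z^j]I^\sharp$ in $\mathcal R$ to satisfy $\mathscr P_yW_j=0$ for $j\le2$ and $\mathscr P_yW_3=1$ exactly. The reason is that $24W_3=24\thetay[H^4/z^4]I^{0,0}=A\bigl[4R^3+12(\thetaQ K_2)R^2+24(K_2+\thetaQ K_3)R+24(K_3+\thetaQ K_4)\bigr]$, and the source $-24$ must come out with precisely this normalization. The one-parameter extension to $\mathbb C[\varepsilon]/(\varepsilon^4)$ does not give this, for two reasons.
\begin{enumerate}
\item The substitution $p_i/z\mapsto\varepsilon/4$ is compatible with $p_i^2=0$ only modulo $\varepsilon^2$. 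The $\varepsilon^2$-coefficient of $S_n(\varepsilon)$ contains diagonal second derivatives in a single $p_i$ (trigamma terms $\psi'(d_i+1)$), which vanish in $\mathcal R$. So the higher $\varepsilon$-jets are not the $W_j$ at all.
\item The term $\Gamma(\varepsilon/4)^{-2}y^{\varepsilon/4}$ does not ``contribute the source''. It makes $\theta_y^2\widetilde F=y\widetilde F$ inhomogeneous, so $\widetilde F^4$ is no longer annihilated by~\eqref{l3:eq:symmetric-fourth}. The resulting defect is a $y$-dependent expression in $\widetilde F$ and $\thetay\widetilde F$, not a constant.
\end{enumerate}

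The repair is to use four independent square-zero parameters, that is, to work in $\mathcal R$ itself with $\epsilon_i=p_i/z$ and $E=\sum_i\epsilon_i=H/z$.
\begin{enumerate}
\item Put $F_i=\sum_ky^k/\Gamma(k+1+\epsilon_i)^2$. Then $(\thetay+\epsilon_i)^2F_i=yF_i$ holds exactly, because $\Gamma(\epsilon_i)^{-2}$ is divisible by $\epsilon_i^2=0$.
\item Polarizing the differential identity behind~\eqref{l3:eq:symmetric-fourth} shows that $y^E\prod_iF_i$ is annihilated by the same fifth-order operator.
\item Conjugation by $\Gamma(\thetay+1+E)^2$ goes through, because $E^5=0$ and $(n+E)^2$ is invertible for $n\ge1$.
\item Set $S_n=\sum_{|\mathbf d|=n}\Gamma(n+1+E)^2/\prod_i\Gamma(d_i+1+\epsilon_i)^2$. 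Then $I^\sharp=S_0^{-1}y^E\sum_n(-1)^nS_ny^n$.
\item The third-order recurrence fails only at $n=0$, where it leaves the indicial defect $E^3S_0$. The source comes from this defect, not from any $\Gamma$-correction. Hence $\mathscr P_yI^\sharp=E^3y^E=E^3+E^4\log y$, which is nonzero precisely because $H^3\ne0$ in $\mathcal R$.
\item Reading off components gives $\mathscr P_yW_j=0$ for $j\le2$ and $\mathscr P_yW_3=1$. By Lemma~\ref{l3:lem:Bop}, $-24yW_3=\pi\cdot24W_3/A$ therefore solves $\mathcal L_3u=-24$.
\end{enumerate}
This is the $L=3$ case of the source $H^Ly^H$ of \Ithmref{thm:PF}, which you could alternatively quote. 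With it in place, your component-by-component uniqueness argument in Step~3 goes through as written.
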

\begin{proof}
The multipliers in~\eqref{l3:eq:all-descendants} are $-4!/(2-a)!$ from
\Ithmref{thm:extraction}, and the assembly with its coefficients is
\Icorref{cor:nfour}. Since $\Pi_3=-yA=\pi$ no further sign conversion is
required. This proof uses the general analytic assembly; the alternative in
Section~\ref{ext:reconstruction} avoids it. In particular, the identity for
$f_2$ alone, which is Theorem~\ref{l3:thm:main}, does not use it.
\end{proof}

The theorem also identifies $d_1$ and $d_0$ with descendants. No assumption on middle cohomology is needed, since in this dimension all even one-point evaluation classes are ambient (Lemma~\ref{l3:lem:state-space}).

\begin{theorem}\label{l3:thm:main}
On the Euclidean germ $s=-t$ for $t$ sufficiently large, and under simultaneous
analytic continuation of both sides along any path in $|s|>M$ with $M$ large,
\begin{multline}\label{l3:eq:main}
 I^{(3)}_\circleddash(s)=\pi(s)\left[
 4R(s)^3+\sum_{m\ge1}Q^m
 \left(d_0(m)+d_1(m)R(s)-mN_m(\gamma)R(s)^2\right)\right]
 \cr-16\zeta(3)\pi(s),
\end{multline}
where $Q=e^{R(s)}$, $d_0(m),d_1(m)\in\Q$, and $N_m(\gamma)$ is the genus-zero
invariant~\eqref{l3:eq:potential-intro}.  In particular
\begin{equation}\label{l3:eq:coeff-main}
 d_2(m)=-m\,N_m(\gamma)\qquad(m\ge1).
\end{equation}
\end{theorem}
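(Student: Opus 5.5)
The plan is to assemble Theorem~\ref{l3:thm:main} from three results already in hand: Theorem~\ref{l3:thm:vanhove} (normalized Frobenius form), Theorem~\ref{l3:thm:coefficient-bridge} (identification of $f_2$), and Proposition~\ref{l3:prop:potential} (the potential in terms of $K_2$). The full descendant assembly of Theorem~\ref{l3:thm:full-descendant-assembly} is deliberately not used, so that the statement stays independent of the companion paper's signed analytic assembly.

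\emph{Step 1: the general shape.} Theorem~\ref{l3:thm:vanhove} gives, on the Euclidean germ,
\[
I^{(3)}_\circleddash=\pi\bigl(4R^3+f_2R^2+f_1R+f_0\bigr)-16\zeta(3)\pi,
\qquad f_r=\sum_{m\ge1}d_r(m)Q^m\in Q\,\Q\{Q\}.
\]
This already yields the shape of \eqref{l3:eq:main}, with $d_0(m),d_1(m)\in\Q$. Rationality follows from Lemma~\ref{l3:lem:frob-form}, because $\alpha_1=\alpha_2=0$ by Proposition~\ref{l3:prop:leading}. The same theorem also supplies the persistence of the identity under simultaneous analytic continuation of $\Log s$ and $R$ along paths in $|s|>M$.

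\emph{Step 2: the $R^2$ coefficient.} Theorem~\ref{l3:thm:coefficient-bridge} gives $f_2=-\thetaQ\mathcal F_\gamma$ as convergent germs. Writing $\mathcal F_\gamma=\sum_m N_m(\gamma)Q^m$ from \eqref{l3:eq:potential-intro}, we have $\thetaQ\mathcal F_\gamma=\sum_m mN_m(\gamma)Q^m$. Comparing coefficients of $Q^m$ then gives \eqref{l3:eq:coeff-main}. Substituting back into Step 1 and regrouping by powers of $Q$ gives \eqref{l3:eq:main}.

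\emph{The expected obstacle.} Regrouping is legitimate only as an identity of convergent double series, and uniqueness of the coefficients needs justification. Convergence holds because $\mathcal F_\gamma=-12K_2$ has positive radius by Lemma~\ref{l3:lem:K2-welldefined}. Uniqueness of the $f_r$ rests on the transcendence of $R$ over convergent Laurent series in $Q$, established in the proof of Lemma~\ref{l3:lem:frob-form}. Within the coefficient bridge itself, the delicate input is the vanishing $f_2(0)=0$, which fixes the constant left over by the one-dimensionality of log-free solutions. That input comes from the Mellin computation of Proposition~\ref{l3:prop:leading}, so I would check that this is the only point where the boundary analysis enters. The identity $d_2(m)=-mN_m(\gamma)$ should otherwise be insensitive to the value $-16\zeta(3)$, consistent with Remark~\ref{l3:rem:vanhove-comparison}(c).
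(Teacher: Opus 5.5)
Your proposal is correct and follows the paper's proof, which combines Theorem~\ref{l3:thm:vanhove} with the coefficient bridge Theorem~\ref{l3:thm:coefficient-bridge} and compares coefficients of $Q^m$ in $f_2=-\thetaQ\mathcal F_\gamma$. The only addition in the paper is an explicit citation of Propositions~\ref{l3:prop:zero-section} and~\ref{l3:prop:virtual-reduction} for the properness of the moduli stacks and the virtual definition of $N_m(\gamma)$, which your write-up takes for granted.
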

\begin{proof}
Combine Theorem~\ref{l3:thm:vanhove} with Theorem~\ref{l3:thm:coefficient-bridge} and
compare coefficients of $Q^m$ in $f_2=-\thetaQ\mathcal F_\gamma$.  Properness of
the moduli stacks and the virtual definition of the $N_m(\gamma)$ are
Proposition~\ref{l3:prop:zero-section} and Proposition~\ref{l3:prop:virtual-reduction}.
\end{proof}

\begin{remark}[Independence of the boundary constants]\label{l3:rem:independence}
Only the explicit term $-16\zeta(3)\pi(s)$ in~\eqref{l3:eq:main} uses
Proposition~\ref{l3:prop:leading} in full.  Identity~\eqref{l3:eq:coeff-main} uses it
only through $\hat f_2(0)=0$; and even that could be replaced by any
determination of the single constant $\alpha_2$, since
Theorem~\ref{l3:thm:coefficient-bridge} shows unconditionally that
$\hat f_2-\hat f_2(0)=12\thetaQ K_2$.
\end{remark}

Expanding~\eqref{l3:eq:I-function} through cohomological degree two and converting
from $y$ to $Q$ gives
\[
 K_2(Q)=-2Q+\frac{11}{2}Q^2-\frac{200}{9}Q^3+\frac{1003}{8}Q^4
 -\frac{20202}{25}Q^5+O(Q^6),
\]
whence $\mathcal F_\gamma=-12K_2$ and $f_2=-\thetaQ\mathcal F_\gamma$ give, in particular,
$d_2(1),\ldots,d_2(4)=-24,132,-800,6018$, in agreement with the independent
Feynman-side computation of Remark~\ref{l3:rem:vanhove-comparison}(b).

The following elementary check confirms the normalization of the whole chain
$K_2\rightsquigarrow\mathcal F_\gamma\rightsquigarrow f_2$ at $m=1$ without
using mirror symmetry at all.

\begin{proposition}[Classical verification at $m=1$]\label{l3:prop:m1-check}
$N_1(\gamma)=24$.
\end{proposition}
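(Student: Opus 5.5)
The plan is to compute $N_1(\gamma)$ directly from the geometry of lines on $F_3$, without using the $I$-function. The steps are: reduce to lines on $F_3$, identify the one-pointed moduli space with a surface swept by lines, and integrate $\gamma$ over that surface.

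\emph{Reduction to $F_3$.} For $m=H\cdot\beta=1$ the bundle $\mathcal V$ has rank $m-1=0$. Proposition~\ref{l3:prop:zero-section} and Proposition~\ref{l3:prop:virtual-reduction} then give $[\overline{\mathcal M}_{0,1}(X_3,\beta)]^{\rm vir}=[\overline{\mathcal M}_{0,1}(F_3,\beta)]^{\rm vir}$, so $N_1(\gamma)=\sum_{|\beta|=1}\int_{F_3}\gamma\cup C^0_\beta$. Since $H_2(F_3,\Z)\cong\Z^4$ with $p_j\cdot\beta\ge0$ on effective classes, the classes with $|\beta|=1$ are the four classes $e_i$ with $p_j\cdot e_i=\delta_{ij}$. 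A curve of class $e_i$ is a fiber $\{q\}\times\PP^1$ of the $i$th factor, with $q\in(\PP^1)^3$, the product of the other three factors. Write the defining form as $f=a_i(q)x_{i,0}+b_i(q)x_{i,1}$ with $a_i,b_i$ of multidegree $(1,1,1)$ on $(\PP^1)^3$. The fiber lies in $F_3$ exactly when $a_i(q)=b_i(q)=0$. For generic $F_3$ this locus is a smooth curve $C_i\subset(\PP^1)^3$. The surface swept by these lines is $S_i=C_i\times\PP^1\subset F_3$.

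\emph{Unobstructedness.} For a line $\ell$ of class $e_i$, $\deg N_{\ell/F_3}=-K_{F_3}\cdot\ell-2=-1$. Since $\ell$ moves in the one-parameter family $C_i$, I expect $N_{\ell/F_3}\cong\mathcal O\oplus\mathcal O(-1)$, hence $H^1(N)=0$. Then $\overline{\mathcal M}_{0,0}(F_3,e_i)\cong C_i$ is smooth of the expected dimension $-K\cdot\beta+\dim F_3-3=1$, the virtual class is the fundamental class, and $\overline{\mathcal M}_{0,1}(F_3,e_i)$ is the universal line, mapping isomorphically onto $S_i$ under $\ev$. Thus $C^0_{e_i}=[S_i]$. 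This is the step I expect to be the main obstacle. It needs a genericity argument: for smooth $F_3$ the two $(1,1,1)$ forms should cut $(\PP^1)^3$ transversally, and a degenerate (for instance, reducible) $C_i$ should only change the curve, not the class computation. If transversality fails for some special smooth $F_3$, I would use deformation invariance of the virtual class and compute on a generic member.

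\emph{Integration.} Pushing forward to $Y_3$, $[S_i]=(H-p_i)^2$ in $H^4(Y_3)$, so $\int_{S_i}\gamma=\int_{Y_3}(H-p_i)^2\,H^2/2$. Put $u=H-p_i$. Then $u^4=0$ on the three remaining factors, $p_i^2=0$, and $\int u^3p_i=3!=6$. Hence
\[
\int_{Y_3}\tfrac12 u^2(u+p_i)^2=\tfrac12\cdot 2\int_{Y_3}u^3p_i=6 .
\]
Summing over $i=1,\dots,4$ gives $N_1(\gamma)=24$. As a consistency check, $K_2=-2Q+O(Q^2)$ together with $C^0_+=-HK_2$ predicts $C^0_+=2HQ+O(Q^2)$, and $\int_{F_3}2H\cdot H^2/2=24$ agrees.
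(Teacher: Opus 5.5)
Your route is the paper's: the four classes $e_i$, the lines $\{q\}\times\PP^1$ parametrized by the curve $C_i=\{a_i=b_i=0\}$ of class $h^2$ in $(\PP^1)^3$, the identification of $\overline{\mathcal M}_{0,1}(F_3,e_i)$ with the universal line, and the value $6$ per class. Your computation $\int_{Y_3}(H-p_i)^2H^2/2=6$ is the paper's $\int_{(\PP^1)^3}h\cdot h^2=6$ written on $Y_3$.

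\textbf{The gap.} It is exactly in the step you flagged, and the reason you give does not work. That $\ell$ moves in a one-parameter family only gives $h^0(N_{\ell/F_3})\ge1$. This is compatible with $N_{\ell/F_3}\cong\mathcal O(1)\oplus\mathcal O(-2)$, where $H^1\neq0$ and the virtual class need not be the fundamental class. The missing idea is that $\ell$ is a fiber of the projection $Y_3\to(\PP^1)^3$, so $N_{\ell/Y_3}\cong\mathcal O_\ell^{\oplus3}$. The normal sequence of $\ell\subset F_3\subset Y_3$ then embeds $N_{\ell/F_3}$ in a trivial bundle, with quotient $\mathcal O_\ell(1)$. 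Hence both summands of $N_{\ell/F_3}$ have degree $\le0$, and total degree $-1$ forces $\mathcal O\oplus\mathcal O(-1)$. This holds for every line on every smooth $F_3$, so $H^1(T_{F_3}|_\ell)=0$ everywhere.

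\textbf{Smoothness of $C_i$.} No genericity or deformation-invariance argument is needed here either. Suppose $da_i$ and $db_i$ were dependent at $q\in C_i$, say $\lambda\,da_i+\mu\,db_i=0$. At the point $(q,[\lambda:\mu])$ of $\ell_q$ the fiber derivatives of $f$ vanish because $a_i(q)=b_i(q)=0$, and the base derivatives vanish by the dependence. So $F_3$ would be singular there. Thus $C_i$ is a smooth complete intersection for every smooth $F_3$. Alternatively, scheme-theoretic smoothness of $C_i$, which is the Fano scheme of these lines, gives $h^0(N_{\ell/F_3})=1$, and this also excludes $\mathcal O(1)\oplus\mathcal O(-2)$.

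\textbf{Identifying the moduli space.} You should also rule out boundary strata before calling the moduli space the universal line. A contracted subtree with $v\ge1$ components, one attaching node and at most one marking has at most $2v$ special points, but stability requires $3v$. Moreover, the ambient one-marked moduli $\PP^1\times(\PP^1)^3$ cuts out $\PP^1\times C_i$ scheme-theoretically. With these points inserted, your argument is complete and coincides with the paper's proof.
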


\begin{proof}
A class $\beta$ with $|\beta|=1$ satisfies $p_i\cdot\beta=\delta_{ij}$ for
exactly one $j$; call it $\beta_j$.  A stable map of class $\beta_j$ has
irreducible image, so its image is a curve meeting the $j$-th ruling once and
contracted by the other three projections, that is
$L_b=\mathbb P^1\times\{b\}$ with $b\in(\mathbb P^1)^3$ (coordinates in the
factors other than $j$).  Writing the defining multilinear equation of $F_3$ as
$A(b)x_j^{(0)}+B(b)x_j^{(1)}$, one has $L_b\subset F_3$ if and only if
$A(b)=B(b)=0$.  Thus the lines of class $\beta_j$ are parametrized by the
complete intersection $C_j=\{A=B=0\}$, of class $h^2$ in $(\mathbb P^1)^3$,
where $h$ is the sum of the three remaining point classes.

At a point of $C_j$, dependence of $dA$ and $dB$ would give
$[x_j^{(0)}:x_j^{(1)}]$ for which the base differential of
$Ax_j^{(0)}+Bx_j^{(1)}$ vanishes. Its fiber differential already vanishes
there; this would be a singular point of $F_3$. Hence $dA,dB$ are independent,
so $C_j$ is a smooth regular codimension-two zero locus.
The ambient one-marked degree-$\beta_j$ moduli is
$\mathbb P^1\times(\mathbb P^1)^3$; the section cutting out the
hypersurface moduli cuts out $\mathbb P^1\times C_j$ over arbitrary bases.

Since $L=L_b\subset F_3\subset Y_3$ and $N_{L/Y_3}=\mathcal O_L^{\oplus3}$, the normal
sequence of $L\subset F_3\subset Y_3$ reads
$0\to N_{L/F_3}\to\mathcal O_L^{\oplus3}\to\mathcal O_L(H\cdot\beta_j)=
\mathcal O_L(1)\to0$.  A surjection $\mathcal O_L^{\oplus3}\twoheadrightarrow
\mathcal O_L(1)$ is given by three sections of $\mathcal O_L(1)$ without common
zero, hence spanning $H^0(\mathcal O_L(1))$; after a change of basis the map is
$(f_1,f_2,f_3)\mapsto f_1s_0+f_2s_1$ for a basis $s_0,s_1$, so
\[
 N_{L/F_3}\cong\mathcal O_L\oplus\mathcal O_L(-1).
\]
Therefore $H^1(L,N_{L/F_3})=0$, and from $0\to T_L\to T_{F_3}|_L\to N_{L/F_3}\to0$ with
$H^1(\mathcal O_L(2))=0$ we get $H^1(L,T_{F_3}|_L)=0$.  Consequently
$\overline{\mathcal M}_{0,1}(F_3,\beta_j)$ is unobstructed of dimension
$1+1=2=\vdim$, its virtual class is its fundamental class, and it has no boundary, because a contracted subtree with $v\geq1$ components,
one attachment, and at most one marking has at most $2v$ special points
in total, whereas stability requires at least $3v$. Thus it is the universal line
$\mathbb P^1\times C_j$, with $\ev(t,b)=(t,b)$.  Since $\operatorname{rk}
\mathcal V=|\beta_j|-1=0$, Proposition~\ref{l3:prop:virtual-reduction} identifies
$N_{\beta_j}(\gamma)$ with $\int_{\mathbb P^1\times C_j}\ev^*\gamma$.

Split $\gamma=\sum_{i<k}p_ip_k$ into the part containing $p_j$ and the rest.  The
second part is pulled back from the one-dimensional $C_j$ in degree $4$, hence
vanishes.  The first part is $p_j\cdot h$, and $\ev^*p_j$ is the point class of
the $\mathbb P^1$ factor, so
\[
 \int_{\mathbb P^1\times C_j}\ev^*\gamma=\deg\bigl(h|_{C_j}\bigr)
 =\int_{(\mathbb P^1)^3}h\cdot h^2=\int_{(\mathbb P^1)^3}h^3=3!=6 .
\]
Summing over the four classes gives $N_1(\gamma)=4\cdot6=24$.
\end{proof}

This matches the coefficient of $Q$ in $\mathcal F_\gamma=-12K_2$ read off
above, and hence $d_2(1)=-1\cdot24=-24$, which is the coefficient produced by
the Frobenius recurrence in Remark~\ref{l3:rem:vanhove-comparison}(b).  In
particular the sign conventions of Convention~\ref{l3:conv:branches}, the
normalization $\gamma=H^2/2$, and the factor $-12$ relating $\mathcal F_\gamma$
to $K_2$ are all consistent.
\subsection{The Gamma-class boundary number}

We use the convention $\widehat\Gamma(E)=\prod_j\Gamma(1+\delta_j)$ for the
Chern roots $\delta_j$ of $E$, with no factors of $2\pi i$.  In $K^0(F_3)$,
\[
 [T_{X_3}|_{F_3}]=[T_{F_3}]+[K_{F_3}]=[T_{Y_3}|_{F_3}]-[\mathcal O_{F_3}(H)]+[\mathcal O_{F_3}(-H)],
\]
using $T_{Y_3}|_{F_3}=T_{F_3}+N_{F_3/Y_3}$ and $N_{F_3/Y_3}=\mathcal O_{F_3}(H)$, $K_{F_3}=\mathcal O_{F_3}(-H)$.
Since the Chern roots of $T_{Y_3}$ are $2p_1,\ldots,2p_4$,
\begin{equation}\label{l3:eq:gamma-class}
 \widehat\Gamma_{X_3}|_{F_3}=
 \frac{\Gamma(1-H)}{\Gamma(1+H)}\prod_{i=1}^4\Gamma(1+2p_i).
\end{equation}

\begin{proposition}[Numerical Gamma identity]\label{l3:prop:gamma}
$\log(\widehat\Gamma_{X_3}|_{F_3})=\tfrac23\zeta(3)H^3$ exactly, and
\[
 \int_{F_3}\bigl[\widehat\Gamma_{X_3}|_{F_3}\bigr]_6=\frac{2\zeta(3)}3\int_{F_3}H^3=16\zeta(3).
\]
\end{proposition}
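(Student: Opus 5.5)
The plan is to take the logarithm of \eqref{l3:eq:gamma-class} termwise and reduce the resulting power series in $H$ and the $p_i$ using the relations of $H^*(F_3)$. From $\log\Gamma(1+x)=-\gamma_Ex+\sum_{k\ge2}(-1)^k\zeta(k)x^k/k$ we get
\[
 \log\frac{\Gamma(1-H)}{\Gamma(1+H)}=2\gamma_EH+\sum_{r\ge1}\frac{2\zeta(2r+1)}{2r+1}H^{2r+1},
\]
since the even-$k$ terms cancel and the odd ones double.

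For the product over the factors, $\log\Gamma(1+2p_i)=-2\gamma_Ep_i$ exactly, because $p_i^2=0$ already in $H^*(Y_3)$ and hence after applying $j^*$. Summing over $i$ gives $-2\gamma_EH$, which cancels the Euler constant term above. Finally $H^5=0$ on the threefold $F_3$; in fact $H^4=0$ for degree reasons. So only $r=1$ survives, and $\log(\widehat\Gamma_{X_3}|_{F_3})=\tfrac23\zeta(3)H^3$ exactly.

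Exponentiating gives $\exp(\tfrac23\zeta(3)H^3)=1+\tfrac23\zeta(3)H^3$, since $H^6=0$. Its component of real degree $6$ (top degree on $F_3$) is therefore $\tfrac23\zeta(3)H^3$. Integrating with $\int_{F_3}H^3=24$ from \Ipropref{prop:geometry} gives $16\zeta(3)$.

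The only point needing care is the bookkeeping: these manipulations take place in the nilpotent ring $H^{\rm even}(F_3,\Q)$ (or the ambient presentation of Lemma~\ref{l3:lem:state-space}). There the formal series $\log$ and $\exp$ are finite sums, and the identity $p_i^2=0$ is legitimate because it holds in $H^*(Y_3)$ before restriction. No real obstacle is expected.
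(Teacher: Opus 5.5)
Your proposal is correct and follows essentially the same route as the paper. You expand $\log\Gamma(1\pm H)$ and $\log\Gamma(1+2p_i)$ using the series for $\log\Gamma(1+x)$, use $p_i^2=0$ and $H^4=0$ on $F_3$ to cancel the $\gamma_E$ terms so that only $\tfrac23\zeta(3)H^3$ survives, and then exponentiate and integrate with $\int_{F_3}H^3=24$.
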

\begin{proof}
From $\log\Gamma(1+x)=-\gamma_Ex+\sum_{k\ge2}(-1)^k\zeta(k)x^k/k$ and
$p_i^2=0$ we get $\sum_i\log\Gamma(1+2p_i)=-2\gamma_EH$, while
\[
 \log\Gamma(1-H)-\log\Gamma(1+H)
 =2\gamma_EH+\sum_{k\ge2}\frac{\zeta(k)}{k}\bigl(1-(-1)^k\bigr)H^k
 =2\gamma_EH+\frac{2\zeta(3)}{3}H^3,
\]
because the even-$k$ terms cancel and $H^k=0$ on $F_3$ for $k\ge4$.  The
$\gamma_E$-terms cancel between the two contributions, so $\log\widehat\Gamma_{X_3}|_{F_3}$
is purely of degree six; exponentiating, its degree-six part is itself.  Now
$\int_{F_3}H^3=24$.
\end{proof}

This is the absolute value of the boundary coefficient in~\eqref{l3:eq:frob-form}.
With the normalization $R'=-\pi$ used in this work, it is the negative of the coefficient
of  the Feynman integral.
The Mellin computation and the direct Gamma calculation are independent, and
Proposition~\ref{l3:prop:gamma} does not identify the integral lattice
underlying the Feynman Frobenius basis with the Gamma-class integral structure.
We do not need such a statement.

\subsection{Analytic domain of the three-loop identity}

Theorem~\ref{l3:thm:main} is an identity of germs at $s=\infty$, together with
its analytic continuation along paths; the two sides of~\eqref{l3:eq:main} are
solutions of $\mathcal L_3u=-24$ and continue along any path in $|s|>16$, where
$\mathcal L_3$ has no singular point.  The following explicit statements concern
the \emph{series} on the right of~\eqref{l3:eq:main}, which need not converge on
all of $|s|>16$.

By Convention~\ref{l3:conv:branches}, $Q(s)=-s^{-1}e^{\sigma(s)}$ is single-valued
and analytic on $|s|>16$, whereas $R=\log Q$ is determined only up to
$2\pi i\Z$ and is single-valued on the universal cover of $\{|s|>16\}$.  From
$A_n\le16^n$,
\[
 |Q(s)|\le\frac1{|s|}\exp\!\left(\sum_{n\ge1}\frac{16^n}{n|s|^n}\right)
 =\frac1{|s|-16}\qquad(|s|>16).
\]
Let $\rho>0$ be the common radius of convergence of $f_2$ and $\mathcal F_\gamma$
(Proposition~\ref{l3:prop:potential} and Theorem~\ref{l3:thm:coefficient-bridge}; the
two series differ by the operation $\thetaQ$ and a sign, which do not change
the radius).  For $M>16+\rho^{-1}$ and $|s|>M$ one has $|Q(s)|<\rho$, and the
identity
\begin{equation}\label{l3:eq:exterior-identity}
 f_2(Q(s))=-(\thetaQ\mathcal F_\gamma)(Q(s))
\end{equation}
holds as an equality of \emph{single-valued} analytic functions on
$\{|s|>M\}$, since only $Q$ and not $R$ enters.  Likewise, if $\rho'>0$ is a
common radius of convergence of $f_0,f_1,f_2$ (which exists by
Lemma~\ref{l3:lem:frob-form}) and $M'>16+\rho'^{-1}$, then the full formula
\eqref{l3:eq:main} holds on the universal cover of $\{|s|>M'\}$ with all series
convergent.  These estimates say nothing about continuation through the singular points $s=4,16$ of $\mathcal L_3$.

\Icorref{cor:Qradius} of the companion paper makes these bounds explicit: all the three-loop descendant series are holomorphic for
$|Q|<1/12$. Thus $|s|>28$ is a sufficient exterior domain for the convergent
series identities above, using $|Q(s)|\leq1/(|s|-16)$. The analogous four-loop
estimate is $|s|>45$, using $1/20$ and $25$. These domains are sufficient, not optimal, and concern the Taylor series in $Q$. Separately,~\Ipropref{prop:sTaylor} of the
companion paper proves that the Taylor series of the physical Feynman integral
in $s$ has exact radius $16$ at three loops and $25$ at four loops; in both
cases the Feynman integral is continuous but not $C^1$ at the physical
threshold.

\medskip\noindent\textbf{Additional real-axis continuation from the companion
paper~\cite{PaperI}.} By~\Ithmref{thm:torus} and
\Icorref{cor:realcontinuation}, $A(y)>0$ and the real mirror coordinate
$Q(y)=ye^{S(y)}$ is strictly increasing for $y>0$. Its range is $(0,Q_*)$,
where
\[
Q_*=\exp\left(-\int_{(S^1)^4}\log|w_1+w_2+w_3+w_4|^2\,d\nu\right)
 =e^{-7\zeta(3)/\pi^2}=0.4263\ldots
\]
for normalized Haar measure $\nu$. The closed form is the Mahler-measure
evaluation of Section~\ref{ext:lowend}.
The three descendant germs have the
specified real analytic continuation along this interval, and the complete
formula of Theorem~\ref{l3:thm:full-descendant-assembly} holds for all real
$s<0$ with these continuations. This extends the exterior-domain statement; Taylor convergence on the whole interval is not claimed.

\section{Three-loop integrality for arbitrary integral insertions, and an
explicit quantum-\texorpdfstring{$K$}{K} formula}\label{app:arith3}\label{l3:sec:bps}

\Icorref{cor:primaryintegral} states primary integrality for
$\gamma_L=\delta_0=H^{L-1}/(L-1)!$ in every $L\geq3$. Here we give the simpler
fourfold projective-bundle argument for every integral primary class $\eta\in
H^4(F_3,\Z)$ and specialize the quantum-$K$ formula. The dimension restriction
concerns this compactification's semipositivity argument, not the scope of
primary integrality on the compact Calabi--Yau cover of the companion
paper~\cite{PaperI}. The notation is that of Section~\ref{app:l3notation}.

For $\eta\in H^4(F_3,\Z)$ define classwise numbers $b_\beta(\eta)$ by
\begin{equation}\label{l3:eq:classwise-bps}
 N_\beta(\eta)=\sum_{\substack{k\ge1,\ \alpha\\ \beta=k\alpha}}
 \frac{b_\alpha(\eta)}{k^2},
\end{equation}
and aggregate numbers $n_m(\gamma)$ by
\begin{equation}\label{l3:eq:dilog}
 \mathcal F_\gamma(Q)=\sum_{m\ge1}n_m(\gamma)\Li_2(Q^m),
 \qquad \Li_2(q)=\sum_{k\ge1}\frac{q^k}{k^2}.
\end{equation}
Here $N_\beta(\eta)$ denotes~\eqref{l3:eq:potential-intro} with $\gamma$ replaced
by $\eta$.  The classwise sums range over integral roots in $H_2(F_3,\Z)$;
we assign zero to $N_\alpha$ and $b_\alpha$ on noneffective classes of positive
$H$-degree.  This convention makes all root sums unambiguous.
Both transforms are triangular with unit diagonal, hence uniquely solvable over
$\Q$; no convergence is needed for this formal step.  Comparing coefficients,
\begin{align}
 N_m(\gamma)&=\sum_{k\mid m}\frac{n_{m/k}(\gamma)}{k^2},\label{l3:eq:cover}\\
 n_m(\gamma)&=\sum_{k\mid m}\frac{\mu(k)}{k^2}N_{m/k}(\gamma),\label{l3:eq:mobius}
\end{align}
the second by multiplying~\eqref{l3:eq:cover} by $m^2$, putting $d=m/k$ to get
$m^2N_m=\sum_{d\mid m}d^2n_d$, and applying ordinary M\"obius inversion to
$G(m)=m^2N_m$, $F(d)=d^2n_d$.

\begin{theorem}[Genus-zero Gopakumar--Vafa integrality, one insertion; quoted]\label{l3:thm:chou-input}
Let $Z'$ be a smooth projective complex fourfold which is semi-positive, and
let $\Gamma\in H^4(Z',\Z)$.  For an effective class $A$ with
$K_{Z'}\cdot A=0$ define $B_A(\Gamma)$ by
\[
 \operatorname{GW}^{Z'}_{0,A}(\Gamma)
 =\sum_{\substack{k\ge1,\ A'\\ A=kA'}}\frac{B_{A'}(\Gamma)}{k^{2}},
 \qquad
 \operatorname{GW}^{Z'}_{0,A}(\Gamma)
 =\int_{[\overline{\mathcal M}_{0,1}(Z',A)]^{\rm vir}}\ev^*\Gamma .
\]
Then $B_A(\Gamma)\in\Z$.
\end{theorem}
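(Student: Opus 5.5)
This theorem is labelled ``quoted'', so the plan is to derive it from the integrality theorem of Ionel and Parker \cite{IP}, in the algebraic form of Chou \cite{Chou}, rather than to prove it from scratch. What has to be done is to check that the hypotheses match and that the normalization of the multiple-cover transform is the one Ionel--Parker use.

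First, on a semi-positive symplectic manifold of real dimension $8$, the genus-zero Gromov--Witten invariants can be defined by pseudocycles. Chou's comparison identifies them with the algebraic virtual-class invariants $\operatorname{GW}^{Z'}_{0,A}(\Gamma)$. For a class $A$ with $K_{Z'}\cdot A=0$, the one-marked moduli space has virtual complex dimension $(\dim Z'-3)+1=2$, so a single insertion $\Gamma\in H^4$ gives a numerical invariant.

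Second, I would invoke the Gopakumar--Vafa integrality theorem of Ionel--Parker for Calabi--Yau classes in dimension $\ge 3$ (complex). For the one-point invariant, the genus-zero multiple-cover formula in their setting has weight $k^{2g-3+\ell}$ with $\ell=1$ marking and $g=0$, that is $k^{-2}$. The marked point contributes the extra factor $k$, because a degree-$k$ cover of a rigid curve meets a codimension-two cycle with multiplicity $k$. Ionel--Parker prove that the resulting BPS numbers are integers. Chou's algebraic version states this for smooth projective varieties with the virtual class. Triangularity of the transform with unit diagonal makes $B_A(\Gamma)$ uniquely defined over $\Q$, and the cited theorem gives integrality.

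I expect the main obstacle to be bookkeeping rather than mathematics. One must confirm that the cited result treats insertions of arbitrary integral classes of degree $4$, and not only cycles pulled back from geometric constraints. One must also confirm that Chou's hypotheses (semi-positivity and projectivity) are exactly the ones stated, and that the sum over roots $A=kA'$ runs over all integral classes, with noneffective classes contributing zero. If Chou's formulation uses pseudocycle insertions, I would represent $\Gamma$ (or a multiple, using linearity in $\Gamma$ together with Thom's theorem) by an integral pseudocycle. Integrality for an integral class would then follow from additivity of both sides in $\Gamma$, since every integral homology class in dimension $4$ of an $8$-manifold is represented by a pseudocycle.
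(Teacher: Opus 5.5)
Your proposal matches the paper's treatment: the statement is not proved there either, but quoted as the one-insertion, fourfold, $K$-degree-zero case of \cite[Thm.~1.1]{Chou} (a restatement of \cite[Thm.~9.2]{IP}), with the one-marking weight $k^{-2}$ and the insertion degree $2(\dim Z'-K_{Z'}\cdot A-2)=4$ of \cite[Prop.~3.4]{Chou}, exactly as you check. Two small corrections: Chou's argument is already algebraic (virtual Kawasaki--Riemann--Roch relating the virtual-class invariants with cohomological insertions to integral quantum $K$-theoretic Euler characteristics), so there is no ``Chou comparison'' with pseudocycle invariants and your pseudocycle contingency is unnecessary; and the rigid-curve picture for the extra factor $k$ is not the right heuristic in a fourfold, where $K$-trivial rational curves move in one-parameter families, although nothing in your argument depends on it.
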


\begin{remark}\label{l3:rem:chou}
This is the one-insertion, fourfold, $K$-degree-zero case of
\cite[Thm.~1.1]{Chou}, which restates~\cite[Thm.~9.2]{IP}. For a
semi-positive variety of dimension $\ge3$ and integral even insertions
$\Gamma_1,\dots,\Gamma_n$, the genus-zero Gopakumar--Vafa type invariants
defined through the Klemm--Pandharipande multiple-cover weight
$k^{-(3-n)}$ in the $K$-degree-zero sector~\cite{KP} are
integers; for $n=1$ the weight is $k^{-2}$.  Chou's proof is algebraic: it
compares the invariants with integral quantum $K$-theoretic Euler
characteristics through virtual Kawasaki--Riemann--Roch, semi-positivity being
what bounds the virtual dimensions of the nontrivial Kawasaki strata; the
one-insertion computation there~\cite[Prop.~3.4]{Chou} concerns insertions of
degree $2(\dim Z'-K_{Z'}\cdot A-2)$, which for $\dim Z'=4$ and $K_{Z'}\cdot A=0$
is exactly $H^4(Z')$, the degree used below.  The semi-positivity hypothesis
is satisfied by $Z$ in either of the usual senses, namely the algebraic one, that
$c_1(TZ)\cdot A\ge0$ for every effective curve class, holds because $-K_Z$ is
nef (Lemma~\ref{l3:lem:compactification}); and the symplectic one of
\cite{IP}, which in real dimension eight forbids classes $A$ with
$\omega(A)>0$ and $c_1(A)=-1$, holds because $c_1(TZ)=2\xi$ takes only even
values on integral classes.
\end{remark}

Put
\[
 Z=\mathbb P_{\rm lines}(\mathcal O_{F_3}\oplus K_{F_3}),\qquad
 \varpi_Z:Z\to F_3,\qquad \xi=c_1(\mathcal O_Z(1)),
\]
let $D_0,D_\infty$ be the sections corresponding to the summands $\mathcal O_{F_3}$
and $K_{F_3}$, and let $i:F_3=D_0\hookrightarrow Z$.

\begin{lemma}[Compactification geometry]\label{l3:lem:compactification}
The variety $Z$ is a smooth projective fourfold, being a $\mathbb P^1$-bundle
over the smooth projective threefold $F_3$. Moreover
\[
 Z\setminus D_\infty\cong X_3,\qquad [D_\infty]=\xi,\qquad i^*\xi=0,\qquad
 \mathcal O_Z(D_\infty)|_{D_\infty}\cong K_{F_3}^{-1},\qquad c_1(TZ)=2\xi,
\]
and $\xi$, hence $-K_Z$, is nef.
\end{lemma}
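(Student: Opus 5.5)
We treat $Z=\mathbb P(\mathcal E)$ with $\mathcal E=\mathcal O_{F_3}\oplus K_{F_3}$ as a projective bundle of lines and compute everything from the relative Euler sequence. We fix the convention that $\mathcal O_Z(-1)\subset\varpi_Z^*\mathcal E$ is the tautological subbundle. Smoothness and projectivity are immediate: $\varpi_Z$ is a smooth proper morphism with fibre $\PP^1$ onto the smooth projective threefold $F_3$, and $\mathcal O_Z(1)\otimes\varpi_Z^*\mathcal O(mH)$ is ample for $m\gg0$ since $H$ is ample on $F_3$.

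\emph{Sections and complement.} The section $D_0$ is the locus where the tautological line equals the summand $\mathcal O_{F_3}$, and $D_\infty$ is the locus where it equals $K_{F_3}$. A line not equal to $K_{F_3}$ is the graph of a unique linear map $\mathcal O\to K_{F_3}$, that is, a point of $\Tot(K_{F_3})$. This gives $Z\setminus D_\infty\cong X_3$, with $D_0$ corresponding to the zero section.

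\emph{Divisor classes.} The projection $\varpi_Z^*\mathcal E\to\varpi_Z^*\mathcal O$ restricted to $\mathcal O_Z(-1)$ gives a section of $\mathcal O_Z(1)$ vanishing exactly on $D_\infty$. Its vanishing is transverse, as one checks in the local fibre coordinate, so $[D_\infty]=\xi$. On $D_0$ the tautological line is the trivial summand, so $i^*\mathcal O_Z(1)\cong\mathcal O$ and $i^*\xi=0$. On $D_\infty$ the tautological line is $K_{F_3}$, so $\mathcal O_Z(D_\infty)|_{D_\infty}=\mathcal O_Z(1)|_{D_\infty}\cong K_{F_3}^{-1}$.

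\emph{Anticanonical class.} The relative Euler sequence
\[
0\to\mathcal O\to\varpi_Z^*\mathcal E\otimes\mathcal O_Z(1)\to T_{Z/F_3}\to0
\]
gives $c_1(T_{Z/F_3})=2\xi+\varpi_Z^*c_1(\mathcal E)=2\xi-\varpi_Z^*H$. Adding $\varpi_Z^*c_1(T_{F_3})=\varpi_Z^*H$ yields $c_1(TZ)=2\xi$. This is the step where sign conventions (lines versus quotients, and which summand is which) can silently flip the answer. We therefore cross-check with a second computation: $-K_Z=D_0+D_\infty+\varpi_Z^*(-K_{F_3})$ by adjunction-type reasoning for $\PP^1$-bundles with two disjoint sections. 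Here $[D_0]=\xi-\varpi_Z^*H$, from the section of $\mathcal O_Z(1)\otimes\varpi_Z^*K_{F_3}^{-1}$ induced by projection to $K_{F_3}$. The cross-check gives $(\xi-\varpi_Z^*H)+\xi+\varpi_Z^*H=2\xi$, consistent with the first computation.

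\emph{Nefness.} This is the only step that is not pure bookkeeping. Let $C$ be an irreducible curve. If $C\not\subset D_\infty$, then $\xi\cdot C=D_\infty\cdot C\ge0$. If $C\subset D_\infty$, then
\[
\xi\cdot C=\deg\bigl(K_{F_3}^{-1}|_C\bigr)=H\cdot\varpi_{Z*}C>0,
\]
because $H$ is ample. Thus $\xi$ is nef, and so is $-K_Z=2\xi$. We also note that $\xi$ is globally generated: $\mathcal E^\vee=\mathcal O\oplus\mathcal O(H)$ is globally generated, since $H$ is base-point free, being the restriction of $\mathcal O(1,1,1,1)$. This gives an alternative argument.
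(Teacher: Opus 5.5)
Your proof is correct and follows essentially the same route as the paper: the graph description of $Z\setminus D_\infty$, the section of $\mathcal O_Z(1)$ given by projecting the tautological line to $\mathcal O_{F_3}$, the relative Euler sequence, and the two-case nefness check according to whether $C\subset D_\infty$. The $D_0$ cross-check and the global-generation remark are correct extras the paper does without, except for one slip: projecting the tautological line to $K_{F_3}$ gives a section of $\mathcal O_Z(1)\otimes\varpi_Z^*K_{F_3}$, not of $\mathcal O_Z(1)\otimes\varpi_Z^*K_{F_3}^{-1}$, and it is the former that yields the class $[D_0]=\xi-\varpi_Z^*H$ you correctly wrote.
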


\begin{lemma}[No maps at infinity]\label{l3:lem:no-infinity}
Every connected genus-zero stable map to $Z$ of nonzero class $i_*\beta$ has
image in $Z\setminus D_\infty=X_3$, and
\begin{equation}\label{l3:eq:compact-local-moduli}
 \overline{\mathcal M}_{0,1}(X_3,\beta)\cong\overline{\mathcal M}_{0,1}(Z,i_*\beta)
\end{equation}
as stacks, with identical obstruction theories and virtual classes.
\end{lemma}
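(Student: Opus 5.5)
The proof has three parts: show that no curve in class $i_*\beta$ can touch $D_\infty$, identify the moduli stacks on the resulting open substack, and then compare the obstruction theories through the tangent bundles.

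\textbf{Step 1: no component meets $D_\infty$.} Let $g\colon C\to Z$ be a connected genus-zero stable map of class $i_*\beta$, and suppose its image meets $D_\infty$. Decompose $C$ into irreducible components $C_v$ of classes $A_v$, so that $\sum_v A_v=i_*\beta$. By Lemma~\ref{l3:lem:compactification}, $\xi=[D_\infty]$ is nef, hence $\xi\cdot A_v\ge0$ for every $v$. On the other hand $\sum_v\xi\cdot A_v=\xi\cdot i_*\beta=i^*\xi\cdot\beta=0$. Therefore $\xi\cdot A_v=0$ for all $v$.

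Now take a component $C_v$ whose image meets $D_\infty$. If it is not contained in $D_\infty$, then $\xi\cdot A_v=D_\infty\cdot A_v>0$, a contradiction. So it lies in $D_\infty\cong F_3$. Then $\xi\cdot A_v=\deg\bigl(\mathcal O(D_\infty)|_{D_\infty}\bigr)\cdot A_v=-K_{F_3}\cdot\bar A_v=H\cdot\bar A_v$. Since $H$ is ample on $F_3$, this vanishes only if $C_v$ is contracted.

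Hence every noncontracted component avoids $D_\infty$. The contracted components lie in $D_\infty$ while meeting the rest of the connected curve. The image of $C$ is connected, and it contains a noncontracted component because $\beta\ne0$. A connected image that has a point on $D_\infty$ would therefore force some noncontracted component to meet $D_\infty$, which is excluded. So the whole image lies in $Z\setminus D_\infty=X_3$.

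\textbf{Step 2: identification of the stacks.} The same argument works over an arbitrary base $S$. The locus where $g$ meets $D_\infty$ is closed, and by Step 1 it has no geometric points, so it is empty. Thus $\overline{\mathcal M}_{0,1}(X_3,\beta)\to\overline{\mathcal M}_{0,1}(Z,i_*\beta)$ is an open immersion that is surjective on geometric points, hence an isomorphism. One point needs care. Curve classes in $X_3$ map to classes in $Z$ via $H_2(X_3)\cong H_2(F_3)\xrightarrow{i_*}H_2(Z)$, and $i_*$ is injective because $\varpi_{Z*}i_*=\mathrm{id}$. So "class $\beta$" in $X_3$ corresponds exactly to "class $i_*\beta$" in $Z$, and no other class in $X_3$ maps to $i_*\beta$.

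\textbf{Step 3: obstruction theories.} The perfect obstruction theory is $(R\pi_*f^*T_Z)^\vee$, relative to the Artin stack of prestable curves. Since every $f$ factors through the open set $X_3$, we have $f^*T_Z=f^*T_{X_3}$ canonically. The two obstruction theories therefore coincide, and so do the virtual classes, by the functoriality of the Behrend--Fantechi construction under open immersions.

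The main obstacle is the contracted-component case in Step 1, which is where the ampleness of $H=-K_{F_3}$ on $D_\infty$ has to be used. The remaining geometric inputs are the facts $\mathcal O_Z(D_\infty)|_{D_\infty}\cong K_{F_3}^{-1}$ and $i^*\xi=0$, both supplied by Lemma~\ref{l3:lem:compactification}.
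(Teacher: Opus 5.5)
Your proof is correct and is essentially the paper's argument: $\xi\cdot i_*\beta=0$ with $\xi$ nef forces every component to have $\xi$-degree zero, ampleness of $H=-K_{F_3}$ rules out nonconstant components inside $D_\infty$, positivity of the intersection number rules out nonconstant components that meet $D_\infty$ without lying in it, contracted components are handled by connectedness, and the stack and obstruction-theory statements follow from the open immersion $X_3\subset Z$ together with $T_Z|_{X_3}=T_{X_3}$. The differences are minor: you use connectedness of the image where the paper uses chains of contracted components; you usefully make explicit that $i_*$ is injective on $H_2$, so that class $i_*\beta$ in $Z$ corresponds exactly to class $\beta$ in $X_3$; and you should justify, as the paper does, that the locus meeting $D_\infty$ is closed, since it is the image of the closed set $g^{-1}(D_\infty)$ under the proper map of the universal curve.
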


\begin{lemma}[Curve classes]\label{l3:lem:lattice}
$H_2(Z,\Z)=i_*H_2(F_3,\Z)\oplus\Z[\ell]$ with $\xi\cdot\ell=1$.  If $kA=i_*\beta$
in $H_2(Z,\Z)$ with $k\ge1$, then $A=i_*\alpha$ with $k\alpha=\beta$; and $i_*\alpha$ is
effective if and only if $\alpha$ is.
\end{lemma}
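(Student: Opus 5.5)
We prove Lemma~\ref{l3:lem:lattice} from the standard topology of a $\mathbb P^1$-bundle.

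\emph{The splitting.} Since $Z=\mathbb P(\mathcal O_{F_3}\oplus K_{F_3})$ is the projectivization of a rank-two bundle, the Leray--Hirsch theorem gives
\[
H^*(Z,\Z)=\varpi_Z^*H^*(F_3,\Z)\oplus\xi\,\varpi_Z^*H^*(F_3,\Z),
\]
a free module over $H^*(F_3,\Z)$. Here $\xi$ restricts to a generator on each fiber $\ell\cong\mathbb P^1$, so $\xi\cdot\ell=1$.

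Dually, $H_2(Z,\Z)$ is an extension of $H_2(F_3,\Z)$, via $(\varpi_Z)_*$, by $\Z[\ell]$. The long exact sequence of the fibration is short exact in this degree, because $\pi_1$ of the fiber is trivial and the sequence is compatible with Leray--Hirsch. The section $i:F_3=D_0\hookrightarrow Z$ satisfies $\varpi_Z\circ i=\mathrm{id}$, so $i_*$ splits $(\varpi_Z)_*$. Hence $H_2(Z,\Z)=i_*H_2(F_3,\Z)\oplus\Z[\ell]$.

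The decomposition is detected by pairings. Since $i^*\xi=0$ (Lemma~\ref{l3:lem:compactification}), we have $\xi\cdot i_*\alpha=0$, and so the $\ell$-coordinate of a class $A$ is $\xi\cdot A$. One should check that torsion in $H_2(F_3,\Z)$ causes no trouble. By Lefschetz, $H_2(F_3,\Z)\cong H_2(Y_3,\Z)=\Z^4$ is torsion-free (Proposition~\ref{l3:prop:zero-section} context), so in fact $H_2(Z,\Z)\cong\Z^5$.

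\emph{Roots.} Suppose $kA=i_*\beta$. Then $k(\xi\cdot A)=\xi\cdot i_*\beta=0$, so $\xi\cdot A=0$ and $A=i_*\alpha$ with $\alpha=(\varpi_Z)_*A$. Applying $(\varpi_Z)_*$ gives $k\alpha=\beta$. The root $\alpha$ is unique because $H_2(F_3,\Z)$ is torsion-free.

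\emph{Effectivity.} If $\alpha$ is represented by an effective curve $C\subset F_3$, then $i(C)\subset D_0$ represents $i_*\alpha$. Conversely, let $i_*\alpha=\sum a_j[C_j]$ with $C_j\subset Z$ irreducible and $a_j>0$. Pushing forward gives $\alpha=\sum a_j(\varpi_Z)_*[C_j]$. Each $(\varpi_Z)_*[C_j]$ is either zero (when $C_j$ is a fiber) or a positive multiple of the class of the image curve, so $\alpha$ is effective or zero.

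The zero case is excluded for nonzero $i_*\alpha$. If $\alpha=0$, every $C_j$ is a fiber, and then $i_*\alpha=(\sum a_j)[\ell]$ would have $\xi\cdot i_*\alpha>0$, contradicting $\xi\cdot i_*\alpha=0$. When $\alpha=0$ itself, both sides are trivially consistent with the convention that effectivity concerns nonzero classes of positive $H$-degree.

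\emph{Main obstacle.} The only delicate point is the integral (not just rational) splitting of $H_2$ and the absence of torsion. I would settle this with Leray--Hirsch over $\Z$ together with torsion-freeness of $H_2(F_3,\Z)$ coming from weak Lefschetz.
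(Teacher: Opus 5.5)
Your proof is correct and follows the paper's argument: the projective-bundle formula plus the section $i$ for the integral splitting, intersection with $\xi$ (using $i^*\xi=0$ and $\xi\cdot\ell=1$) together with torsion-freeness of $H_2(F_3,\Z)$ for the roots, and pushforward along $\varpi_Z$ for effectivity. The only difference is in the converse of effectivity: the paper first uses the nefness argument of Lemma~\ref{l3:lem:no-infinity} to show that every component of an effective representative of $i_*\alpha$ avoids $D_\infty$, while you push forward directly and let fiber components contract to zero, which is equally valid (and the short exactness in $H_2$ comes from Leray--Hirsch and the section, so the appeal to $\pi_1$ of the fiber is not needed).
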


\begin{theorem}[BPS integrality]\label{l3:thm:bps-integrality}
For every $\eta\in H^4(F_3,\Z)$ and every nonzero effective $\beta$ one has
$b_\beta(\eta)\in\Z$.  Consequently $n_m(\gamma)\in\Z$ for every $m\ge1$.
\end{theorem}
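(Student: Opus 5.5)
The plan is to transport the local invariants $N_\beta(\eta)$ to honest one-point Gromov--Witten invariants of the compact semipositive fourfold $Z$, and then apply Theorem~\ref{l3:thm:chou-input} there.

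\emph{Step 1: choose the insertion on $Z$.} Given $\eta\in H^4(F_3,\Z)$, set $\Gamma=\varpi_Z^*\eta\in H^4(Z,\Z)$; this class is integral. By Lemma~\ref{l3:lem:no-infinity}, $\overline{\mathcal M}_{0,1}(X_3,\beta)\cong\overline{\mathcal M}_{0,1}(Z,i_*\beta)$ with the same virtual class. Under this isomorphism the evaluation map to $Z$ factors through $X_3=Z\setminus D_\infty$, and $\varpi_Z$ restricts to $\varpi$ on $X_3$. Hence $\ev^*\Gamma$ corresponds to $\ev^*\varpi^*\eta$, and
\[
 \operatorname{GW}^Z_{0,i_*\beta}(\Gamma)=N_\beta(\eta).
\]

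\emph{Step 2: check the hypotheses of Theorem~\ref{l3:thm:chou-input}.} By Lemma~\ref{l3:lem:compactification} and Remark~\ref{l3:rem:chou}, $Z$ is a smooth projective semipositive fourfold. We also need $K_Z\cdot i_*\beta=0$. Since $c_1(TZ)=2\xi$ and $i^*\xi=0$, we get $c_1(TZ)\cdot i_*\beta=2\,(i^*\xi)\cdot\beta=0$.

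\emph{Step 3: match the two multiple-cover transforms.} Theorem~\ref{l3:thm:chou-input} sums over $A'$ with $i_*\beta=kA'$ in $H_2(Z,\Z)$. By Lemma~\ref{l3:lem:lattice}, each such $A'$ is $i_*\alpha$ with $k\alpha=\beta$, and effectivity is preserved in both directions. Every such $A'$ also has $K_Z\cdot A'=0$, so it lies in the sector where $B_{A'}$ is defined. For noneffective $A'$ both sides vanish, consistent with our convention for noneffective classes. The two triangular systems, \eqref{l3:eq:classwise-bps} and the one in Theorem~\ref{l3:thm:chou-input}, therefore have identical left sides and identical index sets. By unique solvability (induction on $H\cdot\beta$) they give $b_\alpha(\eta)=B_{i_*\alpha}(\Gamma)$, and this is an integer.

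\emph{Step 4: the aggregate numbers.} Specialize to $\eta=\gamma=H^2/2=\sum_{i<j}p_ip_j$, which is integral. We claim
\[
 n_m(\gamma)=\sum_{|\alpha|=m}b_\alpha(\gamma).
\]
To see this, sum \eqref{l3:eq:classwise-bps} over $|\beta|=m$. A pair $(k,\alpha)$ with $k\alpha=\beta$ and $|\beta|=m$ is the same as a choice of $k\mid m$ and $\alpha$ with $|\alpha|=m/k$. This gives
\[
 N_m(\gamma)=\sum_{k\mid m}k^{-2}\sum_{|\alpha|=m/k}b_\alpha(\gamma),
\]
and uniqueness in \eqref{l3:eq:cover} yields the claim. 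The sums are finite because only finitely many effective classes have a given degree in $H_2(F_3,\Z)\cong\Z^4$ (all $p_i\cdot\alpha\ge0$). So $n_m(\gamma)\in\Z$.

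The main obstacle I expect is Step 3 together with the hypothesis check. One must ensure that roots of $i_*\beta$ in $H_2(Z,\Z)$ acquire no component along $[\ell]$, and this is exactly where Lemma~\ref{l3:lem:lattice} is used. One must also ensure that Chou's theorem applies with the given insertion degree and semipositivity notion, as discussed in Remark~\ref{l3:rem:chou}. Once these are in place, the remaining steps are bookkeeping.
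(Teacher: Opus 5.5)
Your proposal is correct and follows the paper's proof essentially step for step. You transport to $\operatorname{GW}^Z_{0,i_*\beta}(\varpi_Z^*\eta)$ via Lemma~\ref{l3:lem:no-infinity}, verify semipositivity and $c_1(TZ)\cdot i_*\beta=0$, identify the root sets with Lemma~\ref{l3:lem:lattice} so that the triangular transforms coincide, and aggregate over $|\beta|=m$ using finiteness and uniqueness in~\eqref{l3:eq:cover}. The only check the paper makes explicit that you leave implicit is $\vdim_\C\overline{\mathcal M}_{0,1}(Z,i_*\beta)=2$, which is what makes the degree-four insertion give a numerical invariant in Chou's range.
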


Combining~\eqref{l3:eq:coeff-main},~\eqref{l3:eq:cover} and
Theorem~\ref{l3:thm:bps-integrality},
\begin{equation}\label{l3:eq:d2-bps}
 d_2(m)=-m\,N_m(\gamma)=-m\sum_{k\mid m}\frac{n_{m/k}(\gamma)}{k^2},
 \qquad n_m(\gamma)\in\Z,
\end{equation}
and, since $\thetaQ\Li_2(Q^m)=-m\log(1-Q^m)$, the formal Euler product
\begin{equation}\label{l3:eq:euler-product}
 \exp(f_2(Q))=\prod_{m\ge1}(1-Q^m)^{m\,n_m(\gamma)}\in1+Q\,\Z[[Q]].
\end{equation}
Putting $A_m=m^2N_m(\gamma)=-m\,d_2(m)$ gives $A_m=\sum_{d\mid m}d^2n_d(\gamma)
\in\Z$ and, for $p$ prime, $a\ge1$ and $p\nmid e$,
\begin{equation}\label{l3:eq:supercongruence}
 A_{p^ae}-A_{p^{a-1}e}\equiv0\pmod{p^{2a}},
\end{equation}
since subtracting the two divisor sums leaves exactly the divisors of $p^ae$ of
$p$-adic valuation $a$, each contributing $p^{2a}d'^2n_{p^ad'}$.

Table~\ref{l3:tab:coefficients} records the exact values through $m=19$.
The Gromov--Witten column is obtained from~\eqref{l3:eq:F-K}, the sunset column
can also be obtained independently from~\eqref{l3:eq:frob-rec}, and the BPS
column is the M\"obius transform~\eqref{l3:eq:mobius} of the Gromov--Witten column.
All nineteen rows agree by exact rational arithmetic. The table is a consistency check and is not used in the proofs.

\begingroup
\scriptsize
\setlength{\tabcolsep}{4pt}
\begin{longtable}{@{}rlll@{}}
\caption{Sunset coefficients, Gromov--Witten aggregates, and BPS transforms.}
\label{l3:tab:coefficients}\\
\toprule
$m$&$d_2(m)$&$N_m(\gamma)=-d_2(m)/m$&$n_m(\gamma)$\\
\midrule
\endfirsthead
\toprule
$m$&$d_2(m)$&$N_m(\gamma)=-d_2(m)/m$&$n_m(\gamma)$\\
\midrule
\endhead
1&$-24$&$24$&$24$\\
2&$132$&$-66$&$-72$\\
3&$-800$&$800/3$&$264$\\
4&$6018$&$-3009/2$&$-1488$\\
5&$-242424/5$&$242424/25$&$9696$\\
6&$415088$&$-207544/3$&$-69240$\\
7&$-25875552/7$&$25875552/49$&$528072$\\
8&$33914049$&$-33914049/8$&$-4238880$\\
9&$-954894248/3$&$954894248/27$&$35366424$\\
10&$15208750932/5$&$-7604375466/25$&$-304177440$\\
11&$-324462499968/11$&$324462499968/121$&$2681508264$\\
12&$289525458104$&$-72381364526/3$&$-24127104048$\\
13&$-37323117924480/13$&$37323117924480/169$&$220846851624$\\
14&$201031033602384/7$&$-100515516801192/49$&$-2051337209592$\\
15&$-289441832285280$&$19296122152352$&$19296122151264$\\
16&$5872668577670337/2$&$-5872668577670337/32$&$-183520891992384$\\
17&$-509335737033576096/17$&$509335737033576096/289$&$1762407394579848$\\
18&$921823061292056348/3$&$-460911530646028174/27$&$-17070797440168296$\\
19&$-60149764030617173760/19$&$60149764030617173760/361$&$166619844960158376$\\
\bottomrule
\end{longtable}
\endgroup

\subsection{Proofs of the supporting statements}

\subsubsection*{Compactification and curve-by-curve nefness}

\begin{proof}
A line in $\mathcal O_x\oplus K_{F_3,x}$ with nonzero first coordinate is spanned
by a unique $(1,v)$, which gives $Z\setminus D_\infty\cong X_3$; the projection of
the tautological subbundle to $\mathcal O_{F_3}$ is a section of $\mathcal O_Z(1)$
vanishing simply along $D_\infty$, which gives $[D_\infty]=\xi$, $i^*\xi=0$ and
the normal bundle $N_{D_\infty/Z}\cong K_{F_3}^{-1}$.  The relative Euler sequence
gives $c_1(T_{Z/F_3})=2\xi+\varpi_Z^*c_1(K_{F_3})$; adding
$\varpi_Z^*c_1(TF_3)=-\varpi_Z^*c_1(K_{F_3})$ gives $c_1(TZ)=2\xi$.  For nefness let
$C\subset Z$ be an integral curve.  If $C\not\subset D_\infty$ then
$\xi\cdot C=D_\infty\cdot C\ge0$.  If $C\subset D_\infty$ then
$\xi\cdot C=\deg(N_{D_\infty/Z}|_C)=-K_{F_3}\cdot\varpi_{Z*}C=H\cdot\varpi_{Z*}C>0$,
as $H$ is ample on $F_3$.
\end{proof}

\subsubsection*{Excluding the infinity section in families}

\begin{proof}
Since $i^*\xi=0$, the total intersection is $\xi\cdot i_*\beta=0$.  By the proof
of Lemma~\ref{l3:lem:compactification} each nonconstant component has nonnegative
$\xi$-degree, strictly positive if contained in $D_\infty$; as the total is
zero, no nonconstant component lies in $D_\infty$ and each has $\xi$-degree $0$,
hence (all local intersection multiplicities being nonnegative) meets $D_\infty$
nowhere.  A contracted component mapping into $D_\infty$ would be joined through
a chain of contracted components to a nonconstant one, which would then meet
$D_\infty$; contradiction.

For the stack statement, $X_3\hookrightarrow Z$ is an open immersion, so
$\overline{\mathcal M}_{0,1}(X_3,\beta)$ is the open substack of
$\overline{\mathcal M}_{0,1}(Z,i_*\beta)$ on which the universal image avoids
$D_\infty$.  In a family, the locus where the image meets $D_\infty$ is closed,
being the image under the proper map $\varrho$ of the closed substack
$\mu^{-1}(D_\infty)$ of the universal curve.  The preceding paragraph says this
closed substack has no geometric points, hence is empty.  This gives an isomorphism of stacks, not only a bijection on geometric points.  Finally $T_Z|_{X_3}=T_{X_3}$, so the relative tangent-obstruction
complexes and their maps to the cotangent complex agree, and so do the virtual
classes; properness is that of $\overline{\mathcal M}_{0,1}(Z,i_*\beta)$.
\end{proof}

\subsubsection*{Integral covering roots and effectivity}

\begin{proof}
The splitting is the projective-bundle formula together with the section $i$,
and $H_2(F_3,\Z)\cong\Z^4$ by~\Ipropref{prop:geometry}.  If $kA=i_*\beta$,
intersecting with $\xi$ gives $k\,\xi\cdot A=0$, so the fiber coefficient of $A$
vanishes and $A=i_*\alpha$; applying $\varpi_{Z*}$ gives $k\alpha=\beta$, and
$H_2(F_3,\Z)$ is torsion-free.  If $\alpha$ is effective so is $i_*\alpha$.
Conversely, an effective cycle representing $i_*\alpha$ has total
$D_\infty$-intersection $0$, so by the proof of Lemma~\ref{l3:lem:no-infinity} each
of its components avoids $D_\infty$, and its image under $\varpi_Z$ is an
effective cycle of class $\alpha$.
\end{proof}

\subsubsection*{Fourfold semipositivity and arbitrary integral insertions}

\begin{proof}
By Lemma~\ref{l3:lem:compactification} and Remark~\ref{l3:rem:chou}, $Z$ is a
smooth projective semi-positive fourfold, and
$c_1(TZ)\cdot i_*\beta=2\,\xi\cdot i_*\beta=0$, so $i_*\beta$ lies in the
$K_Z$-degree-zero sector.  The class $\varpi_Z^*\eta\in H^4(Z,\Z)$ is
integral, and
\[
 \vdim_\C\overline{\mathcal M}_{0,1}(Z,i_*\beta)=(4-3)+c_1(TZ)\cdot i_*\beta+1=2,
\]
so $\operatorname{GW}^Z_{0,i_*\beta}(\varpi_Z^*\eta)$ is a well-defined number
and Theorem~\ref{l3:thm:chou-input} applies.  By
Lemma~\ref{l3:lem:no-infinity} and $\varpi_Z\circ i=\mathrm{id}_{F_3}$,
$\operatorname{GW}^Z_{0,i_*\beta}(\varpi_Z^*\eta)=N_\beta(\eta)$; by
Lemma~\ref{l3:lem:lattice} the $k$-th roots of $i_*\beta$ in $H_2(Z,\Z)$ are
exactly the images of the $k$-th roots of $\beta$ in $H_2(F_3,\Z)$, with the same
effectivity.  Hence the two multiple-cover transforms~\eqref{l3:eq:classwise-bps}
and that of Theorem~\ref{l3:thm:chou-input} coincide, and by uniqueness of the
triangular transform $b_\beta(\eta)=B_{i_*\beta}(\varpi_Z^*\eta)\in\Z$.

For fixed $m$, the nef classes $p_i$ give nonnegative integer coordinates for
effective $\beta$, summing to $m$; the curve-lattice identification of~\Ipropref{prop:geometry} therefore gives only finitely many such
classes.  Summing~\eqref{l3:eq:classwise-bps} over them and using
$|k\alpha|=k|\alpha|$ gives $N_m(\gamma)=\sum_{k\mid
m}k^{-2}\sum_{|\alpha|=m/k}b_\alpha(\gamma)$; comparing with~\eqref{l3:eq:cover}
and using uniqueness again, $n_m(\gamma)=\sum_{|\beta|=m}b_\beta(\gamma)\in\Z$.
\end{proof}

We now give a quantum-$K$ refinement for the same
$F_3$, $X_3=\operatorname{Tot}(K_{F_3})$, and insertion $\gamma=H^2/2$.
It is independent of the analytic--enumerative bridge, but uses the same
compactification and the quantum-$K$ theorem underlying Chou's integrality
proof.  It is therefore a refinement of that argument rather than an independent proof.

\subsection{Compactification and identification of invariants}
Use the lines convention and notation of Section~\ref{l3:sec:bps}:
\[
 Z=\mathbb P_{\rm lines}(\mathcal O_{F_3}\oplus K_{F_3}),\qquad
 \varpi_Z:Z\longrightarrow F_3,\qquad i:F_3=D_0\hookrightarrow Z.
\]
Lemmas~\ref{l3:lem:compactification}--\ref{l3:lem:lattice} establish, respectively,
$-K_Z=2\xi$ nef, the isomorphism of stable-map stacks
\[
 \overline{\mathcal M}_{0,1}(X_3,\beta)
 \simeq \overline{\mathcal M}_{0,1}(Z,i_*\beta),
\]
with their standard obstruction theories, and the equality of effective
integral root sets.  The stack isomorphism is induced by an open immersion
and holds over arbitrary bases. In particular
\begin{equation}\label{l3:eq:qk-comparison}
 N_\beta(\gamma)=\operatorname{GW}^{Z}_{0,i_*\beta}(\varpi_Z^*\gamma),
 \qquad b_\beta(\gamma)=B_{i_*\beta}(\varpi_Z^*\gamma).
\end{equation}
The second equality follows from the first by uniqueness of the classwise
multiple-cover transform, with the zero convention for noneffective roots.
None of these statements uses a mirror theorem.

\subsection{An integral lift with exact Chern character}
Set $L_i=\varpi_Z^*\mathcal O_{F_3}(p_i)$ and define
\begin{equation}\label{l3:eq:qk-lift}
 \mathcal E_\gamma=\sum_{1\le i<j\le4}
       (1-[L_i^{-1}])(1-[L_j^{-1}])\in K^0_{\rm alg}(Z).
\end{equation}
Here $K^0_{\rm alg}$ is the Grothendieck ring of algebraic vector bundles;
its unit is the trivial line bundle.  The lift is an integral virtual class,
not a vector bundle of positive rank.

\begin{lemma}\label{l3:lem:qk-ch}
In rational even cohomology,
$\operatorname{ch}(\mathcal E_\gamma)=\varpi_Z^*\gamma$ exactly.
\end{lemma}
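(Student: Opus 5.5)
The plan is a direct computation of the Chern character, using multiplicativity and the vanishing $p_i^2=0$. Since $L_i=\varpi_Z^*\mathcal O_{F_3}(p_i)$ and pullback commutes with $\operatorname{ch}$, it suffices to compute on $F_3$ and pull back along $\varpi_Z$. The class $\operatorname{ch}$ is a ring homomorphism from $K^0_{\rm alg}(Z)$ to $H^{\rm even}(Z,\Q)$. Hence
\[
 \operatorname{ch}(\mathcal E_\gamma)=\sum_{i<j}\bigl(1-e^{-\varpi_Z^*p_i}\bigr)\bigl(1-e^{-\varpi_Z^*p_j}\bigr).
\]

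The key step is that $p_i^2=0$ already in $H^4(Y_3,\Q)$, since $p_i$ is pulled back from the point class of a $\mathbb P^1$ factor. The restriction $j^*$ is a ring map, so $p_i^2=0$ also holds in $H^*(F_3,\Q)$, and therefore in $H^*(Z,\Q)$ after applying $\varpi_Z^*$. Consequently $1-e^{-\varpi_Z^*p_i}=\varpi_Z^*p_i$ exactly, with no higher-order terms. Each summand then equals $\varpi_Z^*(p_ip_j)$. Summing over $i<j$ gives $\varpi_Z^*\bigl(\sum_{i<j}p_ip_j\bigr)=\varpi_Z^*\gamma$, by the identity $\gamma=H^2/2=\sum_{i<j}p_ip_j$ recorded in Section~\ref{app:l3notation}. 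That identity itself follows from expanding $H^2=(\sum_i p_i)^2$ and discarding the squares $p_i^2=0$.

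There is essentially no obstacle. The only point needing care is the claim that the equality is exact, with no corrections in degrees six and eight. This is exactly what the nilpotence $p_i^2=0$ provides. It makes each factor $1-[L_i^{-1}]$ have Chern character concentrated in degree two, so every product lies purely in degree four. A further remark is that $\mathcal E_\gamma$ is manifestly an integral element of $K^0_{\rm alg}(Z)$, since it is built from line bundles, which is what the subsequent quantum-$K$ argument requires.
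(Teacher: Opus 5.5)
Your proposal is correct and is the same argument as the paper's proof. Both use that $\operatorname{ch}$ is a natural ring homomorphism and that $p_i^2=0$ survives restriction to $F_3$ and pullback to $Z$, so each factor $1-[L_i^{-1}]$ has Chern character exactly $\varpi_Z^*p_i$ and the sum of the products over $i<j$ is $\varpi_Z^*\gamma$; you simply write out the step $1-e^{-\varpi_Z^*p_i}=\varpi_Z^*p_i$ that the paper leaves implicit.
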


For $\alpha\ne0$ effective define the true quantum-$K$ invariant
\begin{equation}\label{l3:eq:qk-kappa}
 \kappa_\alpha=
 \chi\left(\overline{\mathcal M}_{0,1}(Z,i_*\alpha),
       \ev_1^*\mathcal E_\gamma\otimes\mathcal O^{\rm vir}\right)\in\Z,
\end{equation}
where $\mathcal O^{\rm vir}$ is the virtual structure sheaf for the standard
stable-map obstruction theory~\cite{Lee}.  Put $\kappa_0=0$ and put
$\kappa_\alpha=0$ for noneffective classes of positive $H$-degree.
The Euler characteristic is taken on the proper Deligne--Mumford stack.
In characteristic zero its coherent cohomology groups are finite-dimensional
and vanish in sufficiently high degree. The tame coarse-space morphism has
exact pushforward, and the coarse space is proper.  Taking their alternating
sum, extended additively to integral virtual coherent classes, gives an integer.

\begin{remark}[True versus fake quantum $K$-theory]
Equation~\eqref{l3:eq:qk-kappa} is not obtained by integrating just
$\operatorname{ch}(\ev_1^*\mathcal E_\gamma)
\operatorname{Td}(T^{\rm vir})$ against the virtual class of the stack.
That integral is the identity-sector, or fake, invariant.  The true Euler
characteristic includes the other inertia-sector contributions in virtual
Kawasaki--Riemann--Roch; these account for the multiple-cover correction in
Chou's result.
\end{remark}

\subsection{The explicit formula}
\begin{theorem}[Quantum-$K$ formula]\label{l3:thm:qk-strengthening}
For every nonzero effective $\beta\in H_2(F_3,\Z)$,
\begin{equation}\label{l3:eq:qk-formula}
 {
 b_\beta(\gamma)=\sum_{\substack{k\ge1,\ \alpha\in H_2(F_3,\Z)\\k\alpha=\beta}}
                         \mu(k)k\,\kappa_\alpha.}
\end{equation}
In particular $b_\beta(\gamma)$ is integral.  Equivalently,
\begin{equation}\label{l3:eq:qk-inverse}
 \kappa_\beta=\sum_{k\alpha=\beta}k\,b_\alpha(\gamma).
\end{equation}
\end{theorem}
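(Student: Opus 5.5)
The plan has three steps: compute $\kappa_\alpha$ by virtual Kawasaki--Riemann--Roch on $\overline{\mathcal M}_{0,1}(Z,i_*\alpha)$, show that only the cyclic multiple-cover sectors contribute, and then Möbius-invert the resulting divisor sum. First we establish Lemma~\ref{l3:lem:qk-ch}. Since $p_i^2=0$, we have $\operatorname{ch}(1-[L_i^{-1}])=1-e^{-p_i}=p_i$ exactly, and multiplicativity of $\operatorname{ch}$ gives $\operatorname{ch}(\mathcal E_\gamma)=\sum_{i<j}p_ip_j=\varpi_Z^*\gamma$. Note that $\mathcal E_\gamma$ has rank zero and $\operatorname{ch}$ concentrated in degree four.

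\emph{Inertia decomposition.} Next we write $\kappa_\alpha$ via virtual Kawasaki--Riemann--Roch (as in Chou's proof of Theorem~\ref{l3:thm:chou-input}) as a sum over components of the inertia stack. The identity sector is the fake invariant
\[
\int_{[\overline{\mathcal M}_{0,1}(Z,i_*\alpha)]^{\rm vir}}\ev^*\varpi_Z^*\gamma\cdot\operatorname{Td}(T^{\rm vir}).
\]
The virtual dimension is $2$ and $\operatorname{ch}(\mathcal E_\gamma)$ already has degree $4$. So only the degree-zero part of $\operatorname{Td}$ survives, and the identity sector equals $N_\alpha(\gamma)$.

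The twisted sectors are indexed by automorphisms of order $k$, and we argue as follows.
\begin{itemize}
\item Semi-positivity together with the $K_Z$-degree-zero condition bounds the virtual dimension of every twisted stratum. This is exactly the argument of \cite[Prop.~3.4]{Chou}.
\item The only contributing strata are the loci of $k$-fold cyclic covers $C\to C'\to Z$, totally ramified over the marking, composed with a map of class $i_*\alpha'$ where $\alpha=k\alpha'$. Lemma~\ref{l3:lem:lattice} guarantees that every such root is of the form $i_*\alpha'$, with the same effectivity.
\item Summing over the $k$th roots of unity $\zeta\neq1$ with the character weights of $\mathcal E_\gamma$ and of the virtual normal bundle gives each sector as a universal rational multiple $c(k)$ of $N_{\alpha'}(\gamma)$. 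The class $\mathcal E_\gamma$ is pulled back along the marking, which is fixed by the automorphism, so $\ev^*\mathcal E_\gamma$ contributes only through its Chern character.
\end{itemize}
The result is a formula $\kappa_\alpha=\sum_{k\alpha'=\alpha}c(k)\,N_{\alpha'}(\gamma)$ with $c(1)=1$.

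\emph{Fixing the weights.} We then determine $c(k)$ and combine with \eqref{l3:eq:classwise-bps}. The cleanest route is to identify $c(k)$ from Chou's one-insertion result, where the combination of Kawasaki sectors reproduces the weight $k^{-2}$. One expects the combined twisted contribution to be $\sum_{j\mid k}\phi$-type sums, giving
\[
\kappa_\alpha=\sum_{k\alpha'=\alpha}k\,b_{\alpha'}(\gamma).
\]
Concretely, substituting $N_{\alpha'}=\sum_{j\alpha''=\alpha'}b_{\alpha''}/j^2$ turns the identity $\sum_k c(k)\,N=\sum_k k\,b$ into the convolution identity $\sum_{jk=n}c(k)j^{-2}=n$. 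This requires $c=\mathrm{id}\ast\mathrm{id}^{2}\ast\mu$ type weights, which we would verify by computing the $\mathbb Z/k$ trace on $H^1$ of the pulled-back obstruction bundle for a totally ramified cover of $\mathbb P^1$. This is a one-line-bundle computation using the $K_Z$-degree-zero normal splitting of the underlying embedded curve, which reduces to the local $\mathcal O(-1)\oplus\mathcal O(-1)$-type model.

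Once \eqref{l3:eq:qk-inverse} holds, \eqref{l3:eq:qk-formula} follows by Möbius inversion along the divisor lattice of the ray through $\beta$: if $\kappa=\mathrm{id}\ast b$, then $b=(\mu\cdot\mathrm{id})\ast\kappa$. Integrality then follows from $\kappa_\alpha\in\Z$.

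The main obstacle is the twisted-sector computation. It requires justifying that no other inertia components (non-cyclic, or with contracted components) contribute, and it requires evaluating the roots-of-unity sums exactly to obtain the weight $k$. I would try first to reduce this entirely to Chou's universal formula, reading $c(k)$ off from \cite{Chou} rather than recomputing the localization directly.
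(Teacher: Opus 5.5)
Your proposal has a genuine gap at the step that carries the content of the theorem. You assert, rather than prove, that each twisted Kawasaki sector contributes a universal rational multiple $c(k)\,N_{\alpha'}(\gamma)$ (``one expects''). The method you suggest for proving it would not work.

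\emph{The fixed loci are not copies of $\overline{\mathcal M}_{0,1}(Z,i_*\alpha')$.} Take an automorphism $g$ of order $k$. A $k$-fold cyclic cover of an irreducible genus-zero base has a second totally ramified point besides the marking. Its $g$-invariant deformation space therefore has rank $\chi(f'^*TZ)-1=3$, not $2$. There are also strata of nodal maps on which $g$ rotates a component and permutes the attached branches. On all these strata the Todd and inverse-Euler factors contribute beyond their leading terms. Showing that everything collapses to universal constants times lower-degree primaries, using semi-positivity, is exactly what Chou's proposition proves.

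\emph{The local model cannot substitute for this.} The $\mathcal O(-1)\oplus\mathcal O(-1)$ model is the rigid-curve model for threefolds. Here curves of class $i_*\alpha$ move: the unmarked virtual dimension is $1$, and already a line has $N_{L/Z}\cong\mathcal O\oplus\mathcal O(-1)^{\oplus2}$. The moduli stacks are arbitrary virtual objects. A local calculation could at most calibrate a constant whose universality you have not shown.

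\emph{Your convolution weight is also wrong.} The condition $\sum_{jk=n}c(k)j^{-2}=n$ forces $c=\mathrm{id}\ast(\mu\cdot\mathrm{id}^{-2})$, for instance $c(2)=7/4$. Your $\mathrm{id}\ast\mathrm{id}^2\ast\mu$ would give $c(2)=5$.

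None of this recomputation is needed, and the paper does none of it. In the $K$-degree-zero one-insertion case, \cite[Prop.~3.4]{Chou} already \emph{is} the formula
$B_A(\Gamma)=\sum_{kA'=A}\mu(k)k\,\chi\bigl(\overline{\mathcal M}_{0,1}(Z',A'),\ev_1^*\mathcal E\otimes\mathcal O^{\rm vir}\bigr)$
for an integral lift $\mathcal E$ with $\operatorname{ch}\mathcal E=\Gamma$. So your fallback of reading the weights off from Chou is the entire proof. What remains is to check its hypotheses for $Z$ and $A=i_*\beta$:
\begin{itemize}
\item $Z$ is a smooth projective fourfold with $c_1(TZ)=2\xi$ nef (Lemma~\ref{l3:lem:compactification}), and $c_1(TZ)\cdot i_*\beta=0$;
\item the complex codimension $2$ of $\varpi_Z^*\gamma$ equals the virtual dimension;
\item the lift is exact, which you correctly prove as Lemma~\ref{l3:lem:qk-ch}.
\end{itemize}
Then Lemma~\ref{l3:lem:lattice} writes every root $A'$ as $i_*\alpha$ with the same effectivity. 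The stack isomorphism of Lemma~\ref{l3:lem:no-infinity}, via \eqref{l3:eq:qk-comparison}, identifies $B_{i_*\beta}(\varpi_Z^*\gamma)$ with $b_\beta(\gamma)$. Your Dirichlet inversion between $\mathrm{id}$ and $\mu\cdot\mathrm{id}$, which is correct, then gives \eqref{l3:eq:qk-inverse}. Your identity-sector computation is right but becomes unnecessary.
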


Put $\kappa_m=\sum_{|\alpha|=m}\kappa_\alpha$.  Finiteness of each sum and
Theorem~\ref{l3:thm:qk-strengthening} give
\[
 n_m(\gamma)=\sum_{k\mid m}\mu(k)k\,\kappa_{m/k},\qquad
 \kappa_m=\sum_{k\mid m}k\,n_{m/k}(\gamma).
\]
For example the BPS values of Table~\ref{l3:tab:coefficients} give
$\kappa_1=24$, $\kappa_2=-24$, $\kappa_3=336$, and $\kappa_4=-1536$.
These values follow from the formula; they were not computed independently by localization. The identification of the
quantum-$K$ BPS transforms with the BPS coefficients extracted from the
sunset expansion is provided by the coefficient bridge
Theorem~\ref{l3:thm:coefficient-bridge}, in particular
$$
f_2(Q)=-\theta_Q\mathcal F_\gamma(Q).
$$
Thus Theorem~\ref{l3:thm:qk-strengthening} supplies a quantum-$K$ realization
of the BPS transforms already identified with the sunset coefficients by the
coefficient bridge.
\subsection{Proofs for the quantum-\texorpdfstring{$K$}{K} specialization}

\subsubsection*{The exact integral Chern-character lift}

\begin{proof}
The relation $p_i^2=0$ on $(\mathbb P^1)^4$ remains true after restriction
and pullback. 
Hence
$$ \operatorname{ch}(\mathcal E_\gamma) =\sum_{i<j}\varpi_Z^*(p_i p_j) =\varpi_Z^*\gamma. $$
Thus \(\mathcal E_\gamma\) is an integral \(K\)-theory lift
of the degree-four cohomology class $\varpi_Z^*\gamma$ appearing in the
one-insertion specialization of Chou's formula~\cite[Prop.~3.4]{Chou}.

\end{proof}

\subsubsection*{The explicit fourfold specialization and inversion}

\begin{proof}
We apply~\cite[Prop.~3.4]{Chou},  with target $Z$,
class $A=i_*\beta$, cohomology insertion $\varpi_Z^*\gamma$, and its exact
$K$-lift~\eqref{l3:eq:qk-lift}.  The hypotheses are checked as follows.
The target is a smooth projective fourfold, with \(c_1(TZ)=2\xi\), which is nef, and  \(c_1(TZ)\cdot A=0\).
 (It also satisfies symplectic semipositivity since its
$c_1$ takes even values on integral classes, as explained in
Remark~\ref{l3:rem:chou}.)  The insertion
has real cohomological degree $ \deg_{\mathbb{R}}(\varpi_Z^*\gamma)=4$.
For the fourfold $Z$ and $c_1(TZ)\cdot A=0$, one has $\vdim_\C\overline{\mathcal M}_{0,1}(Z,A)=(\dim_\C Z-3)+c_1(TZ)\cdot A+1=2$. Since $\deg_{\mathbb R}(\varpi_Z^*\gamma)=4$, the insertion has complex codimension~$2$, so the resulting invariant is numerical.
Lemma~\ref{l3:lem:qk-ch} supplies an integral \(K\)-theory lift of Chou's cohomological insertion, with no rational denominators or unwanted higher-degree components.
His $K$-degree-zero formula is precisely
\[
 B_A(\varpi_Z^*\gamma)
   =\sum_{kA'=A}\mu(k)k\,
     \chi\left(\overline{\mathcal M}_{0,1}(Z,A'),
       \ev_1^*\mathcal E_\gamma\otimes\mathcal O^{\rm vir}\right).
\]
The lattice and effectivity statement of Lemma~\ref{l3:lem:lattice} identifies
$A'=i_*\alpha$ in every term;~\eqref{l3:eq:qk-comparison} then proves
\eqref{l3:eq:qk-formula}.  The sum is finite because $k$ divides
$H\cdot\beta>0$.  Its terms and coefficients are integral.
Finally, the arithmetic functions $k\mapsto k$ and $k\mapsto\mu(k)k$ are
Dirichlet-convolution inverses, which gives~\eqref{l3:eq:qk-inverse}.
\end{proof}

\section{Two, three and four loops, and the local--relative comparison}
\label{app:loops}
This section collects the comparisons with the adjacent loop orders. The
first part is the two-loop $L=2$ comparison which is not covered by the hypotheses
used in this paper, and the point of the comparison is to separate the cover
exponent from the lower-loop restrictions on curve lattices and on
threshold convergence. The second part records the $L=3$ and $L=4$
normalizations against the published coefficient tables. The third records
the Euler factors relating the local invariants to relative ones.

\subsection{The two-loop threefold with zero or one primary marking}
\label{app:L2comparison}

In this section $r$ denotes the degree of a multiple cover, not the loop
number or the number of insertions. At two loops the sunset integral is
related to the prepotential of the local Calabi--Yau threefold~\cite{BKV},
whose multiple-cover formula has weight $r^{-3}$, whereas here we use the
weight $r^{-2}$. The two weights belong to different observables rather than
to different genus-zero cover laws: for primary invariants with $\ell$ marked
insertions the weight is $r^{\ell-3}$ (see~\cite[equations (2) and (5)]{KP}
and~\Ithmref{thm:IP}), which gives $r^{-3}$ for the unmarked prepotential
($\ell=0$) and $r^{-2}$ for the one-point invariants ($\ell=1$).

Let $X_2$ be a smooth Calabi--Yau threefold with proper positive-degree
stable-map moduli, and let $H$ be an integral divisor. This includes the
local two-loop setting when its positive-degree local invariants are
formed using the actual curve lattice of the base surface. Only the
positive-degree part of the prepotential is considered. Define
\[
 U_\beta=\int_{[\overline{\mathcal M}_{0,0}(X_2,\beta)]^{\vir}}1,
 \qquad
 U_\beta=\sum_{r\alpha=\beta}\frac{\nu_\alpha}{r^3}.
\]
The unmarked virtual dimension is zero. The one-divisor invariant,
in contrast, is defined on the one-marked stack, of virtual dimension
one. The primary divisor equation gives the exact comparison
\begin{equation}\label{eq:L2markedcover}
\begin{split}
 P_{H,\beta}&:=\langle H\rangle_{0,1,\beta}
   =(H\cdot\beta)U_\beta\\
   &=\sum_{r\alpha=\beta}
               \frac{(H\cdot\alpha)\nu_\alpha}{r^2}.
\end{split}
\end{equation}
In the last equality $H\cdot\beta=r(H\cdot\alpha)$ for the indicated
covering root. Thus the \emph{same threefold} already has the one-point
$r^{-2}$ transform, with
\[
 b_\alpha(H)=(H\cdot\alpha)\nu_\alpha.
\]
This is not an identification of the inserted integers with the raw
unmarked numbers $\nu_\alpha$. The divisor identity itself does not
prove integrality of the unmarked transform; it compares the two
normalizations whenever those quantities are defined.

Equivalently, define the logarithmic Novikov derivative by
$\theta_H\mathbf Q^\beta=(H\cdot\beta)\mathbf Q^\beta$.
As positive-degree formal potentials one has
\begin{align}
 \Phi_0^+(\mathbf Q)
    &=\sum_{\alpha>0}\nu_\alpha\Li_3(\mathbf Q^\alpha),
       \label{eq:L2Li3}\\
 \theta_H\Phi_0^+(\mathbf Q)
    &=\sum_{\alpha>0}(H\cdot\alpha)\nu_\alpha
                             \Li_2(\mathbf Q^\alpha).
       \label{eq:L2Li2}
\end{align}
The class variable is the actual integral curve class. A later
one-variable specialization may aggregate classes of fixed $H$-degree,
but only after their lattice and the finiteness of that aggregation
have been specified. Under the one-variable specialization
$\mathbf Q^\beta\mapsto Q^{H\cdot\beta}$ the operator $\theta_H$ becomes
$\theta_Q=Q\,d/dQ$, which is the form used in~\eqref{l3:eq:two-loop-bridge}.

\subsection{Why the higher-loop insertion gives the same power}

For the local $(L+1)$-fold of this paper, genus-zero virtual dimension is
\[
 \operatorname{vdim}_{\C}\overline{\mathcal M}_{0,\ell}(X_L,\beta)
      =L-2+\ell.
\]
At $L=2$, the unmarked space has virtual dimension zero and supplies
an uninserted numerical invariant. At $L>2$, its virtual dimension is
positive; integrating the unit alone does not supply the numerical
curve count in question. We use one marking and
\[
 \delta_0=\frac{H^{L-1}}{(L-1)!},\qquad
 \operatorname{codim}_{\C}(\delta_0)=L-1
 =\operatorname{vdim}_{\C}\overline{\mathcal M}_{0,1}(X_L,\beta).
\]
Although $\delta_0$ has degree $L-1$, it is one insertion at one marking.
For instance, $H^2$ at a single marking and two divisor insertions at two
markings are different observables. Hence $\ell=1$, and the primary weight
is $r^{-2}$ for every $L\geq3$.

The extra factor of $r$ is visible geometrically for an isolated rational
curve in a threefold. A degree-$r$ cover contributes $r^{-3}$ through the
Aspinwall--Morrison obstruction integral, and a marked incidence condition
has $r$ preimages under the cover~\cite[\S1.1, equation (5)]{KP}. Equation~\eqref{eq:L2markedcover} makes
this comparison exact, without counting automorphisms. For moving curves in
higher dimensions, incidence conditions on the image family account for the
codimension of the insertion, not for additional markings. The integrality
proof does not rely on any such local model.

\subsection{What the comparison does not identify}

The two-loop sunset Feynman integral is related to the local prepotential by
the mirror-symmetry and Legendre-transform construction of
Bloch--Kerr--Vanhove~\cite[introduction, equations (1.5)--(1.6)]{BKV}.
The Feynman integral itself is not being identified with $\Phi_0^+$, nor does
eq.~\eqref{eq:L2Li2} say that higher-loop Feynman integrals are obtained by simply
differentiating the two-loop Feynman integral. The comparison concerns primary
potentials. It supplies no multiple-cover law for the descendants
$\psi^a\delta_a$, $a>0$, appearing in the full assembly
\Ieqref{eq:allloop}.

There are also reasons why $L=2$ cannot simply be substituted into every theorem of this paper. The weak-Lefschetz proof of
$H_2(F_L,\Z)\cong H_2((\PP^1)^{L+1},\Z)$ requires $2<L$.
For a surface at $L=2$ it does not give that isomorphism; its actual
curve lattice must be used instead. Independently, at the physical
threshold $s=(L+1)^2$ the local simplex density is proportional to
$\|u-u_*\|^{-2}$ in real dimension $L$. At $L=2$ its radial integral
is proportional to $\int_0^\varepsilon dr/r$ and diverges logarithmically.
For $L\ge3$ it is integrable, as used in~\Ipropref{prop:analyticdomain}. These lattice and endpoint
restrictions are not the origin of the primary exponent change.

\begingroup
\small
\subsection{Curve classes and normalizations at two, three and four loops}

For the purposes of this section set
\[
 Y_L=(\mathbb P^1)^{L+1},\quad
 F_L\in|\mathcal O_{Y_L}(1,\ldots,1)|\text{ smooth},\quad
 X_L=\Tot(K_{F_L}),\quad \gamma_L=\frac{H^{L-1}}{(L-1)!}.
\]
Adjunction gives $K_{F_L}=-H$ and $\int_{F_L}H^L=(L+1)!$.
Moreover $H^k/k!=e_k(p_1,\ldots,p_{L+1})$, so $\gamma_L$ is integral.
Its codimension is $L-1=\vdim\overline{\mathcal M}_{0,1}(X_L,\beta)$.
Here and below all curve classes have positive $H$-degree.

\subsection{The surface case and its curve classes}

The surface $F_2$ is a del Pezzo surface of degree six, hence isomorphic to
$\mathrm{Bl}_3\mathbb P^2$ with the three points noncollinear.
The restricted ambient divisor classes span a rank-three sublattice,
whereas $H_2(F_2,\Z)$ has rank four. Consequently an ambient degree
$\mathbf d=(p_1\cdot\beta,p_2\cdot\beta,p_3\cdot\beta)$ need not specify
$\beta$. We use the coarsened invariants
\[
 N_{\mathbf d}(H)=\sum_{j_*\beta=\mathbf d}N_\beta(H),\qquad
 N_m(H)=\sum_{H\cdot\beta=m}N_\beta(H).
\]
The sums are finite because the effective cone of this del Pezzo surface is
rational polyhedral, generated by its six $(-1)$-curves, and $H$ is ample.
The zero locus on the ambient stable-map stack is the disjoint union
of the stacks indexed by all these lifts. The virtual comparison must be
applied to this union. Convexity of $\mathcal O(H)$, concavity of
$\mathcal O(-H)$ in positive degree, the arbitrary-base factorization,
and the compatibility of obstruction theories used in
Sections~\Iref{sec:local} and~\Iref{sec:mirror}
apply to each summand. The two-class modification and extraction then
compute the ambient numerical pairings of this coarsened theory; no
identification of the full curve lattice with the ambient lattice is used.

This is the sextic del Pezzo geometry of~\cite[Theorem 6.1 and \S6.5]{BKV}.
It is toric, whereas $F_3$ has $b_3=2$ by Remark~\ref{l3:rem:odd-cohomology}, and hence
is not a smooth projective toric variety. Thus the toric-target mirror
theorem applies directly to the surface, while at three loops we apply
it to $Y_3$ and use the two twists to obtain the local hypersurface theory.

\subsection{A coefficient calculation with fixed normalizations}

In this section use the same period sign at each loop order:
\begin{equation}\label{l3:eq:adjacent-conventions}
 A_L(y)=\sum_{m\geq0}(-1)^m c_m^{(L)}y^m,\quad
 c_m^{(L)}=\sum_{|\mathbf d|=m}\binom m{\mathbf d}^{\!2},\quad
 \pi_L(s)=-yA_L(y),\quad y=-s^{-1}.
\end{equation}
Set $g_L=\sum_{m\geq1}(-1)^mc_m^{(L)}y^m/m$,
$T_L=\log y+g_L$, $Q=ye^{g_L}$, and $\theta_y= y\,d/dy$.
Thus $\theta_y T_L=A_L$ and $dT_L/ds=-\pi_L$.
The coefficient denoted $K_2^{(L)}$ is the scalar multiplying
$H^2z^{-2}$ after multiplication of the local $I$-function by
$e^{-HT_L/z}$. The coefficient of $z^{-1}$ after this normalization is zero.
For $L=2,3,4$, numerical primary extraction gives
\begin{equation}\label{l3:eq:adjacent-primary}
 \mathcal F_{\gamma_L}=-L(L+1)K_2^{(L)}.
\end{equation}
For $L\geq3$ this is~\Ithmref{thm:extraction}; at $L=2$ use the coarsened
virtual comparison just described. Equivalently, pair the proper primary
evaluation cycle with $H$, before removing the equivariant parameter.
The factor is $-\int_{F_2}H^2=-6$, and all insertions in this pairing
are ambient. This does not require ambient surjectivity in degree two.

The displayed coefficients are computed by finite sums as follows.
Put $h_j=\sum_{r=1}^j1/r$, $h_0=0$, and
\[
 b_L(y)=\sum_{m\geq1}(-1)^my^m
 \sum_{|\mathbf d|=m}\binom m{\mathbf d}^{\!2}
 \left(2h_m-\frac{2}{L+1}\sum_{i=1}^n h_{d_i}\right).
\]
The coefficient identity of Lemma~\ref{l3:lem:qserre}, with $L+1$ factors,
gives
\begin{equation}\label{l3:eq:adjacent-algorithm}
 \theta_QK_2^{(L)}=\frac{b_L}{A_L}-g_L,\qquad
 \theta_y K_2^{(L)}=b_L-A_Lg_L,\qquad K_2^{(L)}(0)=0.
\end{equation}
These formulas follow by expanding the finite Euler products to first
order in the divisor classes and using permutation symmetry; in the
average each $p_i$ is replaced by $H/(L+1)$.
Integrating the second formula coefficientwise and substituting the
unique inverse of $Q=ye^{g_L}$ determines $K_2^{(L)}$ over $\Q$.
The multinomial bound $c_m^{(L)}\leq(L+1)^{2m}$ and the harmonic-number
bound give convergent germs. For $L=2$ this yields
\[
 K_2^{(2)}=-Q+\frac74Q^2-\frac{28}{9}Q^3+\frac{135}{16}Q^4
 -\frac{626}{25}Q^5+\frac{751}{9}Q^6+O(Q^7).
\]

The two-loop operator, acting on $G(y)=I^{(2)}_\circleddash(-1/y)/y$, is
\begin{equation}\label{l3:eq:two-loop-operator}
 P_2=\theta_y^2+y(10\theta_y^2+10\theta_y+3)+9y^2(\theta_y+1)^2,
 \qquad P_2G=6.
\end{equation}
If instead $I^{(2)}_\circleddash(s)=s^{-1}v(-s^{-1})$, then $v=-G$ and $P_2v=-6$.
For a direct normalization of its Frobenius solutions, set $a_0(\rho)=1$,
$a_{-1}(\rho)=0$ and, for $m\geq1$, impose
\[
 (m+\rho)^2a_m(\rho)
 +[10(m-1+\rho)^2+10(m-1+\rho)+3]a_{m-1}(\rho)
 +9(m-1+\rho)^2a_{m-2}(\rho)=0.
\]
Then $U(\rho,y)=y^\rho\sum_{m\geq0}a_m(\rho)y^m$ satisfies
$P_2U=\rho^2y^\rho$. Its first two jets are
$U(0,y)=A_2$ and $\partial_\rho U(0,y)=A_2\log y+b_2$;
these coefficient identities follow by substitution in the displayed
recurrence, or the Bessel symmetric-power calculation with three factors.
The boundary calculation below selects $G=3\partial_\rho^2U(0,y)$.
Collecting powers of $R=T_2$ therefore gives
\begin{equation}\label{l3:eq:two-loop-bridge}
 I^{(2)}_\circleddash(s)=\pi_2(s)(-3R^2+f_1^{(2)}(Q)R+f_0^{(2)}(Q)),\qquad
 f_1^{(2)}=-6\left(\frac{b_2}{A_2}-g_2\right)
          =\theta_Q\mathcal F_H.
\end{equation}
In particular this is an identity of germs, independently of any finite
coefficient check.

The divisor equation gives $N_m(H)=mN_m^{\rm loc}$. Define the ordinary
threefold multiple-cover coefficients by
$N_m^{\rm loc}=\sum_{k\mid m}n_{m/k}/k^3$. Exact rational expansion gives
\begin{center}
\begin{tabular}{@{}rrrr@{}}\toprule
$m$&$[Q^m]f_1^{(2)}$&$N_m(H)$&$n_m$\\\midrule
1&6&6&6\\
2&$-21$&$-21/2$&$-6$\\
3&56&$56/3$&6\\
4&$-405/2$&$-405/8$&$-12$\\
5&$3756/5$&$3756/25$&30\\
6&$-3004$&$-1502/3$&$-84$\\
7&$86442/7$&$86442/49$&252\\
8&$-209301/4$&$-209301/32$&$-816$\\\bottomrule
\end{tabular}
\end{center}
The next three values are $2790,-9930,36498$.
In degree one the six $(-1)$-curves each have local normal bundle
$\mathcal O(-1)\oplus\mathcal O(-1)$ and contribute one.
No all-degree integrality statement at two loops is proved here.

\subsection{Two-loop boundary normalization}\label{app:L2boundary}

The Schwinger integral converges for $\operatorname{Re}s<9$, but diverges
at $s=9$. Near the symmetric point of the two-dimensional simplex its
positive integrand is comparable to $|u-u_*|^{-2}$.
The asymptotic normalization at negative infinity is
\begin{equation}\label{l3:eq:two-loop-boundary}
 I^{(2)}_\circleddash(-t)=\frac{3\log^2t}{t}+O(t^{-2}(1+\log t)^2).
\end{equation}
For completeness this follows from the Mellin method of
Section~\ref{app:analytic3} without numerical boundary fitting.
With $M_2(\nu)=\int e^{-V}W^\nu\prod_{i=1}^3dx_i/x_i$ and
$\mathcal K_3(w)=\int_0^\infty x^{w-1}K_0(x)^3dx$,
\[
 M_2(\nu)=\frac{16\,4^{-\nu}}{\Gamma(\nu)}\mathcal K_3(2\nu),\qquad
 \int_0^\infty t^{z-1}I^{(2)}_\circleddash(-t)dt=\Gamma(z)M_2(1-z).
\]
For $c=\log2-\gamma_E$, the polar part of $\mathcal K_3(w)$ at zero is
$\sum_{j=0}^3\binom3j j!c^{3-j}w^{-j-1}$.
Consequently the polar part of $\Gamma(1-\nu)M_2(\nu)$ is $6\nu^{-3}$:
the linear exponential cancels, and
\[
 \frac{\Gamma(1-\nu)}{\Gamma(1+\nu)}e^{-2\gamma_E\nu}
 =1+\frac23\zeta(3)\nu^3+O(\nu^5).
\]
There are no $\nu^{-2}$ or $\nu^{-1}$ terms. The convergent regular-singular
expansion and the cutoff argument of Section~\ref{app:analytic3}
then give~\eqref{l3:eq:two-loop-boundary}. Thus both homogeneous connection
constants relative to $3\partial_\rho^2U(0,y)$ vanish.

The Gamma-class computation is consistent with this analytic result:
$c(TF_2)=1+H+H^2$ and $c(TX_2)|_{F_2}=(1+H+H^2)(1-H)=1$.
In general for a del Pezzo surface $S$,
\[
 \int_S c_2(T_{\Tot(K_S)}|_S)=\chi_{\rm top}(S)-K_S^2=12-2K_S^2.
\]
This is an equality of characteristic numbers. Its vanishing at degree
six does not assert the same boundary normalization for other local surfaces.

\subsection{Four loops}\label{l3:app:four-loop}

For $L=4$, $\int_{F_4}H^4=120$ and $\gamma_4=H^3/6$. The general results in
\Ithmref{thm:extraction}, \Ithmref{thm:PF} and~\Ithmref{thm:laplace} and
\Icorref{cor:nfive} apply since $F_4$ is smooth and ample in $Y_4$,
$L\geq3$, and the insertion is ambient. Full even-cohomology surjectivity is not needed; indeed $b_4(F_4)=28$, whereas the ambient image in
degree four has dimension 10, by~\Ipropref{prop:ambient}. The operator
acting on $I^{(4)}_\circleddash(-1/y)/y$ is the operator
$P_4=\theta_y^4+\sum_{j=1}^3y^jQ_j(\theta_y)$ of the introduction, in which
$Q_3(\theta_y)=225(\theta_y+1)^2(\theta_y+2)^2$, and
\[
 P_4\left(\frac{I^{(4)}_\circleddash(-1/y)}y\right)=120.
\]
Its finite nonzero singularities in the $s$ coordinate are $1,9,25$.
Formula~\eqref{l3:eq:adjacent-algorithm} gives
\[
 K_2^{(4)}=-3Q+\frac{45}{4}Q^2-\frac{211}{3}Q^3
 +\frac{10029}{16}Q^4-\frac{164028}{25}Q^5+O(Q^6),
\]
\[
 \mathcal F_{\gamma_4}=-20K_2^{(4)}
 =60Q-225Q^2+\frac{4220}{3}Q^3-\frac{50145}{4}Q^4
 +\frac{656112}{5}Q^5-1529195Q^6+O(Q^7).
\]
With the negative period $\pi_4$ of~\eqref{l3:eq:adjacent-conventions},
\begin{equation}\label{l3:eq:four-loop-full}
 I^{(4)}_\circleddash(s)=\pi_4(s)\left[-5R^4+\sum_{j=0}^3f_j^{(4)}(Q)R^j
 +80\zeta(3)\left(R-\frac1{20}\theta_Q\mathcal F_{\gamma_4}\right)\right],
\end{equation}
where the $f_j^{(4)}$ have rational coefficients and zero constant terms,
and
\[
 f_3^{(4)}=\theta_Q\mathcal F_{\gamma_4}
 =60Q-450Q^2+4220Q^3-50145Q^4+656112Q^5+O(Q^6).
\]
This is~\Icorref{cor:nfive} with $\pi_4=-\Pi_4$; the rational
coefficients here are the negatives of its $h_j$.
Thus with a period asymptotic to $1/s$, the leading rational coefficient
is $(-1)^{L+1}(L+1)$ and the next one satisfies
\begin{equation}\label{l3:eq:uniform-sign}
 f_{L-1}^{(L)}=(-1)^L\theta_Q\mathcal F_{\gamma_L}.
\end{equation}
For $L\geq3$ this is~\Ithmref{thm:assembly}; the two-loop case is
\eqref{l3:eq:two-loop-bridge}.

In particular the four-loop boundary contribution is nonzero:
\[
 I^{(4)}_\circleddash(-1/y)=y\bigl(5\log^4y-80\zeta(3)\log y\bigr)
       +O(y^2(1+|\log y|)^4).
\]
Although the degree-four part of $\widehat\Gamma_{X_4}|_{F_4}$ vanishes,
its degree-three part is $\frac23\zeta(3)H^3$. Multiplication by the
degree-one part $-H\log y$ of $y^{-H}$ contributes
$-\frac23\zeta(3)\int_{F_4}H^4\log y=-80\zeta(3)\log y$.
This also explains why the boundary term in~\eqref{l3:eq:four-loop-full}
contains a logarithmic homogeneous solution, rather than just a constant
multiple of the holomorphic period.

The dilogarithmic transform, defined by
$N_m(\gamma_4)=\sum_{k\mid m}b_{m/k}/k^2$, starts
\[
 b_m=60,-240,1400,-12480,131220,-1529520,19192260,-254513280,\ldots.
\]
These displayed values follow by rational M\"obius inversion of
\eqref{l3:eq:adjacent-algorithm}. The all-degree arithmetic statement for the distinguished insertion
is~\Icorref{cor:primaryintegral}, with its explicitly stated algebraic
quantum-$K$ input (see~\Ithmref{thm:chou}).

\subsection{Euler factors and the local--relative comparison}\label{l3:app:rel-loc}

For an anticanonical divisor class $D$ and positive integer
$k=D\cdot\beta$, the finite-product identity is
\[
 \prod_{a=0}^{k-1}(-D-az)=(-1)^kD\prod_{a=1}^{k-1}(D+az).
\]
It is an equality in the cohomology-valued coefficient ring.
Multiplication by the nilpotent class $D$ cannot be inverted there.
Moreover replacing $(k-1)!$ by $k!$ multiplies a scalar coefficient
by the integer $k$; it does not insert the cohomology class $D$.
The degree-zero term is specified separately by the normalization of
the generating function.

There is also a precise numerical correspondence. Choose a smooth $D\in|H|$ on $F_L$; it exists by Bertini's theorem, since
$|H|$ is the restriction of the very ample system $|\mathcal O(1,\ldots,1)|$
and is therefore basepoint free. The target $F_L$ is smooth projective, $D$ is
smooth and ample, $\mathcal O_{F_L}(-D)=K_{F_L}$, and $k=D\cdot\beta>0$. Therefore
\cite[Theorem 1.1 and Corollary 1.2]{vGGR} apply to $\Tot(\mathcal O_{F_L}(-D))=X_L$.
With one ordinary marking carrying the primary insertion $\gamma_L$
and one marking of maximal contact $k$, they give
\[
 R_\beta(\gamma_L)=(-1)^{k+1}kN_\beta(\gamma_L).
\]
Here $R_\beta(\gamma_L)$ is the genus-zero logarithmic invariant of the pair
$(F_L,D)$ in class $\beta$, with one interior marking carrying $\gamma_L$ and one
marking of maximal contact order $k=D\cdot\beta$ with $D$, carrying no
insertion;
the ordinary primary insertion pulls back under the forgetful morphism,
as required by the cited corollary. In particular, summing at degree $m$
in the three-loop case gives the numerical consequence
\[
 d_2(m)=(-1)^m\sum_{H\cdot\beta=m}R_\beta(\gamma).
\]
This uses the virtual-cycle correspondence and the established coefficient
bridge.

\endgroup

\section{Reconstruction from primary evaluation
classes}\label{ext:reconstruction} Throughout this section we write $\Dcal_a$
for the diagonal series $\Dcal_a^\Delta$, as in Section~\ref{app:analytic3}, so
that $\Dcal_0=\mathcal F_{\gamma_L}$, and
$\theta_Q^{-1}\sum_{m>0}c_mQ^m=\sum_{m>0}(c_m/m)Q^m$. Here $L=3$ or $L=4$,
$n=L+1$, $F_L\subset(\mathbb P^1)^{L+1}$ is the smooth $(1,\ldots,1)$
hypersurface and $X_L=\Tot(K_{F_L})$. For a positive curve class $\beta$ write
$|\beta|=m_\beta=H\cdot\beta$ (both notations are used) and
$\beta_i=p_i\cdot\beta$, the integral curve-lattice coordinates
of~\Ipropref{prop:geometry}. The \emph{primary evaluation divisor} is the proper
one-point evaluation class
\[
 D_\beta=C^0_{0,\beta}=(\ev_{F_L})_*\bigl(c_{m_\beta-1}(\mathcal V)\cap
 [\overline{\mathcal M}_{0,1}(F_L,\beta)]^\vir\bigr)\in H^2(F_L,\Q)
\]
of~\Ieqref{eq:Cdef}, written $D_\beta=\sum_id_{\beta,i}p_i$; thus
$\int_{F_L}D_\beta\,b$ is the one-point primary invariant of $X_L$ in class
$\beta$ with insertion $\varpi^*b$, for $b\in H^{2L-2}(F_L,\Q)$.
Section~\Iref{ext:sec:primary} of the companion paper gives its full
multidegree extraction, in particular~\Ithmref{ext:thm:primary}, and its integral cover
transform
\[
 B_\beta=\sum_{r\mid\beta}\frac{\mu(r)}{r^2}D_{\beta/r},\qquad
 D_\beta=\sum_{r\mid\beta}\frac1{r^2}B_{\beta/r},
\]
where $r\mid\beta$ means that $\beta/r$ is an integral class. The classwise
one-point descendants are, as in~\Ieqref{eq:descdef},
\[
 N_{a,\beta}=\int_{[\overline{\mathcal M}_{0,1}(X_L,\beta)]^\vir}
 \psi^a\ev^*\varpi^*\frac{H^{L-1-a}}{(L-1-a)!},\qquad 0\le a\le L-1,
\]
so that $\Dcal_a=\sum_{\beta>0}N_{a,\beta}Q^{|\beta|}$.
All sums $\alpha+\gamma=\beta$ below are ordered sums of positive effective
classes.

By~\Ithmref{thm:CYcover} of the companion paper there are a smooth
connected projective $(L+1)$-fold $\widetilde Z$ with an actual isomorphism
$K_{\widetilde Z}\cong\mathcal O_{\widetilde Z}$, a retraction
$q\colon\widetilde Z\to F_L$, and an embedding $j_+\colon F_L\hookrightarrow\widetilde Z$
onto a divisor $F_{L,+}$ with normal bundle $K_{F_L}=\mathcal O_{F_L}(-H)$, such
that for every positive $\beta$ and every number of markings the stable-map
stack of $\widetilde Z$ in class $j_{+*}\beta$ is identified, together with its
standard obstruction theory and virtual class, with that of $X_L$ in class
$\beta$. We fix the embedding $j_+$ throughout this section. A bracket
\[
 \langle\tau_{a_1}(\mu_1),\ldots,\tau_{a_k}(\mu_k)\rangle_\beta
 =\langle\tau_{a_1}(q^*\mu_1),\ldots,\tau_{a_k}(q^*\mu_k)\rangle^{\widetilde Z}_{0,k,j_{+*}\beta},
\]
with classes $\mu_l$ on $F_L$, denotes a genus-zero descendant invariant of
$\widetilde Z$; brackets with insertions given directly by classes on
$\widetilde Z$ are understood in the same way. By the marked virtual
comparison such a bracket equals the local invariant of $X_L$ in class $\beta$
with insertions $\varpi^*\mu_l$. A divisor $E$ on $F_L$ enters through $q^*E$,
and $q^*E\cdot j_{+*}\beta=E\cdot\beta$. The divisor, string, dilaton, and
genus-zero topological recursion relations, and the Lee--Pandharipande
relation recalled below, hold on the smooth projective target $\widetilde Z$;
we apply them there and transport the results.

Let $\{\mathsf b_k\}$ be a homogeneous basis of $H^*(\widetilde Z,\Q)$ and
$\{\mathsf b^k\}$ the Poincar\'e-dual basis, so that $\sum_k\mathsf b_k\otimes\mathsf b^k$
is the class of the diagonal. Since the evaluation maps of the sector factor
through $j_+$, $\ev_*[\overline{\mathcal M}_{0,1}(\widetilde Z,j_{+*}\alpha)]^\vir=j_{+*}D_\alpha$,
and therefore
\begin{equation}\label{ext:eq:contract}
 \sum_k\langle\mathsf b_k\rangle_\alpha\,\mathsf b^k=j_{+*}D_\alpha,
 \qquad
 j_+^*j_{+*}D_\alpha=c_1(K_{F_L})D_\alpha=-HD_\alpha .
\end{equation}
Only classes of even degree contribute, and on even cohomology the roles of
$\{\mathsf b_k\}$ and $\{\mathsf b^k\}$ may be exchanged. The second identity is the
self-intersection formula for the smooth divisor $F_{L,+}$ of the smooth
variety $\widetilde Z$, with $N_{F_{L,+}/\widetilde Z}\cong K_{F_L}$. Consequently an
insertion $j_{+*}D_\alpha$ in a bracket of class $\gamma$ contributes
$-HD_\alpha$, that is, $\langle\ldots,j_{+*}D_\alpha\rangle_\gamma=-\langle\ldots,HD_\alpha\rangle_\gamma$.
In particular we never invert an ambient intersection matrix, and we do not assume that ambient classes exhaust the middle cohomology of $F_4$.
The two-mark cotangent relation used below is the virtual relation of
Lee--Pandharipande~\cite[Corollary~1]{ext:LP}; their more restrictive
reconstruction theorem is not needed.
\subsection{All cohomological descendants from full primaries}
Here $n=4$, $N_{0,\beta}=3\sum_i d_{\beta,i}$, and the relevant
compact-target pairing is
\begin{equation}
 C(\alpha,\gamma)
 =\int_{\widetilde Z}j_{+*}D_\alpha\,j_{+*}D_\gamma
 =-\int_{F_3} HD_\alpha D_\gamma
 =-2\sum_{i\ne j}d_{\alpha,i}d_{\gamma,j}. \label{ext:pairfour}
\end{equation}
The minus sign is the normal bundle $K_{F_3}$.
\begin{theorem}[Fourfold recursion]\label{ext:thm:fourrec}
\begin{equation}
 N_{1,\beta}=-\frac2{m_\beta}N_{0,\beta}
   +\frac1{2m_\beta}\sum_{\alpha+\gamma=\beta}
                |\alpha||\gamma|C(\alpha,\gamma),\qquad
 N_{2,\beta}=-\frac12\sum_{\alpha+\gamma=\beta}C(\alpha,\gamma).
 \label{ext:recfour}
\end{equation}
\end{theorem}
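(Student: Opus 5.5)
The plan is to prove both identities on the compact Calabi--Yau fourfold $\widetilde Z$ of~\Ithmref{thm:CYcover}, where the divisor, string and dilaton equations and the Lee--Pandharipande relation are available. The results are then transported back to $X_3$ through the marked virtual comparison. In bracket notation, $N_{1,\beta}=\langle\tau_1(H)\rangle_\beta$ and $N_{2,\beta}=\langle\tau_2(\mathbf 1)\rangle_\beta$; here $\mathbf 1$ is $\delta_2=H^0/0!$ and $H$ is $\delta_1$. The one-point stack has virtual dimension $2$ and the two-point stack has virtual dimension $3$. The strategy is to trade one power of $\psi$ for boundary terms by adding a second marking carrying the divisor $q^*H$, whose degree on $j_{+*}\beta$ is $m_\beta$.

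\emph{Step 1 (adding a divisor marking).} The descendant divisor equation on $\widetilde Z$ gives
\[
\langle\tau_a(\mu),H\rangle_\beta=m_\beta\langle\tau_a(\mu)\rangle_\beta+\langle\tau_{a-1}(\mu H)\rangle_\beta .
\]
I will apply it with $(a,\mu)=(1,H)$ and with $(a,\mu)=(2,\mathbf 1)$. This expresses $m_\beta N_{a,\beta}$ through a two-point bracket plus a one-point bracket with one fewer $\psi$. For $a=1$ the correction is $\langle H^2\rangle_\beta=2N_{0,\beta}$. For $a=2$ it is $\langle\tau_1(H)\rangle_\beta=N_{1,\beta}$.

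\emph{Step 2 (Lee--Pandharipande).} On $\overline{\mathcal M}_{0,2}(\widetilde Z,j_{+*}\beta)$ I will use~\cite[Corollary~1]{ext:LP}, applied with the divisor $E=q^*H$:
\[
m_\beta\psi_1=\ev_2^*E-\ev_1^*E+\sum_{\alpha+\gamma=\beta}|\gamma|\,D(1,\alpha\,|\,2,\gamma).
\]
I will cap this with $\psi_1^{a-1}\ev_1^*\mu\,\ev_2^*E$. Then I will split every boundary term by the diagonal, using $\sum_k\mathsf b_k\otimes\mathsf b^k$, and contract with~\eqref{ext:eq:contract}. In this way each node contributes $j_{+*}D_\alpha$ on one side, and pairing it with $j_{+*}D_\gamma$ on the other side gives $C(\alpha,\gamma)$ as in~\eqref{ext:pairfour}. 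The divisor equation at the second marking supplies the factors of $|\alpha|$ and $|\gamma|$. The non-boundary terms are again reduced by the divisor equation to one-point primaries, which are multiples of $N_{0,\beta}$ or sums $\sum_i d_{\beta,i}$. Degree-zero splittings do not appear: ordered sums run over positive classes, and the would-be constant components are unstable or vanish by the string equation.

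\emph{Step 3 (solving).} First I will apply the relation with $a=1$, $\mu=H$. This gives $m_\beta N_{1,\beta}$ as $-2N_{0,\beta}$ plus $\tfrac12\sum|\alpha||\gamma|C(\alpha,\gamma)$, where the $\tfrac12$ comes from symmetrizing the ordered sum. Then I will apply it with $a=2$, $\mu=\mathbf 1$, substitute the value of $N_{1,\beta}$ just obtained, and observe that the $N_{0,\beta}$ terms cancel. I expect to be left with $N_{2,\beta}=-\tfrac12\sum C(\alpha,\gamma)$; if the direct route does not produce this cleanly, I will instead use the dilaton equation to rewrite $\tau_2(\mathbf1)$.

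The main obstacle is the exact coefficient bookkeeping in Step 2. This means fixing the precise form and sign convention of the Lee--Pandharipande relation, checking which side of each node carries the second marking, and tracking the self-intersection sign $-H$ from~\eqref{ext:eq:contract}. I would first test the resulting formulas against the degree-one and degree-two diagonal values in the dictionary table: $N_{1,\beta}$ summed over $m=1$ should be $-48$. Only then would I fix the final normalizations.
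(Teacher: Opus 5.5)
Your argument is correct. It differs from the paper's only in the boundary relation it uses.

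\textbf{The two routes.} The paper inserts two copies of $H$ and applies genus-zero topological recursion for $\psi_1$ on $\overline{\mathcal M}_{0,3}(\widetilde Z,j_{+*}\beta)$. It evaluates $\langle\tau_1(H),H,H\rangle_\beta$ and $\langle\tau_2(1),H,H\rangle_\beta$ in two ways. You insert one copy of $H$ and use the divisor-weighted relation on $\overline{\mathcal M}_{0,2}$.

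\textbf{Your relation is right.} The sign, and the weight $|\gamma|$ on the side carrying the second marking, are correct. The relation is exactly TRR on $\overline{\mathcal M}_{0,3}$ capped with $\ev_3^*H$ and pushed forward along the map $\pi$ forgetting the third marking. This uses $\psi_1=\pi^*\psi_1+[D_{13}]$, where $D_{13}$ is the locus with markings $1,3$ on a contracted bubble. So your computation is the paper's with one divisor marking already integrated out.

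\textbf{What the bookkeeping gives.} Carrying it out yields
\[
\begin{aligned}
 m\bigl(mN_{1,\beta}+2N_{0,\beta}\bigr)&=\sum_{\alpha+\gamma=\beta}|\alpha||\gamma|^2C(\alpha,\gamma),\\
 m\bigl(mN_{2,\beta}+N_{1,\beta}\bigr)&=-4N_{0,\beta}-mN_{1,\beta}-\sum_{\alpha+\gamma=\beta}|\gamma|^2C(\alpha,\gamma).
\end{aligned}
\]
After cancelling common terms, these agree with the paper's comparisons of~\eqref{ext:eq:four-left1} with~\eqref{ext:eq:four-right1} and of~\eqref{ext:eq:four-left2} with~\eqref{ext:eq:four-right2}. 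Symmetrizing then gives both identities, and the $N_{0,\beta}$ terms cancel as you expected. For $a=1$ the non-boundary terms $\langle H,H^2\rangle_\beta-\langle H^2,H\rangle_\beta$ cancel outright, so the first identity is visibly a pure boundary contribution plus the divisor correction $2N_{0,\beta}$.

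\textbf{Two corrections.}
\begin{itemize}
\item For $a=2$, the terms $\langle\tau_1(1),H^2\rangle_\beta$ and $\langle\tau_1(1),\mathsf b_k\rangle_\alpha$ are reduced by the dilaton equation (factor $-1$), not the divisor equation. So the fallback of rewriting $\tau_2(1)$ is never needed.
\item The relation you use is not the form $\psi_1+\psi_2=[D_{1|2}]$ that the paper quotes as~\cite[Corollary~1]{ext:LP}. That form follows from yours by symmetrizing in the two markings. Either cite the divisor-weighted relation itself, or derive it from TRR as above.
\end{itemize}

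\textbf{What each route buys.} The paper's route needs only standard TRR together with the divisor and dilaton equations. Yours works with one fewer marking and isolates the boundary contribution more transparently.
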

\begin{proof}
All brackets are those of the compact target, with $L=3$, and $m=m_\beta$.
Applying the descendant divisor equation twice gives
\begin{align}
 \langle\tau_1(H),H,H\rangle_\beta&=m^2N_{1,\beta}+4mN_{0,\beta},
 \label{ext:eq:four-left1}\\
 \langle\tau_2(1),H,H\rangle_\beta&=m^2N_{2,\beta}+2mN_{1,\beta}+2N_{0,\beta};
 \label{ext:eq:four-left2}
\end{align}
for instance $\langle\tau_1(H),H\rangle_\beta=mN_{1,\beta}+\langle H^2\rangle_\beta
=mN_{1,\beta}+2N_{0,\beta}$. On the other hand, genus-zero topological recursion
expresses $\psi_1$ on $\overline{\mathcal M}_{0,3}(\widetilde Z,j_{+*}\beta)$ as the
boundary divisor separating the first marking from the other two. A degree-zero
component carrying the first marking and the node alone is unstable and does not
occur. The stable degree-zero component carrying the markings $2,3$ and the node
contributes, after contraction of the diagonal, $\langle H,H^2\rangle_\beta=2mN_{0,\beta}$
in the first case and $\langle\tau_1(1),H^2\rangle_\beta=-\langle H^2\rangle_\beta=-2N_{0,\beta}$
in the second, the latter by the dilaton equation with one other marking. For a
positive splitting $\alpha+\gamma=\beta$ with the first marking on $\alpha$, the
divisor equation gives $\langle\mathsf b^k,H,H\rangle_\gamma=m_\gamma^2\langle\mathsf b^k\rangle_\gamma$,
the divisor and dilaton equations give $\langle H,\mathsf b_k\rangle_\alpha=m_\alpha\langle\mathsf b_k\rangle_\alpha$
and $\langle\tau_1(1),\mathsf b_k\rangle_\alpha=-\langle\mathsf b_k\rangle_\alpha$, and
\eqref{ext:eq:contract} turns $\sum_k\langle\mathsf b_k\rangle_\alpha\langle\mathsf b^k\rangle_\gamma$
into $\int_{\widetilde Z}j_{+*}D_\alpha\,j_{+*}D_\gamma=C(\alpha,\gamma)$. Hence
\begin{align}
 \langle\tau_1(H),H,H\rangle_\beta&=2mN_{0,\beta}
   +\sum_{\alpha+\gamma=\beta}m_\alpha m_\gamma^2C(\alpha,\gamma),
 \label{ext:eq:four-right1}\\
 \langle\tau_2(1),H,H\rangle_\beta&=-2N_{0,\beta}
   -\sum_{\alpha+\gamma=\beta}m_\gamma^2C(\alpha,\gamma).
 \label{ext:eq:four-right2}
\end{align}
Since $C$ is symmetric, symmetrizing the ordered sums gives
$\sum m_\alpha m_\gamma^2C=\frac m2\sum m_\alpha m_\gamma C$ and
$\sum m_\gamma^2C+\sum m_\alpha m_\gamma C=\frac{m^2}2\sum C$. Comparing
\eqref{ext:eq:four-left1} with~\eqref{ext:eq:four-right1} gives the first identity
of~\eqref{ext:recfour}; substituting it into~\eqref{ext:eq:four-left2} and comparing
with~\eqref{ext:eq:four-right2} gives the second. The full compact-target diagonal is
contracted before restriction, so no primitive cohomology is discarded, and the
minus sign in~\eqref{ext:pairfour} is the normal bundle $K_{F_3}$.
\end{proof}
By Section~\Iref{ext:sec:primary} of the companion paper, on a coordinate ray
$D_{re_i}=(H-2p_i)/r^2$ has zero self-pairing, and one recovers
$(N_0,N_1,N_2)=(6/r^2,-12/r^3,0)$.

\begin{theorem}[One diagonal primary series suffices]
\label{ext:thm:Bdiag}
With zero constant terms,
\begin{equation}
 \Dcal_2=\frac{\Dcal_0^2}{12},\qquad
 \theta_Q \Dcal_1=-2\Dcal_0-\frac{(\theta_Q \Dcal_0)^2}{12}. \label{ext:diagfour}
\end{equation}
If $\Dcal_0=\sum_{m\geq1}n_m\Li_2(Q^m)$, then
\[
 \Dcal_1=-2\sum_{m\geq1}\frac{n_m}{m}\Li_3(Q^m)
 -\frac1{12}\theta_Q^{-1}
       \left(\sum_{m\geq1}mn_m\Li_1(Q^m)\right)^2,\]
       \[
 \Dcal_2=\frac1{12}\left(\sum_{m\geq1}n_m\Li_2(Q^m)\right)^2 .
\]
\end{theorem}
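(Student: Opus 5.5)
The plan is to sum the classwise fourfold recursion of Theorem~\ref{ext:thm:fourrec} over all classes $\beta$ with $|\beta|=m$. The diagonal identities~\eqref{ext:diagfour} should then come out of two facts: the right-hand sides of~\eqref{ext:recfour} are bilinear in the evaluation divisors $D_\alpha,D_\gamma$, and the degree-aggregated evaluation divisor is a multiple of $H$. There are only finitely many effective classes of each $H$-degree, since the coordinates $\beta_i=p_i\cdot\beta$ are nonnegative and sum to $m$. Moreover, as $\beta$ runs over classes of degree $m$, the ordered splittings $\alpha+\gamma=\beta$ run exactly once over all ordered pairs $(\alpha,\gamma)$ of positive classes with $|\alpha|+|\gamma|=m$. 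Hence
\[
\sum_{|\beta|=m}\ \sum_{\alpha+\gamma=\beta}F(\alpha,\gamma)=\sum_{a+c=m}\ \sum_{|\alpha|=a,\,|\gamma|=c}F(\alpha,\gamma)
\]
for any function $F$ of the pair.

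The key input is the aggregated divisor $\bar D_a:=\sum_{|\alpha|=a}D_\alpha$. Since $D_\alpha=C^0_\alpha$, Proposition~\ref{l3:prop:potential} gives $\sum_a\bar D_aQ^a=C^0_+(Q)=-HK_2(Q)$. Writing $K_2=\sum k_aQ^a$, every coordinate of $\bar D_a$ is therefore $-k_a$. Combined with $\Dcal_0=-12K_2$, this gives $k_a=-n^{(0)}_a/12$, where $n^{(0)}_a=[Q^a]\Dcal_0$. Because $C(\alpha,\gamma)=-2\sum_{i\ne j}d_{\alpha,i}d_{\gamma,j}$ is bilinear and there are $12$ ordered pairs $i\neq j$,
\[
\sum_{|\alpha|=a,\,|\gamma|=c}C(\alpha,\gamma)=-24\,k_ak_c=-\tfrac16\,n^{(0)}_an^{(0)}_c .
\]
I expect this to be the one step needing care. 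One must note that $C$ depends only on the classes $D_\alpha$, which is exactly what~\eqref{ext:pairfour} says after contracting the compact-target diagonal. No ambient-basis inversion is used, and $S_4$-symmetry is obtained from $C^0_+=-HK_2$ rather than assumed classwise; the individual $D_\alpha$ are not symmetric.

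Substituting into the second identity of~\eqref{ext:recfour} gives
\[
[Q^m]\Dcal_2=-\tfrac12\sum_{a+c=m}\bigl(-\tfrac16 n^{(0)}_an^{(0)}_c\bigr)=\tfrac1{12}[Q^m]\Dcal_0^2 .
\]
For the first identity, multiply by $m_\beta=m$ before summing. The weight $|\alpha||\gamma|=ac$ passes through the aggregation, so
\[
m[Q^m]\Dcal_1=-2[Q^m]\Dcal_0-\tfrac1{12}\sum_{a+c=m}a\,n^{(0)}_a\,c\,n^{(0)}_c=[Q^m]\Bigl(-2\Dcal_0-\tfrac1{12}(\thetaQ\Dcal_0)^2\Bigr),
\]
which is~\eqref{ext:diagfour}. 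Both series have zero constant term by definition.

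Finally, the polylogarithmic forms follow from $\Dcal_0=\sum_m n_m\Li_2(Q^m)$ together with $\thetaQ\Li_k(Q^m)=m\Li_{k-1}(Q^m)$. These give $\thetaQ^{-1}\Li_2(Q^m)=\Li_3(Q^m)/m$ and $\thetaQ\Dcal_0=\sum_m mn_m\Li_1(Q^m)$. Applying $\thetaQ^{-1}$, which is well defined on series without constant term, to the second identity then yields the stated expressions for $\Dcal_1$ and $\Dcal_2$.
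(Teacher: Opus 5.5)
Your proof is correct. It has the same skeleton as the paper's proof: you aggregate the classwise recursion of Theorem~\ref{ext:thm:fourrec} over $|\beta|=m$, show that the aggregated evaluation divisor is a multiple of $H$, evaluate the aggregated pairing as $-[Q^a]\Dcal_0\,[Q^c]\Dcal_0/6$, and integrate.

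The one genuine difference is how you show $\sum_{|\beta|=m}D_\beta\propto H$.
\begin{itemize}
\item The paper takes this from the full-multidegree extraction formula of the companion paper, \Ithmref{ext:thm:primary}. The equivariance $d_{\sigma\beta,\sigma(i)}=d_{\beta,i}$ makes the coefficient of $p_i$ in the aggregate independent of $i$. The constant $[Q^m]\Dcal_0/12$ is then fixed by pairing with $H^2/2$.
\item You read the aggregate directly off Proposition~\ref{l3:prop:potential}, $C^0_+(Q)=-HK_2(Q)$, combined with $\Dcal_0=-12K_2$. This is legitimate because $D_\beta=C^0_\beta$ by definition: both are the pushforward of $c_{m-1}(\mathcal V)$ capped with the virtual class.
\end{itemize}

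Your route uses only the diagonal mirror identity already proved in Section~\ref{app:analytic3}, namely the $S_4$-invariance of the diagonally restricted $I$-function plus the Hard Lefschetz cancellation. So it bypasses the full-multidegree extraction and needs no classwise information about the $D_\alpha$ at all, which is all the diagonal statement requires. The paper's route instead rests on a classwise equivariance that is uniform in $L$; the paper reuses it verbatim for the four-loop closure in Theorem~\ref{ext:thm:fiveclosure}. Both ultimately depend on the same mirror theorem.

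The rest of your computation matches the paper's: the $12$ ordered pairs $i\neq j$, the weight $ac$ surviving aggregation, and the polylogarithmic integration via $\thetaQ\Li_k(Q^m)=m\Li_{k-1}(Q^m)$.
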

\begin{proof}
By the dimension-uniform primary extraction, given in~\Ithmref{ext:thm:primary},
$d_{\beta,i}$ is the coefficient of $\mathbf x^\beta$ in an explicit series in
which $g$ is symmetric in $x_1,\ldots,x_4$ and the factor $f_i$ depends on $i$
only through the $i$th exponent. Hence $d_{\sigma\beta,\sigma(i)}=d_{\beta,i}$
for every permutation $\sigma$ of the coordinates, where $\sigma\beta$ permutes
the multidegree. Summing over the permutation-invariant set of classes with
$|\beta|=m$ shows that the coefficient of $p_i$ in $\sum_{|\beta|=m}D_\beta$ is
independent of $i$, so $\sum_{|\beta|=m}D_\beta=c_mH$. Pairing with $H^2/2$ and
using $\int_{F_3}H^3=24$ gives $12c_m=[Q^m]\Dcal_0$, that is,
$\sum_{|\beta|=m}D_\beta=[Q^m]\Dcal_0\,H/12$. Only the symmetry of the extraction formula is used, not any automorphism of $F_3$. Using $\int_FH^3=24$, the
aggregate of~\eqref{ext:pairfour} in bidegree $(k,l)$ is
$-[Q^k]\Dcal_0[Q^l]\Dcal_0/6$. Summing~\eqref{ext:recfour}
proves~\eqref{ext:diagfour}. Formal integration with zero constants proves the
displayed integrated formulas. The $n_m$ are the primary integers of
\Icorref{cor:primaryintegral} of the companion paper.\end{proof}

\subsection{Simplified assembly and a differential proof}
The fourfold identities give
\[
 f_2=-\theta_Q \Dcal_0,\quad f_1=(\theta_Q \Dcal_0)^2/12,\]
 \[
 f_0=2\theta_Q^{-1}\Dcal_0+\tfrac1{12}\theta_Q^{-1}((\theta_Q \Dcal_0)^2)
                  -\tfrac16\Dcal_0\theta_Q \Dcal_0,
 \qquad K_4=-\tfrac12K_2^2.
\]
These supplement the assembly proof above with a derivation from the
primary series and the scalar differential equation. Set
$u=\theta_Q \Dcal_0/12$, $\tau=W_1/A=R-u$, and $B=I^{(3)}_\circleddash/(-yA)$.
The identity for $\tau$ follows from the primary mirror comparison.
Put $\Delta=(1+4y)(1+16y)$. The monic third-order Picard--Fuchs
operator is the symmetric square of
\[
 \partial_y^2+\left(\frac1y+\frac{\Delta'}{2\Delta}\right)\partial_y
                  +\frac{2(1+8y)}{y\Delta}.
\]
Writing $A=h_0^2$, $W_1=h_0h_1$ gives
$\tau'=1/(y\sqrt\Delta A)$ and $R'=A/y$ by the normalized Wronskian.
Conjugating the physical equation $P_3(AB)=24$ therefore gives
\[
 \frac{d^3B}{d\tau^3}=24\sqrt\Delta A^2=24\frac{dR}{d\tau},
 \qquad \frac{d^2B}{d\tau^2}=24R.
\]
The last integration constant is zero by the normalized logarithmic germ.
For a direct comparison with the enumerative formula set
\[
 B_{\rm can}=4\tau^3+4u^3-\Dcal_1-2\Dcal_0u-16\zeta(3).
\]
Since $d\Dcal_0/dR=12u$ and $d\Dcal_1/dR=-2\Dcal_0-12u^2$, differentiation gives
$dB_{\rm can}/d\tau=12R^2-24uR+2\Dcal_0$ and
$d^2B_{\rm can}/d\tau^2=24R$.
Both $B$ and $B_{\rm can}$ have germ
$4\log^3y-16\zeta(3)+O(y(1+|\log y|)^3)$, so their possible affine
difference in $\tau$ vanishes. Positivity of $A$ and $\Delta$ extends this
identity along every $y>0$. This proof avoids the signed analytic assembly
of the companion paper~\cite{PaperI}, while retaining its geometric primary mirror comparison.
Finally the primary Lambert expansion gives the formal product
\[
 e^{12(R-\tau)}=\prod_{m\geq1}(1-Q^m)^{-mn_m}.
\]
Thus $e^\tau$ and $Q=e^R$ are distinct coordinates.
\subsection{The full diagonal and a triangular primary algorithm}
For $n=5$ set
\[
 P_\beta(\eta)=\int_FD_\beta\eta,\qquad
 T_\beta(\mu,\nu)=\langle\mu,\nu\rangle_\beta,\qquad
 A_\beta(\nu)=\langle\tau_1(\nu)\rangle_\beta,
\]
where $\eta\in H^6(F_4,\Q)$ has complex codimension three and $\mu,\nu\in H^4(F_4,\Q)$ have complex codimension two; the brackets are those of the compact target, with $L=4$, and $\beta_i=p_i\cdot\beta$.
Contraction of the \emph{full} compact-target diagonal against
the primary evaluation class gives $j_{+*}D_\alpha$; restriction
then gives $-HD_\alpha$. This is why all subsequent expressions
close on the ambient tensor without inverting an ambient pairing.

\begin{theorem}[Triangular two-point reconstruction]
\label{ext:thm:fiveT}
\begin{align}
 T_\beta(p_ip_j,\nu)
 ={}&\beta_i\beta_j A_\beta(\nu)+
 \beta_iP_\beta(p_j\nu)+\beta_jP_\beta(p_i\nu)
 +\sum_{\alpha+\gamma=\beta}\alpha_i\alpha_j
                           T_\gamma(HD_\alpha,\nu).
 \label{ext:eq:fiveT}
\end{align}
For any $i$ with $\beta_i>0$,
\begin{equation}
 A_\beta(\nu)=-\frac2{\beta_i}P_\beta(p_i\nu)
   -\frac1{\beta_i^2}\sum_{\alpha+\gamma=\beta}
                         \alpha_i^2T_\gamma(HD_\alpha,\nu). \label{ext:pivotfive}
\end{equation}
Thus the classes $D_\alpha$ with $|\alpha|\le|\beta|$ determine $A_\beta(\nu)$ for every $\nu$, and $T_\beta(\mu,\nu)$ for every $\nu$ and every ambient $\mu$.
\end{theorem}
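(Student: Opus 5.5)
I will follow the pattern of the proof of Theorem~\ref{ext:thm:fourrec}: evaluate a three-point descendant bracket in two ways, once by the divisor equation and once by genus-zero topological recursion, and then set $\nu$ free. The natural bracket is $\langle\tau_1(\nu),p_i,p_j\rangle_\beta$ on $\overline{\mathcal M}_{0,3}(\widetilde Z,j_{+*}\beta)$ with $\nu\in H^4(F_4,\Q)$. By the Hodge-degree count, $\vdim=(5-3)+3=5$ and the insertions have total codimension $3+1+1=5$, so this is numerical.

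\emph{Divisor side.} I apply the descendant divisor equation to remove $p_j$ and then $p_i$. Each removal multiplies by $\beta_j$ (resp.\ $\beta_i$) and adds the term in which $\psi$ is lowered and the divisor is multiplied into the descendant insertion. The result is
\[
\beta_i\beta_jA_\beta(\nu)+\beta_iP_\beta(p_j\nu)+\beta_jP_\beta(p_i\nu)+P_\beta(p_ip_j\nu),
\]
where $P_\beta(\eta)=\langle\eta\rangle_\beta$ for the degree-six class $\eta$, by the definition of $D_\beta$.

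\emph{TRR side.} I write $\psi_1$ as the boundary divisor separating marking $1$ from markings $2,3$, and split by where the degree sits.
\begin{itemize}
\item The degree-zero component carrying $2,3$ and the node, with all of $\beta$ on the side of marking $1$, gives $\langle\nu,p_ip_j\rangle_\beta=T_\beta(\nu,p_ip_j)$ after contracting the diagonal.
\item The degree-zero component carrying marking $1$ and the node alone is unstable and does not occur.
\item The splittings $\alpha+\gamma=\beta$ with marking $1$ on $\gamma$ give $\sum_k\langle\nu,\mathsf b_k\rangle_\gamma\langle\mathsf b^k,p_i,p_j\rangle_\alpha$. The divisor equation turns the second factor into $\alpha_i\alpha_j\langle\mathsf b^k\rangle_\alpha$. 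Then~\eqref{ext:eq:contract} contracts $\sum_k\langle\mathsf b^k\rangle_\alpha\mathsf b_k=j_{+*}D_\alpha$, which restricts to $-HD_\alpha$, producing $-\alpha_i\alpha_jT_\gamma(HD_\alpha,\nu)$.
\item The case where all of $\beta$ sits on the side of markings $2,3$ is already among the $\alpha$ terms, since the other side must carry positive degree for stability.
\end{itemize}

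Equating the two sides gives a relation among $T_\beta(p_ip_j,\nu)$, $A_\beta$, the $P_\beta$ and the recursive terms. The sign and the stray $P_\beta(p_ip_j\nu)$ term must be reconciled with~\eqref{ext:eq:fiveT}. I would handle this by instead starting from $T_\beta(p_ip_j,\nu)$ and applying TRR in the form with $\tau_1$ placed so that the degree-zero term is $T_\beta$ itself. I will also carefully check the sign convention $\langle\ldots,j_{+*}D\rangle=-\langle\ldots,HD\rangle$. This sign bookkeeping, together with making sure which term carries $P_\beta(p_ip_j\nu)$ (it should cancel against the degree-zero boundary contribution once $p_ip_j\nu$ is inserted at a single point), is the main obstacle, and I would verify it at degree one against the coordinate-ray values.

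For~\eqref{ext:pivotfive}, I specialize~\eqref{ext:eq:fiveT} to $i=j$. Since $p_i^2=0$, the left side vanishes. Dividing by $\beta_i^2>0$ gives the displayed pivot formula for $A_\beta(\nu)$.

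The reconstruction claim then follows by induction on $|\beta|$.
\begin{itemize}
\item The sums involve only $T_\gamma$ with $|\gamma|<|\beta|$, evaluated at the ambient class $HD_\alpha$. These are known inductively, since $HD_\alpha$ is a combination of the $p_kp_l$.
\item The $P_\beta$ are read off from $D_\beta$.
\item This yields first $A_\beta(\nu)$ via~\eqref{ext:pivotfive}, and then $T_\beta(p_ip_j,\nu)$ via~\eqref{ext:eq:fiveT}.
\end{itemize}
Linearity in $\mu$ over the span of the $p_ip_j$ gives every ambient $\mu$. At no stage is an ambient intersection matrix inverted, since only the full diagonal is contracted.
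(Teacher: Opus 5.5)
Your route is exactly the paper's. You evaluate $\langle\tau_1(\nu),p_i,p_j\rangle_\beta$ once by the descendant divisor equation and once by genus-zero TRR on $\overline{\mathcal M}_{0,3}(\widetilde Z,j_{+*}\beta)$, contract the diagonal with \eqref{ext:eq:contract}, specialize to $i=j$ using $p_i^2=0$, and induct on $|\beta|$. The paper does the same computation for arbitrary divisors $E,E'$ and then specializes.

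The one real defect is that you never actually conclude \eqref{ext:eq:fiveT}. You leave a ``sign'' and a ``stray'' term as an unresolved obstacle, and both are illusory.

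\emph{The term $P_\beta(p_ip_j\nu)$ does not arise.} The lowering term of the descendant divisor equation acts only on an insertion carrying a positive power of $\psi$. Removing $p_j$ gives $\beta_j\langle\tau_1(\nu),p_i\rangle_\beta+\langle\tau_0(p_j\nu),p_i\rangle_\beta$. Removing $p_i$ from the second bracket then gives only $\beta_iP_\beta(p_j\nu)$, because $\tau_{-1}=0$. Even if the term were present it would vanish: $p_ip_j\nu\in H^8(F_4)$, so $D_\beta\,p_ip_j\nu=0$ for degree reasons. Equivalently, it is a codimension-four insertion on the three-dimensional virtual class of $\overline{\mathcal M}_{0,1}$.

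\emph{There is no sign to reconcile.} Your own TRR evaluation is
\[
T_\beta(p_ip_j,\nu)-\sum_{\alpha+\gamma=\beta}\alpha_i\alpha_jT_\gamma(HD_\alpha,\nu).
\]
Setting this equal to the (corrected) divisor side and moving the sum across gives \eqref{ext:eq:fiveT} with the plus sign, exactly as stated.

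Discard your suggested fix that $P_\beta(p_ip_j\nu)$ ``should cancel against the degree-zero boundary contribution.'' That contribution is the left-hand side $T_\beta(p_ip_j,\nu)$ itself, the quantity being solved for, and nothing cancels against it. Likewise, ``restarting from $T_\beta(p_ip_j,\nu)$'' just repeats the computation you already did.

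Your third TRR bullet is also misphrased. The splitting with all of $\beta$ on the side of markings $2,3$ is the unstable configuration of your second bullet, so it is absent, not ``among the $\alpha$ terms.''

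With the divisor-side bookkeeping corrected, the rest is complete and coincides with the paper's proof. That covers your derivation of \eqref{ext:eq:fiveT}, the pivot formula \eqref{ext:pivotfive} from $i=j$, and the induction. The induction uses two facts: $D_\alpha\in H^2(F_4)$ is a combination of the $p_k$, so $HD_\alpha$ is ambient; and some $\beta_i>0$, because $|\beta|=\sum_i\beta_i>0$ with all $\beta_i\ge0$.
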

\begin{proof}
For divisor classes $E,E'$ on $F_4$ write $e_\beta=E\cdot\beta$ and
$e'_\beta=E'\cdot\beta$, and let $\nu\in H^4(F_4,\Q)$. Applying the descendant
divisor equation twice,
\[
 \langle\tau_1(\nu),E,E'\rangle_\beta
 =e_\beta e'_\beta A_\beta(\nu)+e'_\beta P_\beta(E\nu)+e_\beta P_\beta(E'\nu).
\]
Genus-zero topological recursion for $\psi_1$ on
$\overline{\mathcal M}_{0,3}(\widetilde Z,j_{+*}\beta)$ separates the first marking from
the two divisor markings. A degree-zero component carrying the first marking and
the node alone is unstable. The stable degree-zero component carrying the two
divisor markings and the node contributes $\langle\nu,EE'\rangle_\beta=T_\beta(EE',\nu)$.
A positive splitting with the first marking on $\gamma$ and the divisor markings
on $\alpha$ contributes, by the divisor equation and~\eqref{ext:eq:contract},
$e_\alpha e'_\alpha\langle\nu,j_{+*}D_\alpha\rangle_\gamma=-e_\alpha e'_\alpha T_\gamma(HD_\alpha,\nu)$.
Comparing the two evaluations gives, for arbitrary divisors $E,E'$,
\begin{equation}\label{ext:eq:fiveEE}
 T_\beta(EE',\nu)=e_\beta e'_\beta A_\beta(\nu)+e'_\beta P_\beta(E\nu)+e_\beta P_\beta(E'\nu)
 +\sum_{\alpha+\gamma=\beta}e_\alpha e'_\alpha\,T_\gamma(HD_\alpha,\nu).
\end{equation}
Taking $E=p_i$ and $E'=p_j$ gives~\eqref{ext:eq:fiveT}. Taking $E=E'=p_i$ and
using $p_i^2=0$ gives~\eqref{ext:pivotfive}. Every tensor on the right
of~\eqref{ext:pivotfive} has smaller positive degree, and the classes $p_ip_j$
span the ambient part of $H^4(F_4,\Q)$; induction on $|\beta|$ gives the final
assertion. Compatibility between pivot choices follows from the virtual
identities of the actual theory.
\end{proof}

\subsection{Explicit inverse for the numerical one-point descendants}
Fix a divisor class $E$ on $F_4$ with $e_\beta=E\cdot\beta\ne0$ for the classes
considered; one may always take $E=H$, since $H$ is ample on $F_4$ and hence
$H\cdot\beta>0$ for every nonzero effective class $\beta$. For another divisor
$\lambda$ define ordinary two-point descendants
\[
 J_\beta(\lambda;\nu)=\langle\tau_1(\lambda),\nu\rangle_\beta,
 \qquad K_\beta(\nu)=\langle\tau_2(1),\nu\rangle_\beta.
\]
These are ordinary genus-zero descendant Gromov--Witten invariants; 
they should not be confused with the Pandharipande--Zinger~\cite{PZ} multiple-cover
 transforms of primary genus-zero invariants on Calabi--Yau fivefolds.
What is reconstructed below are the dimensionally relevant numerical one-point
descendants $N_{a,\beta}$, $0\le a\le3$, of the Calabi--Yau fivefold sector,
together with the auxiliary two-point invariants $T_\beta$, $J_\beta$ and $K_\beta$;
no assertion is made about other descendant insertions.
\begin{theorem}[All-degree inverse]
\label{ext:thm:inverse}
\begin{align}
 e_\beta^2A_\beta(\nu)
 &=T_\beta(E^2,\nu)-2e_\beta P_\beta(E\nu)
       -\sum_{\alpha+\gamma=\beta}e_\alpha^2
                                  T_\gamma(HD_\alpha,\nu),\notag\\
 J_\beta(\lambda;\nu)
 &=-(\lambda\cdot\beta)A_\beta(\nu)-P_\beta(\lambda\nu)
       -\sum_{\alpha+\gamma=\beta}(\lambda\cdot\alpha)
                                  T_\gamma(HD_\alpha,\nu),\notag\\
 K_\beta(\nu)&=A_\beta(\nu)+
             \sum_{\alpha+\gamma=\beta}T_\gamma(HD_\alpha,\nu).
 \label{ext:eq:JK}
\end{align}
Writing $Q_{2,\beta}(\lambda)=\langle\tau_2(\lambda)\rangle_\beta$
and $Q_{3,\beta}=\langle\tau_3(1)\rangle_\beta$,
\begin{align}
 e_\beta^2Q_{2,\beta}(\lambda)
 ={}&J_\beta(\lambda;E^2)
   -\sum_{\alpha+\gamma=\beta}e_\gamma^2
                                      J_\alpha(\lambda;HD_\gamma)
   -2e_\beta A_\beta(E\lambda)-P_\beta(E^2\lambda),\notag\\
 e_\beta^2Q_{3,\beta}
 ={}&K_\beta(E^2)
   -\sum_{\alpha+\gamma=\beta}e_\gamma^2K_\alpha(HD_\gamma)
   -2e_\beta Q_{2,\beta}(E)-A_\beta(E^2).
 \label{ext:eq:Q23}
\end{align}
The answer is independent of $E$, and
\[
 (N_0,N_1,N_2,N_3)_\beta
  =\left(P_\beta(H^3/6),A_\beta(H^2/2),Q_{2,\beta}(H),Q_{3,\beta}\right).
\]
The full primary divisors $D_\beta$, or equivalently $B_\beta$,
alone determine every displayed invariant.
\end{theorem}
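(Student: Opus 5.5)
The plan is to follow the pattern of Theorems~\ref{ext:thm:fourrec} and~\ref{ext:thm:fiveT}: for each identity, evaluate one three-point descendant invariant on the compact Calabi--Yau target $\widetilde Z$ in two ways. The first evaluation removes divisor markings with the descendant divisor equation. The second applies genus-zero topological recursion to $\psi_1$ on $\overline{\mathcal M}_{0,3}(\widetilde Z,j_{+*}\beta)$, separating marking $1$ from markings $2,3$. In every splitting the diagonal is contracted against the primary evaluation class via~\eqref{ext:eq:contract}, so an insertion $j_{+*}D_\alpha$ becomes $-HD_\alpha$. No ambient pairing is inverted, and the $18$-dimensional nonambient summand of $H^4(F_4)$ never needs to be seen.

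\emph{The two-point identities.} The first identity is~\eqref{ext:eq:fiveEE} with $E=E'$, solved for $A_\beta(\nu)$; it is already proved. For $J_\beta(\lambda;\nu)$ I would take $\langle\tau_1(\lambda),\nu,E\rangle_\beta$ for an auxiliary divisor $E$.
\begin{itemize}
\item The divisor equation gives $e_\beta J_\beta(\lambda;\nu)+\langle\lambda E,\nu\rangle_\beta$.
\item Recursion gives a degree-zero term $\langle\lambda,\nu E\rangle_\beta$, and the splittings contribute $-(\lambda\cdot\alpha)e_\gamma T_\gamma(HD_\alpha,\nu)$ terms.
\end{itemize}
This is awkward, so the cleaner route is a two-point recursion. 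In genus zero with two markings, $\psi_1$ is not a pure boundary class. Instead I would use the Lee--Pandharipande relation~\cite[Corollary~1]{ext:LP}, $\psi_1+\psi_2=[\text{boundary}]$. Combined with the string and dilaton equations, and with the divisor equation for $\lambda$ realised by adding and then removing a divisor marking, this should give $J$ and $K$ in the displayed triangular form. For $K_\beta(\nu)=\langle\tau_2(1),\nu\rangle_\beta$, the relation $\psi_1+\psi_2=D_{12|}$ lowers $\tau_2(1)$ to $\tau_1$ on one side, and string/dilaton turn $\langle\tau_1(1),\nu\rangle$ into $A_\beta(\nu)$. That accounts for the term $A_\beta(\nu)$ plus the splitting sum.

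\emph{The one-point descendants $Q_2,Q_3$.} Here I would repeat the $A_\beta$ computation one level higher, using $\langle\tau_2(\lambda),E,E\rangle_\beta$ and $\langle\tau_3(1),E,E\rangle_\beta$.
\begin{itemize}
\item The divisor equation twice gives $e_\beta^2Q_{2,\beta}(\lambda)+2e_\beta A_\beta(E\lambda)+P_\beta(E^2\lambda)$.
\item TRR on $\psi_1$ gives a degree-zero term $J_\beta(\lambda;E^2)$.
\item The splittings with marking $1$ on $\alpha$ and the divisors on $\gamma$ contribute $e_\gamma^2J_\alpha(\lambda;j_{+*}D_\gamma)=-e_\gamma^2J_\alpha(\lambda;HD_\gamma)$.
\end{itemize}
These give~\eqref{ext:eq:Q23}. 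The $Q_3$ identity is identical, with $K$ replacing $J$.

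\emph{Conclusions.} Independence of $E$ is automatic, since both sides are actual invariants. Taking $E=H$ makes $e_\beta>0$, and induction on $|\beta|$ then shows that every right-hand side involves only invariants of lower degree or already-reconstructed same-degree quantities built from $P_\beta$. The identification $(N_0,\dots,N_3)$ is the definition of $N_{a,\beta}$ via the marked comparison. The main obstacle I expect is bookkeeping of the degree-zero and unstable-component contributions in the two-point relations for $J$ and $K$: the signs and the factors $\lambda\cdot\alpha$ versus $\lambda\cdot\beta$ depend on which side of the node the divisor marking lands. I would fix these by checking on a coordinate ray, where $D_{re_i}$ is known.
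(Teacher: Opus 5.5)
Your route is the paper's. The $A_\beta$ identity is \eqref{ext:eq:fiveEE} with $E'=E$, the $J$ and $K$ identities come from the Lee--Pandharipande relation on $\overline{\mathcal M}_{0,2}(\widetilde Z,j_{+*}\beta)$, and the $Q_2,Q_3$ identities come from topological recursion on $\langle\tau_2(\lambda),E,E\rangle_\beta$ and $\langle\tau_3(1),E,E\rangle_\beta$, with exactly the degree-zero and splitting terms you list.

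The only incomplete step is $J$ and $K$. There you write ``this should give'' and propose fixing the signs, and the factors $\lambda\cdot\alpha$ versus $\lambda\cdot\beta$, by a coordinate-ray check. A special case would not establish the general coefficients, and in fact nothing is ambiguous.

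\emph{The $J$ identity.} Cap $\psi_1+\psi_2=[D_{1|2}]$ with $\ev_1^*\lambda\,\ev_2^*\nu$. The divisor $D_{1|2}$ separates the two markings, with marking~$1$ on the $\alpha$ side and both degrees positive. So $\lambda$ always lies on $\alpha$, and the divisor equation gives $\langle\lambda,\mathsf b_k\rangle_\alpha=(\lambda\cdot\alpha)\langle\mathsf b_k\rangle_\alpha$ on the boundary. For the $\psi_2$ term it gives $\langle\lambda,\tau_1(\nu)\rangle_\beta=(\lambda\cdot\beta)A_\beta(\nu)+P_\beta(\lambda\nu)$. No marking is added or removed.

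\emph{The $K$ identity.} Cap the same relation with $\psi_1\ev_2^*\nu$; now $\tau_1(1)$ always lies on $\alpha$.
\begin{itemize}
\item The $\psi_2$ term is $\langle\tau_1(1),\tau_1(\nu)\rangle_\beta=-A_\beta(\nu)$, by the dilaton equation with one other marking.
\item The term $\langle\tau_1(1),\nu\rangle_\beta$ you wrote is zero for dimension reasons: insertion degree $3$ against virtual dimension $4$.
\item On the boundary, the dilaton sign $\langle\tau_1(1),\mathsf b_k\rangle_\alpha=-\langle\mathsf b_k\rangle_\alpha$ combines with $\langle j_{+*}D_\alpha,\nu\rangle_\gamma=-T_\gamma(HD_\alpha,\nu)$ from \eqref{ext:eq:contract}. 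This produces the $+$ sign in the $K$ formula.
\end{itemize}

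\emph{The determination statement.} The input $T_\beta(E^2,\nu)$ in the $A_\beta$ identity has the same degree $\beta$, not a lower one. It is supplied by Theorem~\ref{ext:thm:fiveT}, because $E^2$ is ambient.
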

\begin{proof}
The first identity is~\eqref{ext:eq:fiveEE} with $E'=E$. On
$\overline{\mathcal M}_{0,2}(\widetilde Z,j_{+*}\beta)$, $\beta\ne0$, the
Lee--Pandharipande relation reads $\psi_1+\psi_2=[D_{1|2}]$, where $D_{1|2}$
parametrizes domains split by a node with the first marking on a component of
class $j_{+*}\alpha$ and the second on one of class $j_{+*}\gamma$,
$\alpha+\gamma=\beta$; both classes are positive, because a degree-zero component
with one marking and one node is unstable. Capping this relation with
$\ev_1^*\lambda\,\ev_2^*\nu$, respectively with $\psi_1\ev_2^*\nu$, and splitting the
boundary along the diagonal gives
\begin{align*}
 J_\beta(\lambda;\nu)+\langle\lambda,\tau_1(\nu)\rangle_\beta
 &=\sum_{\alpha+\gamma=\beta}\sum_k\langle\lambda,\mathsf b_k\rangle_\alpha
   \langle\mathsf b^k,\nu\rangle_\gamma,\\
 K_\beta(\nu)+\langle\tau_1(1),\tau_1(\nu)\rangle_\beta
 &=\sum_{\alpha+\gamma=\beta}\sum_k\langle\tau_1(1),\mathsf b_k\rangle_\alpha
   \langle\mathsf b^k,\nu\rangle_\gamma .
\end{align*}
By the divisor equation $\langle\lambda,\tau_1(\nu)\rangle_\beta=(\lambda\cdot\beta)A_\beta(\nu)
+P_\beta(\lambda\nu)$ and $\langle\lambda,\mathsf b_k\rangle_\alpha=(\lambda\cdot\alpha)\langle\mathsf b_k\rangle_\alpha$;
by the dilaton equation with one other marking
$\langle\tau_1(1),\tau_1(\nu)\rangle_\beta=-A_\beta(\nu)$ and
$\langle\tau_1(1),\mathsf b_k\rangle_\alpha=-\langle\mathsf b_k\rangle_\alpha$. Together with
$\sum_k\langle\mathsf b_k\rangle_\alpha\langle\mathsf b^k,\nu\rangle_\gamma
=\langle j_{+*}D_\alpha,\nu\rangle_\gamma=-T_\gamma(HD_\alpha,\nu)$
from~\eqref{ext:eq:contract}, this gives the formulas for $J_\beta$ and $K_\beta$
in~\eqref{ext:eq:JK}. For~\eqref{ext:eq:Q23}, the divisor equation applied twice gives
\begin{align*}
 \langle\tau_2(\lambda),E,E\rangle_\beta
 &=e_\beta^2Q_{2,\beta}(\lambda)+2e_\beta A_\beta(E\lambda)+P_\beta(E^2\lambda),\\
 \langle\tau_3(1),E,E\rangle_\beta
 &=e_\beta^2Q_{3,\beta}+2e_\beta Q_{2,\beta}(E)+A_\beta(E^2),
\end{align*}
while topological recursion for $\psi_1$ on
$\overline{\mathcal M}_{0,3}(\widetilde Z,j_{+*}\beta)$, separating the first marking
from the two $E$ markings, gives
\begin{align*}
 \langle\tau_2(\lambda),E,E\rangle_\beta
 &=J_\beta(\lambda;E^2)-\sum_{\alpha+\gamma=\beta}e_\gamma^2J_\alpha(\lambda;HD_\gamma),\\
 \langle\tau_3(1),E,E\rangle_\beta
 &=K_\beta(E^2)-\sum_{\alpha+\gamma=\beta}e_\gamma^2K_\alpha(HD_\gamma).
\end{align*}
Here the first terms come from the stable degree-zero component carrying the two
$E$ markings and the node; a degree-zero component carrying the first marking and
the node alone is unstable; and in a positive splitting the first marking lies on
$\alpha$, the divisor equation gives $\langle\mathsf b^k,E,E\rangle_\gamma=e_\gamma^2\langle\mathsf b^k\rangle_\gamma$,
and~\eqref{ext:eq:contract} turns the contraction into an insertion $j_{+*}D_\gamma$,
that is, into $-HD_\gamma$. Comparing the two evaluations proves~\eqref{ext:eq:Q23}. Theorem~\ref{ext:thm:fiveT} supplies
the tensor from $D$; the cover inversion supplies $D$ from $B$.
All equations hold for the actual invariants, hence are
independent of the auxiliary divisor.
\end{proof}
\subsection{Two-input diagonal closure}
\begin{theorem}[Two diagonal inputs suffice]
\label{ext:thm:fiveclosure}
Let $\mathcal T=\sum_\beta T_\beta(H^2,H^2)Q^{|\beta|}$.
Then
\[
 (1-\theta_Q^2\Dcal_0/20)\mathcal T=2\theta_Q^2\Dcal_1+12\theta_Q \Dcal_0,
\]
and
\begin{align}
 \theta_Q^2\Dcal_2={}&-6\theta_Q \Dcal_1-12\Dcal_0
 +\frac{\theta_Q \Dcal_1\theta_Q^2\Dcal_0-\theta_Q \Dcal_0\theta_Q^2\Dcal_1}{10}
 +\frac3{10}\Dcal_0\theta_Q^2\Dcal_0-\frac35(\theta_Q \Dcal_0)^2,\label{ext:eq:diagfive2}\\
 \theta_Q^2\Dcal_3={}&
 \frac{\Dcal_0\theta_Q^2\Dcal_1-\Dcal_1\theta_Q^2\Dcal_0}{10}
 +\frac35\Dcal_0\theta_Q \Dcal_0-2\theta_Q \Dcal_2. \label{ext:eq:diagfive3}
\end{align}
These determine $\Dcal_2,\Dcal_3$ uniquely from $\Dcal_0,\Dcal_1$ as zero-constant
formal power series.
\end{theorem}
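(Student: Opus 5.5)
The plan is to sum the classwise identities of Theorems~\ref{ext:thm:fiveT} and~\ref{ext:thm:inverse} over all $\beta$ with $|\beta|=m$, with $E=H$ throughout. Then reduce each aggregated tensor to the diagonal series $\Dcal_0,\Dcal_1,\Dcal_2,\Dcal_3$ and $\mathcal T$.

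The key input is the symmetry already used in the proof of Theorem~\ref{ext:thm:Bdiag}. Summing the permutation-symmetric extraction gives $\sum_{|\beta|=m}D_\beta=c_mH$. Pairing with $H^3/6$ and using $\int_{F_4}H^4=120$ yields $20c_m=[Q^m]\Dcal_0$. Hence in every splitting sum the aggregate of $HD_\alpha$ over $|\alpha|=k$ is $\frac{[Q^k]\Dcal_0}{20}H^2$, and the factors $e_\alpha=|\alpha|$ depend only on $k$. Every convolution term $\sum e_\alpha^2T_\gamma(HD_\alpha,\nu)$ therefore becomes a product of series, for instance $\frac1{20}(\theta_Q^2\Dcal_0)\cdot\sum_\gamma T_\gamma(H^2,\nu)Q^{|\gamma|}$.

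First I take the first identity of~\eqref{ext:eq:JK} with $\nu=H^2$. Aggregated, its left side is $\theta_Q^2\sum A_\beta(H^2)Q^{|\beta|}=2\theta_Q^2\Dcal_1$, since $N_1=A(H^2/2)$. Its right side is $\mathcal T-2\theta_Q\sum P_\beta(H^3)Q^{|\beta|}-\frac1{20}\theta_Q^2\Dcal_0\,\mathcal T$, and $P(H^3)=6N_0$. This gives the first displayed identity.

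Next I need the two-point series $\sum T_\beta(H^2,H)$, which is forced by the divisor equation: $T_\beta(H^2,H)=m_\beta P_\beta(H^3)$. I also need $J_\beta(H;H^2)$, $K_\beta(H^2)$, $A_\beta(H^2)$ and their convolutions with $HD_\gamma\sim H^2$. For these I apply the aggregated formulas of~\eqref{ext:eq:JK} and~\eqref{ext:eq:Q23} with $\lambda=H$ and $\nu=H^2$. Every insertion that occurs is then a multiple of $H^2$, paired with $H^2$ or $H$. Here $P_\beta(H^3)=6N_0$, $A_\beta(H\cdot H)=2N_1$, $Q_2(H)=N_2$ and $Q_3=N_3$. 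Every $\mathcal T$ occurrence I will eliminate using the first identity. This is where the denominators $10$ and the combinations $\theta_Q\Dcal_1\theta_Q^2\Dcal_0-\theta_Q\Dcal_0\theta_Q^2\Dcal_1$ should come from. Substituting $J$ into the $Q_2$ formula gives~\eqref{ext:eq:diagfive2}, and substituting $K$ and the $Q_2$ result into the $Q_3$ formula gives~\eqref{ext:eq:diagfive3}.

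The main obstacle is the bookkeeping of $\mathcal T$. The convolution $\sum e_\gamma^2J_\alpha(H;HD_\gamma)$ contains $\sum_\gamma T_\gamma(HD_\alpha,H^2)$ nested inside $J$. I expect $\mathcal T$ to appear multiplied by $(1-\theta_Q^2\Dcal_0/20)$, so that it cancels exactly against the first identity. I would verify this cancellation first, and check the result numerically against the degree-one data $(60,-120)$ in the table. Uniqueness is then immediate: $\theta_Q^2$ is bijective on zero-constant power series, and the right sides involve $\Dcal_2$ only through $\theta_Q\Dcal_2$ in~\eqref{ext:eq:diagfive3}, once $\Dcal_2$ is already fixed by~\eqref{ext:eq:diagfive2}.
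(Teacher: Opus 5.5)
Your proposal follows the paper's proof. You aggregate \eqref{ext:eq:fiveEE} and \eqref{ext:eq:JK} with $E=\lambda=H$, $\nu=H^2$, using $\sum_{|\beta|=m}D_\beta=\tfrac1{20}[Q^m]\Dcal_0\,H$, to get $V\mathcal T$, $\mathcal J$ and $\mathcal K$ with $V=1-\theta_Q^2\Dcal_0/20$. Substituting into \eqref{ext:eq:Q23} gives $\theta_Q^2\Dcal_2=V\mathcal J-4\theta_Q\Dcal_1-6\Dcal_0$ and $\theta_Q^2\Dcal_3=V\mathcal K-2\theta_Q\Dcal_2-2\Dcal_1$, where $\mathcal T$ enters only through $V\mathcal T$ and is eliminated exactly as you expect.

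One thing to drop is your aside on $\sum_\beta T_\beta(H^2,H)Q^{|\beta|}$. That series never occurs in these formulas, and the value you assign it is off in dimension: the divisor equation gives $T_\beta(H^2,H)=m_\beta\langle H^2\rangle_\beta$, which is zero because the insertions have the wrong codimension. It is not $m_\beta P_\beta(H^3)=6m_\beta N_{0,\beta}$.
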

\begin{proof}
As in the proof of Theorem~\ref{ext:thm:Bdiag}, the permutation symmetry of the
extraction formula of~\Ithmref{ext:thm:primary} gives
$\sum_{|\beta|=m}D_\beta=c_mH$, and pairing with $H^3/6$, using $\int_{F_4}H^4=120$,
gives $c_m=[Q^m]\Dcal_0/20$. Put
\[
 \mathcal J=\sum_{\beta>0}J_\beta(H;H^2)Q^{|\beta|},\qquad
 \mathcal K=\sum_{\beta>0}K_\beta(H^2)Q^{|\beta|},\qquad
 V=1-\theta_Q^2\Dcal_0/20 .
\]
Aggregate~\eqref{ext:eq:fiveEE} with $E=E'=H$ and $\nu=H^2$,
and~\eqref{ext:eq:JK} with $\lambda=H$ and $\nu=H^2$, using
$A_\beta(H^2)=2N_{1,\beta}$ and $P_\beta(H^3)=6N_{0,\beta}$. The intermediate
equations are
$V\mathcal T=2\theta_Q^2\Dcal_1+12\theta_Q \Dcal_0$,
$\mathcal J=-2\theta_Q \Dcal_1-6\Dcal_0-(\theta_Q \Dcal_0)\mathcal T/20$,
and $\mathcal K=2\Dcal_1+\Dcal_0\mathcal T/20$.
Equation~\eqref{ext:eq:Q23} with $E=\lambda=H$ gives
$\theta_Q^2\Dcal_2=V\mathcal J-4\theta_Q \Dcal_1-6\Dcal_0$ and
$\theta_Q^2\Dcal_3=V\mathcal K-2\theta_Q \Dcal_2-2\Dcal_1$.
Eliminate $\mathcal T$. Formal $\theta_Q^{-2}$ with zero constants
is unique.
\end{proof}
Thus the diagonal reconstruction closes on two inputs. We do not identify $\mathcal T$ with a Yukawa coupling, nor do we show that a zero of $V$ is an actual singularity.

\section{Endpoint values and physical threshold behavior}\label{ext:lowend}
We use the primitives $C_j$ and their normalization of Section~\Iref{ext:endpoints}
of the companion paper, and put $g_3=2\zeta(3)/3$,
$g_5=2\zeta(5)/5$. The endpoint theorem there gives, for $L=3$,
\[
 (C_1,C_2,C_3,C_4)_*=(R_*,0,g_3,g_3R_*),\qquad
 R_*=-7\zeta(3)/\pi^2,\quad Q_*=e^{R_*}.
\]
Indeed the torus logarithmic formula gives
$R_*=-2\,\mathrm m_{\rm Mah}(1+z_1+z_2+z_3)$, where
$\mathrm m_{\rm Mah}(P)=\int_{[0,1]^n}\log\abs{P(e^{2\pi i t_1},\dots,e^{2\pi i t_n})} dt_1\cdots dt_n$ denotes the logarithmic Mahler measure of the 
multivariate polynomial $P$.
Here $\mathrm m_{\rm Mah}(1+z_1+z_2+z_3)=7\zeta(3)/(2\pi^2)$
by~\cite[\S6.2, equation (66)]{ext:BSWZ}. Consequently
\[
 (K_2,K_3,K_4)_*=
 (-R_*^2/2,\ g_3+R_*^3/3,\ -R_*^4/8).
\]
At $L=4$, write $R_\infty=C_1(\infty)$ and $S_\infty=C_2(\infty)$.
The same theorem gives
\[
 (C_1,C_2,C_3,C_4,C_5)_\infty=
 (R_\infty,S_\infty,g_3,g_3R_\infty,g_3S_\infty+g_5).
\]
Thus all descendant endpoint values follow by the finite formula
$K_k(\infty)=\sum_{j=0}^k(-R_\infty)^{k-j}C_j(\infty)/(k-j)!$.
These reduce everything to two constants, which we do not evaluate in closed form. Explicit convergent subtracted Bessel integrals for
them are supplied in~\Ilemref{ext:endpoint}.

By~\Ipropref{prop:sTaylor} and~\Iremref{rem:pseudothresholds}, the physical four-loop Feynman integral is
holomorphic at the pseudothresholds $s=1,9$, since both lie in its Taylor disk
$|s|<25$. At the physical threshold, with $\delta=25-s\downarrow0$,
\[
 I^{(4)}_\circleddash(25-\delta)=T+\alpha\delta\log\delta+c\delta
       +\frac{\alpha}{20}\delta^2\log\delta+d\delta^2
       +O(\delta^3|\log\delta|),\qquad
 \alpha=\frac{\pi^2}{5\sqrt5}.
\]
Here $T,c,d$ are finite constants. To verify the normalization, the large-$t$
part of $16tI_0(\sqrt{s}t)K_0(t)^5$ is
$2\pi^2s^{-1/4}t^{-2}e^{-(5-\sqrt{s})t}(1+O(t^{-1}))$. Integration gives
$2\pi^2s^{-1/4}\epsilon\log\epsilon$, where
$\epsilon=5-\sqrt{s}=\delta/10+O(\delta^2)$, proving $\alpha$. The threshold
indicial polynomial of the physical fourth-order equation is
$240000z(z-1)^2(z-2)$. Its Frobenius recurrence gives only first powers of the
logarithm and gives the next logarithmic coefficient $\alpha/20$; the local
Frobenius series is convergent at least for $|s-25|<16$, up to the next singular
point $s=9$ of the differential equation. This is compatible with the
continuous but non-differentiable threshold value established by the positive
coefficient asymptotics in~\Ipropref{prop:sTaylor}.

\section{Exact three-loop period connection polynomials}\label{ext:connection3}
A further question is to determine the coefficients of the logarithmic periods at $x=0$. At three loops this can be done exactly.
Set
\begin{equation}\label{ext:P3}
 \mathcal P_3=\theta_y^3+y(20\theta_y^3+30\theta_y^2+18\theta_y+4)
                         +64y^2(\theta_y+1)^3.
\end{equation}
The homogeneous ambient periods $W_0,W_1,W_2$ solve $\mathcal P_3W=0$.
They are fixed by their leading terms $1,\log y,\log^2y/2$, with no
additional constant terms at the maximally unipotent point.

\begin{theorem}\label{ext:periods}
Along the positive $y$ axis, as $x=1/y\downarrow0$,
\[
 W_j(y)=xP_{j,0}(\log x)+O\bigl(x^2(1+|\log x|^2)\bigr),\qquad 0\leq j\leq2,
\]
where
\begin{equation}\label{ext:periodanswer}
 \quad
 P_{0,0}(t)=\frac3{16\pi^2}(t-6\log2)^2,\qquad
 P_{1,0}(t)=\frac{t-6\log2}{8},\qquad
 P_{2,0}(t)=\frac{\pi^2}{24}.
\end{equation}
In particular the constant coefficient of $P_{0,0}$ is
$27\log^2(2)/(4\pi^2)$.
\end{theorem}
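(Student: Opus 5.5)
The plan is to realize all three periods as integrals against the density of the four-step planar random walk, and then read off the $x\to0$ behaviour from a Mellin transform in $y$. The starting point is the torus representation of the holomorphic period. Since $c_D$ is the constant term of $\phi^D$, and $\phi=|w_1+\cdots+w_4|^2$ on the real torus $(S^1)^4$, one gets for $y>0$
\[
 W_0(y)=A(y)=\int_{(S^1)^4}\frac{d\nu}{1+y|w_1+w_2+w_3+w_4|^2}
 =\int_0^4\frac{p_4(r)\,dr}{1+yr^2}.
\]
The first equality follows from the torus formula of the companion paper (\Ithmref{thm:torus}). Here $p_4$ is the radial density of $|w_1+\cdots+w_4|$. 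The behaviour as $x=1/y\downarrow0$ is controlled by $p_4$ near $r=0$. From \cite{ext:BSWZ} one has
\[
 p_4(r)=-\frac{3}{2\pi^2}\,r\log(r/8)+O(r^3\log r).
\]
Inserting this and cutting the integral at $r\sim y^{-1/2}$, the bulk contributes $O(x^2)$ relative terms. The logarithmic piece then gives
\[
 \frac{3}{2\pi^2}\,x\int_{y^{-1/2}}\frac{-\log(r/8)}{r}\,dr
 \;\leadsto\; \frac{3}{16\pi^2}\,x\,(\log x-6\log2)^2,
\]
with the error term $O(x^2(1+|\log x|^2))$. This is $P_{0,0}$. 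To make the constant exact, I will not use the cutoff but the Mellin transform instead. The identity
\[
 \int_0^\infty y^{z-1}(1+yr^2)^{-1}\,dy=\frac{\pi}{\sin\pi z}\,r^{-2z}
\]
shows that the Mellin transform of $W_0$ is $\frac{\pi}{\sin\pi z}\,M_4(-2z)$, where $M_4$ are the random-walk moments $\int r^{u}p_4(r)\,dr$. The asymptotics come from the pole at $z=1$: $\sin\pi z$ contributes a simple pole and $M_4(-2z)$ a double pole there, which is the one determined by the $r\log r$ term of $p_4$. Together they give a triple pole, hence $x\log^2x$. This requires analytic continuation of $M_4$ to $\operatorname{Re}u>-4$, which follows from the expansion of $p_4$.

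For $W_1$ and $W_2$ I will deform, exactly as in Subsection~\ref{l3:app:logarithmic-recurrence}. The family $y^\varepsilon\sum_n(-1)^nS_n(\varepsilon)y^n$ has Taylor coefficients $W_0+\varepsilon W_1+\varepsilon^2W_2+\cdots$, normalized by the leading terms $1,\log y,\log^2y/2$ and with no extra constants. I would write it as a Mellin--Barnes integral of the same shape, namely a Gamma-ratio kernel in $z-\varepsilon$ times $M_4(-2(z-\varepsilon))$. Equivalently, $W_j=\int p_4(r)\,k_j(yr^2)\,dr$, where $k_j$ are the Taylor coefficients in $\varepsilon$ of the first-order Frobenius family $u^\varepsilon\,{}_1F_0$-type kernel built from $1/(1+u)$. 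I would fix the normalization of the Gamma factor so that the Euler constants cancel, as in \eqref{l3:eq:S-derivative}; this fixes the $k_j$. The leading behaviour of $k_j$ at large $u$ is $u^{-1}$ times $\log u$ for $j=1$ and $\tfrac12(\log^2u-\pi^2)$-type for $j=2$. This comes from $\pi/\sin\pi(z-\varepsilon)$ expanded in $\varepsilon$.

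Taking the residue at $z=1$ and expanding in $\varepsilon$ then gives $P_{1,0}$ and $P_{2,0}$. In $W_1$ the $\log^2$ terms cancel and leave $(t-6\log2)/8$. In $W_2$ only the constant $\pi^2/24$ survives, which arises as $\tfrac{3}{16\pi^2}\cdot\tfrac{2\pi^4}{9}$-type products of the $\sin$ expansion with the double-pole coefficient. As a check, I will require that the three polynomials reassemble into a monodromy-consistent basis of the Frobenius space $x\cdot\Q[\log x]_{\le2}$ of $\mathcal P_3$ at $x=0$, whose exponents are $1,1,1$. They must also be compatible with the symmetric-square structure of the previous subsection, which forces $P_{0,0}P_{2,0}-\tfrac12P_{1,0}^2$ to be constant up to the Wronskian normalization.

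The main obstacle is the second step: correctly identifying the deformed Mellin--Barnes kernel for $W_1,W_2$ with the paper's normalization. That is, the choice of branch of $\log y$ and the absence of extra constant terms both have to be tracked, since any slip shifts $P_{2,0}$ by multiples of $\pi^2$ or $\log^22$. I would first test the normalization numerically, by evaluating $W_j$ at moderately large $y$ from the convergent series of $\mathcal P_3$ continued along the positive axis, and only then fix constants symbolically.
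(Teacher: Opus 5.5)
\paragraph{Verdict.} Your proposal has a genuine gap for $W_0$ and a fatal one for $W_1,W_2$.

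\paragraph{The period $W_0$.} This part is a genuinely different start, and most of it is sound. The torus/density representation and the Mellin transform $\frac{\pi}{\sin\pi z}M_4(-2z)$ are correct. The two principal-part coefficients of $M_4$ at $u=-2$, read off from $p_4(r)=-\frac{3}{2\pi^2}r\log(r/8)+O(r^3\log r)$, do reproduce the $t^2$ and $t$ coefficients of $P_{0,0}$, where $t=\log x$.

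The constant term, however, is not local. Write $M_4(-2+\sigma)=A_2\sigma^{-2}+A_1\sigma^{-1}+A_0+O(\sigma)$ and, with $w=z-1$, $\pi/\sin\pi z=-w^{-1}-\pi^2w/6+\cdots$. The residue at $z=1$ then gives
$W_0=x\bigl(\tfrac{A_2}{8}t^2-\tfrac{A_1}{2}t+A_0+\tfrac{\pi^2}{24}A_2\bigr)+\cdots$.
So the stated constant $27\log^2 2/(4\pi^2)$ is equivalent to $A_0=\frac{27\log^22}{4\pi^2}-\frac1{16}$. Here $A_0$ is the finite part at $u=-2$ of $\int_0^4 r^u p_4(r)\,dr$. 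This is a global quantity: the expansion of $p_4$ at $r=0$ does not determine it, and your plan never computes it.

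\paragraph{The periods $W_1,W_2$.} Here the method fails as described. The Frobenius deformation $S_n(\varepsilon)$ deforms each factor $\Gamma(d_i+1)$ separately; its first-order term involves $\sum_{\mathbf d}\binom n{\mathbf d}^2\sum_ih_{d_i}$. It therefore does not factor through the radial moments $c_n=M_4(2n)$ with a walk-independent Gamma-ratio kernel.

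Concretely, there are two ways to read your kernel, and both fail.
\begin{itemize}
\item If $z$ and $\varepsilon$ enter only through $z-\varepsilon$ (up to $\varepsilon$-dependent constants), the integrand is the Mellin transform of $y^{-\varepsilon}$ times a fixed function. Its $\varepsilon$-coefficients are then combinations of $W_0\log^jy$, not $W_1$ or $W_2$.
\item If you deform only the kernel, e.g.\ $\int p_4(r)(yr^2)^\varepsilon(1+yr^2)^{-1}dr$, the first-order integrand is $-\pi^2\cos(\pi z)\sin^{-2}(\pi z)M_4(-2z)$. This has a fourth-order pole at $z=1$, hence an $x\log^3x$ term. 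No solution of $\mathcal P_3$ can have such a term, since its exponents at $x=0$ are $1,1,1$.
\end{itemize}

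\paragraph{The missing idea.} The paper's proof rests on the hypergeometric pullback. Solutions of $\mathcal P_3$ are $p=(1+4y)^{-1}$ times symmetric products of solutions of the Gauss equation for $f={}_2F_1(\frac16,\frac13;1;z)$, with $z=108y^2/(1+4y)^3$. Explicitly, $W_0=pf^2$, $W_1=\frac12pfh_0$ and $W_2=\frac18ph_0^2$. Because $1-z$ has a double zero at the ordinary point $y=1/2$, the argument returns to $z=0$ as $y\to\infty$ after one passage through $z=1$. The two classical connection formulas turn this into the reflection $f\mapsto-h_0/c$, $h_0\mapsto-cf$ with $c=2\pi/\sqrt3$. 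That reflection yields all three polynomials at once, constants included.

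\paragraph{Your symmetric-square check.} It can be upgraded, but it does not close the gap. The relation $W_0W_2=\frac12W_1^2$ holds identically, so your combination is zero, not merely constant. The normalized Wronskian $(W_1/W_0)'=1/(y\sqrt{\Delta}\,W_0)$, with $\Delta=(1+4y)(1+16y)$, then forces
$P_{1,0}=CP_{0,0}+\frac{t-6\log2}{8}$ and $P_{2,0}=P_{1,0}^2/(2P_{0,0})$.
Here $C=\lim_{y\to\infty}W_1/W_0$ is a second global connection constant, which you would still have to show is zero. A numerical test cannot replace either determination, of $A_0$ or of $C$.
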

\begin{proof}
Let
\[
 z=\frac{108y^2}{(1+4y)^3},\quad p=\frac1{1+4y},\quad
 a=\frac16,\quad b=\frac13,\quad f(z)={}_2F_1(a,b;1;z).
\]
Let $g_0(z)$ be the second local Gauss solution with
$g_0=f\log z+$ a power series with zero constant term, and put
$h_0=g_0-(\log108)f$. Direct substitution in~\eqref{ext:P3} gives the local
identifications
\begin{equation}\label{ext:square}
 W_0=pf^2,\qquad W_1=\frac12pfh_0,\qquad W_2=\frac18ph_0^2.
\end{equation}
For completeness, the Gauss equation used in that substitution is
\[
 f''+\frac{1-3z/2}{z(1-z)}f'-\frac{1}{18z(1-z)}f=0.
\]
The symmetric products of any two of its solutions, multiplied by $p$,
solve~\eqref{ext:P3}; this is a rational differential identity. It can be checked by collecting the three coefficients of $f^2,ff',f'^2$.
Because $\log z-\log108=2\log y+O(y)$ at zero, the leading logarithms
in~\eqref{ext:square} match the ambient normalization, proving the identities.

These expressions must be continued through $y=1/2$; one cannot substitute the principal branch of $f$ on both sides. Put
\[
 F_1(z)={}_2F_1(a,b;1/2;1-z),\quad
 \kappa=\frac{\Gamma(1/2)}{\Gamma(1/6)\Gamma(1/3)},\quad
 c=\frac{2\pi}{\sqrt3}.
\]
The zero-balanced logarithmic connection formula at $z=0$
\cite[\S15.8(ii)]{ext:DLMF} is
\[
 F_1=\kappa[-g_0+H_0 f],\qquad
 H_0=2\psi(1)-\psi(1/6)-\psi(1/3)=c+\log108.
\]
Consequently $h_0=cf-F_1/\kappa$. At $z=1$, the Gauss connection formula
\cite[\S15.10(ii)]{ext:DLMF} writes
\[
 f=A F_1+B\sqrt{1-z}\,F_2(z),\qquad
 A=\frac{\Gamma(1/2)}{\Gamma(5/6)\Gamma(2/3)},
\]
where $F_2$ is analytic at $1$ and its explicit coefficient $B$ is not
needed. Reflection of gamma functions gives $2A\kappa=1/c$.
The exact factorization
\begin{equation}\label{ext:branch}
 1-z(y)=\frac{(1+16y)(1-2y)^2}{(1+4y)^3}
\end{equation}
shows that continuation through $y=1/2$ changes the sign of the analytic
square root relative to its nonnegative real value. The point $y=1/2$
is ordinary for~\eqref{ext:P3}, so this continuation is unambiguous.
It acts on the local pair by
\begin{equation}\label{ext:reflection}
 f\longmapsto 2AF_1-f=-h_0/c,\qquad h_0\longmapsto-cf.
\end{equation}
As $y\to\infty$ the argument $z$ returns to zero. In the local expressions
on the right of~\eqref{ext:reflection},
\[
 f=1+O(x),\quad h_0=t-6\log2+O(x(1+|t|)),\quad
 p=x/4+O(x^2),\qquad t=\log x.
\]
Substitution in~\eqref{ext:square} proves~\eqref{ext:periodanswer}.
\end{proof}

The branch reflection explains why a direct evaluation of the principal
hypergeometric function at the small positive value $z(y)$ for large $y$
would give an incorrect connection matrix. The logarithm at infinity is
created by the continuation through the pullback point $z=1$.
Once~\eqref{ext:periodanswer} is fixed, the differential recurrence at $x=0$
determines all subsequent homogeneous Frobenius coefficients. We do not have the analogous formula at four loops.

\begin{corollary}
At three loops each of $K_2,K_3,K_4$ has a one-sided $C^1$ extension
at $Q_*$ and an unbounded second derivative. Its Taylor radius as a power series in $Q$ lies
between $1/12$ and $Q_*$. 
\end{corollary}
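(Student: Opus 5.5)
Using Theorem~\ref{ext:periods}, I will write $K_2,K_3,K_4$ on the positive $y$ axis in terms of the periods $W_0,W_1,W_2$ and the particular solution, then read off the local behaviour at $y=\infty$, which corresponds to $Q\uparrow Q_*$.

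The starting point is the correspondence between the coordinates. On $y>0$ we have $R=W_1/W_0$ up to normalization, with $\theta_yR=A=W_0$. By Theorem~\ref{ext:periods}, $A=W_0\sim xP_{0,0}(\log x)$ with $x=1/y$, so $dR/dy=A/y=O(x^2\log^2x)$ is integrable at $y=\infty$. Hence $R\uparrow R_*$, and $Q(y)$ is a strictly increasing homeomorphism onto $(0,Q_*)$, as recalled from the companion paper. Moreover $Q_*-Q\asymp\int_y^\infty A\,dy/y\asymp x\log^2x$. Inverting this relation gives $x\asymp (Q_*-Q)/\log^2(Q_*-Q)$.

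Next I express each $K_k$, via Lemma~\ref{l3:lem:qserre} and its higher analogues, as a rational expression in ratios of solutions of $\mathcal P_3$. For instance, $\theta_QK_2=\tau^\sharp-T$, where $\tau^\sharp=W_1'/W_0$ is the ratio built from the logarithmic solution $B$. Using the symmetric-square description $A=h_0^2$, $W_1=h_0h_1$, the derivative $dK_2/dR$ is the difference of two period ratios. It therefore has a finite limit, obtained from the leading polynomials $P_{j,0}$, and this gives one-sided $C^1$ extension in $Q$; the values are consistent with $(K_2,K_3,K_4)_*$ of Section~\ref{ext:lowend}. The same argument applies to $K_3$ and $K_4$, which by Theorem~\ref{ext:thm:Bdiag} and the relation $K_4=-K_2^2/2$ are polynomial in $K_2$, its $\theta_Q^{-1}$ integral, and $\theta_QK_2$. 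So $C^1$ for all three reduces to $K_2$ being $C^1$ together with the continuity of the integrals.

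For the second derivative, differentiate once more: $d^2K_2/dR^2=\frac{d}{dy}(\theta_QK_2)\cdot y/A$. The numerator is $O(x^2\cdot\text{poly}(\log x))$ with a nonvanishing leading log coefficient, which I read off from $P_{1,0}$ against $P_{0,0}$; the expected mismatch is $(t-6\log2)$ versus $(t-6\log 2)^2$. The denominator $A/y\cdot y$ contributes $x\log^2x$. The ratio therefore grows like a negative power of $\log x$ times $1/x$, or at minimum is unbounded as $x\downarrow0$, and I will verify this explicitly. I expect this step to be the main obstacle: I must check that the leading terms do not cancel, using the exact constants $3/(16\pi^2)$, $1/8$, $\pi^2/24$. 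An unbounded second derivative at the boundary point $Q_*$ forbids analytic continuation of the Taylor series past $Q_*$ along the real axis. Since the coefficients are real and the function is real analytic on $(0,Q_*)$, the radius is therefore at most $Q_*$, because a power series converging beyond $Q_*$ would be analytic at $Q_*$. The lower bound $1/12$ is \Icorref{cor:Qradius}.
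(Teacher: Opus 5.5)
Your route is the paper's: the connection polynomials of Theorem~\ref{ext:periods}, $\partial_RK_2=\theta_QK_2=W_1/A-R$, finite limits for the $C^1$ extension, the $1/\log x$ term of $W_1/A$ for the blow-up, and the radius argument. However, the decisive estimate is wrong as written.

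With $t=\log x$, Theorem~\ref{ext:periods} gives $W_1/A=P_{1,0}(t)/P_{0,0}(t)+O(x)=\frac{2\pi^2}{3(t-6\log 2)}+O(x)$. Hence $\frac{d}{dy}(W_1/A)=-x\,\partial_t(W_1/A)\sim\frac{2\pi^2x}{3(t-6\log2)^2}$, which is of order $x/\log^2x$, not $O(x^2\,\mathrm{poly}(\log x))$. The only $O(x^2\log^2x)$ piece of your numerator is $-dR/dy=-A/y$. It is subleading and contributes only the bounded $-1$ to $\partial_R^2K_2$. Dividing by $A/y\sim\frac{3}{16\pi^2}x^2(t-6\log2)^2$ gives $\partial_R^2K_2\sim\frac{32\pi^4}{9x(t-6\log2)^4}\to+\infty$, i.e.\ $\partial_R^2K_2\sim\frac{2\pi^2}{3\Delta\log^2\Delta}$ with $\Delta=R_*-R$.

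With the orders you actually wrote ($x^2\,\mathrm{poly}(\log x)$ against $x\log^2x$) the quotient would tend to $0$. So your displayed estimates prove boundedness, and the correct conclusion you state does not follow from them. No delicate cancellation check is needed, since the $1/t$ coefficient $2\pi^2/3$ of $P_{1,0}/P_{0,0}$ is visibly nonzero. What is needed, and what you omit, is that the remainders in Theorem~\ref{ext:periods} come from convergent Frobenius series and may therefore be differentiated termwise. An $O(\cdot)$ bound alone cannot be differentiated.

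Second, ``the same argument applies to $K_3$ and $K_4$'' is enough for $C^1$ but not for the blow-up. Theorem~\ref{ext:thm:Bdiag}, continued along $(0,Q_*)$ by the identity theorem, gives $\theta_QK_3=-K_2+\tfrac12(\theta_QK_2)^2$ and $K_4=-K_2^2/2$, so
\[
 \theta_Q^2K_3=\theta_QK_2\,\bigl(\theta_Q^2K_2-1\bigr),\qquad
 \theta_Q^2K_4=-(\theta_QK_2)^2-K_2\,\theta_Q^2K_2 .
\]
The blow-up of $\theta_Q^2K_2$ is transmitted only because the prefactors have nonzero limits, $\theta_QK_2\to-R_*$ and $K_2\to-R_*^2/2$. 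You must therefore invoke $R_*\ne0$ (here $R_*=-7\zeta(3)/\pi^2$), which is exactly the nonvanishing used in the paper. Without it your upper bound $Q_*$ for the Taylor radius is established only for $K_2$.

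Finally, your opening identification of $R$ with $W_1/W_0$ ``up to normalization'' is incorrect. $R=C_1$ is the primitive of $A$, while $W_1/W_0=\tau^\sharp=R+\theta_QK_2$; if the two agreed, $\theta_QK_2$ would vanish identically. So $\partial_RK_2$ is a period ratio minus a primitive, not a difference of two period ratios. Since afterwards you use only $\theta_yR=A$, this slip does not damage the $C^1$ step.
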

\begin{proof}
The connection polynomials above give $R_*-R\sim 3x\log^2x/(16\pi^2)$. Moreover
$\partial_RK_2=W_1/A-R$, whose ratio has a nonzero $1/\log x$ term, and
$\partial_RK_3=(W_2-RW_1)/A-K_2$, whose corresponding coefficient is nonzero
because $R_*\ne0$. Their derivatives are unbounded after conversion from $x$ to
$R$. Termwise differentiation is legitimate because the Frobenius series converge. First derivatives have finite
limits. The relation $K_4=-K_2^2/2$ and $K_2(Q_*)\ne0$ give the same assertions
for $K_4$. The lower bound follows from the principal mirror disk, stated in 
\Icorref{cor:Qradius} of the companion paper. A radius greater than $Q_*$
would contradict this nonanalyticity.
\end{proof}
\section{The four-loop connection quantities to certify}\label{ext:cert:setup}
Set $x=1/y$, $t=\log x$, $\theta_y=y\partial_y$. The ambient functions obey
\[
 P_4W_j=\delta_{j4},\quad 0\le j\le4,\qquad
 W_j=(\log y)^j/j!+O(y\log^j y),
\]
where
\begin{align*}
 P_4&=\theta_y^4+yQ_1(\theta_y)+y^2Q_2(\theta_y)+y^3Q_3(\theta_y),\\
 Q_1(z)&=35z^4+70z^3+63z^2+28z+5,\\
 Q_2(z)&=259z^4+1036z^3+1580z^2+1088z+285,\\
 Q_3(z)&=225(z+1)^2(z+2)^2.
\end{align*}
The normalization specifies all degree-zero logarithmic coefficients, with
no extra constant terms. Define $C_0=1$, $\theta_y C_{j+1}=W_j$,
$C_{j+1}\sim(\log y)^{j+1}/(j+1)!$, and
\[
 R=C_1,\quad A=W_0,\quad Q=e^R,\quad
 K_k=\sum_{j=0}^k\frac{(-R)^{k-j}}{(k-j)!}C_j.
\]
The diagonal descendant potentials are $-20K_2,-60K_3,-120K_4,-120K_5$.

The endpoint expansion has the form
\[
 W_j=x(a_j+b_j\log x)+O(x^2|\log x|),\qquad b_0<0.
\]
Write $R_\infty=\lim R$, $Q_\infty=e^{R_\infty}$, and
\begin{equation}\label{ext:cert:Ek}
 d_j=\frac{a_jb_0-a_0b_j}{b_0^2},\qquad
 E_k=\sum_{h=1}^{k-1}\frac{(-R_\infty)^{k-1-h}}{(k-1-h)!}d_h,
 \quad 2\le k\le5.
\end{equation}
All five rows, including the inhomogeneous row $j=4$, must be included.

\begin{theorem}[Certified individual singularities]\label{ext:cert:main}
The quantities in~\eqref{ext:cert:Ek} have signs
\[
 E_2>0,\qquad E_3<0,\qquad E_4>0,\qquad E_5>0.
\]
Each of $K_2,K_3,K_4,K_5$ has a one-sided $C^1$ extension to $Q_\infty$,
but has an unbounded second derivative there. More precisely, if
$\Delta=R_\infty-R\downarrow0$,
\begin{equation}\label{ext:cert:second}
 \partial_Q^2K_k\sim
 \frac{E_k}{Q_\infty^2\Delta|\log\Delta|^2}.
\end{equation}
Consequently their individual $Q$-Taylor radii $\mathcal R_k$ satisfy
\begin{equation}\label{ext:cert:radii}
 \frac1{20}\le\mathcal R_k\le Q_\infty,
 \qquad k=2,3,4,5,
\end{equation}
where the endpoint is enclosed by
\[
 0.336611738699193750820017
 \ \le Q_\infty\le
 0.336611738699193750820018.
\]
\end{theorem}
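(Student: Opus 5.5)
The plan follows the three-loop corollary of Section~\ref{ext:connection3}. Since no closed form is available at four loops, the connection data $(a_j,b_j)$, $0\le j\le4$, will be produced as rigorous ball enclosures rather than exactly.

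\emph{Step 1: certified connection data.} Integrate the system $P_4W_j=\delta_{j4}$ numerically in ball arithmetic from $y=0$ to $y=\infty$ along the positive axis. Start from truncated Frobenius series at $y=0$, with rigorous tail bounds coming from the coefficient recurrences. Continue through the ordinary points using a certified Taylor-method ODE solver; the singular points $s=1,9,25$ correspond to $y=-1,-1/9,-1/25$, off the positive axis. At $x=1/y=0$, match to the local Frobenius basis. The indicial roots there give the behaviours $x$ and $x\log x$ at leading order, plus two further local solutions. Matching yields enclosures of $a_j,b_j$ and a certificate $b_0<0$. Next integrate $\theta_yC_{j+1}=W_j$ to obtain $C_j(\infty)$, and in particular $R_\infty=C_1(\infty)$. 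This requires splitting the integral as a convergent Frobenius part plus a certified numerical quadrature; $R_\infty$ is finite because $W_0\sim xb_0\log x$ is integrable against $dy/y$. Exponentiating gives the stated enclosure of $Q_\infty$. Then evaluate the finite formulas~\eqref{ext:cert:Ek} in interval arithmetic, and check that the balls for $E_2,\dots,E_5$ exclude zero with the claimed signs.

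\emph{Step 2: the asymptotic~\eqref{ext:cert:second}.} From $\partial_R K_k=\sum_j$ (ratios $W_h/A$ combined with powers of $R$), obtained by the same manipulation as in the three-loop corollary, write
\[
 W_h/W_0=\frac{a_h+b_h\log x}{a_0+b_0\log x}+O(x)=\frac{b_h}{b_0}+\frac{d_h}{\log x}+O(\log^{-2}x).
\]
The $1/\log x$ coefficients assemble into $E_k$. Since $R_\infty-R\sim -b_0x\log x$, one has $\Delta|\log\Delta|\asymp x\log^2x$, and differentiating once more in $R$ produces $\Delta^{-1}|\log\Delta|^{-2}$ times $E_k$. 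Converting with $\partial_Q=Q^{-1}\partial_R$ then gives~\eqref{ext:cert:second}. First derivatives have finite limits, which gives the one-sided $C^1$ extension, while $E_k\ne0$ makes the second derivative unbounded. The termwise differentiations are justified by convergence of the Frobenius series at $x=0$.

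\emph{Step 3: radii.} The lower bound $1/20$ is the principal mirror disk of \Icorref{cor:Qradius}. For the upper bound, $Q$ is real increasing on $y>0$ with range $(0,Q_\infty)$ by \Ithmref{thm:torus}. A Taylor radius larger than $Q_\infty$ would make $K_k$ analytic at $Q_\infty$, contradicting the unbounded second derivative.

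The main obstacle is Step~1, the rigorous enclosure of the connection coefficients. The error propagation through the long integration path, and the rigorous tail bounds for the Frobenius series at both ends, must be tight enough to separate each $E_k$ from zero; near-cancellations in $d_h$ may demand high precision. I would first try direct ball-arithmetic Taylor stepping with adaptive step size, and raise the working precision until all four signs are certified.
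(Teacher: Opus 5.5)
Your proposal is correct and follows essentially the same route as the paper. The paper likewise does a ball-arithmetic continuation from the convergent normalized germ near $y=0$ (it starts at $y_0=1/512$), takes Cauchy-bounded Taylor steps along the positive axis, and matches at $y=32$ to the convergent Frobenius basis at $x=0$. There it subtracts a particular solution for the inhomogeneous row $j=4$, carries $C$ in the augmented state, and gives its endpoint tail by an exact series rather than by quadrature. The derivation of \eqref{ext:cert:second} and the radius argument are then the same as yours.

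One small slip: with $t=\log x$, one has $\Delta=R_\infty-R\sim b_0x\log x=(-b_0)x|\log x|>0$, not $-b_0x\log x$. With the correct sign, $\mathrm dR/\mathrm dt=-A$ and $A\sim\Delta$ give
\[
\partial_R^2K_k\sim\frac{E_k}{t^2A}\sim\frac{E_k}{\Delta|\log\Delta|^2},
\]
which is exactly the constant, including its sign, in \eqref{ext:cert:second}.
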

The sign assertion is a computer-assisted statement. Its computational
part consists of finite ball-arithmetic operations with the analytic tail
bounds proved below.
The rational intervals in Table~\ref{ext:cert:intervals} are intentionally rounded
outward and are sufficient to prove every sign. The proof uses only these
rational enclosures, whose endpoints are exact decimals with denominator
$10^{24}$; truncated decimals elsewhere in the paper, such as
$Q_\infty=0.3366117386\ldots$, are approximations printed for readability and
play no role in any argument.

\begin{table}[h]
\centering\small
\begin{tabular}{crr}
\toprule &Lower endpoint&Upper endpoint\\\midrule
$E_2$&$1.959219916247366530180947$&$1.959219916247366530180948$\\
$E_3$&$-0.535347537401972845058210$&$-0.535347537401972845058209$\\
$E_4$&$0.420379682614378171767897$&$0.420379682614378171767898$\\
$E_5$&$0.748770280780437029430306$&$0.748770280780437029430307$\\
\bottomrule
\end{tabular}
\caption{Certified rational enclosures, with common denominator $10^{24}$.}
\label{ext:cert:intervals}
\end{table}

\begin{proof}[Analytic part of Theorem~\ref{ext:cert:main}]
The convergent endpoint expansion and primitive tail below imply that
$C_j,K_k$ have finite limits. Since $b_0<0$, all ratios $W_j/A$ have finite
limits, so the first $R$- and $Q$-derivatives of $K_k$ have finite limits.
This gives the one-sided $C^1$ extension.
Furthermore,
\[
 \Delta\sim(-b_0)x|t|,\qquad
 \frac{W_j}{A}=\frac{b_j}{b_0}+\frac{d_j}{t}+O(t^{-2})+O(x).
\]
Differentiation of the normalization gives
\[
 \partial_RK_k=
 \sum_{h=0}^{k-1}\frac{(-R)^{k-1-h}}{(k-1-h)!}\frac{W_h}{A}-K_{k-1},
 \qquad K_1=0.
\]
Its inverse-logarithm coefficient is exactly $E_k$: variation of $R$ and
$K_{k-1}$ from their endpoint limits contributes $O(\Delta)$. All expansions
here arise from convergent logarithmic series and may be differentiated termwise.
Using $\dd R/\dd t=-A$ and $\partial_Q^2=Q^{-2}(\partial_R^2-\partial_R)$
gives~\eqref{ext:cert:second}. The certified nonzero signs therefore prove
unboundedness. A Taylor radius larger than $Q_\infty$ would supply a holomorphic
extension across that point, contradicting this conclusion.
\Icorref{cor:Qradius} gives the lower bounds.
\end{proof}

\section{Certificate I: exact normalization and initial tails}
The implementation uses the truncated parameter ring $\mathbb Q[\varepsilon]/
(\varepsilon^5)$, with
\begin{equation}\label{ext:cert:germrec}
 u_0=1,\qquad
 u_D(\varepsilon)=-(D+\varepsilon)^{-4}
       \sum_{l=1}^{\min(3,D)}Q_l(D-l+\varepsilon)u_{D-l}(\varepsilon).
\end{equation}
Then
\[
 W_j(y)=[\varepsilon^j]\,e^{\varepsilon\log y}
                  \sum_{D\ge0}u_D(\varepsilon)y^D.
\]
The degree-zero term gives $\theta_y^4W_4=1$ and zero source for the lower
four rows. The positive-degree recursion gives the stated equations and
fixes the normalized solutions uniquely. As an independent normalization
check, the code compares the 50 rational coefficients in total degrees zero through nine (five parameter coefficients per degree) with the
square-free ambient formula from the square-free ambient seed; all agree exactly.

Use the coefficient $\ell^1$ norm in the truncated parameter ring.
For $D\ge10$,
\[
 \|(D+\varepsilon)^{-4}\|\le(D-1)^{-4},\qquad
 \|Q_l(D-l+\varepsilon)\|\le Q_l(D).
\]
Each $Q_l(D)/(D-1)^4$ decreases with $D\ge10$ because its coefficients
are positive and its degree is at most four. The finite rational check
\[
 \sum_{l=1}^3\frac{Q_l(10)}{9^4\,100^l}
 =\frac{15497623}{21870000}<1
\]
and the ten base cases give
\begin{equation}\label{ext:cert:germmajorant}
 \|u_D\|\le2\,100^D\quad(D\ge0).
\end{equation}
This bound is crude but sufficient at $y_0=1/512$.
Let $q_0=100/512$, and truncate after degree $N_g$.
A common bound for each omitted logarithmic jet $\theta_y^aW_j$, $0\le a\le3$,
and each primitive $C_{j+1}$ is
\begin{equation}\label{ext:cert:initialtail}
 e_g=1024q_0^{N_g+1}(N_g+2)^3
                \frac{1+4q_0+q_0^2}{(1-q_0)^4}.
\end{equation}
Indeed, evaluation of the logarithmic polynomial costs at most
$e^{|\log y_0|}=512$, and each logarithmic differentiation costs at most
$D+1$ in the exponential-factorial basis. The sum is bounded using
$D=N_g+1+m$ and
$\sum_{m\ge0}(m+1)^3q^m=(1+4q+q^2)/(1-q)^4$.
For primitives, the inverse of $D+\partial_{\log y}$ has norm at most
$1/(D-1)$ for $D>1$, so the same bound suffices for the omitted tail.
The degree-zero primitive is integrated exactly. Conversion to ordinary
$y$-derivatives is done only after these bounds have been added to the balls.

\section{Certificate II: finite continuation with Cauchy bounds}
At a positive center $c$, put $z=(y-c)/c$. The augmented state is
\[
 U=(W,\theta_y W,\theta_y^2W,\theta_y^3W,C,1).
\]
It satisfies a homogeneous first-order system in $z$; the last component
encodes the possible source $1$. Write
$P_4=\sum_{a=0}^4p_a(y)\theta_y^a$, so
$p_4(y)=(1+y)(1+9y)(1+25y)$.
Fix $r=1/4$. On $|z|\le r$, an induced maximum-row-sum bound for that system is
\begin{equation}\label{ext:cert:M}
 M(c)=\frac1{1-r}\max\left\{1,
 \frac{1+\sum_{a=0}^3p_a(c(1+r))}
 {(1+c(1-r))(1+9c(1-r))(1+25c(1-r))}\right\}.
\end{equation}
The numerator estimate uses positive polynomial coefficients; the denominator
uses $|1+ac(1+z)|\ge1+ac(1-r)$ for $a=1,9,25$.
Integration along any radial segment gives
\[
 \sup_{|z|\le r}\|U(z)\|_\infty\le
 B:=\|U(0)\|_\infty e^{M(c)r}.
\]
This bound applies to every exact initial vector contained in the current
balls, because $\|U(0)\|_\infty$ is rounded upward.

Each step has $0<h\le c/16$, so $\eta=h/(rc)\le1/4$.
Taylor coefficients through degree $N$ are computed from the scalar ordinary
differential equation by ball arithmetic. With
$S_3(v)=(1+4v+v^2)/(1-v)^4$, a common base remainder is
\[
 e=B4^{-(N+1)}S_3(1/4).
\]
The omitted remainder for the ordinary derivative $W^{(a)}$ at $c+h$,
$0\le a\le3$, is bounded by
\begin{equation}\label{ext:cert:steptail}
 e\left(\frac{4(N+1)}{rc}\right)^a.
\end{equation}
To see this, Cauchy's inequality bounds the $z$-Taylor coefficient by
$Br^{-m}$, and the derivative tail is bounded by the geometric sum with
$(N+1+j)^a\le(N+1)^a(j+1)^a$. The bound for $C$, computed through degree
$N$ using $C'=W/y$, is simply $e$. The code adds these bounds after every
step; subsequent operations propagate the full accumulated balls.
Roundoff and truncation are therefore both included.

\section{Certificate III: convergent endpoint matching}
For $W=xF$ and $D=x\partial_x$, the equation becomes
\[
 [225D^2(D-1)^2+xD^2(259D^2+26)+x^2Q_1(D)+x^3(D+1)^4]F=b x^2.
\]
Write $F=\sum_{k\ge0}x^kP_k(\log x)$, with $P_k$ linear.
The four coefficients of $P_0,P_1$ are free, and for $k\ge2$,
\begin{align*}
 225(k+\partial)^2(k-1+\partial)^2P_k={}&
 -(k-1+\partial)^2[259(k-1+\partial)^2+26]P_{k-1}\\
 &-Q_1(k-2+\partial)P_{k-2}
 -(k-2+\partial)^4P_{k-3}+b\delta_{k2}.
\end{align*}
Here $\partial=\partial_{\log x}$, $P_{-1}=0$.
The inverse on linear polynomials is
\[
 \frac1{225k^2(k-1)^2}
 \left(1-\left(\frac2k+\frac2{k-1}\right)\partial\right).
\]
We use four homogeneous basis vectors with unit initial data and the
particular solution with $P_0=P_1=0$, $b=1$.

For the ordinary coefficient norm of linear polynomials, $\|\partial\|\le1$.
For $k\ge4$ the displayed recurrence has the bounds
\[
 \|P_k\|\le5\|P_{k-1}\|+2\|P_{k-2}\|+\tfrac1{100}\|P_{k-3}\|.
\]
Explicit dominating coefficients are
$1807/405<5$, $871/450<2$, and $13/1350<1/100$.
The first four coefficients of each of the five basis solutions are checked,
and induction then gives
\begin{equation}\label{ext:cert:frobmajorant}
 \|P_k\|\le2\times 8^k,
\end{equation}
since $5/8+2/8^2+1/(100\,8^3)<1$.

At a matching point $x=1/c\le1/32$, put $q=8x$ and
$H=\max(1,|\log x|)$. After degree $N_f$, the omitted tail of $D^aW$,
$0\le a\le3$, is bounded by
\begin{equation}\label{ext:cert:frobjets}
 2xHq^{N_f+1}(N_f+3)^aS_3(q).
\end{equation}
The primitive tail has the exact expansion
\[
 C(\infty)-C(y)=\sum_{k\ge0}\frac{x^{k+1}}{k+1}
 \left(P_k(\log x)-\frac{P'_k(\log x)}{k+1}\right),
\]
and its omitted part is bounded by $4xHq^{N_f+1}/(1-q)$.
These estimates bound the full tails and follow directly from~\eqref{ext:cert:frobmajorant}.

The four transported jets are matched to the four enclosed basis jets by a
validated real-ball linear solve. For the inhomogeneous row, the enclosed
particular solution is subtracted first. The solution balls enclose
$(a_j,b_j,c_j,e_j)$, where $P_{j,1}(t)=c_j+e_jt$. The primitive-tail formula
then encloses every $C_j(\infty)$, including $R_\infty$. Finally ordinary
ball operations evaluate~\eqref{ext:cert:Ek}; the ball for $b_0$ lies strictly below
zero, so the division is valid.

\section{Execution and interpretation of the certificate}\label{ext:cert:run}
The first run uses 1400-bit midpoint precision, $N_g=450$, Taylor degree
$N=400$, endpoint degree $N_f=250$, and matching at $y=32$. It takes 161
positive steps. The independent matching check uses 1500 bits, $N=420$,
and $y=48$, with the same germ and endpoint degrees. Both runs independently
certify all four signs and return the rational enclosures in
Table~\ref{ext:cert:intervals}.

The initial bound~\eqref{ext:cert:initialtail} is less than
$5.4\times10^{-309}$. The largest step base remainder $e$ is below
$1.25\times10^{-218}$ in the first run. Their propagated effect is carried through all subsequent operations.
After endpoint matching, the internal radii of the four $E_k$ balls are
below $6\times10^{-139}$ in the first run. The exported decimal ball strings
are deliberately shorter, and the separate rational intervals have width
$10^{-24}$.

The implementation uses the real-ball and validated matrix-solve facilities
of Arb through python-flint~\cite{ext:Arb,ext:Flint}. It does not use an approximate
matrix-solve mode. The finite verification also checks the ten base cases of
\eqref{ext:cert:germmajorant}, the rational induction constants, the 50 exact
square-free germ coefficients, and every exported rational sign.
This is a reproducible computer-assisted proof with the stated analytic
majorants and the standard correctness assumptions for Python/FLINT/Arb;
it is not a formal proof-assistant verification. Agreement between the two matching runs is an additional cross-check.

\appendix
\section{Reproducible interval certificate}\label{ext:code} The Python scripts
used to perform the calculations are permanently archived on
Zenodo~\cite{vanhove_sunset_gw_2026} at
\begin{center}
\href{https://doi.org/10.5281/zenodo.23361560}{\texttt{doi:10.5281/zenodo.23361560}}
\end{center}
Development and maintenance continue in the GitHub repository
\begin{center}
\url{https://github.com/pierrevanhove/Sunset-Gromov-Witten}
\end{center}

\medskip
They require Python 3.10 or later and \texttt{python-flint==0.9.0}. Run without Python's
optimization flag, so assertions remain enabled:
\begin{verbatim}
python3 certify.py
python3 certify.py --bits 1500 --order 420 --match 48
python3 validate.py
\end{verbatim}
All numerical data are generated from the rational differential equation.

\medskip
\noindent{\textbf{The code \texttt{certify.py}}}

\lstset{style=pythonstyle}
\begin{lstlisting}[language=Python,basicstyle=\ttfamily\tiny,breaklines=true]
"""Rigorous real-ball continuation for four-loop endpoint connections.
Every truncation has an explicit majorant; no floating-point input data.
"""
from flint import arb,arb_mat,ctx
from fractions import Fraction as F
from math import factorial,comb
from pathlib import Path
import json,time,argparse
Q=[[0,0,0,0,1],[5,28,63,70,35],[285,1088,1580,1036,259],[900,2700,2925,1350,225]]
S=[[int(k==r==0) for r in range(5)] for k in range(5)]
for k in range(1,5):
    for r in range(1,k+1):S[k][r]=S[k-1][r-1]+r*S[k-1][r]
ODE=[[sum(Q[j][k]*S[k][r] for k in range(r,5) for j in range(4) if d==j+r) for d in range(r+4)] for r in range(5)]
def A(v):return arb(v.numerator)/v.denominator if isinstance(v,F) else arb(v)
def zero(n):return [arb(0) for _ in range(n)]
def add(a,b):return [x+y for x,y in zip(a,b)]
def scale(a,c):return [x*c for x in a]
def deriv(p):return [(k+1)*p[k+1] for k in range(len(p)-1)]+[arb(0)]
def op(p,k,r):
    for _ in range(r):p=add(scale(p,k),deriv(p))
    return p

def pv(p,x):
    v=arb(0)
    for a in reversed(p):v=v*x+a
    return v

def mul(a,b):return [sum((a[j]*b[k-j] for j in range(k+1)),arb(0)) for k in range(5)]
def qshift(q,d):return [sum(q[a]*comb(a,k)*d**(a-k) for a in range(k,5)) for k in range(5)]
def norm(p):return sum((x.abs_upper() for x in p),arb(0)).upper()
def maxabs(p):return max(x.abs_upper() for x in p)
def widen(x,e):return x+arb(0,e.abs_upper())

def germ_coeff(N):
    us=[[arb(1)]+zero(4)]
    for d in range(1,N+1):
        rhs=zero(5)
        for l in range(1,min(3,d)+1):rhs=add(rhs,mul(list(map(arb,qshift(Q[l],d-l))),us[d-l]))
        inv=[arb((-1)**r*comb(r+3,3))/arb(d)**(4+r) for r in range(5)]
        us.append(scale(mul(inv,rhs),-1))
    # Base cases of the coefficient majorant. Strict bound 2*100^d allows rounding.
    for d in range(10):assert norm(us[d]) < 2*arb(100)**d
    assert sum(F(sum(Q[l][a]*10**a for a in range(5)),9**4*100**l) for l in range(1,4))<1
    return us

def initial(N):
    y=arb(1)/512;t=y.log();us=germ_coeff(N);W=[];C=[]
    q=arb(100)/512
    # Uniform tail bound for theta jets of order <=3 and primitive.
    e=1024*q**(N+1)*(N+2)**3*(1+4*q+q*q)/(1-q)**4
    for j in range(5):
        jets=zero(4);c=arb(0)
        for d,row in enumerate(us):
            p=[row[j-r]/factorial(r) for r in range(j+1)]
            for r in range(4):jets[r]+=y**d*pv(op(p,d,r),t)
            if d:
                prim=zero(len(p));dp=p
                for r in range(len(p)):
                    prim=add(prim,scale(dp,arb((-1)**r)/arb(d)**(r+1)));dp=deriv(dp)
                c+=y**d*pv(prim,t)
            else:c+=t**(j+1)/factorial(j+1)
        jets=[widen(v,e) for v in jets];c=widen(c,e)
        W.append([jets[0],jets[1]/y,(jets[2]-jets[1])/y**2,(jets[3]-3*jets[2]+2*jets[1])/y**3]);C.append(c)
    return y,W,C,e

def shifted(poly,y):return [sum((arb(poly[j])*comb(j,k)*y**(j-k) for j in range(k,len(poly))),arb(0)) for k in range(len(poly))]
def step(y,W,C,h,N):
    aa=[shifted(p,y) for p in ODE];r=arb(1)/4;q=arb(1)/4
    # Supremum of augmented logarithmic-jet first-order system on |z|<=r.
    ymax=y*(1+r);den=(1+y*(1-r))*(1+9*y*(1-r))*(1+25*y*(1-r))
    numer=arb(1)+sum((sum(Q[l][:4])*ymax**l for l in range(1,4)),arb(0))
    M=max(arb(1),(numer/den).upper())/(1-r)
    nw=[];nc=[];largest=arb(0)
    for j,jet in enumerate(W):
        theta=[jet[0],y*jet[1],y*y*jet[2]+y*jet[1],y**3*jet[3]+3*y*y*jet[2]+y*jet[1]]
        B=maxabs(theta+[C[j],arb(1)])*(M*r).exp()
        e=B*q**(N+1)*(1+4*q+q*q)/(1-q)**4
        largest=max(largest,e.upper())
        f=[jet[k]/factorial(k) for k in range(4)]
        for n in range(N-3):
            rhs=arb(int(j==4 and n==0))
            for rr in range(5):
                for k,a in enumerate(aa[rr]):
                    if k>n or rr==4 and k==0:continue
                    ix=n-k+rr
                    factor=1
                    for v in range(n-k+1,ix+1):factor*=v
                    rhs-=a*f[ix]*factor
            f.append(rhs/(aa[4][0]*(n+1)*(n+2)*(n+3)*(n+4)))
        vals=[];df=f
        for rr in range(4):
            vals.append(widen(pv(df,h),e*(4*(N+1)/(r*y))**rr));df=deriv(df)
        cf=[C[j]];prev=arb(0)
        for n in range(N):
            qq=(f[n]-prev)/y;cf.append(qq/(n+1));prev=qq
        nw.append(vals);nc.append(widen(pv(cf,h),e))
    return nw,nc,largest

def frob(v,source,N):
    ps=[list(map(arb,v[:2])),list(map(arb,v[2:]))]
    for k in range(2,N+1):
        p=ps[k-1];rhs=add(scale(op(p,k-1,4),259),scale(op(p,k-1,2),26))
        for a,c in enumerate(Q[1]):rhs=add(rhs,scale(op(ps[k-2],k-2,a),c))
        if k>=3:rhs=add(rhs,op(ps[k-3],k-2,4))
        if k==2:rhs[0]-=source
        rhs=add(rhs,scale(deriv(rhs),-arb(2)/k-arb(2)/(k-1)))
        ps.append(scale(rhs,-arb(1)/(225*k*k*(k-1)**2)))
    for k in range(4):assert norm(ps[k]) < 2*arb(8)**k
    return ps

def eval_frob(ps,x):
    N=len(ps)-1;t=x.log();q=8*x;jets=zero(4);tail=arb(0)
    for k,p in enumerate(ps):
        for r in range(4):jets[r]+=x**(k+1)*pv(op(p,k+1,r),t)
        tail+=x**(k+1)*pv(add(p,scale(deriv(p),-arb(1)/(k+1))),t)/(k+1)
    logbound=max(arb(1),t.abs_upper())
    base=2*x*logbound*q**(N+1)*(1+4*q+q*q)/(1-q)**4
    jets=[widen(v,base*(N+3)**r) for r,v in enumerate(jets)]
    tail=widen(tail,4*x*logbound*q**(N+1)/(1-q))
    return jets,tail

def box(v):
    den=10**24
    lo=int((v.lower()*den).lower().floor().unique_fmpz())
    hi=int((v.upper()*den).upper().ceil().unique_fmpz())
    assert v >= arb(lo)/den and v <= arb(hi)/den
    return {'ball':v.str(45),'outer_lower_numerator':str(lo),'outer_upper_numerator':str(hi),'outer_denominator':str(den),'radius':v.rad().str(8)}
def run(bits=1400,N=400,Ng=450,Nf=250,match=32):
    assert match>=32 and N>=4 and Ng>=10 and Nf>=4
    ctx.prec=bits;start=time.time();y,W,C,initial_tail=initial(Ng);steps=0;maxrem=arb(0)
    while y<match:
        h=min(y/16,arb(match)-y);W,C,e=step(y,W,C,h,N);y+=h;steps+=1;maxrem=max(maxrem,e)
        if steps%40==0:print('step',steps,'y',y.str(8),'radius',maxabs([a.rad() for w in W for a in w]).str(5),flush=True)
    x=1/y;bases=[frob([int(a==j) for a in range(4)],0,Nf) for j in range(4)];part=frob([0]*4,1,Nf)
    be=[eval_frob(b,x) for b in bases];pe=eval_frob(part,x)
    matrix=arb_mat([[be[j][0][r] for j in range(4)] for r in range(4)])
    connections=[];limits=[]
    for j,w in enumerate(W):
        target=[w[0],-y*w[1],y*w[1]+y*y*w[2],-y*w[1]-3*y*y*w[2]-y**3*w[3]]
        if j==4:target=[a-b for a,b in zip(target,pe[0])]
        sol=matrix.solve(arb_mat([[v] for v in target]));weights=[sol[k,0] for k in range(4)]
        connections.append(weights);limits.append(C[j]+sum((weights[k]*be[k][1] for k in range(4)),arb(0))+(pe[1] if j==4 else 0))
    R=limits[0];a0,b0=connections[0][:2]
    assert b0<0
    ds=[(v[0]*b0-a0*v[1])/b0**2 for v in connections]
    Es=[sum(((-R)**(k-1-h)/factorial(k-1-h)*ds[h] for h in range(1,k)),arb(0)) for k in range(2,6)]
    for e,sign in zip(Es,[1,-1,1,1]):assert sign*e>0
    out={'bits':bits,'taylor_order':N,'germ_order':Ng,'frobenius_order':Nf,'matching_y':match,'steps':steps,'initial_tail_bound':box(initial_tail),'largest_step_tail_W_bound':box(maxrem),'R_infinity':box(R),'Q_infinity':box(R.exp()),'connections':[[box(v) for v in row] for row in connections],'C_limits':[box(v) for v in limits],'E':[box(v) for v in Es],'seconds':time.time()-start,'certificate_passed':True}
    Path('certificate_'+str(match)+'.json').write_text(json.dumps(out,indent=2)+'\n')
    print('R',R.str(40),'E',[v.str(30) for v in Es],'seconds',out['seconds'],flush=True)
    return out
if __name__=='__main__':
    p=argparse.ArgumentParser();p.add_argument('--bits',type=int,default=1400);p.add_argument('--order',type=int,default=400);p.add_argument('--germ',type=int,default=450);p.add_argument('--frob',type=int,default=250);p.add_argument('--match',type=int,default=32);a=p.parse_args();run(a.bits,a.order,a.germ,a.frob,a.match)

\end{lstlisting}

\medskip
\noindent\textbf{The code \texttt{validate.py}}

\begin{lstlisting}[language=Python,basicstyle=\ttfamily\tiny,breaklines=true]
"""Exact normalization checks and rational checks on exported enclosures."""
from fractions import Fraction as F
from math import factorial,comb
from pathlib import Path
import json
from certify import Q

def conv(a,b,N):
    out=[F(0)]*(N+1)
    for i,x in enumerate(a):
        for j,y in enumerate(b[:N+1-i]):out[i+j]+=x*y
    return out

def exact_recurrence(N):
    out=[[F(1)]+[F(0)]*4]
    for d in range(1,N+1):
        rhs=[F(0)]*5
        for l in range(1,min(d,3)+1):
            p=[F(sum(Q[l][a]*comb(a,k)*(d-l)**(a-k) for a in range(k,5))) for k in range(5)]
            v=conv(p,out[d-l],4);rhs=[a+b for a,b in zip(rhs,v)]
        inv=[F((-1)**k*comb(k+3,3),d**(k+4)) for k in range(5)]
        out.append([-v for v in conv(rhs,inv,4)])
    return out

def squarefree(N):
    f=[F(1,factorial(d)**2) for d in range(N+1)];g=[];h=F(0)
    for d in range(N+1):
        if d:h+=F(1,d)
        g.append(-2*h*f[d])
    ps=[]
    for k in range(5):
        a=[F(1)]+[F(0)]*N
        for i in range(5):a=conv(a,g if i<k else f,N)
        ps.append(a)
    out=[];num=[F(1)]+[F(0)]*4
    for d in range(N+1):
        if d:
            for _ in range(2):num=[num[k]+(num[k-1]/d if k else 0) for k in range(5)]
        out.append([(-1)**d*factorial(d)**2*sum((num[a]*ps[k-a][d]/factorial(k-a) for a in range(k+1)),F(0)) for k in range(5)])
    return out

def interval(b):return F(int(b['outer_lower_numerator']),int(b['outer_denominator'])),F(int(b['outer_upper_numerator']),int(b['outer_denominator']))

if __name__=='__main__':
    a=exact_recurrence(9);b=squarefree(9);assert a==b
    base=[sum(map(abs,row)) <=2*100**d for d,row in enumerate(a)];assert all(base)
    germ_factor=sum(F(sum(Q[l][k]*10**k for k in range(5)),9**4*100**l) for l in range(1,4));assert germ_factor<1
    H1=(F(259)+F(26,16))*F(16,9)*F(13,6)/225
    H2=F(201,225)*F(13,6);H3=F(1,225)*F(13,6)
    assert H1<5 and H2<2 and H3<F(1,100)
    certs=[json.loads(Path('certificate_'+str(m)+'.json').read_text()) for m in [32,48]]
    Es=[]
    for c in certs:
        assert c['certificate_passed']
        bounds=[interval(v) for v in c['E']]
        assert bounds[0][0]>0 and bounds[1][1]<0 and bounds[2][0]>0 and bounds[3][0]>0
        Es.append(bounds)
    for left,right in zip(Es[0],Es[1]):assert max(left[0],right[0])<=min(left[1],right[1])
    out={'exact_squarefree_normalization_coefficients':50,'germ_base_cases':10,'germ_induction_factor':str(germ_factor),'frobenius_H1_bound':str(H1),'frobenius_H2_bound':str(H2),'frobenius_H3_bound':str(H3),'two_certificates_pass':True,'exported_rational_E_intervals':[[str(a),str(b)] for a,b in Es[0]],'status':'All assertions passed. The analytic tail proofs in the report are part of the certificate.'}
    Path('validation.json').write_text(json.dumps(out,indent=2)+'\n')
    Path('exact_germ_check.json').write_text(json.dumps([[str(v) for v in row] for row in a],indent=2)+'\n')
    print(json.dumps(out,indent=2))

\end{lstlisting}


\begin{thebibliography}{99}



\bibitem{BKV}
S.~Bloch, M.~Kerr and P.~Vanhove,
\emph{Local mirror symmetry and the sunset Feynman integral},
Adv. Theor. Math. Phys. \textbf{21} (2017), 1373--1453;
\href{https://arxiv.org/abs/1601.08181}{arXiv:1601.08181}.

\bibitem{PaperI}
H.~Maennel and P.~Vanhove,
\emph{Equal-mass sunset integrals and genus-zero local Gromov--Witten theory of Calabi--Yau $n$-folds},
\href{https://arxiv.org/abs/2610.10828}{arXiv:2610.10828}.

\bibitem{Iritani}
H.~Iritani,
\emph{Asymptotics of the banana Feynman amplitudes at the large complex structure limit},
Adv. Theor. Math. Phys. \textbf{26} (2022), 1239--1245;
\href{https://arxiv.org/abs/2011.05901}{arXiv:2011.05901}.

\bibitem{IP}
E.-N.~Ionel and T.~H.~Parker,
\emph{The Gopakumar--Vafa formula for symplectic manifolds},
Ann. of Math. \textbf{187} (2018), 1--64,
\href{https://doi.org/10.4007/annals.2018.187.1.1}{doi:10.4007/annals.2018.187.1.1}.

\bibitem{Chou}
Y.-C.~Chou,
\emph{Integrality of genus zero Gopakumar--Vafa type invariants of semi-positive varieties},
Int. J. Mod. Phys. A \textbf{39} (2024), 2446014;
\href{https://arxiv.org/abs/2312.01413}{arXiv:2312.01413}.

\bibitem{vGGR}
M.~van Garrel, T.~Graber and H.~Ruddat,
\emph{Local Gromov--Witten invariants are log invariants},
Adv. Math. \textbf{350} (2019), 860--876;
\href{https://arxiv.org/abs/1712.05210}{arXiv:1712.05210}.

\bibitem{ZhouSurvey}
Y.~Zhou,
\emph{Some algebraic and arithmetic properties of Feynman diagrams},
\href{https://arxiv.org/abs/1801.05555}{arXiv:1801.05555}.

\bibitem{Vanhove:2014wqa}
P.~Vanhove,
\emph{The physics and the mixed Hodge structure of Feynman integrals},
Proc. Sympos. Pure Math. \textbf{88} (2014), 161--194;
\href{https://arxiv.org/abs/1401.6438}{arXiv:1401.6438}.

\bibitem{Vanhove:2018mto}
P.~Vanhove, \emph{Feynman integrals, toric geometry and mirror symmetry}, in
\emph{Elliptic Integrals, Elliptic Functions and Modular Forms in Quantum Field
Theory}, Springer, Cham, 2019, 
\href{https://doi.org/10.1007/978-3-030-04480-0_17}{doi:10.1007/978-3-030-04480-0\_17};
\href{https://arxiv.org/abs/1807.11466}{arXiv:1807.11466}.

\bibitem{LairezVanhove}
P.~Lairez and P.~Vanhove,
\emph{Algorithms for minimal Picard--Fuchs operators of Feynman integrals},
Lett. Math. Phys. \textbf{113} (2023), no.~2, 37;
\href{https://arxiv.org/abs/2209.10962}{arXiv:2209.10962}. 

\bibitem{KlemmEtAl}
A.~Klemm, C.~Nega and R.~Safari,
\emph{The $l$-loop banana amplitude from GKZ systems and relative Calabi--Yau periods},
JHEP \textbf{04} (2020), 088,
\href{https://doi.org/10.1007/JHEP04(2020)088}{doi:10.1007/JHEP04(2020)088};
\href{https://arxiv.org/abs/1912.06201}{arXiv:1912.06201}.

\bibitem{BFKNS}
K.~B\"onisch, F.~Fischbach, A.~Klemm, C.~Nega and R.~Safari,
\emph{Analytic structure of all loop banana integrals},
JHEP \textbf{05} (2021), 066,
\href{https://doi.org/10.1007/JHEP05(2021)066}{doi:10.1007/JHEP05(2021)066};
\href{https://arxiv.org/abs/2008.10574}{arXiv:2008.10574}.

\bibitem{CoMath}
D.~Zheng \textit{et al.},
\emph{AI co-mathematician: accelerating mathematicians with agentic AI},
\href{https://arxiv.org/abs/2605.06651}{arXiv:2605.06651}.

\bibitem{CoatesGivental}
T.~Coates and A.~Givental,
\emph{Quantum Riemann--Roch, Lefschetz and Serre},
Ann. of Math. \textbf{165} (2007), 15--53,
\href{https://doi.org/10.4007/annals.2007.165.15}{doi:10.4007/annals.2007.165.15};
\href{https://arxiv.org/abs/math/0110142}{arXiv:math/0110142}. 

\bibitem{CCITtwisted}
T.~Coates, A.~Corti, H.~Iritani and H.-H.~Tseng,
\emph{Computing genus-zero twisted Gromov--Witten invariants},
Duke Math. J. \textbf{147} (2009), 377--438;
\href{https://arxiv.org/abs/math/0702234}{arXiv:math/0702234}.

\bibitem{Verrill}
H.~Verrill,
\emph{Root lattices and pencils of varieties},
J. Math. Kyoto Univ. \textbf{36} (1996), no.~2, 423--446.

\bibitem{BBBG}
D.~H.~Bailey, J.~M.~Borwein, D.~Broadhurst and M.~L.~Glasser,
\emph{Elliptic integral evaluations of Bessel moments},
J. Phys. A \textbf{41} (2008), 205203,
\href{https://doi.org/10.1088/1751-8113/41/20/205203}{doi:10.1088/1751-8113/41/20/205203};
\href{https://arxiv.org/abs/0801.0891}{arXiv:0801.0891}.

\bibitem{BKV3}
S.~Bloch, M.~Kerr and P.~Vanhove,
\emph{A Feynman integral via higher normal functions},
Compos. Math. \textbf{151} (2015), no.~12, 2329--2375,
\href{https://doi.org/10.1112/S0010437X15007472}{doi:10.1112/S0010437X15007472};
\href{https://arxiv.org/abs/1406.2664}{arXiv:1406.2664}.

\bibitem{Vanhove2026}
P.~Vanhove,
\emph{The multiloop sunset to all orders},
Ann. Henri Poincar\'e (2026),
\href{https://doi.org/10.1007/s00023-026-01734-8}{doi:10.1007/s00023-026-01734-8};
\href{https://arxiv.org/abs/2603.03183}{arXiv:2603.03183}.

\bibitem{BorweinSalvy}
J.~M.~Borwein and B.~Salvy,
\emph{A proof of a recurrence for Bessel moments},
Exp. Math. \textbf{17} (2008), 223--230;
\href{https://arxiv.org/abs/0706.1409}{arXiv:0706.1409}.

\bibitem{KP}
A.~Klemm and R.~Pandharipande,
\emph{Enumerative geometry of Calabi--Yau 4-folds},
Commun. Math. Phys. \textbf{281} (2008), 621--653;
\href{https://arxiv.org/abs/math/0702189}{arXiv:math/0702189}.

\bibitem{Lee}
Y.-P.~Lee,
\emph{Quantum $K$-theory~I: Foundations},
Duke Math. J. \textbf{121} (2004), 389--424;
\href{https://arxiv.org/abs/math/0105014}{arXiv:math/0105014}.

\bibitem{ext:LP}
Y.-P.~Lee and R.~Pandharipande,
\emph{A reconstruction theorem in quantum cohomology and quantum $K$-theory},
Amer. J. Math. \textbf{126} (2004), 1367--1379;
\href{https://arxiv.org/abs/math/0104084}{arXiv:math/0104084}.

\bibitem{PZ}
R.~Pandharipande and A.~Zinger,
\emph{Enumerative geometry of Calabi--Yau 5-folds},
in \emph{New Developments in Algebraic Geometry, Integrable Systems and Mirror Symmetry},
Adv. Stud. Pure Math. \textbf{59} (2010), 239--288;
\href{https://arxiv.org/abs/0802.1640}{arXiv:0802.1640}.

\bibitem{ext:BSWZ}
J.~M.~Borwein, A.~Straub, J.~Wan and W.~Zudilin,
\emph{Densities of short uniform random walks}, with an appendix by D.~Zagier,
Canad. J. Math. \textbf{64} (2012), 961--990;
\href{https://arxiv.org/abs/1103.2995}{arXiv:1103.2995}.

\bibitem{ext:DLMF}
NIST Digital Library of Mathematical Functions,
\url{https://dlmf.nist.gov}.

\bibitem{ext:Arb}
F.~Johansson,
\emph{Arb: efficient arbitrary-precision midpoint-radius interval arithmetic},
IEEE Trans. Comput. \textbf{66} (2017), 1281--1292,
\href{https://doi.org/10.1109/TC.2017.2690633}{doi:10.1109/TC.2017.2690633}; 
\href{https://arxiv.org/abs/1611.02831}{arXiv:1611.02831}. 

\bibitem{ext:Flint}
The FLINT and python-flint contributors,
\emph{python-flint}, version~0.9.0,
\url{https://python-flint.readthedocs.io}.

\bibitem{vanhove_sunset_gw_2026}
P.~Vanhove,
\emph{Sunset-Gromov-Witten: Code release for Sunset Gromov-Witten papers},
v1.0.0, Zenodo (2026),
\href{https://doi.org/10.5281/zenodo.23361560}{doi:10.5281/zenodo.23361560}.
\end{thebibliography}
\end{document}